\tikzset{
    Vertex/.style = {fill,circle,inner sep=1.5pt},
    insert vertex/.style = {decoration={
         markings,
         mark=at position #1 with {\node[Vertex]{};},
        },
     postaction=decorate
     }     
}
\newcommand\I{\mathrm{i}}
\newcommand\E{\mathrm{e}}
\newcommand\D{\mathrm{d}}
\newcommand\R{\mathbb{R}}
\newcommand\C{\mathbb{C}}
\newcommand\N{\mathbb{N}}
\newcommand\Z{\mathbb{Z}}
\newcommand\EE{\mathbb{E}}
\newcommand{\llangle}{\langle\!\langle}
\newcommand{\rrangle}{\rangle\!\rangle}
\newcommand\numberthis{\addtocounter{equation}{1}\tag{\theequation}}
\newtheorem{theorem}{Theorem}
\newtheorem*{theorem*}{Theorem}
\newtheorem{proposition}[theorem]{Proposition}%
\newtheorem{corollary}[theorem]{Corollary}
\newtheorem{lemma}[theorem]{Lemma}
\theoremstyle{remark}%
\newtheorem{remark}{Remark}%
\theoremstyle{definition}%
\newtheorem{definition}{Definition}%
\theoremstyle{plain}
\begin{document}

\title[]{Relative Entropy and Mutual Information in Gaussian Statistical Field Theory}


\author*[1,2]{\fnm{Markus} \sur{Schr\"{o}fl}}\email{markus.schroefl@uni-jena.de}

\author[2]{\fnm{Stefan} \sur{Floerchinger}}\email{stefan.floerchinger@uni-jena.de}

\affil[1]{\orgdiv{Institut f\"{u}r Theoretische Physik}, \orgname{Universit\"{a}t Heidelberg}, \orgaddress{\street{Philosophenweg 16}, \city{Heidelberg}, \postcode{69120}, \country{Germany}}}

\affil*[2]{\orgdiv{Theoretisch-Physikalisches Institut}, \orgname{Friedrich-Schiller-Universit\"{a}t Jena}, \orgaddress{\street{Max-Wien-Platz 1}, \city{Jena}, \postcode{07743}, \country{Germany}}}


\abstract{Relative entropy is a powerful measure of the dissimilarity between two statistical field theories in the continuum. In this work, we study the relative entropy between Gaussian scalar field theories in a finite volume with different masses and boundary conditions. We show that the relative entropy depends crucially on $d$, the dimension of Euclidean space. Furthermore, we demonstrate that the mutual information between two disjoint regions in $\R^d$ is finite if the two regions are separated by a finite distance and satisfies an area law. We then construct an example of ``touching'' regions between which the mutual information is infinite. We argue that the properties of mutual information in scalar field theories can be explained by the Markov property of these theories.}

\keywords{Relative Entropy, Kullback-Leibler Divergence, Mutual Information, Statistical Field Theory, Markov Process, Area Law}



\maketitle

\newpage

\tableofcontents

\newpage

\section{Introduction}\label{sec1}

Interest in information-theoretical aspects of physical theories has grown rapidly in recent decades. Originally introduced to physics via the concept of entropy by Clausius and Boltzmann, information theory now plays a fundamental role in a wide variety of subfields of physics. Examples include classical statistical physics in and out of equilibrium \cite{Jaynes1957a,Jaynes1957b,Jaynes1989,Floerchinger2020a,Dowling2020}, quantum mechanics, especially in connection with quantum computing, \cite{Nielsen2010,Wilde2013,Floerchinger2020b,Floerchinger2021b,Floerchinger2021c,Garttner2022}, the black hole information paradox \cite{Polchinski2017} and quantum field theory, whose entanglement properties remain an area of active research \cite{Bombelli1986,Srednicki1993,Callan1994,Calabrese2004,Calabrese2005a,Calabrese2005b,Calabrese2009a,Calabrese2009b,Calabrese2010,Audenaert2002,Plenio2005,Cramer2006,Eisert2010,Casini2009,Casini2022,Berges2017a,Berges2017b,Berges2018}.

For the specific case of continuum quantum field theories, the computation (and even the definition) of entropies is typically complicated due to the difficulties arising from the infinite number of degrees of freedom. A suitable notion of entropy for continuum field theories is the relative entropy (or Kullback-Leibler divergence) \cite{Kullback1951}, which quantifies the (information-theoretic) distinguishability of two states. Colloquially, since the relative entropy compares two states of the theory, the ultraviolet divergences present in both states cancel each other out, giving finite results even in the continuum, see, e.g., \cite{Floerchinger2021,Ditsch2023}. Relative entropies in relativistic quantum field theories can be rigorously defined using Tomita-Takesaki modular theory \cite{Araki1975,Araki1977,Haag2012,Witten2018,Hollands2018}.

In physics, relative entropy is an important quantity in quantum information theory \cite{Vedral2002,Nielsen2010} and has been used to study the information-theoretic properties of classical and quantum many-body systems \cite{Donald1987,Gaite1996,Zegarliski2000,Xu2019,Lashkari2014,Erdmenger2022}. It has also been used to formulate thermodynamics using a principle of minimum expected relative entropy \cite{Floerchinger2020a}, to study open system dynamics \cite{Mueller2015,Dowling2020} and to formulate uncertainty relations \cite{Rajagopal1986,Floerchinger2021}.

In this work, we are concerned with the properties of relative entropy in the context of \emph{classical} Gaussian statistical field theory\footnote{In the mathematical literature, this model is known as the Gaussian free field, see \cite{Werner2021,Sheffield2007}.}. Physically, this model can be interpreted either as a Gaussian approximation of a classical statistical system close to a second order phase transition or as the Euclidean version of the relativistic quantum field theory of a single massive scalar boson. In the functional integral formalism, this model is defined by a functional Gaussian probability measure and is thus a problem of classical probability theory. It is also the starting point for the construction of interacting theories \cite{Glimm2012,Simon1974,Guerra1975a,Guerra1975b}. Other recent works dealing with the information theory of Euclidean field theories include \cite{Floerchinger2023a,Floerchinger2023b,Cao2023}.

A statistical field theory shares an important property with quantum field theories, namely a continuum of degrees of freedom. Therefore, we need to use relative entropies to study the information-theoretic properties of this model. Besides physical considerations, Gaussian field theories are also interesting from a purely probability-theoretic point of view. In contrast to ``ordinary'' Gaussian measures on $\R^n$, which are specified by multivariate normal distributions, Gaussian measures on infinite dimensional spaces (which are needed in field theory because the number of degrees of freedom is infinite) are much more subtle to handle. In particular, the relative entropy between two Gaussian measures on infinite dimensional spaces can be infinite even if both measures are non-degenerate. One of the main challenges of this work will be to determine the conditions under which the relative entropy between two Gaussian field theories is finite.

Since we are working with relative entropies, we need to choose \emph{two} probability measures that we want to compare with each other. As we are considering only free theories (i.e., Gaussian measures), our options for choosing these measures are limited. The Gaussian statistical field theories we consider in this work will be assumed to have vanishing mean, and are therefore completely determined by a differential operator and a mass term in the (Euclidean) action, see the discussion in Chapter \ref{sec:func_integrals}. Furthermore, in order to render the relative entropies considered in this paper finite, we need to work on a finite volume, which requires the imposition of boundary conditions in the definition of the differential operator. The need for a finite volume is not surprising, since an entropic quantity like the relative entropy is expected to be extensive\footnote{However, see the discussion of $n$-particle states in \cite{Floerchinger2021} for an example of a relative entropy in field theory that is finite even in infinite volume.}.

Different states (or theories) with physical meaning, to be compared using relative entropy, can therefore be obtained by choosing different boundary conditions, different masses, or a combination of both\footnote{In principle, one could also allow for a space-dependent mass term and introduce external gauge fields or a Riemannian metric, which could be functions of the coordinates.}. In the main part of this paper we derive conditions under which the relative entropy between two Gaussian measures, corresponding to different choices of the above possibilities, is finite. We will see that the finiteness of the relative entropy between two field theories with different masses depends critically on the dimension $d$ of Euclidean spacetime. In particular, we find that $d=4$ is a critical dimension for the relative entropy between theories with different masses and the same (classical) boundary conditions\footnote{In the remainder of this work, we call Dirichlet, Neumann, Robin and periodic boundary conditions ``classical'' boundary conditions, see Section \ref{sec:classical_bcs}.}, in the sense that it is infinite at and above this dimension. Furthermore, we demonstrate that the relative entropy between two field theories over a bounded region with different boundary conditions can, in general, be infinite in all dimensions. For the special case of the relative entropy between two field theories with Robin boundary conditions, we find that it is finite precisely when $d < 3$. These results are summarized in the following two Theorems.
\begin{theorem*}[{Theorem \ref{thm:main_1}}]
    Let $\Omega \subset \R^d$ be open, bounded and with piecewise smooth boundary. The relative entropy $D_\mathrm{KL} (\mu_1 \| \mu_2)$ between two field theories on $\Omega$ with the same classical boundary conditions but different, non-zero masses, represented by Gaussian measures $\mu_1$ and $\mu_2$, respectively, is finite if and only if $d < 4$.
\end{theorem*}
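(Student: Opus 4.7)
The plan is to reduce the claim to a spectral computation on the self-adjoint realization $A$ of $-\Delta$ on $\Omega$ with the chosen classical boundary conditions, and then invoke the Feldman--H\'ajek dichotomy together with Weyl's asymptotic law for the eigenvalues of $A$.

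First I would set up the covariance operators. The two Gaussian measures $\mu_i$ have covariances $C_i = (A + m_i^2)^{-1}$ for $i = 1, 2$ acting on $L^2(\Omega)$, and both are strictly positive, self-adjoint, trace-class operators whose kernels are the Green's functions of the corresponding boundary value problems. Since $m_i^2 > 0$, the Cameron--Martin spaces $C_i^{1/2} L^2(\Omega) = \mathrm{dom}(A^{1/2})$ coincide as subsets of $L^2(\Omega)$ and carry equivalent norms. I would then recall the Feldman--H\'ajek theorem: centered Gaussian measures on a separable Hilbert space are either mutually singular or equivalent, and equivalence holds iff the Cameron--Martin spaces agree and $T := C_2^{-1/2} C_1 C_2^{-1/2} - I$ is Hilbert--Schmidt. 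When equivalent, the KL divergence is given by the standard formula involving $\mathrm{tr}(T - \log(I+T))/2$, which is finite precisely when $T$ is Hilbert--Schmidt; otherwise one sets $D_{\mathrm{KL}}(\mu_1 \| \mu_2) = +\infty$.

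The key algebraic observation is that $C_1$ and $C_2$ are functions of the same operator $A$, so they commute and $C_2^{-1}C_1 = I + (m_2^2 - m_1^2)(A + m_1^2)^{-1}$. Thus both $T$ and $C_2^{-1}C_1 - I$ reduce to a scalar multiple of $(A + m_1^2)^{-1}$ (up to an intertwining by $C_2^{\pm 1/2}$ that does not change the Hilbert--Schmidt class, since $C_2$ and $A$ commute and $(A+m_2^2)/(A+m_1^2)$ is bounded with bounded inverse). Consequently, finiteness of $D_{\mathrm{KL}}(\mu_1\|\mu_2)$ is equivalent to $(A+m_1^2)^{-1}$ being Hilbert--Schmidt on $L^2(\Omega)$, i.e.\ to $\sum_{k} (\lambda_k + m_1^2)^{-2} < \infty$, where $\{\lambda_k\}$ is the spectrum of $A$.

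I would then close the argument with Weyl's law: for each of the classical boundary conditions (Dirichlet, Neumann, Robin, periodic) on a bounded $\Omega \subset \R^d$ with piecewise smooth boundary, one has $\lambda_k = c_d |\Omega|^{-2/d} k^{2/d} (1 + o(1))$, with the same leading coefficient for all four choices. Hence $\sum_k (\lambda_k + m_1^2)^{-2}$ behaves like $\sum_k k^{-4/d}$, which converges iff $d < 4$. In the regime $d<4$ the same asymptotics make the $\mathrm{tr}(T)$ and $\mathrm{tr}(\log(I+T))$ terms in the KL formula manifestly finite, proving finiteness. For $d \geq 4$ the operator fails to be Hilbert--Schmidt, so $\mu_1$ and $\mu_2$ are mutually singular by Feldman--H\'ajek, and the relative entropy is infinite by definition.

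The routine parts are the Gaussian KL formula and the algebra of functions of $A$. The main obstacle is ensuring that Weyl's law applies uniformly to all four classical boundary conditions on a merely piecewise smooth domain; for Dirichlet this is classical, while for Neumann, Robin and periodic one must appeal to the versions of Weyl's law that remain valid under the mild boundary regularity assumed in the theorem, and verify that the leading asymptotic coefficient (which drives the convergence at exactly $d=4$) is insensitive to the boundary condition.
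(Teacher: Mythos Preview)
Your approach is essentially identical to the paper's: Lemma~\ref{lem:equivalence_mass_independent} reduces equivalence to the Hilbert--Schmidt property of the resolvent via exactly the commuting-operators/shared-eigenbasis observation you make, and Theorem~\ref{thm:main_1} then concludes by Weyl's law in the same way. One small correction: the covariance operators $C_i=(A+m_i^2)^{-1}$ are \emph{not} trace class in general (only for $d=1$), which is why the paper realizes $\mu_i$ as Radon Gaussian measures on $\mathcal{D}^*_\beta(\Omega)$ via Minlos' theorem rather than on $L^2(\Omega)$; this slip is not load-bearing for your argument, but you should remove the trace-class claim from your setup.
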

\begin{theorem*}[{Theorems \ref{thm:main_2} and \ref{thm:main_3}}]
    Let $\Omega \subset \R^d$ be open, bounded and with $C^1$-boundary. The relative entropy $D_\mathrm{KL} (\mu_1 \| \mu_2)$ between two field theories on $\Omega$ with the same non-zero mass but different Robin boundary conditions (including the case of Neumann boundary conditions) is finite if and only if $d < 3$. Furthermore, the relative entropy between a field theory with Dirichlet boundary conditions and a field theory with Robin boundary conditions (including the case of Neumann boundary conditions) is infinite for all $d \in \N$.
\end{theorem*}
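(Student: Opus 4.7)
The plan is to reduce both claims to the Feldman--Hájek alternative for centered Gaussian measures on a separable Hilbert space: $\mu_1$ and $\mu_2$ are either equivalent or mutually singular, and equivalence holds if and only if the Cameron--Martin spaces $\mathrm{ran}(C_1^{1/2})$ and $\mathrm{ran}(C_2^{1/2})$ coincide as subspaces of $L^2(\Omega)$ with equivalent norms \emph{and} the operator $T := C_2^{-1/2} C_1 C_2^{-1/2} - I$ is Hilbert--Schmidt. In the singular case one automatically has $D_\mathrm{KL}(\mu_1\|\mu_2)=+\infty$, while in the equivalent case finiteness follows from the standard trace-log formula. With $C_i = (-\Delta_{B_i}+m^2)^{-1}$ the Cameron--Martin space of $\mu_i$ is the form domain of $-\Delta_{B_i}+m^2$, so both conditions admit a direct PDE interpretation.

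For the Dirichlet--Robin part I expect the first Feldman--Hájek condition to fail outright. The form domain of the Dirichlet Laplacian on a bounded $C^1$-domain is $H^1_0(\Omega)$, whereas that of any Robin or Neumann Laplacian is $H^1(\Omega)$, and these are distinct subspaces of $L^2(\Omega)$ in every dimension (any smooth function with non-vanishing trace on $\partial\Omega$ lies in $H^1(\Omega)\setminus H^1_0(\Omega)$). Hence the Cameron--Martin spaces cannot agree, Feldman--Hájek forces $\mu_\mathrm{Dir}\perp\mu_\mathrm{Rob}$, and $D_\mathrm{KL}(\mu_\mathrm{Dir}\|\mu_\mathrm{Rob}) = \infty = D_\mathrm{KL}(\mu_\mathrm{Rob}\|\mu_\mathrm{Dir})$ for all $d\in\N$.

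For the Robin--Robin part with boundary parameters $\alpha_1 \ne \alpha_2$ the form domains coincide (both equal $H^1(\Omega)$) and the induced norms are equivalent, so everything depends on whether $T$ is Hilbert--Schmidt. I would use a Krein-type resolvent identity to factor the perturbation through the boundary, $C_1 - C_2 = (\alpha_2-\alpha_1)\,C_1\,\gamma^*\gamma\,C_2$ at the level of sesquilinear forms, where $\gamma:H^1(\Omega)\to H^{1/2}(\partial\Omega)$ is the Dirichlet trace. This rewrites $T$ as a bounded boundary operator conjugated by $\gamma\,C_2^{1/2}$, whose factorisation $L^2(\Omega)\xrightarrow{C_2^{1/2}} H^1(\Omega)\xrightarrow{\gamma} H^{1/2}(\partial\Omega)\hookrightarrow L^2(\partial\Omega)$ reduces the spectral analysis of $T$ to Weyl asymptotics on the $(d-1)$-dimensional manifold $\partial\Omega$. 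The compact embedding $H^{1/2}(\partial\Omega)\hookrightarrow L^2(\partial\Omega)$ has singular values $s_k \sim k^{-1/(2(d-1))}$, so those of $T$ behave like $k^{-1/(d-1)}$; summing the squares gives $\sum_k k^{-2/(d-1)} < \infty$ iff $d < 3$. A matching lower bound, obtained by localising eigenmodes of $-\Delta_{R_2}$ near $\partial\Omega$ so that their boundary traces remain non-degenerate, shows that the Hilbert--Schmidt condition genuinely fails for $d\ge 3$ rather than merely failing to be established by the upper estimate.

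The main obstacle is making this boundary-trace calculus rigorous in the distributional regime $d\ge 2$, where typical configurations under $\mu_i$ are not functions and $\gamma$ is not defined pointwise on the supports of the measures. I expect to circumvent this by phrasing the Krein-type identity and the trace factorisation through the abstract boundary triple associated with the minimal and maximal Laplacians on $\Omega$, interpreting $\gamma$ as a bounded map between Sobolev scales in the form sense rather than as a map acting on individual field configurations. Once this operator-theoretic framework is in place, the Hilbert--Schmidt estimate for the Robin--Robin case and the Cameron--Martin argument for the Dirichlet--Robin case both follow as outlined.
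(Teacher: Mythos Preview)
Your Dirichlet--Robin argument is exactly the paper's proof of Theorem~\ref{thm:main_2}: the form domains $H^1_0(\Omega)$ and $H^1(\Omega)$ differ, so the Cameron--Martin spaces do not coincide and Feldman--H\'ajek gives mutual singularity in every dimension.

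For the Robin--Robin part your route is correct but genuinely different from the paper's. The paper (Lemma~\ref{lem:N_vs_Robin}) writes $\hat B^*\hat B - I$ as the operator with integral kernel $\Psi_\sigma(\boldsymbol{x},\boldsymbol{y})=\int_{\partial\Omega}\sigma(\boldsymbol{z})R_\mathrm{N}(\boldsymbol{x},\boldsymbol{z})R_\mathrm{N}(\boldsymbol{z},\boldsymbol{y})\,\D S(\boldsymbol{z})$, computes $\|\Psi_\sigma\|_{L^2(\Omega\times\Omega)}^2=\iint_{\partial\Omega}\sigma(\boldsymbol{z})\sigma(\boldsymbol{z}')|G_\mathrm{N}(\boldsymbol{z},\boldsymbol{z}')|^2\,\D S\,\D S'$, and then reads off the threshold $d<3$ from the short-distance asymptotics $G_\mathrm{N}(\boldsymbol{x},\boldsymbol{y})\sim|\boldsymbol{x}-\boldsymbol{y}|^{-(d-2)}$ of the Green's function on the $(d-1)$-dimensional boundary. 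Your approach instead factors $T$ through $A=\gamma\,C_2^{1/2}:L^2(\Omega)\to L^2(\partial\Omega)$ and estimates the singular values of $A$ via Weyl asymptotics for the embedding $H^{1/2}(\partial\Omega)\hookrightarrow L^2(\partial\Omega)$. These are two ways of computing the same Hilbert--Schmidt norm: since $A^*A$ and $AA^*$ share nonzero eigenvalues, your spectral condition on $A$ is equivalent to the paper's square-integrability of the kernel of $AA^*$, which is precisely $G_\mathrm{N}|_{\partial\Omega\times\partial\Omega}$. The paper's argument is more elementary (no boundary triples, no singular-value calculus) and yields the lower bound for free from the explicit local divergence of $|G_\mathrm{N}|^2$; your argument is more structural and would generalise more readily to other boundary perturbations. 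Note also that the paper treats Neumann--Robin first and obtains Robin--Robin by transitivity of equivalence, whereas you compare two Robin conditions directly; either works.

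Your final paragraph worries about a non-issue. The Feldman--H\'ajek criterion and the operator $T$ live entirely on the Cameron--Martin / form-domain level, which here is $H^1(\Omega)$; the trace map $\gamma$ is applied to elements of $H^1(\Omega)$, not to typical field configurations in the support of $\mu_i$. No boundary-triple machinery is needed to make the argument rigorous---the paper simply works with the bilinear form $\mathfrak{b}_\sigma(\hat G_\mathrm{N}^{1/2}\,\cdot\,,\hat G_\mathrm{N}^{1/2}\,\cdot\,)$ on $L^2(\Omega)$, which is perfectly well defined.
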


In addition to these main results, we show that the relative entropy between fields with different masses exhibits an ordering with respect to the choice of boundary conditions. In particular, we find that the relative entropy between fields with Dirichlet boundary conditions is smaller than the relative entropy between fields with periodic boundary conditions, which is in turn smaller than the relative entropy between fields with Neumann boundary conditions. This ordering is independent of the dimension of Euclidean spacetime. Moreover, we demonstrate that, while the relative entropy between fields with different masses is infinite for $\Omega = \R^d$, the infinite volume limit of the relative entropy \emph{density} exists for $d < 4$, and this limit is independent of the choice of Dirichlet, Robin or periodic boundary conditions. For the case $d=1$, we derive exact closed-form expressions for the relative entropy, both for fields with different masses and for fields with different boundary conditions.

Another quantity we will consider in this work is the mutual information between two disjoint regions of Euclidean spacetime. Interpreted as the average amount of information shared between the fields in each of the different regions, the mutual information provides an insight into the mutual dependence of these fields. Since mutual information takes into account non-linear relationships between random variables, it is generically more general than other correlation quantifiers that only consider linear relationships such as covariance \cite{Li1990,Dionisio2004}. Mutual information has been used in classical and quantum statistical physics, for example in the study of phase transitions \cite{Matsuda1996,Wicks2007,Wilms2011,Wilms2012,Lau2013,Iaconis2013,Stphan2014,Sriluckshmy2018}.

We will show that the properties of the mutual information between two disjoint regions are dictated by the Markov property of the scalar field \cite{Nelson1973a,Nelson1973b}. In particular, we show that the mutual information is finite if the regions are separated by a finite distance and construct an explicit example of ``touching'' rectangular regions for which the mutual information is infinite. This finding is summarized in the following Theorem.
\begin{theorem*}[{Theorems \ref{thm:equiv_mutual_info} and \ref{thm:main_5}}]
    Let $\Omega_A$ and $\Omega_B$ be disjoint bounded open subsets of $\R^d$. If $\Omega_A$ and $\Omega_B$ are separated by a finite distance, the mutual information $I (\Omega_A : \Omega_B)$ between the two fields on $\Omega_A$ and $\Omega_B$, respectively, is finite. Furthermore, if $\Omega_A$ and $\Omega_B$ are disjoint open $d$-rectangles which ``touch'' each other on one of their faces, see Fig. \ref{fig:rectangles}, then the mutual information $I (\Omega_A : \Omega_B)$ is infinite.
\end{theorem*}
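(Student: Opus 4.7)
The plan is to exploit the domain Markov property of the Gaussian free field, which underlies both parts of the theorem. Concretely, for every open $U \subset \Omega$ one has the decomposition $\phi = \phi^0_U + h_U$ into an independent Gaussian field $\phi^0_U$ supported on $U$ with zero Dirichlet boundary values, and a ``harmonic part'' $h_U$ solving $(-\Delta + m^2)h_U = 0$ in $U$ and matching $\phi$ outside. Combined with the operator-theoretic characterization of finite mutual information from Theorem~\ref{thm:equiv_mutual_info} (finiteness amounts to a Hilbert--Schmidt condition on an appropriately rescaled cross-covariance), this decomposition will yield both claims.

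For the finiteness part I would fix $\delta := \mathrm{dist}(\Omega_A, \Omega_B) > 0$ and pick an open buffer $U$ with $\Omega_A \subset U$ and $\mathrm{dist}(\bar U, \Omega_B) > 0$. Since $\phi|_{\Omega_A} = \phi^0_U|_{\Omega_A} + h_U|_{\Omega_A}$ and the first summand is independent of both $h_U$ and $\phi|_{\Omega_B}$, a short chain-rule/data-processing argument gives $I(\Omega_A : \Omega_B) \leq I(h_U|_{\Omega_A} : \phi|_{\Omega_B})$. The cross-covariance kernel relevant on the right-hand side is $G(x,y)$ for $(x,y) \in \Omega_A \times \Omega_B$, which is $C^\infty$ on $\bar\Omega_A \times \bar\Omega_B$ by elliptic regularity, because $|x-y| \geq \delta > 0$. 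A smooth kernel on a bounded product is trace class, and a short estimate verifies that $C_A^{-1/2} C_{AB} C_B^{-1/2}$ is Hilbert--Schmidt, so the criterion of Theorem~\ref{thm:equiv_mutual_info} is satisfied and $I(\Omega_A : \Omega_B) < \infty$.

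For the divergence part I would arrange coordinates so that $\Omega_A = (-L_1, 0) \times R$ and $\Omega_B = (0, L_2) \times R$ share the face $\Sigma = \{0\} \times R$ for a common transverse open $(d-1)$-rectangle $R$. Let $\{e_k\}_{k \in \N}$ be an orthonormal eigenbasis of the transverse Laplacian on $R$ with the inherited boundary conditions and eigenvalues $\lambda_k$, and expand $\phi(x_1, y) = \sum_k \phi_k(x_1) e_k(y)$. The coefficient processes $\phi_k$ are independent one-dimensional Gaussian processes on $(-L_1, L_2)$ with effective mass $\sqrt{m^2 + \lambda_k}$, each almost surely continuous, since the one-dimensional Green's function of $-\partial_{x_1}^2 + m^2 + \lambda_k$ is continuous. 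Thus $\phi_k(0)$ is a non-degenerate real Gaussian that is a deterministic function of both $\phi_k|_{(-L_1, 0)}$ and $\phi_k|_{(0, L_2)}$. The standard fact that the mutual information between a continuous random variable and anything that determines it is infinite then gives $I(\phi_k|_{(-L_1, 0)} : \phi_k|_{(0, L_2)}) \geq I(\phi_k|_{(-L_1, 0)} : \phi_k(0)) = +\infty$, and applying data processing to the projection onto a single mode yields $I(\Omega_A : \Omega_B) \geq I(\phi_k|_{(-L_1, 0)} : \phi_k|_{(0, L_2)}) = +\infty$.

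The main obstacle is the rigorous treatment of the Markov decomposition in the first half, since for $d \geq 2$ the field $\phi$ is a distribution and its trace on the codimension-one boundary $\partial U$ is not classically defined, so ``$h_U$ matches $\phi$ on $\partial U$'' and ``$\phi^0_U$ vanishes on $\partial U$'' must be interpreted at the level of Gaussian measures and their Cameron--Martin spaces. A cleaner alternative that sidesteps this point is to bypass the explicit decomposition and verify the Hilbert--Schmidt criterion of Theorem~\ref{thm:equiv_mutual_info} directly from the smoothness of $G$ off the diagonal together with elliptic (Sobolev) estimates for $C_A^{-1/2}$ and $C_B^{-1/2}$. A smaller but non-trivial point in the divergence part is ensuring that the Fourier modes $\phi_k$ truly are measurable functions of $\phi|_{\Omega_A}$ and $\phi|_{\Omega_B}$ respectively, which again amounts to checking that the relevant $L^2$-projections are continuous with respect to the Gaussian law on the restricted fields.
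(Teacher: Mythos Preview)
Your approach differs substantially from the paper's on both halves, and the second half has a genuine gap.

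\textbf{Finiteness.} The paper does not verify a Hilbert--Schmidt condition on the rescaled cross-covariance directly. Instead it argues by a chain of measure equivalences: since the regions are separated one can enclose them in slightly larger disjoint open sets, and Theorem~\ref{thm:equiv_open_Dirichlet} (resting on the trace-class Lemma~\ref{lem:projection_traceclass}) says the free-BC field on $\Omega'_A \cup \Omega'_B$ is equivalent to the restriction of the Dirichlet field on those larger sets. The Dirichlet field on a disjoint union factorizes exactly (Proposition~\ref{prop:dirichlet_factorisation}), and applying Theorem~\ref{thm:equiv_open_Dirichlet} once more on each factor gives $\mu_{AB} \sim \mu^{\mathrm{D}}_{AB} = \mu^{\mathrm{D}}_A \otimes \mu^{\mathrm{D}}_B \sim \mu_A \otimes \mu_B$. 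Your ``cleaner alternative'' of checking that $C_A^{-1/2} C_{AB} C_B^{-1/2}$ is Hilbert--Schmidt from smoothness of $G$ off the diagonal is a legitimate different route, but smoothness of the kernel alone is not enough because $C_A^{-1/2}, C_B^{-1/2}$ are unbounded; the Sobolev estimates you would need amount to re-deriving Lemma~\ref{lem:projection_traceclass} in disguise. (A minor point: you invoke ``the criterion of Theorem~\ref{thm:equiv_mutual_info}'', but that theorem \emph{is} the statement you are proving; the criterion you want is Theorem~\ref{thm:equivalence}.)

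\textbf{Divergence.} Here your argument breaks. The assertion that the transverse Fourier coefficients $\phi_k$ are \emph{independent} one-dimensional Gaussian processes with effective mass $\sqrt{m^2 + \lambda_k}$ is false for the free-boundary-condition field: the covariance $\hat{G}_0$ is the restriction to $\Omega_A \cup \Omega_B$ of the whole-space Green's function, and the associated boundary conditions are non-local (cf.\ \eqref{eq:nonlocal_bcs}), so the operator does not diagonalize in any transverse eigenbasis on $R$. The modes $\phi_k$ are correlated and are not Ornstein--Uhlenbeck processes. One could attempt a repair by showing, without independence, that a single smoothed coordinate $\phi_k(x_1) = \int_R \phi(x_1,y)e_k(y)\,\D y$ has an a.s.\ continuous version across $x_1 = 0$ with non-degenerate value there, and then running your data-processing step; but this needs a separate regularity argument you have not supplied. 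The paper's proof is entirely different and sidesteps all of this: it compares Cameron--Martin spaces, showing $\mathcal{Q}(\hat{G}_0^{-1}) = H^{+1}(\Omega_{AB})$ with $\Omega_{AB} = \mathrm{int}\,\overline{\Omega_A \cup \Omega_B}$, whereas $\mathcal{Q}(\hat{G}_\otimes^{-1}) = H^{+1}(\Omega_A) \oplus H^{+1}(\Omega_B)$, and exhibits the step function $\chi_{\Omega_B}$ as an element of the second space but not the first (it fails absolute continuity along lines crossing the shared face). By Theorem~\ref{thm:equivalence} the measures are then mutually singular.
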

Moreover, we argue (without providing a rigorous proof) that such a mutual information satisfies an \emph{area law} \cite{Wolf2008,Lau2013}, i.e., only the degrees of freedom at the boundaries of the regions contribute to the mutual information, reinforcing the notion that the Markov property plays a crucial role in the study of mutual information. Just as for the relative entropy, we derive exact closed-form expressions for the mutual information between two disjoint regions in $d=1$.

\emph{The remainder of this paper is organized as follows.} In Section \ref{sec:func_integrals}, we briefly recall the basic principles of statistical and Euclidean quantum field theories in order to provide a physical context for this work. We then discuss functional integrals and in particular the functional probability measures that define the theories we wish to consider. These probability measures are specified by a differential operator. Since we are working with field theories over a finite volume, we need to impose boundary conditions in the definition of this differential operator, and we devote Section \ref{sec:covariance_operators} to their study. A brief reminder of information theory, with special emphasis on relative entropy and mutual information, is given in Section \ref{sec:info_theory}. The main part of this work consists of Section \ref{sec:applications}. In Section \ref{sec:diff_masses}, we study the relative entropy between free scalar field theories with different masses. In particular, we derive conditions under which the relative entropy between two such theories is finite, and we discuss the dependence of the relative entropy on the choice of boundary conditions. In Section \ref{sec:diff_bcs}, we consider the relative entropy between field theories with different boundary conditions. Finally, we study the mutual information between two disjoint regions of Euclidean space in Section \ref{sec:mutual_info}. The Appendices \ref{app:covariance_operators} to \ref{app:1D_Dirichlet_RE} provide a collection of technical results used in the main part of the paper.

\emph{Notation and Definitions.} Elements $\boldsymbol{x}$ of Euclidean space $\R^d$ are written in boldface except when we explicitly work in $d = 1$. The expectation value of a random variable $X$ is denoted by $\EE [X]$ and the probability of an event $x$ is denoted by $\mathbb{P} (x)$.

Let $\mathcal{H}$ be a complex Hilbert space with inner product $\braket{.,.}_\mathcal{H}$. Throughout this paper, we use the physicist's definition of the inner product, i.e., $\braket{.,.}_\mathcal{H}$ is linear in the second argument and conjugate linear in the first. We denote by $\mathfrak{L} (\mathcal{H})$, $\mathfrak{B} (\mathcal{H})$, $\mathfrak{T} (\mathcal{H})$ and $\mathrm{HS} (\mathcal{H})$ the set of linear, bounded, trace class and Hilbert-Schmidt operators on $\mathcal{H}$, respectively. For any of the aforementioned sets, a superscript $+$ denotes the subset of positive operators. For example, $\mathfrak{B}^+ (\mathcal{H})$ denotes the set of all positive bounded operators on $\mathcal{H}$. An operator $T \in \mathfrak{B} (\mathcal{H})$ is called positive if $\braket{f,Tf}_\mathcal{H} \geq 0$ for all $f \in \mathcal{H}$. If $T$ is a positive operator and $\braket{f,Tf}_\mathcal{H} = 0$ if and only if $f = 0$, we call $T$ strictly positive. We call a symmetric operator $T$ with domain $\mathfrak{D} (T)$ bounded from below (or semi-bounded) if there exists a real number $c$ such that $\braket{\psi, T \psi} \geq c \| \psi \|^2$ for all $\psi \in \mathfrak{D} (T)$. The form domain of an operator $T$ is denoted by $\mathcal{Q} (T)$. The Fourier transform of a function $f$ on $\R^d$, denoted $\hat{f}$, and the inverse Fourier transform are, respectively, defined as
\begin{equation}
    \hat{f} (\boldsymbol{p}) = \int_{\R^d} f (\boldsymbol{x}) \; \E^{- \I \boldsymbol{p} \cdot \boldsymbol{x}} \; \D^d x \; , \qquad f (\boldsymbol{x}) = \int_{\R^d} \hat{f} (\boldsymbol{p}) \; \E^{\I \boldsymbol{p} \cdot \boldsymbol{x}} \; \frac{\D^d p}{(2 \pi)^d} \; .
\end{equation}

Let $\Omega \subset \R^d$ be open. The closure of $\Omega$ is written $\overline{\Omega}$. We denote by $C_0^k (\Omega)$ (including the case $C_0^k (\R^d)$) the set of compactly supported continuous functions on $\Omega$ with $k$ continuous derivatives, where $k = 0,1,2,\ldots$ or $k=\infty$. Recall that the support of a function $f : \Omega \to \C$, denoted $\mathrm{supp} \, f$, is the closure (in $\R^d$) of the set of points in $\Omega$ where $f$ is non-zero. In particular, a function on $\Omega \subset \R^d$ is said to be compactly supported if its support is a bounded subset of $\Omega$. A special and important class of compactly supported functions is the set of test functions $C_0^\infty (\Omega)$, which consists of smooth functions on $\Omega$ with compact support. Furthermore, we denote by $C^k (\overline{\Omega})$ the set of restrictions of functions in $C_0^k (\R^d)$ to $\Omega$. In general, functions in $C_0^k (\Omega)$ and $C^k (\overline{\Omega})$ are complex-valued. However, it is sometimes convenient to consider only real-valued such functions, in which case we explicitly state this in the text.

We say an open subset $\Omega \subset \R^d$ is $C^0$ (i.e., continuous), respectively Lipschitz, respectively $C^k$, $k \in \N$, if its boundary $\partial \Omega$ can locally be represented by the graph of a continuous, respectively Lipschitz continuous, respectively $k$ times continuously differentiable function from $\R^{d-1}$ to $\R$, such that $\Omega$ lies only on one side of the boundary $\partial \Omega$, see \cite[Sec.~1.1]{Chandler2017} or \cite[Def.~1.2.1.1]{Grisvard2011}. Recall that every $C^k$-region is also Lipschitz, but Lipschitz continuity is a stronger notion than continuity.

Throughout this paper, we work in natural units, i.e., $\hbar = c = k_\mathrm{B} = 1$.

\section{Functional Integrals}\label{sec:func_integrals}

In this Section, we briefly sketch how a non-interacting scalar field theory can be realized as a Gaussian measure on a suitable infinite dimensional vector space.  This formalizes the idea of the Euclidean path integral and makes the connection to probability theory concrete. In contrast to the case of a Gaussian measure on $\R^n$, significantly more subtleties arise in the infinite dimensional case, some of which will be important in the discussion of information-theoretic aspects in Sections \ref{sec:info_theory} and \ref{sec:applications}. We follow the procedure of constructing Gaussian measures on spaces of distributions (or generalized functions) as is common in the literature, see, e.g., \cite{Velo1973,Nelson1973a,Nelson1973b,Glimm2012,Dimock2011,Simon1974,Guerra1975a,Guerra1975b,Guerra1976}. As we wish to keep the discussion in this Section intuitive, much of the technical details are relegated to Appendix \ref{app:func_integrals}. We note that the contents of Appendix \ref{app:func_integrals} are not essential for understanding the results obtained in Section \ref{sec:applications}. However, it is the basis for the derivation of conditions for equivalence of Gaussian measures (cf. Appendix \ref{app:equivalence}), which will be used frequently in this work. The reader who is not interested in the derivation of these conditions but only in their consequences may skip the Appendices altogether.

Functional integrals arise, for example, in the description of second order phase transitions of systems such as uniaxial and isotropic (anti-)ferromagnets (like the Ising model), fluids, superfluids and superconductors \cite{Hohenberg2015}. Such a model is defined at a fundamental microscopic length scale $\Lambda^{-1}$, where $\Lambda$ is some high momentum (or ultraviolet) cut-off scale. In the case of a spin model, for example, this microscopic scale is the lattice constant of the atomic lattice. The interactions of the microscopic degrees of freedom are assumed to be short-range. The length scale $L$ at which one conducts experiments is called the macroscopic scale and typically one can assume $\Lambda^{-1} \ll L$.

In the vicinity of a critical point, the correlation length $\xi$ is much greater than the microscopic scale and even diverges as the system approaches the critical temperature. When the system is close to a critical point and the correlation length is much larger than the microscopic scale but still significantly smaller than the macroscopic scale, we introduce a mesoscale $\lambda \lesssim \xi$ such that the system exhibits a scale hierarchy $\Lambda^{-1} \ll \lambda \ll L$. At the mesoscopic scale the system is approximately homogeneous and fluctuations at scales between the microscopic and mesoscopic scales are small. It is then reasonable to average out these small scale fluctuations. One then postulates that, close to a critical point, the thermodynamic partition sum of the system can be approximated by a formal integration over all configurations $\eta (x)$ of the order parameter, fluctuating at scales between $\lambda$ and $\Lambda^{-1}$. More explicitly \cite{Goldenfeld1992,Cardy1996,Itzykson1991},
\begin{equation}
    Z \approx \int \mathcal{D} \eta \; \E^{- \beta H_\lambda [\eta]} \; ,
\end{equation}
where $H_\lambda$ is an effective Hamiltonian.

If we assume the effective Hamiltonian to be quadratic in the order parameter $\eta$, we arrive at the \emph{Gaussian model}, whose information-theoretic properties we study in this work. The Gaussian model is the first correction to Landau's mean field theory \cite[Ch.~XIV]{Landau1966}. It takes fluctuations of the order parameter into account but assumes that the fluctuations follow a normal distribution around some mean value. It is well known that the validity of the Gaussian approximation depends crucially on the dimension of space. For Ising-type systems, the space dimension above which mean field theory and the Gaussian approximation can reliably describe critical phenomena, called the upper critical dimension $d_\mathrm{c}$, is four. The origin of the specific value of the upper critical dimension can be explained via the renormalization group \cite{Wilson1971a,Wilson1971b}. For $d > d_\mathrm{c}$, all higher order couplings become irrelevant and the theory is trivial, while for $d < d_\mathrm{c}$ there are relevant higher order couplings which need to be taken into account nonperturbatively close to the critical point. In the edge case $d = d_\mathrm{c}$, higher order couplings can be shown to be marginally irrelevant. While for $d > d_\mathrm{c}$ the Gaussian approximation yields a reasonable description of the system everywhere in the phase diagram, for the physically relevant spatial dimensions $d \leq 3$ the Gaussian approximation breaks down around the critical point. However, at sufficiently high temperatures the Gaussian contributions dominate and the Gaussian model is valid.

Another important area of application of functional integrals are Euclidean quantum field theories, i.e., relativistic quantum field theories analytically continued to imaginary time. Euclidean quantum field theories have the same structure as theories in classical statistical mechanics \cite{Symanzik1966}. In particular, they can be seen as probability theories defined by a Gibbs-type measure on some infinite dimensional space of ``field configurations''. Under certain conditions \cite{Nelson1973a,Nelson1973b,Osterwalder1973,Osterwalder1974,Klein1981}, one can recover the relativistic theory from these probability theories. This provides a connection between functional probability measures and relativistic quantum field theories. The Gaussian field theory considered in this work, which corresponds to a single massive scalar boson field without any interactions, is one of the simplest quantum field theories one can consider. Nevertheless, a sound understanding of the Gaussian theory is necessary for the treatment of interacting theories. For example, at least for weak interactions, one can treat the interacting theory as a perturbation of the free theory, i.e., the theory without interactions \cite{Glimm2012,Guerra1975a,Guerra1975b,Simon1974}. The free theory thus provides a starting point for the study of more complicated theories. Furthermore, certain aspects of interacting theories are already captured in the corresponding free theory. In Section \ref{sec:mutual_info}, we study the mutual information between two regions in Euclidean space and argue that the observed behaviour is due to the Markov property of the scalar field. However, this Markov property stems from the kinetic, i.e., non-interacting, part of the action containing the differential operator, see also the discussion in Section \ref{sec:Gaussian_measures}. Therefore, we may conjecture that certain features of the mutual information in the free theory carry over to interacting theories.

At this point, we note that in the remainder of this work, we employ the terminology from Euclidean quantum field theory. In particular, we interpret $\R^d$ as Euclidean spacetime, i.e., Minkowski spacetime Wick rotated to imaginary time. Nevertheless, we remind the reader that the Gaussian field theories considered in the following may also describe first order approximations of classical statistical systems close to a second order phase transition. What corresponds to a mass in the relativistic context has then the significance of an inverse correlation length, $m = \xi^{-1}$.

We proceed as follows. In Section \ref{sec:Gaussian_measures}, we discuss Gaussian measures corresponding to free field theories. Starting from a massive non-interacting scalar field theory on a finite lattice with $N$ lattice points, which is described by an ordinary Gaussian measure on $\R^N$, we motivate the question on how to define the corresponding continuum theory as a Gaussian measure on a space of ``field configurations''. Section \ref{sec:covariance_operators} is devoted to the covariance operators used in this work, i.e., inverses of the differential operator $- \Delta + m^2$. Since we want to describe fields on some bounded subregion $\Omega$ of Euclidean space $\R^d$, we have to choose boundary conditions for the field configurations to satisfy. These boundary conditions are incorporated by choosing self-adjoint extensions of the differential operator $- \Delta + m^2$ defined on the space of test functions $C_0^\infty (\Omega)$. We will demonstrate the connection between boundary conditions and self-adjoint extensions of the differential operator $- \Delta + m^2$ by using the concept of quadratic forms.

We emphasize that no new results are obtained in this Section. The purpose of this Section, together with Appendix \ref{app:func_integrals}, is to make this paper sufficiently self-contained. Standard references for the construction of Gaussian measures describing Euclidean bosonic quantum field theories include \cite{Velo1973,Nelson1973a,Nelson1973b,Glimm2012,Dimock2011,Simon1974,Guerra1975a,Guerra1975b,Guerra1976}. For a treatment of measures, in particular Gaussian measures, on infinite dimensional spaces, see, e.g., \cite{Bogachev2014,Bogachev2015,Gelfand1964,Schwartz1973}. Our discussion of unbounded self-adjoint operators and quadratic forms, especially with regard to the Laplacian, is largely based on \cite{Reed1981,Reed1975,Reed1978,Faris1975,Kato1995,Robinson1971,Davies1996,Evan2010}. An exposition of Sobolov spaces, which we need in Section \ref{sec:covariance_operators}, can be found in \cite{McLean2000,Chandler2017,Mazya2011} (see also Appendix \ref{app:sobolev_spaces}).

\subsection{Free Scalar Field Theories}\label{sec:Gaussian_measures}

Recall from the beginning of Section \ref{sec:func_integrals} that in the functional integral formalism, a free scalar (statistical or Euclidean quantum) field theory is defined by a Gaussian probability measure. Formally, this measure is given by the expression
\begin{equation}
    \D \mu = \frac{1}{Z} \, \exp \left[ - S_\mathrm{E} [\varphi] \right] \, \mathcal{D} \varphi \; ,
\end{equation}
where $Z$ is a normalization constant, $\mathcal{D} \varphi$ is a formal Lebesgue measure on the space of field configurations and $S_\mathrm{E} [\varphi]$ is the Euclidean action functional given by
\begin{equation}
    S_\mathrm{E} [\varphi] = \frac{1}{2} \int_\Omega \varphi (x) \left( - \Delta + m^2 \right) \varphi (x) \; \D^d x \; .
\end{equation}
The aim of this Section (and of Appendix \ref{app:func_integrals}) is to give a more precise meaning to this formal object. Note that throughout this work, we will only consider theories with vanishing mean value.

It is instructive to first consider this field theory regularized on a finite spacetime lattice $\Gamma$, see, e.g., \cite[Sec.~1.3]{Salmhofer2007}. Assuming that this lattice has $N$ lattice sites, the Gaussian measure defining the theory is just a centred Gaussian measure on $\R^N$ given by
\begin{equation}\label{eq:lattice_measure}
    \D \mu_\Gamma (\boldsymbol{\varphi}) = \frac{1}{\sqrt{\det (2 \pi C)}} \; \exp \left[ - \frac{1}{2} \, \boldsymbol{\varphi}^{\mathsf{T}} C^{-1} \boldsymbol{\varphi} \right] \; \D^N \varphi \; ,
\end{equation}
where
\begin{equation}
    \boldsymbol{\varphi}^{\mathsf{T}} C^{-1} \boldsymbol{\varphi} \coloneqq \sum_{i,j=1}^N \varphi_i \left( C^{-1} \right)_{ij} \varphi_j \; .
\end{equation}
The field configurations are thus given by real $N$-component vectors $\boldsymbol{\varphi}$. The inverse covariance matrix in \eqref{eq:lattice_measure}, sometimes called the precision matrix, is given by
\begin{equation}
    C^{-1} \coloneqq - \Delta_\Gamma + m^2 \mathds{1}_N \; ,
\end{equation}
where $\mathds{1}_N$ is the $N \times N$ unit matrix and $- \Delta_\Gamma$ is the lattice Laplacian, see, e.g., \cite[Sec.~1.3]{Salmhofer2007}. As we work on a finite lattice, some boundary conditions have to be imposed in the definition of $- \Delta_\Gamma$. A typical choice would be periodic boundary conditions.

We conclude that a free scalar field theory (with vanishing mean) on a finite lattice is uniquely defined by specifying its covariance matrix, or, equivalently, its precision matrix. This precision matrix is given by a discretized version of the differential operator $- \Delta + m^2$. Different choices of the mass $m > 0$ lead to different theories. But the mass parameter is not the only thing we can vary in order to achieve different theories. As already mentioned above, on a finite lattice we have to impose boundary conditions in the definition of the lattice Laplacian $- \Delta_\Gamma$. From the definition of a Gaussian measure on $\R^n$ (cf. \eqref{eq:def_gaussian_Rn}), we see that the choice of boundary conditions must be such that the precision matrix $C^{-1}$ is symmetric. In the continuum theory, the analogue of this is the requirement that the boundary conditions of $- \Delta + m^2$ are chosen such that this operator is self-adjoint as an operator on an $L^2$-space, cf. Appendix \ref{app:func_integrals}. The connection between boundary conditions and self-adjoint extensions of the Laplacian is discussed in Section \ref{sec:covariance_operators}. Instead of specifying the precision matrix, we can also uniquely define the theory by providing a covariance matrix. Of course, assuming that the covariance matrix is invertible, this is the same thing as providing a precision matrix. However, when dealing with theories of particular interest, it is sometimes more convenient to provide one rather than the other. Since a field theory is usually defined in terms of an action, it is often more opportune to specify the precision matrix, e.g., when we wish to impose certain boundary conditions directly. In some cases, however, we may wish to specify the covariance matrix. One such case is discussed in Section \ref{sec:covariance_operators}, the case of free boundary conditions. In this case, the covariance describing these boundary conditions is obvious, while the boundary conditions themselves are complicated. Due to their importance, we devote Section \ref{sec:covariance_operators} to discuss the covariance operators considered in this work.

After this brief discussion of free scalar field theories on the lattice, we now study how to realise a free scalar field theory in the continuum. We can think of the continuum theory as the limit\footnote{For a detailed discussion see, e.g., \cite[Ch.~VIII]{Simon1974} as well as \cite{Angelis1979,Capitani1991}.} of the lattice theory when the lattice constant $\varepsilon$ is sent to zero, the number of lattice sites $N$ is sent to infinity and the system size $N \varepsilon^d$ is kept fixed. Heuristically, in this limit, the lattice Laplacian $- \Delta_\Gamma + m^2 \mathds{1}_N$ becomes the usual ``continuum'' Laplacian $- \Delta + m^2$ satisfying some boundary conditions, which may themselves follow from the continuum limit of the lattice formulation. In more precise terms, the covariance of the lattice theory, uniquely defined by the lattice Laplacian with some choice of boundary conditions, converges to the covariance of the continuum theory, uniquely defined by the continuum Laplacian with boundary conditions stemming from the lattice theory, see \cite[Ch.~VIII]{Simon1974}. Furthermore, we may expect the field configurations $\boldsymbol{\varphi}$ to become functions\footnote{As discussed in Appendix \ref{app:func_integrals}, we will actually realize the field theory as a Gaussian measure on a space of generalized functions or distributions. See also \cite[Sec.~\S I.2]{Simon1974} and \cite{Sheffield2007}.} $\varphi (x)$ defined on the Euclidean spacetime region $\Omega$. Assuming a vanishing mean field configuration, the Gaussian measure (and thus the field theory) is completely determined by a covariance operator $\hat{C} = (- \Delta + m^2)^{-1}$, where suitable boundary conditions are to be imposed in the precise definition of the differential operator $- \Delta + m^2$.

Since the theory is determined by (the inverse of) a differential operator, its properties depend non-trivially on $d$, the dimension of Euclidean spacetime. In particular, the dimension $d$ determines the asymptotic behaviour of the spectrum of the covariance operator, i.e., the ultraviolet (UV) properties of the covariance operator, see Section \ref{sec:covariance_operators}. For example, in $d = 1$, the theory is UV-finite, since the corresponding propagator is continuous. As a consequence, in this case we may realize the theory as a Gaussian measure on the Hilbert space $L^2 (\Omega) = L^2 (\Omega, \D x)$ of square-integrable \emph{functions}, cf. Mourier's Theorem \cite{Mourier1953} (Theorem \ref{th:Mourier}). This crucial dependence of the UV properties of the covariance operator on the Euclidean spacetime dimension also has consequences for the relative entropy. In Section \ref{sec:diff_masses}, we demonstrate that the relative entropy between two field theories on a bounded region $\Omega \subset \R^d$ is finite for $d < 4$ if they have different masses, and is finite for $d < 3$ if they have different Robin-type boundary conditions.

We conclude this Section by noting that one may also think of a free scalar field theory as a Gaussian random process \cite{Lifshits2012}. In particular, the field $\varphi$ (the integration variable in the functional integral) is interpreted as Gaussian random variable indexed by (test) functions. It was shown by Nelson \cite{Nelson1973a,Nelson1973b} that this random process has the Markov property and is thus a Markov random field \cite{Kindermann1980}. Heuristically, the Markov property of the free scalar field stems from the fact that the Euclidean action functional only contains ``nearest-neighbour'' interactions introduced by the Laplacian. In the lattice-regularized theory, the nearest-neighbour property becomes manifest, and the theory is essentially an Ising ferromagnet, see also \cite[Sec.~IV]{Guerra1975b} and \cite[Sec.~IX.1]{Guerra1976}.

\subsection{Boundary Conditions and Covariance Operators}\label{sec:covariance_operators}

In this Section we discuss the covariance operators needed for the rest of this work. Recall from the previous Sections that we can completely specify a free scalar field theory (with vanishing mean) by providing the differential operator $- \Delta + m^2$ in the Euclidean action. The correlations between different Euclidean spacetime points $\boldsymbol{x}$ and $\boldsymbol{y}$ are described the Green's function $G (\boldsymbol{x},\boldsymbol{y})$, which is the integral kernel of the inverse of the differential operator, and we have the identity
\begin{equation}\label{eq:DG=1}
    (- \Delta + m^2) \, G (\boldsymbol{x},\boldsymbol{y}) = \delta^{(d)} (\boldsymbol{x}-\boldsymbol{y}) \; .
\end{equation}
When we define the field theory on a bounded open subset $\Omega \subset \R^d$, we need to specify some boundary conditions in the definition of the operator $- \Delta + m^2$. The Green's function $G(\boldsymbol{x},\boldsymbol{y})$ depends on the choice of these boundary conditions, thus different boundary conditions describe different correlations of Euclidean spacetime points even within the region $\Omega$. The boundary conditions considered in this work are the classical choices of Dirichlet, Neumann, periodic, Robin and free boundary conditions, which are also studied in \cite{Guerra1976,Glimm2012}. In this Section, we explore how boundary conditions are related to self-adjoint realizations of the formal operator $- \Delta + m^2$ as operators on $L^2 (\Omega)$. We then state important properties of and between these self-adjoint operators.

\subsubsection{Covariance Operator of Massive Scalar Field Theory}

Before we study the covariance operators of field theories on a bounded subregion $\Omega$ (which we shall refer to as the local case), it is instructive to start with the covariance of a field defined on all of $\R^d$ (the global case). We demonstrate that in this case the question of a realization as a self-adjoint operator on the Hilbert space of square-integrable functions is greatly simplified. Throughout this Section, let $m > 0$. We denote\footnote{As stated in \cite[Sec.~3.5]{Davies1996}, the results presented in this paragraph also hold if we replace $C_0^\infty (\R^d)$ by $\mathscr{S} (\R^d)$, the Schwartz space of smooth functions of rapid decrease.} by $C_0^\infty (\R^d)$ the real vector space of real-valued smooth functions on $\R^d$ with compact support. We define an operator on $L^2 (\R^d)$, the real Hilbert space of real-valued square-integrable functions on $\R^d$, that acts on functions $f \in C_0^\infty (\R^d)$ as $f \mapsto (- \Delta + m^2) f$, where $\Delta$ is the $d$-dimensional Laplacian, and denote this operator by $(- \Delta + m^2)|_{C_0^\infty}$. It is a densely defined, strictly positive and symmetric operator on $L^2 (\R^d)$ that is bounded from below by $m^2$. Since $(- \Delta + m^2)|_{C_0^\infty}$ is symmetric, it is closable \cite[p.~255]{Reed1981}. Moreover, it can be shown that $(- \Delta + m^2)|_{C_0^\infty}$ is essentially self-adjoint as an operator on $L^2 (\R^d)$ \cite[Thm.~3.5.3]{Davies1996}. Therefore, $(- \Delta + m^2)|_{C_0^\infty}$ has one and only one self-adjoint extension \cite[p.~256]{Reed1981}, namely its closure
\begin{equation}\label{eq:D_global}
    D \coloneqq \overline{(- \Delta + m^2 )|_{C_0^\infty}}^{L^2 (\R^d)} \; .
\end{equation}
Note that $D$ is the operator whose graph, $\Gamma (D)$, is the closure of the set $\{ (f, (-\Delta + m^2) f) : f \in  C_0^\infty (\R^d) \}$ in $L^2 (\R^d) \oplus L^2 (\R^d)$. Hence, the domain of $D$, denoted $\mathfrak{D} (D)$, is the completion of $C_0^\infty (\R^d)$ in the norm $|\!|\!|f|\!|\!| = \|f\|_{L^2} + \| D f \|_{L^2}$, see \cite[Prob.~VIII.15]{Reed1981}. One can show that $\mathfrak{D} (D) = H^{+2} (\R^d)$, the Hilbert-Sobolev space of order $2$ over $\R^d$, for definitions see Appendix \ref{app:sobolev_spaces} and references therein. As $D$ is the closure of a strictly positive operator, it is also strictly positive and furthermore it is also bounded from below by $m^2$, see \cite[\S14]{Faris1975}.

The inverse of the differential operator in \eqref{eq:D_global} is the covariance operator of a free massive scalar field theory over Euclidean spacetime $\R^d$, see \cite{Glimm2012,Dimock2011}. Notice that, interpreting the field theory as a Euclidean \emph{quantum} field theory, this choice of covariance operator describes the vacuum representation of the corresponding relativistic quantum field theory with $d-1$ space and one time dimension. Alternatively, we could compactify Euclidean spacetime $\R^d$ along one coordinate direction (which is then interpreted as the Euclidean time direction) into the circle $\mathbb{S}_\beta$ of circumference $\beta$. The inverse of the unique self-adjoint extension of $(- \Delta + m^2)|_{C_0^\infty}$ on $\mathbb{S}_\beta \times \R^{d-1}$ is then the covariance operator of a Euclidean quantum field theory describing a relativistic quantum field theory in $d-1$ space dimensions at inverse temperature $\beta = 1 / T$, see \cite{Hoegh1974,Klein1981}. The vacuum theory can then formally be obtained from this theory by taking the limit $\beta \to +\infty$, i.e., the vacuum corresponds to vanishing temperature.

As $D$ is bounded from below by $m^2 > 0$, it is a bijection from its domain $H^{+2} (\R^d)$ onto $L^2 (\R^d)$, cf. Proposition \ref{prop:bounded_below_bijective}. Its inverse $D^{-1}$ is a strictly positive bounded self-adjoint pseudo-differential operator with $\|D^{-1}\| \leq m^{-2}$ (cf. Corollary \ref{cor:bounded_inverse}), called the Green's operator, acting on functions $f \in L^2 (\R^d)$ as $f \mapsto G * f$, where $*$ denotes convolution and $G$ is a symmetric integral kernel of positive type, called the fundamental solution of $D$, given by \cite[Sec.~1.5]{Salmhofer2007}
\begin{align*}
    G (\boldsymbol{x},\boldsymbol{y};m) &= \int_{\R^d} \E^{\I \boldsymbol{p} \cdot (\boldsymbol{x}-\boldsymbol{y})} \, (|\boldsymbol{p}|^2 + m^2)^{-1} \; \frac{\D^d p}{(2 \pi)^d} \\
    &= \frac{1}{(2 \pi)^{\frac{d}{2}}} \left( \frac{m}{|\boldsymbol{x}-\boldsymbol{y}|} \right)^{\frac{d}{2}-1} K_{\frac{d}{2}-1} (m |\boldsymbol{x}-\boldsymbol{y}|) \; , \numberthis \label{eq:fundamental_solution}
\end{align*}
where $\boldsymbol{x}, \boldsymbol{y} \in \R^d$ and $\boldsymbol{x} \neq \boldsymbol{y}$. Here, $K_\alpha (z)$ is the modified Bessel function of the second kind \cite{Watson1995}. For Euclidean spacetime dimensions $d \leq 3$, the fundamental solution can be written as
\begin{alignat}{2}
  &d = 1: \qquad G (x,y;m) &&= \frac{1}{2m} \, \E^{-m |x-y|} \; , \label{eq:fund_sol_1d} \\
  &d = 2: \qquad G (\boldsymbol{x},\boldsymbol{y};m) &&= \frac{1}{2\pi} \, K_0 (m |\boldsymbol{x}-\boldsymbol{y}|) \; , \\
  &d = 3: \qquad G (\boldsymbol{x},\boldsymbol{y};m) &&= \frac{1}{4\pi} \, \frac{\E^{-m |\boldsymbol{x}-\boldsymbol{y}|}}{|\boldsymbol{x}-\boldsymbol{y}|} \; .
\end{alignat}
Notice that in $d=1$, the fundamental solution is a continuous function on $\R$ and in particular, it is finite for coinciding points $x = y$. For $d \geq 2$, the fundamental solution diverges ``on the diagonal''. This divergence is called an ultraviolet (UV) divergence \cite{Gurau2014}. Thus, we may call a free scalar field theory in $d=1$ UV-finite. Another manifestation of this will be the fact that the covariance operator for a scalar field over a bounded domain will be of trace class in $d=1$, see the discussion below. We note that for small distances $\varepsilon \coloneqq |\boldsymbol{x}-\boldsymbol{y}| \ll m^{-1}$, the fundamental solution behaves like \cite[Lem.~1.10]{Salmhofer2007}
\begin{equation}\label{eq:greens_diagonal}
    G (\varepsilon;m) = \frac{\mathrm{const.}}{m^{2-d}} \begin{cases} 
      \log \varepsilon^{-1} \quad & \mathrm{for} \;\; d = 2 \\
      \varepsilon^{-(d-2)} \quad & \mathrm{for} \;\; d \geq 3
   \end{cases} \; .
\end{equation}
Thus, the theory becomes in this sense ``more singular'' in higher dimensions.

\subsubsection{Classical Boundary Conditions}\label{sec:classical_bcs}

We now focus on the covariance operators of a free field theory over a bounded region\footnote{In practice, we will only be interested in very well-behaved regions, such as spheres and cubes as well as unions of finite disjoint families thereof. Hence, we will not address possible complications that may arise for very pathological choices of $\Omega$.} $\Omega \subset \R^d$. Some properties of the global covariance discussed above, like the $d$-dependence of the character of the singularity around the diagonal, carry over to the local case. The most prominent difference between the global and the local case is the need to impose suitable boundary conditions. As argued above, there is one and only one self-adjoint extension of $(- \Delta + m^2)|_{C_0^\infty}$ when $\Omega = \R^d$. For bounded regions, however, there are uncountable infinitely many self-adjoint extensions. We demonstrate in the following that different self-adjoint extensions correspond to different boundary conditions. We first introduce boundary conditions via the variational principle, an overview of which can be found, e.g., in \cite{Tsang2000} and \cite{Olver1993}. The connection to self-adjoint extensions of $(- \Delta + m^2)|_{C_0^\infty}$ is then made via quadratic forms and a representation Theorem by Kato \cite{Kato1995}.

Let $\Omega \subset \R^d$ be open, bounded and with sufficiently smooth boundary $\partial \Omega$. For every $f \in C^\infty (\overline{\Omega})$, we define the Euclidean action functional
\begin{align*}
    S_\mathrm{E} [f; b] &= S^\Omega_\mathrm{E} [f] + S^{\partial \Omega}_\mathrm{E} [f;b] \\
    &= \frac{1}{2} \int_\Omega \left[ \left( \boldsymbol{\nabla} f (\boldsymbol{x}) \right)^2 + m^2 \left( f (\boldsymbol{x}) \right)^2 \right] \D^d x \\
    &+ \frac{1}{2} \iint_{\partial \Omega} b (\boldsymbol{x}, \boldsymbol{y}) f (\boldsymbol{x}) f (\boldsymbol{y}) \; \D S (\boldsymbol{x}) \, \D S (\boldsymbol{y}) \; , \label{eq:Eucl_action_bcs} \numberthis
\end{align*}
where $b$ is a symmetric (possibly formal) distributional kernel. To such an action we can define a corresponding bilinear form via\footnote{For a single Euclidean spacetime dimension $d=1$ we construct the boundary part of the action as
\begin{equation}
    S_\mathrm{E}^{\partial \Omega} [f;b] = \frac{1}{2} \sum_{x,y \in \partial \Omega} b (x,y) f (x) f (y) \; ,
\end{equation}
where the sum goes over the boundary points. The corresponding term in the bilinear form is constructed analogously.
}
\begin{align*}
    \mathfrak{q}_b (f,g) &= \int_\Omega \left[ \left( \boldsymbol{\nabla} f (\boldsymbol{x}) \right) \cdot \left( \boldsymbol{\nabla} g (\boldsymbol{x}) \right) + m^2  f (\boldsymbol{x}) g (\boldsymbol{x}) \right] \D^d x \\
    &+ \iint_{\partial \Omega} b (\boldsymbol{x}, \boldsymbol{y}) f (\boldsymbol{x}) g (\boldsymbol{y}) \; \D S (\boldsymbol{x}) \, \D S (\boldsymbol{y}) \; . \label{eq:Eucl_action_bilinear} \numberthis
\end{align*}
The kernel $b$ specifies the boundary conditions on $\partial \Omega$. For example, if we want to describe local boundary conditions, $b$ takes the form
\begin{equation}\label{eq:local_kernel_b}
    b (\boldsymbol{x}, \boldsymbol{y}) = \delta^{(d-1)} (\boldsymbol{x} - \boldsymbol{y}) \, \widetilde{b} (\boldsymbol{y}) \; ,
\end{equation}
where $\widetilde{b}$ is a function on $\partial \Omega$.

Upon varying the Euclidean action, we obtain
\begin{align*}
    \delta S_\mathrm{E} &= \int_\Omega \left( - \Delta f (\boldsymbol{x}) + m^2 f (\boldsymbol{x}) \right) \delta f (\boldsymbol{x}) \; \D^d x \\
    &+ \int_{\partial \Omega} \frac{\partial f}{\partial n} (\boldsymbol{x}) \, \delta f (\boldsymbol{x}) \; \D S (\boldsymbol{x}) + \iint_{\partial \Omega} b (\boldsymbol{x}, \boldsymbol{y}) f (\boldsymbol{x}) \, \delta f (\boldsymbol{y}) \; \D S (\boldsymbol{x}) \, \D S (\boldsymbol{y}) \; , \numberthis
\end{align*}
where $\partial / \partial n \coloneqq \boldsymbol{n} \cdot \boldsymbol{\nabla}$ and $\boldsymbol{n}$ is the outward pointing unit vector on $\partial \Omega$. The principle of stationary action, $\delta S_\mathrm{E} = 0$, then yields the following set of equations,
\begin{align}
    ( - \Delta + m^2 ) f (\boldsymbol{x}) &= 0 \; , \qquad \boldsymbol{x} \in \Omega \; , \\
    \frac{\partial f}{\partial n} (\boldsymbol{x}) + \int_{\partial \Omega} b (\boldsymbol{x},\boldsymbol{y}) f (\boldsymbol{y}) \; \D S (\boldsymbol{y}) &= 0 \; , \qquad \boldsymbol{x} \in \partial \Omega \; . \label{eq:boundary_condition_bkernel}
\end{align}
We thus see that the action \eqref{eq:Eucl_action_bcs} indeed yields the homogeneous boundary value problem described by the kernel $b$. In the following, we make the connection between actions of the form \eqref{eq:Eucl_action_bcs} and self-adjoint extensions of $(- \Delta + m^2)|_{C_0^\infty}$. We start with the classical choices of Dirichlet, Neumann and Robin boundary conditions. All these boundary conditions are local in the sense that the kernel $b$ is of the form \eqref{eq:local_kernel_b}. Thus, in this case we only need to specify the function $\widetilde{b}$. In addition, we study periodic boundary conditions, which can be written as a periodic sum of local boundary conditions of the form \eqref{eq:local_kernel_b}.

The first boundary conditions we consider are of Dirichlet type, which corresponds to $f \equiv 0$ on $\partial \Omega$. Formally, we can incorporate Dirichlet boundary conditions by choosing $\widetilde{b} \equiv + \infty$. To make the discussion more rigorous, instead of employing an infinite boundary term, we change the domain of the Dirichlet action to a subspace of functions which directly satisfy Dirichlet boundary conditions. Following the terminology of \cite{Tsang2000}, we call such boundary conditions essential. If the boundary conditions are such that the kernel $b$ can be chosen in a way that the corresponding action is well-defined, we call them natural. Examples of natural boundary conditions are Neumann and local Robin boundary conditions discussed below.

In this spirit, we define the Dirichlet form $\mathfrak{q}_\mathrm{D}$ on $C_0^\infty (\Omega) \times C_0^\infty (\Omega) \subset C^\infty (\overline{\Omega}) \times C^\infty (\overline{\Omega})$ via
\begin{equation}
    \mathfrak{q}_\mathrm{D} (f,g) = \int_\Omega \left( \boldsymbol{\nabla} f \cdot \boldsymbol{\nabla} g + m^2 f g \right) \; \D^d x \; .
\end{equation}
Clearly, $\mathfrak{q}_\mathrm{D}$ is densely defined (as a form on $L^2 (\Omega)$), symmetric and bounded from below by $m^2 > 0$. Furthermore, since the gradient $\boldsymbol{\nabla}$ defined on $C^\infty_0 (\Omega)$ is closable as an operator on $L^2 (\Omega)$ \cite[p.~14]{Robinson1971}, the Dirichlet form is closable \cite[Prop.~2.2]{Faris1975}. Therefore, by Theorem \ref{thm:second_rep_thm}, the closure $\overline{\mathfrak{q}_\mathrm{D}}$ determines a unique positive self-adjoint operator $- \Delta_\mathrm{D} + m^2$ via
\begin{equation}
    \overline{\mathfrak{q}_\mathrm{D}} (f,g) = \braket{f, (- \Delta_\mathrm{D} + m^2) g}_{L^2 (\Omega)} \; , \qquad f \in \mathcal{Q} (- \Delta_\mathrm{D}) \; , \;\; g \in \mathfrak{D} (- \Delta_\mathrm{D}) \; ,
\end{equation}
where $\mathcal{Q} (- \Delta_\mathrm{D})$ denotes the form domain of the operator $- \Delta_\mathrm{D}$, see Appendix \ref{app:quadratic_forms}. For obvious reasons, the operator $- \Delta_\mathrm{D}$ is called the Dirichlet Laplacian as, in a certain sense, the functions in its domain vanish on $\partial \Omega$\footnote{More precisely, if $\partial \Omega$ is $C^2$, it can be shown that $\mathfrak{D} (- \Delta_\mathrm{D}) = H^{+2} (\Omega) \cap H^{+1}_0 (\Omega)$, see \cite[Sec.~6.3.2]{Evan2010}. Thus, the domain of the Dirichlet Laplacian consists precisely of those functions in $H^{+2} (\Omega)$ with vanishing trace, see Appendix \ref{app:sobolev_spaces}.}.

The next kind of boundary conditions we discuss are Neumann boundary conditions, which correspond to $\partial f / \partial n = 0$ on $\partial \Omega$, i.e., $\widetilde{b} \equiv 0$. The Neumann form $\mathfrak{q}_\mathrm{N}$ is thus defined on $C^\infty (\overline{\Omega}) \times C^\infty (\overline{\Omega})$ via
\begin{equation}\label{eq:Neumann_form}
    \mathfrak{q}_\mathrm{N} (f,g) = \int_\Omega \left( \boldsymbol{\nabla} f \cdot \boldsymbol{\nabla} g + m^2 f g \right) \; \D^d x \; .
\end{equation}
Just like in the case of the Dirichlet form, the closure $\overline{\mathfrak{q}_\mathrm{N}}$ of the Neumann form uniquely determines a positive self-adjoint operator $- \Delta_\mathrm{N} + m^2$, where $- \Delta_\mathrm{N}$ is the Neumann Laplacian, cf. \cite[Sec.~XIII.15]{Reed1978}.

The choice $\widetilde{b} = \sigma$, where $\sigma$ is a continuous function on $\partial \Omega$, describes local Robin boundary conditions, i.e., $\partial f / \partial n = - \sigma f$ on $\partial \Omega$\footnote{Recall that we defined $\partial / \partial n$ to be the outward normal derivative. In contrast, \cite{Guerra1975a,Robinson1971} define $\partial / \partial n$ to be the inward normal derivative. Thus, there is an additional minus sign in the definition of Robin boundary conditions in our work compared to the above references.}. The Robin form is defined on $C^\infty (\overline{\Omega}) \times C^\infty (\overline{\Omega})$ by
\begin{equation}
    \mathfrak{q}_\sigma (f,g) = \int_\Omega \left( \boldsymbol{\nabla} f \cdot \boldsymbol{\nabla} g + m^2 f g \right) \; \D^d x + \int_{\partial \Omega} \sigma f g \; \D S \; .
\end{equation}
Note that the boundary form $\int_{\partial \Omega} \sigma f g \; \D S$ is infinitesimally relatively form bounded with respect to the Neumann form $\mathfrak{q}_\mathrm{N}$ \cite[p.~34]{Robinson1971}. The closure of $\mathfrak{q}_\sigma$ determines the self-adjoint operator $- \Delta_\sigma + m^2$ (cf. \cite[Sec.~\S2]{Robinson1971}), where $- \Delta_\sigma$ is the Robin Laplacian.

Finally, we introduce periodic boundary conditions, which are another example of essential boundary conditions. Let $\Omega = (0,L)^d$ be an open $d$-cube of edge length $L$. Periodic boundary conditions correspond to the choice
\begin{equation}
    b (\boldsymbol{x}, \boldsymbol{y}) = \lim_{\widetilde{b} \to + \infty} \widetilde{b} \left[ \sum_{r_j = \pm 1} \delta^{(d-1)} \left( \boldsymbol{x} - \boldsymbol{y} \right) - \delta^{(d-1)} \left( \boldsymbol{x} - \boldsymbol{y} + L \boldsymbol{r} \right) \right] \; .
\end{equation}
The boundary condition in \eqref{eq:boundary_condition_bkernel} becomes here
\begin{equation}
    \frac{\partial f}{\partial n} (\boldsymbol{x}) + \lim_{\widetilde{b} \to + \infty} \widetilde{b} \left[ f (\boldsymbol{x}) - f (\boldsymbol{x} + L \boldsymbol{r}) \right] = 0 \; ,
\end{equation}
where $\boldsymbol{x} + L \boldsymbol{r}$ is the boundary point opposite to $\boldsymbol{x}$. In the limit $\widetilde{b} \to + \infty$, this equation yields the usual periodic boundary conditions,
\begin{align}
    f (\boldsymbol{x}) - f (\boldsymbol{x} + L \boldsymbol{r}) &= 0 \; , \\
    \frac{\partial f}{\partial n} (\boldsymbol{x}) + \frac{\partial f}{\partial n} (\boldsymbol{x} + L \boldsymbol{r}) &= 0 \; .
\end{align}

Alternatively, just like for Dirichlet boundary conditions, we can incorporate periodic boundary conditions by choosing the domain of functions suitably. In particular, we define the periodic form
\begin{equation}
    \mathfrak{q}_\mathrm{P} (f,g) = \int_\Omega \left( \boldsymbol{\nabla} f \cdot \boldsymbol{\nabla} g + m^2 f g \right) \; \D^d x \; ,
\end{equation}
on $\mathfrak{D} (P_0) \times \mathfrak{D} (P_0)$, where\footnote{For a motivation of the notation, see \cite[Sec.~\S2]{Robinson1971}.}
\begin{equation}
    \mathfrak{D} (P_0) = \left\{ f \in C^\infty (\overline{\Omega}) \; : \; f |_{x_i = -L/2} = f |_{x_i = +L/2} \; , \quad i = 1, \ldots, d \; \right\} \subset C^\infty (\overline{\Omega}) \; .
\end{equation}
The self-adjoint operator associated with the closures of the periodic form will be denoted by $- \Delta_\mathrm{P} + m^2$.

The above discussion illustrates the connection between boundary conditions and self-adjoint realizations of the formal differential operator $- \Delta + m^2$ as an operator on $L^2 (\Omega)$. The approach of defining different self-adjoint realizations implicitly via forms is perhaps particularly natural from the point of view of a physicist working in field theory: We start by writing down an action which, by the principle of stationary action, yields the desired boundary value problem. The closure of the bilinear form associated with this action then corresponds to a unique self-adjoint realization of $- \Delta + m^2$ as an operator on $L^2 (\Omega)$.

The operators $- \Delta_\mathrm{X} + m^2$, $\mathrm{X} \in \{\mathrm{D}, \mathrm{N}, \mathrm{P}, \sigma \}$, introduced above are strictly positive and we denote their inverses by $\hat{G}_\mathrm{D}$, $\hat{G}_\mathrm{N}$, $\hat{G}_\mathrm{P}$ and $\hat{G}_\sigma$, respectively. These operators are compact integral operators on $L^2 (\Omega)$ with integral kernels given by the corresponding Green's functions. In Section \ref{sec:diff_masses}, the operators $\hat{G}_\mathrm{X}$ will serve as the covariance operators of free massive scalar field theories over a bounded region $\Omega$.

\subsubsection{Free Boundary Conditions}\label{sec:free_bcs}

So far, we have constructed self-adjoint realizations of $- \Delta + m^2$ by specifying bilinear forms. This is particularly convenient when we are interested in the ``classical'' boundary conditions $\{\mathrm{D}, \mathrm{N}, \mathrm{P}, \sigma \}$, as these are easier to implement via the differential operator. The covariance operators of the theories describing fields satisfying these boundary conditions then follow as the inverses of these self-adjoint operators. However, we may also wish to realize a theory with a particular covariance operator and not by imposing any specific boundary conditions. The boundary conditions then follow from the inverse of the prescribed covariance operator. In this case we thus obtain boundary conditions from the covariance and not the covariance from the boundary conditions. An example of this procedure, which we will use in Section \ref{sec:applications}, is presented below \cite{Guerra1975a,Guerra1976,Glimm2012}.

Let us start with a physical motivation. Suppose we are given a free scalar field of mass $m$ on Euclidean spacetime $\R^d$. From this we want to obtain a \emph{reduced} theory, i.e., a theory that fully describes the physics within a bounded region $\Omega$ but contains no information about the physics in the exterior $\R^d \setminus \Omega$. Heuristically, we can obtain such a \emph{local} theory from the \emph{global} theory by integrating out the degrees of freedom in the exterior region, i.e., marginalizing the global probability distribution. Then, all expectation values of observables supported in $\Omega$ can be computed with either the global or the local theory\footnote{This is similar to the concepts of partial trace and reduced density operator in quantum information theory, see, e.g., \cite[Sec.~2.4.3]{Nielsen2010}.}. Since the theory under consideration here is Gaussian with vanishing mean, we can construct such a reduced (or marginalized) theory simply by requiring that it describes the same correlations in the region $\Omega$ as the global theory. This motivates the following discussion\footnote{A similar line of reasoning has recently been employed in the context of local non-equilibrium dynamics of relativistic quantum field theories, leading to the discovery of a ``hyperbolic version'' of the free boundary conditions considered here \cite{Schroefl2024}.}.

Let $\Omega$ be a bounded open subset of $\R^d$. Recall the definition of the fundamental solution $G$ given in \eqref{eq:fundamental_solution}. We define the operator $\hat{G}_0$ on $L^2 (\Omega)$ by
\begin{equation}
    (\hat{G}_0 f) (\boldsymbol{x}) = \int_\Omega G (\boldsymbol{x},\boldsymbol{y};m) f (\boldsymbol{y}) \; \D^d y \; , \qquad f \in L^2 (\Omega) \; .
\end{equation}
From the properties of $D^{-1}$, we can see that $\hat{G}_0$ is a strictly positive, bounded and self-adjoint operator on $L^2 (\Omega)$ with $\|\hat{G}_0\| \leq m^{-2}$. The operator $\hat{G}_0$ is a kind of volume potential (see., e.g., \cite{Steinbach2010}). In particular, it is the Bessel potential, a comprehensive overview of which can be found in \cite{Aronszajn1961,Adams1967,Adams1969}.

It was shown in \cite[Thm.~II.6]{Guerra1976} that the inverse of $\hat{G}_0$ is a self-adjoint extension of $(- \Delta + m^2 )|_{C_0^\infty (\Omega)}$. We now derive the boundary conditions that are satisfied by functions in the domain of this operator. This question is treated in \cite[Sec.~II.2]{Guerra1976} (see also \cite{Kalmenov2009}). Fix $f \in \mathfrak{D} (\hat{G}_0^{-1}) \cap C^2 (\Omega) \cap C^1 (\overline{\Omega})$. Using Green's third identity \cite[Ch.~6]{McLean2000}, which can be derived from Green's second identity \cite[Ch.~4]{McLean2000} together with \eqref{eq:DG=1}, we have, for all $\boldsymbol{x} \in \Omega$,
\begin{align*}
    f (\boldsymbol{x}) &= ( \hat{G}_0 \hat{G}_0^{-1} f ) (\boldsymbol{x}) = \int_\Omega G (\boldsymbol{x},\boldsymbol{y};m) \, (- \Delta_y + m^2) f (\boldsymbol{y}) \; \D^d y \\
    &= f (\boldsymbol{x}) + \int_{\partial \Omega} \left[ \frac{\partial G (\boldsymbol{x},\boldsymbol{y};m)}{\partial n_y} f (\boldsymbol{y}) - \frac{\partial f (\boldsymbol{y})}{\partial n_y} G (\boldsymbol{x},\boldsymbol{y};m) \right] \D S (\boldsymbol{y}) \; , \numberthis
\end{align*}
which implies
\begin{equation}\label{eq:nonlocal_bcs_deriv1}
    \int_{\partial \Omega} \frac{\partial G (\boldsymbol{x},\boldsymbol{y};m)}{\partial n_y} f (\boldsymbol{y}) \; \D S (\boldsymbol{y}) = \int_{\partial \Omega} \frac{\partial f (\boldsymbol{y})}{\partial n_y} G (\boldsymbol{x},\boldsymbol{y};m) \; \D S (\boldsymbol{y}) \; , \qquad \boldsymbol{x} \in \Omega \; .
\end{equation}
It can be shown\footnote{The reader may also compare this with the theory of surface layer potentials, see, e.g., \cite{McLean2000,Steinbach2010,Hsiao2021}.} \cite[Sec.~II.2]{Guerra1976} that in the limit $\Omega \ni \boldsymbol{x}' \to \boldsymbol{x} \in \partial \Omega$ the left-hand side of the above equation can be written as
\begin{equation}
    \int_{\partial \Omega} \frac{\partial G (\boldsymbol{x},\boldsymbol{y};m)}{\partial n_y} f (\boldsymbol{y}) \; \D S (\boldsymbol{y}) = \iint_{\partial \Omega} G (\boldsymbol{x},\boldsymbol{y};m) k (\boldsymbol{y},\boldsymbol{z};m) f (\boldsymbol{z}) \; \D S (\boldsymbol{y}) \, \D S (\boldsymbol{z})
\end{equation}
for all $\boldsymbol{x} \in \partial \Omega$. Here, the kernel $k$ is given by
\begin{equation}
    k (\boldsymbol{x},\boldsymbol{y};m) = \frac{\partial^2}{\partial n_x \partial n_y} G_\mathrm{D}^\mathrm{ext.} (\boldsymbol{x},\boldsymbol{y};m) \; ,
\end{equation}
where $G_\mathrm{D}^\mathrm{ext.}$ is the Green's function of the Dirichlet problem in the exterior domain $\Omega^\mathrm{ext.} = \R^d \setminus \overline{\Omega}$. The condition \eqref{eq:nonlocal_bcs_deriv1} thus yields the boundary condition
\begin{equation}\label{eq:nonlocal_bcs}
    \frac{\partial f}{\partial n} (\boldsymbol{x}) = \int_{\partial \Omega} k (\boldsymbol{x},\boldsymbol{y};m) f (\boldsymbol{y}) \; \D S (\boldsymbol{y}) \; , \qquad \boldsymbol{x} \in \partial \Omega \; .
\end{equation}

Following \cite{Guerra1976}, we call these boundary conditions \emph{free} boundary conditions. We make two important observations. First, the boundary conditions in \eqref{eq:nonlocal_bcs} are non-local in the sense that they cannot be written in the form \eqref{eq:local_kernel_b}, i.e., the normal derivative at a point on the boundary depends on the value of the function at each point of the boundary. As we will discuss in Section \ref{sec:mutual_info}, this is necessary for two disjoint open regions $\Omega_A$ and $\Omega_B$ to ``communicate'' with each other. Secondly, the boundary condition (more precisely, the self-adjoint extension) depends on the mass as indicated by the mass dependence of the kernel $k(.,.;m)$ in \eqref{eq:nonlocal_bcs}.

Similar to the case of classical boundary conditions discussed in Section \ref{sec:classical_bcs}, we can associate a bilinear form (and thus an action) to the free boundary conditions. We define the free form $\mathfrak{q}_\mathrm{F}$ on $\mathcal{Q} (\mathfrak{q}_\mathrm{F}) = \mathfrak{D} (\hat{G}_0^{-1/2})$ by $\mathfrak{q}_\mathrm{F} (f,g) = \braket{\hat{G}_0^{-1/2} f, \hat{G}_0^{-1/2} g}_{L^2 (\Omega)}$. Clearly, $\mathfrak{q}_\mathrm{F}$ is symmetric, densely defined and positive. As $\hat{G}_0^{-1/2}$ is self-adjoint, it is closed and the same is true for $\mathfrak{q}_\mathrm{F}$ \cite[Ex.~VI.1.13]{Kato1995}. For any $f,g \in C^\infty (\overline{\Omega})$, the free form is given by
\begin{equation}\label{eq:free_form}
    \mathfrak{q}_\mathrm{F} (f,g) = \int_\Omega \left( \boldsymbol{\nabla} f \cdot \boldsymbol{\nabla} g + m^2 f g \right) \; \D^d x - \iint_{\partial \Omega} k (\boldsymbol{x}, \boldsymbol{y}; m) f (\boldsymbol{x}) g (\boldsymbol{y}) \; \D S (\boldsymbol{x}) \, \D S (\boldsymbol{y}) \; .
\end{equation}
Free boundary conditions thus correspond to the choice $b = - k$ in \eqref{eq:Eucl_action_bilinear}.

It is encouraging to see that all boundary conditions we investigate fall into the class of theories described by \eqref{eq:boundary_condition_bkernel} for a suitable choice of the kernel $b$. We conjecture that a very large class of physically interesting states in the region $\Omega$ can be described along these lines with a suitable boundary action $S_\mathrm{E}^{\partial \Omega}$, cf. \eqref{eq:Eucl_action_bcs}.

We finish this Section with a useful operator inequality. Let $T_1$ and $T_2$ be two bounded operators on a separable Hilbert space $\mathcal{H}$. If $\braket{h, T_1 h} \leq \braket{h, T_2 h}$ for all $h \in \mathcal{H}$, we write $T_1 \leq T_2$.
\begin{lemma}[{\cite[Sec.~III.1]{Guerra1976}, see also \cite[Sec.~7.7]{Glimm2012}}]\label{lem:operator_inequalities}
    Let $\Omega$ be a bounded open subset of $\R^d$ with boundary $\partial \Omega$. Then, the following operator inequality holds.
    \begin{equation}
        \hat{G}_\mathrm{D} \leq \hat{G}_0 \leq \hat{G}_\mathrm{N} \; .
    \end{equation}
    If $\Omega$ is a rectangular domain, we have the additional inequality
    \begin{equation}
        \hat{G}_\mathrm{D} \leq \hat{G}_\mathrm{P} \leq \hat{G}_\mathrm{N} \; .
    \end{equation}
\end{lemma}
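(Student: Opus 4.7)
The plan is to reduce every inequality in the lemma to the standard monotonicity principle for quadratic forms of positive self-adjoint operators: if $A$ and $B$ are strictly positive self-adjoint operators on $L^2(\Omega)$ with $\mathcal{Q}(A) \subseteq \mathcal{Q}(B)$ and $\mathfrak{q}_A(f) \geq \mathfrak{q}_B(f)$ for all $f \in \mathcal{Q}(A)$, then $A^{-1} \leq B^{-1}$ as bounded operators. Each of the Green's operators $\hat{G}_\mathrm{D}, \hat{G}_\mathrm{N}, \hat{G}_\mathrm{P}, \hat{G}_0$ is, by construction in Section \ref{sec:covariance_operators}, the bounded inverse of a strictly positive self-adjoint operator whose associated form was made explicit there. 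Hence all that must be checked is a string of form comparisons.

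The outer chain $\hat{G}_\mathrm{D} \leq \hat{G}_\mathrm{N}$ is immediate. The forms $\mathfrak{q}_\mathrm{D}$ and $\mathfrak{q}_\mathrm{N}$ agree pointwise with the Dirichlet integral $\int_\Omega ( |\boldsymbol{\nabla} f|^2 + m^2 f^2 ) \, \D^d x$, and their closures have form domains $H_0^{+1}(\Omega)$ and $H^{+1}(\Omega)$, respectively. The inclusion $H_0^{+1}(\Omega) \subseteq H^{+1}(\Omega)$ then gives the claim. In the rectangular case, exactly the same argument using the nested chain $H_0^{+1}(\Omega) \subseteq H_\mathrm{per}^{+1}(\Omega) \subseteq H^{+1}(\Omega)$ produces $\hat{G}_\mathrm{D} \leq \hat{G}_\mathrm{P} \leq \hat{G}_\mathrm{N}$. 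The lower bound $\hat{G}_\mathrm{D} \leq \hat{G}_0$ is the easy half of the sandwich: for $f \in H_0^{+1}(\Omega)$ the boundary contribution in \eqref{eq:free_form} vanishes, so $\mathfrak{q}_\mathrm{F}(f) = \mathfrak{q}_\mathrm{D}(f)$; because $C_0^\infty(\Omega) \subseteq \mathfrak{D}(\hat{G}_0^{-1})$, form closure gives $H_0^{+1}(\Omega) \subseteq \mathcal{Q}(\mathfrak{q}_\mathrm{F})$, and the monotonicity principle yields $\hat{G}_0^{-1} \leq \hat{G}_\mathrm{D}^{-1}$.

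The remaining bound $\hat{G}_0 \leq \hat{G}_\mathrm{N}$ is the crux of the lemma and amounts to proving
\begin{equation*}
    \iint_{\partial \Omega} k(\boldsymbol{x},\boldsymbol{y};m) \, f(\boldsymbol{x}) f(\boldsymbol{y}) \; \D S(\boldsymbol{x}) \, \D S(\boldsymbol{y}) \leq 0
\end{equation*}
on the common form domain. I would establish this through an energy identity for the exterior Dirichlet problem. Let $F$ be the unique decaying solution of $(-\Delta + m^2) F = 0$ on $\R^d \setminus \overline{\Omega}$ with $F|_{\partial \Omega} = f$. A layer-potential representation of $F$ built from $G_\mathrm{D}^\mathrm{ext}$ identifies the outward normal trace as $\partial_n F|_{\partial \Omega} = \int_{\partial \Omega} k(\cdot,\boldsymbol{y};m) f(\boldsymbol{y}) \, \D S(\boldsymbol{y})$. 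Integrating by parts in $\R^d \setminus \overline{\Omega}$ and using both the modified Helmholtz equation and the decay of $F$ at infinity produces
\begin{equation*}
    0 \leq \int_{\R^d \setminus \overline{\Omega}} \left( |\boldsymbol{\nabla} F|^2 + m^2 F^2 \right) \D^d x = - \iint_{\partial \Omega} k(\boldsymbol{x},\boldsymbol{y};m) \, f(\boldsymbol{x}) f(\boldsymbol{y}) \; \D S(\boldsymbol{x}) \, \D S(\boldsymbol{y}) \, ,
\end{equation*}
which is the required negativity of the boundary bilinear form.

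The principal obstacle is the rigorous handling of the hypersingular kernel $k$ together with the precise identification of the relevant trace spaces, so that the boundary double integral above and the trace of $\partial_n F$ are meaningful for all $f$ in the common form domain. This necessitates invoking the theory of surface-layer potentials and the Sobolev trace theorems available for a $C^1$-boundary (or more generally a Lipschitz boundary), as developed, e.g., in \cite{McLean2000,Steinbach2010,Hsiao2021}. Once the negativity of the boundary form is secured, the inequality $\mathfrak{q}_\mathrm{F}(f) \geq \mathfrak{q}_\mathrm{N}(f)$ holds on the common domain and the monotonicity principle completes the proof.
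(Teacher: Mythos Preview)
The paper does not prove this lemma; it is merely stated with citations to \cite{Guerra1976} and \cite{Glimm2012}, so there is no in-paper argument to compare your proposal against. Your approach via the form-monotonicity principle is correct and is essentially the classical one found in those references: the Dirichlet/periodic/Neumann chain follows from the nested form domains with identical form expression, $\hat G_\mathrm{D} \leq \hat G_0$ holds because every self-adjoint extension of $(-\Delta+m^2)|_{C_0^\infty(\Omega)}$ agrees with the Dirichlet form on $H_0^{+1}(\Omega)$, and the decisive bound $\hat G_0 \leq \hat G_\mathrm{N}$ reduces, as you identify, to the non-negativity of the exterior Dirichlet energy (equivalently, the exterior Dirichlet-to-Neumann map encoded by $k$ is non-negative). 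The technical caveats you flag---trace regularity and the interpretation of the hypersingular boundary form---are real but are handled in the cited literature. Incidentally, while the paper does not prove the lemma, it implicitly uses the $H^{-1}$ dual of your argument for $\hat G_\mathrm{D} \leq \hat G_0$ inside the proof of Theorem~\ref{thm:equiv_open_Dirichlet}: there one writes $\braket{f,\hat G_\mathrm{D} f}_{L^2} = \|p_\Omega f\|_{-1}^2 \leq \|f\|_{-1}^2 = \braket{f,\hat G_0 f}_{L^2}$ via Lemma~\ref{lem:Dirichlet_cov}, which is precisely the mirror of your $H^{+1}$-side form comparison.
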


\section{Information Theory and Relative Entropy}\label{sec:info_theory}

In this Section we introduce the relative entropy and discuss its properties. In particular, we focus on the relative entropy between two Gaussian probability measures, which correspond to free field theories. The concepts of equivalence and mutual singularity of two Gaussian measures play an important role in the discussion of the relative entropy. As we will see, the relative entropy between two Gaussian measures is finite if and only if the two measures are equivalent and it is infinite precisely when the two measures are mutually singular. In contrast to the case of Gaussian measures on $\R^n$, the question of equivalence is much more subtle when we consider Gaussian measures on infinite dimensional spaces. In Appendix \ref{app:equivalence}, we derive necessary and sufficient conditions for the equivalence of two Gaussian measures on infinite dimensional spaces describing free scalar field theories over a bounded subset $\Omega \subset \R^d$. We conclude this Section with a discussion of a special type of relative entropy, the mutual information.

\subsection{Equivalence and Mutual Singularity of Measures}

We start this Section by introducing the concepts of absolute continuity, equivalence and mutual singularity of probability measures. For a basic introduction to (Gaussian) measure theory, we refer the reader to Appendix \ref{app:func_integrals} and references therein.
\begin{definition}[{\cite[Def.~3.2.1]{Bogachev2007}}]
    Let $\mu$ and $\nu$ be two measures on a measurable space $(X,\mathcal{A})$.
    \begin{itemize}
        \item The measure $\nu$ is called absolutely continuous with respect to $\mu$, written $\nu \ll \mu$, if $\nu (A) = 0$ for every $A \in \mathcal{A}$ with $\mu (A) = 0$. If $\nu \ll \mu$ and $\mu \ll \nu$, then the measures $\mu$ and $\nu$ are called equivalent and we write $\mu \sim \nu$.
        \item The measure $\nu$ is called singular with respect to $\mu$, written $\nu \perp \mu$, if there exists a set $A \in \mathcal{A}$ such that $\mu (A) = 0$ and $\nu (X \setminus A) = 0$. One can show that $\nu \perp \mu$ implies $\mu \perp \nu$. Therefore, we call two such measures mutually singular.
    \end{itemize}
\end{definition}

Assume for a moment that $\mu$ and $\nu$ are probability measures. The null sets of $\mu$ and $\nu$, i.e., the sets of $\mu$- or $\nu$-measure zero, are those events that are impossible with respect to $\mu$ and $\nu$, respectively. In other words, if $S$ is a $\mu$-null set, then the probability of the event that the random variable $X$ whose law is given by $\mu$ is realized as an element in $S$ is zero. The probability measure $\nu$ is absolutely continuous with respect to the probability measure $\mu$ precisely when the family of $\mu$-null sets is a subset of the family of $\nu$-null sets. In particular, two probability measures are equivalent if and only if they have the same null sets, i.e., the same impossible events.

On the other hand, if $\mu$ and $\nu$ are mutually singular, then there exists an event that is impossible with respect to $\mu$ but certain with respect to $\nu$. Of course, the complementary event is certain with respect to $\mu$ but impossible with respect to $\nu$, reflecting the fact that $\nu \perp \mu$ implies $\mu \perp \nu$.

For the case where $\mu$ and $\nu$ are Gaussian measures, a Theorem by Feldman and H\'ajek (Theorem \ref{th:FeldmanHajek}) states that $\mu$ and $\nu$ are either equivalent or mutually singular. Recall that a Gaussian measure on $\R$ is called non-degenerate when the standard deviation $\sigma \in \R$ is positive and is called degenerate when it is the Dirac measure, i.e., when it assigns unit probability to its mean $a \in \R$, which corresponds formally to the choice $\sigma = 0$. Let $\mu$ and $\nu$ be centred Gaussian measures on $\R$ (i.e., Gaussian measures with mean zero) such that $\mu$ is non-degenerate and $\nu$ is degenerate. Then, the event that the random variable takes the value zero is certain with respect to $\nu$ but impossible with respect to $\mu$ as the set $\{0\}$ has Lebesgue measure zero. Similarly, the probability that the random variable takes any value in $\R \setminus \{0\}$ is certain with respect to $\mu$ but impossible with respect to the centred Dirac measure $\nu$. We conclude that $\mu \perp \nu$.

Let $\mu$ and $\nu$ again be (not necessarily centred) Gaussian measures on $\R$, but this time assume that they are both non-degenerate. The families of $\mu$- and $\nu$-null sets coincide in this case as they are both given by the subsets of $\R$ with Lebesgue measure zero. We therefore conclude that $\mu \sim \nu$. It is easy to see that this result generalizes to higher dimensions. In particular, any two non-degenerate\footnote{Recall that a Gaussian measure on $\R^n$ is called non-degenerate if the covariance matrix is full rank and degenerate otherwise. In particular, a Gaussian measure on $\R^n$ is degenerate precisely when it is supported on a lower-dimensional subspace of $\R^n$. Intuitively speaking, a degenerate Gaussian measure on $\R^n$ has vanishing variance in the eigendirections of the covariance matrix spanning the covariance matrix's kernel.} Gaussian measures on $\R^n$ are equivalent.

At this point, one may think that two non-degenerate\footnote{The notion of (non-) degeneracy of Gaussian measures on infinite dimensional spaces generalises from the corresponding notion of Gaussian measures on $\R^n$. In particular, a Gaussian measure on an infinite dimensional space is called non-degenerate precisely when its covariance form, see Appendix \ref{app:func_integrals} for definitions, is strictly positive \cite{Eldredge2016}, which implies that the measure is supported everywhere \cite[Sec.~3.6]{Bogachev2015}. For the Gaussian measures on infinite dimensional spaces considered in this work, this amounts to the covariance operator $\hat{C}$ being strictly positive.} Gaussian measures on \emph{infinite} dimensional spaces (which are needed for the description of non-interacting field theories in the continuum) are also always equivalent. However, as discussed in Appendix \ref{app:equivalence}, two non-degenerate Gaussian measures on infinite dimensional spaces may be mutually singular. In fact, it turns out that equivalence of Gaussian measures on infinite dimensional spaces is a rather strong requirement and establishing equivalence can be subtle. For example, a simple scaling of the covariance operator leads to mutually singular Gaussian measures: Two centred non-degenerate Gaussian measures $\mu_1$ and $\mu_2$ on a separable Hilbert space $\mathcal{H}$ with covariance operators $\hat{C}_1$ and $\hat{C}_2 = \gamma \hat{C}_1$, $\gamma > 1$, respectively, are mutually singular\footnote{To see this, take \cite[Thm.~2.25]{DaPrato2014} and notice that in this case $(\hat{C}_1^{-1/2} \hat{C}_2^{1/2})(\hat{C}_1^{-1/2} \hat{C}_2^{1/2})^* - I$ is proportional to the identity and therefore not a Hilbert-Schmidt operator on $\mathcal{H}$.}. This qualitative difference of Gaussian measures on $\R^n$ and infinite dimensional spaces is not only interesting from a purely mathematical point of view but has also important consequences for the study of the relative entropy between field theories. As we will argue in the next Section, the relative entropy between two Gaussian measures is finite if and only if the two Gaussian measures are equivalent. Thus, when one studies the relative entropy in continuum field theory, a key step is to prove (or disprove) the equivalence of the underlying probability measures. In Section \ref{sec:applications}, where we consider concrete examples of relative entropy in field theories, we address precisely this question. The necessary and sufficient conditions for equivalence of the Gaussian measures used in Section \ref{sec:applications} are derived in Appendix \ref{app:equivalence}.

\subsection{Relative Entropy}\label{sec:relative_entropy}

In this Section we introduce the relative entropy between two probability measures. Besides its properties as a divergence, we will also discuss its statistical interpretation. In particular, we argue that the relative entropy can be thought of as a measure of the distinguishability of two probability measures. From a mathematical point of view, the relative entropy allows for a meaningful generalization of the concept of an entropy to probability measures on infinite dimensional spaces. We start this Section with a short reminder of basic information theory.

Let $X$ be a random variable taking values in a finite set $\mathcal{X}$. Its probability mass function is defined by $p_X (x) = \mathbb{P} (X=x)$ for all $x \in \mathcal{X}$. The information content (or surprisal) of a realization $x \in \mathcal{X}$ of the random variable $X$ is defined as \cite{Borda2011}
\begin{equation}
    i_X (x) \coloneqq - \log p_X (x) = - \log \mathbb{P} (X=x) \; .
\end{equation}
The information content quantifies the amount of information we gain from observing a particular realization (or ``outcome'') $x \in \mathcal{X}$ of the random variable $X$. The average information content is of central importance in information theory and is called Shannon's entropy \cite{Shannon1948}. It is defined as
\begin{equation}
    S (X) \coloneqq \EE [i_X] = - \sum_{x \in \mathcal{X}} p_X (x) \log p_X (x) \; .
\end{equation}

Shannon's entropy is defined for probability distributions on finite or countable sets. If one considers na\"ive generalizations of Shannon's entropy to probability measures on other sets, important properties of Shannon's entropy are typically lost. For example, the differential entropy of probability distributions on $\R^n$ is neither non-negative nor invariant under a change of variables. An additional problem arises in the case of probability measures on infinite dimensional spaces, describing field theories in the continuum, where it is not clear how to make sense of the ``functional'' entropy
\begin{equation}\label{eq:func_entropy}
    S [p] = - \int p [\varphi] \log p [\varphi] \; \mathcal{D} \varphi \; .
\end{equation}
In practice, such a functional entropy generically suffers from UV-divergences, as can be seen from a simple one-loop calculation.

Following the ideas of Kullback and Leibler \cite{Kullback1951,Kullback1997}, one can extend the concept of entropy to a general measure theoretic framework\footnote{See also \cite{Masani1992}.}. Consider two probability spaces $(\Omega, \mathcal{A}, \mu_i)$, $i=1,2$. Furthermore, we shall assume that $\mu_1 \ll \mu_2$. Then, by the Radon-Nikodym Theorem \cite[Thm.~3.2.2]{Bogachev2007}, there exists a non-negative, real-valued, $\mu_2$-integrable function $f$ such that, for all measurable sets $A \subseteq \Omega$,
\begin{equation}
    \mu_1 (A) = \int_A f \; \D \mu_2 \; .
\end{equation}
The function $f$ is called the Radon-Nikodym derivative (or generalized density) of $\mu_1$ with respect to $\mu_2$ and is usually denoted by $f = \D \mu_1 / \D \mu_2$.

\begin{definition}[\cite{Kullback1951}]
    Let $(X,\mathcal{A},\mu_i)$, $i=1,2$, be two probability spaces and suppose that $\mu_1 \ll \mu_2$. The relative entropy (or Kullback–Leibler divergence) between $\mu_1$ and $\mu_2$ is defined by
    \begin{equation}
        D_\mathrm{KL} (\mu_1 \| \mu_2) = \int_X \frac{\D \mu_1}{\D \mu_2} (x) \, \log \left[ \frac{\D \mu_1}{\D \mu_2} (x) \right] \, \D \mu_2 (x) = \int_X \log \left[ \frac{\D \mu_1}{\D \mu_2} (x) \right] \, \D \mu_1 (x) \; .
    \end{equation}
\end{definition}
The relative entropy is non-negative and vanishes precisely when $\mu_1 = \mu_2$ \cite[Thm.~3.1]{Kullback1997}. It is, however, not a metric as it is not symmetric and, more importantly, does not satisfy the triangle inequality. Rather, it is an example of a (directed) statistical divergence, a concept that is central in the field of information geometry \cite{Amari2000,Amari2016,Ay2017}. The relative entropy is invariant under parameter transformations and additive for independent random variables (cf. \cite[Ch.~2]{Kullback1997}), as one would expect from an entropy. Finally, it satisfies a monotonicity property in the form of the data processing inequality \cite[Thm.~9]{VanErven2014}.

As already noted above, if $\mu_1$ and $\mu_2$ are Gaussian measures, then they are either equivalent or mutually singular. They are equivalent if and only if $\mu_1 \ll \mu_2$ and $\D \mu_1 / \D \mu_2 > 0$ almost everywhere with respect to $\mu_2$ \cite[Sec.~3.2]{Bogachev2007}. In this sense, the Radon-Nikodym derivative between two equivalent Gaussian measures is supported everywhere. If $\mu_1$ and $\mu_2$ are equivalent Gaussian measures, then the Radon-Nikodym derivative is given by $\D \mu_1 / \D \mu_2 = \exp F$, where $F$ is a second order polynomial \cite[Cor.~6.4.10]{Bogachev2015}. In particular, this means that $\log [ \D \mu_1 / \D \mu_2 ]$ is integrable with respect to $\mu_1$ and the relative entropy is finite. If $\mu_1$ and $\mu_2$ are mutually singular, the relative entropy is usually defined to be $+ \infty$. This definition can be motivated from the case of two probability distributions on finite sets whose support does not coincide and where one usually employs the definition $c \log \frac{c}{0} \coloneqq + \infty$. Therefore, for the case of two Gaussian measures $\mu_1$ and $\mu_2$ on a common measurable space $(X, \mathcal{A})$, the relative entropy is defined as
\begin{equation}
    D_\mathrm{KL} (\mu_1 \| \mu_2) = \begin{cases} 
        \begin{aligned}
            \int_X \frac{\D \mu_1}{\D \mu_2} (x) \log \, &\frac{\D \mu_1}{\D \mu_2} (x) \; \D \mu_2 (x) && \quad \text{ if } \; \mu_1 \sim \mu_2 \\[.9mm]
            + &\infty && \quad \text{ if } \; \mu_1 \perp \mu_2
        \end{aligned}
    \end{cases} .
\end{equation}

One often interprets the second argument of the relative entropy as a model (or reference, or approximation) for the true theory given in the first argument. In particular, the relative entropy $D_\mathrm{KL} (\mu_1 \| \mu_2)$ quantifies the average excess information content from using $\mu_2$ as a model when the true theory is given by $\mu_1$ \cite{Cover2005}. With this interpretation, the definition $D_\mathrm{KL} (\mu_1 \| \mu_2) = + \infty$ if $\mu_1 \perp \mu_2$ can be understood as the statement that the observation of an event that is impossible with respect to the model gives us infinite information about whether the model is equal to the true distribution, namely that it is not. In the following, we usually interpret the relative entropy as an information-theoretic measure of the distinguishability of two theories described by probability measures. The relative entropy is zero (and hence the theories are indistinguishable) precisely when $\mu_1 = \mu_2$. For $\mu_1 \neq \mu_2$, the distinguishability is positive and increases monotonically as the two theories become ``more different''\footnote{As an example, consider Section \ref{sec:diff_masses}, where the relative entropy increases as the difference in the masses increases.}. In the extreme case where the two theories are mutually singular, i.e., when the model predicts zero probability for an event that is certain with respect to the true distribution, we would intuitively say that the two theories can be perfectly distinguished from one another. This gives another motivation for the definition $D_\mathrm{KL} (\mu_1 \| \mu_2) = + \infty$ if $\mu_1 \perp \mu_2$.

\subsection{Mutual Information}

We now introduce a special kind of relative entropy, the mutual information. In Section \ref{sec:mutual_info}, we study the mutual information between two disjoint regions in Euclidean space, which we argue is the quantity that is closest to the notion of a relative entropy of entanglement as defined in \cite[Sec.~3.4]{Hollands2018} for relativistic quantum field theories, in classical statistical field theory. In the following, we discuss the properties of the mutual information for probability distributions over finite sets as well as multi-variate normal distributions.

The mutual information between two random variables $A$ and $B$ is defined as \cite{Shannon1948,Gelfand1959,Chiang1959}
\begin{align*}
    I (A:B) &\coloneqq D_\mathrm{KL} (\mu_{AB} \| \mu_A \otimes \mu_B) \\
    &= \int_{X \times Y} \frac{\D \mu_{AB}}{\D \mu_A \D \mu_B} (x,y) \; \log \left[ \frac{\D \mu_{AB}}{\D \mu_A \D \mu_B} (x,y) \right] \, \D \mu_A (x) \, \D \mu_B (y) \; , \label{eq:MI_def} \numberthis
\end{align*}
where $\mu_{AB}$ is a joint probability measure and $\mu_A$ and $\mu_B$ are marginals, and we assumed that $\mu_{AB} \ll \mu_A \otimes \mu_B$. For the particular case where we consider probability distributions on a finite set, we can write the mutual information as
\begin{align*}
    I (A : B) &= \sum_{i,j} p_{AB} (x_i, y_j) \, \log \frac{p_{AB} (x_i, y_j)}{p_A (x_i) \, p_B (y_j)} \\
    &= S (A) + S (B) - S (AB) \; , \label{eq:MI_finite_set} \numberthis
\end{align*}
where $p_A$ and $p_B$ are the marginal distributions given by
\begin{equation}
    p_A (x_i) = \sum_j p_{AB} (x_i, y_j) \; , \qquad p_B (y_j) = \sum_i p_{AB} (x_i, y_j) \; ,
\end{equation}
respectively. The second line in \eqref{eq:MI_finite_set} provides an interpretation of the mutual information in terms of information content: The mutual information is the average amount of information that is shared by (or is common to both) the random variables $A$ and $B$. Alternatively, the mutual information can be interpreted as the average amount of information that is gained about $A$ by observing $B$ and vice versa. In particular, the mutual information is zero precisely when $A$ and $B$ are independent, i.e., when $p_{AB} (x_i, y_j) = p_A (x_i) \, p_B (y_j)$, in which case $A$ and $B$ share no information. Thus, the mutual information quantifies how (in-)dependent two random variables are from each other.

It is instructive to consider the case where $A$ and $B$ are centred random variables following a Gaussian distribution over $\R^n$ and $\R^m$, respectively. Then, the joint distribution $p_{AB}$ is a multi-variate normal distribution with $(n+m) \times (n+m)$-dimensional covariance matrix
\begin{equation}
    \Sigma = \begin{pmatrix}
        \Sigma_A & \Sigma_{AB} \\
        \Sigma^\mathsf{T}_{AB} & \Sigma_B
    \end{pmatrix} \; ,
\end{equation}
where $\Sigma_{AB}$ is a $(n \times m)$-matrix, called the cross-covariance matrix of the joint distribution $p_{AB}$. The covariance matrices of the marginal distributions $p_A$ and $p_B$ are given by $\Sigma_A$ and $\Sigma_B$, respectively, and the covariance matrix of the product distribution $p_A \cdot p_B$ is given by $\Sigma_A \oplus \Sigma_B$. Using the definition of the mutual information in \eqref{eq:MI_def}, we see that the mutual information is the relative entropy between the ``true'' distribution containing all cross-correlations of the random variables $A$ and $B$ and the ``model'' which coincides with the true distribution except that it contains no cross-correlations. Using the expression for the relative entropy between multivariate Gaussian distributions, \eqref{eq:KL_multivariate_Gaussian}, the mutual information can in this case be written as
\begin{equation}
    I (A : B) = \frac{1}{2} \log \left[ \frac{\det \Sigma_A \, \det \Sigma_B}{\det \Sigma} \right] \; .
\end{equation}

\section{Applications of Relative Entropy in Statistical Field Theory}\label{sec:applications}

In this Section, we apply the results of Sections \ref{sec:func_integrals} and \ref{sec:info_theory} to statistical field theory. In Section \ref{sec:diff_masses}, we consider two free scalar field theories on a bounded domain with different mass parameters $m_1$ and $m_2$ but the same classical boundary conditions. We show that the equivalence of two such field theories depends non-trivially on $d$, the dimension of Euclidean space. For Dirichlet, Neumann and periodic boundary conditions on a cubic region, we then calculate the relative entropy of theories with different masses.

In Section \ref{sec:diff_bcs}, we discuss the properties of the relative entropy between two field theories on a bounded domain with different boundary conditions. While, in general, two field theories with different boundary conditions can be mutually singular in all Euclidean spacetime dimensions, we show that there exists an upper critical dimension for the relative entropy between two Robin field theories. We then calculate the relative entropy between two such theories in one Euclidean spacetime dimension.

Finally, in Section \ref{sec:mutual_info}, we discuss a special type of relative entropy, the mutual information. We show that a sufficient condition for the finiteness of the mutual information between two disjoint open regions $\Omega_A$ and $\Omega_B$ is that these regions are separated by a finite distance. We then give an explicit example where the regions ``touch'' each other and in which case the mutual information becomes infinite. Finally, we argue (without providing a rigorous proof) that the mutual information satisfies an area law.

Some findings presented in this Section, especially regarding the mutual information, depend heavily on results obtained in \cite{Guerra1975a,Guerra1975b,Guerra1976}. To make the exposition in this Section somewhat self-contained, we will restate some of these results. We emphasize, however, that a much more detailed discussion of these topics can be found in the aforementioned references.

\subsection{Field Theories with Different Masses}\label{sec:diff_masses}

In this Section, we study the relative entropy between two free scalar field theories with different masses on a bounded region of space $\Omega$. We derive conditions for the equivalence of two such field theories. Recall from Section \ref{sec:info_theory} that we call two free scalar field theories equivalent if the Gaussian measures describing them are equivalent, i.e., if they have the same null sets. We then discuss the properties of the relative entropy between two equivalent field theories.

Let $\Omega$ be some bounded open subset of $\R^d$ with boundary $\partial \Omega$. Consider centred Gaussian measures of the form $\mu_i = \mathcal{N} (0,\hat{G}_i)$ (for definitions and notation, see Appendix \ref{app:func_integrals}), where $\hat{G}_i = (- \Delta_\mathrm{X} + m_i^2 )^{-1}$, $m_i > 0$, is an operator on $L^2 (\Omega)$. Here, $- \Delta_\mathrm{X}$ is some self-adjoint extension of $- \Delta |_{C_0^\infty (\Omega)}$. As discussed in  Section \ref{sec:covariance_operators}, typical choices for this extension are the Dirichlet Laplacian, the Neumann Laplacian or the Laplacian describing periodic boundary conditions. For two such field theories, it is easy to derive a necessary and sufficient condition for equivalence in terms of the covariance operators.

\begin{lemma}\label{lem:equivalence_mass_independent}
    Let $m_1, m_2 > 0$ be such that $m_1 \neq m_2$. The centred Gaussian measures $\mu_1 = \mathcal{N} (0,\hat{G}_1)$ and $\mu_2 = \mathcal{N} (0,\hat{G}_2)$, where $\hat{G}_i = (- \Delta_\mathrm{X} + m_i^2 )^{-1}$, $i \in \{ 1,2 \}$, are equivalent if and only if $\hat{G}_2$ (and hence also $\hat{G}_1$) is a Hilbert-Schmidt operator on $L^2 (\Omega)$.
\end{lemma}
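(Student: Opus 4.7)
The plan is to apply the equivalence criterion for centred Gaussian measures derived in Appendix \ref{app:equivalence} (a form of the Feldman–H\'ajek dichotomy): $\mu_1 \sim \mu_2$ if and only if the Cameron–Martin spaces $\hat{G}_1^{1/2}(L^2(\Omega))$ and $\hat{G}_2^{1/2}(L^2(\Omega))$ coincide and the operator $\hat{G}_1^{-1/2} \hat{G}_2 \hat{G}_1^{-1/2} - I$ is Hilbert–Schmidt on (the closure of this common range, which here is) $L^2(\Omega)$. The crucial simplification I intend to exploit is that $\hat{G}_1$ and $\hat{G}_2$ are bounded Borel functions of the single self-adjoint operator $-\Delta_\mathrm{X}$, so that the two covariances, together with all of their (possibly unbounded) fractional powers, commute and are simultaneously diagonalised by any eigenbasis of $-\Delta_\mathrm{X}$. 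Since $\Omega$ is bounded, $-\Delta_\mathrm{X}$ has compact resolvent, hence a purely discrete spectrum $\{\lambda_n\}_{n\in\N} \subset [0,\infty)$ and an orthonormal eigenbasis $\{\phi_n\}$ of $L^2(\Omega)$ on which all operator manipulations below are literally diagonal.

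First I would verify the Cameron–Martin condition: $\hat{G}_i^{1/2}(L^2(\Omega)) = \mathcal{Q}(-\Delta_\mathrm{X} + m_i^2) = \mathcal{Q}(-\Delta_\mathrm{X})$, independently of $m_i > 0$, since adding the bounded term $m_i^2 I$ does not change the form domain. The first condition of the criterion is therefore automatic. For the Hilbert–Schmidt condition, I would compute on the eigenbasis
\begin{equation*}
    \bigl( \hat{G}_1^{-1/2} \hat{G}_2 \hat{G}_1^{-1/2} - I \bigr) \phi_n = \left( \frac{\lambda_n + m_1^2}{\lambda_n + m_2^2} - 1 \right) \phi_n = (m_1^2 - m_2^2)\, \hat{G}_2 \phi_n \, .
\end{equation*}
Since the bracketed coefficients are bounded (converging to $1$ as $n \to \infty$), this identity extends by linearity and continuity to all of $L^2(\Omega)$ and yields $\hat{G}_1^{-1/2} \hat{G}_2 \hat{G}_1^{-1/2} - I = (m_1^2 - m_2^2)\, \hat{G}_2$ as a bounded operator on $L^2(\Omega)$.

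Because $m_1 \neq m_2$, this operator is Hilbert–Schmidt if and only if $\hat{G}_2$ is, which proves the criterion-side of the equivalence. The symmetric statement $\hat{G}_2 \in \mathrm{HS}(L^2(\Omega)) \Leftrightarrow \hat{G}_1 \in \mathrm{HS}(L^2(\Omega))$ follows either by swapping the roles of $\mu_1$ and $\mu_2$ throughout, or more directly from the observation that the series $\sum_n (\lambda_n + m_1^2)^{-2}$ and $\sum_n (\lambda_n + m_2^2)^{-2}$ converge or diverge together since their termwise ratio tends to $1$. Combining these two facts establishes both implications claimed in the lemma.

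The main obstacle is not the algebra above, which is one spectral line, but the careful handling of domains for the unbounded operator $\hat{G}_i^{-1/2} = (-\Delta_\mathrm{X} + m_i^2)^{1/2}$ and the precise formulation of the equivalence criterion. In particular, the measures $\mu_i$ actually live on $L^2(\Omega)$ only when the $\hat{G}_i$ are trace class (which, as discussed in Section \ref{sec:covariance_operators} and as will be relevant in Section \ref{sec:diff_masses}, happens only in $d=1$); in higher dimensions the support is a strictly larger space of distributions. The version of the criterion established in Appendix \ref{app:equivalence} must therefore be one in which the Hilbert–Schmidt condition is checked on the reproducing kernel Hilbert space $L^2(\Omega)$ itself, regardless of the ambient distribution space on which the measure is actually supported; the Feldman–H\'ajek theorem is naturally formulated in precisely this way.
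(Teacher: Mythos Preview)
Your proposal is correct and follows essentially the same route as the paper: verify that the form domains of $-\Delta_\mathrm{X}+m_i^2$ coincide (both equal $\mathcal{Q}(-\Delta_\mathrm{X})$), then compute on the common eigenbasis that $\hat{B}\hat{B}^*-I=(m_1^2-m_2^2)\hat{G}_2$, and invoke the Feldman--H\'ajek criterion of Theorem~\ref{thm:equivalence}. The paper's proof is slightly terser, writing the Hilbert--Schmidt norm directly as the eigenvalue sum $(m_1^2-m_2^2)^2\sum_n(\lambda_n+m_2^2)^{-2}$ rather than first identifying the operator, but the content is identical.
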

\begin{proof}
    Obviously $\hat{G}_1^{-1}$ and $\hat{G}_2^{-1}$ have the same form domain, namely that of $- \Delta_\mathrm{X}$. As $\hat{G}_1^{-1}$ and $\hat{G}_2^{-1}$ are bounded from below by a positive number, by Theorem \ref{thm:second_rep_thm}, the ranges of the square roots of their inverses thus coincide, i.e., $\hat{G}_1^{1/2} [L^2 (\Omega)] = \hat{G}_2^{1/2} [L^2 (\Omega)]$. Let $\{\lambda_n\}_{n=1}^\infty$ be the sequence of eigenvalues of $- \Delta_\mathrm{X}$, enumerated in non-decreasing order, and define $\hat{B} \coloneqq \hat{G}_1^{-1/2} \hat{G}_2^{1/2}$. Then,
    \begin{align*}
        \| \hat{B} \hat{B}^* - I \|_{\mathrm{HS}}^2 &= \sum_{n=1}^\infty \left( \frac{\lambda_n + m_1^2}{\lambda_n + m_2^2} - 1 \right)^2 \\
        &= \sum_{n=1}^\infty \left( \frac{m_1^2 - m_2^2}{\lambda_n + m_2^2} \right)^2 \\
        &= (m_1^2 - m_2^2)^2 \, \| \hat{G}_2 \|_{\mathrm{HS}}^2 \; .
    \end{align*}
    Therefore, by Theorem \ref{thm:equivalence}, $\mu_1 \sim \mu_2$ precisely when $\hat{G}_2 \in \mathrm{HS} (L^2 (\Omega))$. 
\end{proof}

If the covariance operators $\hat{G}_1$ and $\hat{G}_2$ are Hilbert-Schmidt, by \eqref{eq:KL_eigenvalues}, the relative entropy between $\mu_1$ and $\mu_2$ takes the simple form
\begin{equation}\label{eq:DKL_series}
    D_\mathrm{KL} (\mu_1 \| \mu_2) = \frac{1}{2} \sum_{n=1}^\infty \left[ \frac{m_1^2 - m_2^2}{\lambda_n + m_2^2} - \log \left( \frac{m_1^2 - m_2^2}{\lambda_n + m_2^2} + 1 \right) \right] \; .
\end{equation}
We can see that the above series is indeed convergent precisely when $\hat{G}_2$ is Hilbert-Schmidt. More specifically, define $x_n \coloneqq (m_1^2 - m_2^2) / (\lambda_n + m_2^2)$. As $n \to \infty$, $x_n \to 0$ and we can expand the logarithm around $x_n = 0$ for large $n$, i.e., $\log (x_n + 1) = x_n - \frac{1}{2} x_n^2 + \mathcal{O} (x_n^3)$. So for large $n$, the summands in the above series behave like $\frac{1}{2} x_n^2 + \mathcal{O} (x_n^3)$. We see that the additional term from the regularized Fredholm determinant (cf. \eqref{eq:reg_fredholm_det}) exactly cancels the problematic term $x_n$, which would lead to a convergent series if and only if $\hat{G}_2$ was also of trace class. Below, when we consider the infinite volume relative entropy density, it will become clear that this additional term from the regularized Fredholm determinant plays a role similar to a mass counterterm.

We do not yet know when the covariance operators are Hilbert-Schmidt. Recall that the covariance operators considered here are the inverses of the differential operator $- \Delta + m^2$. Thus, we expect the asymptotic behaviour of the eigenvalues of $\hat{G}_1$ and $\hat{G}_2$ to depend on $d$, the dimension of Euclidean space. For the case where $- \Delta_\mathrm{X}$ is the Dirichlet Laplacian $- \Delta_\mathrm{D}$, this is made precise by Weyl's law, see \cite{Weyl1912,Weyl1950}, \cite[Sec.~XIII.15]{Reed1978} and \cite[Sec.~VI.4]{Courant2008}. More precisely, let $\Omega \subset \R^d$ be a bounded domain with piecewise smooth boundary. Let $\{ \lambda_n \}_{n=1}^\infty$ be the sequence of eigenvalues of the Dirichlet Laplacian $- \Delta_\mathrm{D}$, enumerated in non-decreasing order. Then, as $n$ tends to infinity, the eigenvalues satisfy the asymptotic behaviour
\begin{equation}
    \lambda_n \sim \mathrm{const.} \times n^{2/d} \; ,
\end{equation}
where $\sim$ denotes asymptotic equivalence. Weyl's law generalizes to Neumann, Robin and periodic boundary conditions if the boundary of the region $\Omega$ is sufficiently regular \cite{Arendt2009}, as well as to the case of free boundary conditions \cite{Widom1963,Dostanic2014}.

Weyl's law implies that the covariance operators $\hat{G}_\mathrm{X}$, $\mathrm{X} \in \{ \mathrm{D}, \mathrm{N}, \mathrm{P}, \sigma \}$, are Hilbert-Schmidt\footnote{In particular, this means that the corresponding Green's function is square integrable \cite[Thm.~VI.23]{Reed1981},
\begin{equation}
    \int_\Omega \int_\Omega \left( G_\mathrm{X} (\boldsymbol{x},\boldsymbol{y}) \right)^2 \, \D^d x \, \D^d y < + \infty \; , \qquad d < 4 \; ,
\end{equation}
where $G_0 = G$. This once again reflects the fact that the field theory becomes more singular in higher dimensions as for $d \geq 4$ the singularity on the diagonal is no longer square integrable.} in Euclidean spacetime dimensions $d < 4$. If $d = 1$, they are also of trace class. For $d \geq 4$, two ``$\mathrm{X}$''-boundary condition field theories with different masses are therefore mutually singular. In particular, for $d = 4$, the Hilbert-Schmidt norm diverges logarithmically. We recall that $d = 4$ is exactly the upper critical dimension of a scalar field theory. It is no coincidence that the relative entropy is infinite at and above the upper critical dimension, as can be seen from the discussion of the relative entropy density at the end of this Section. Using Weyl's law together with Lemma \ref{lem:equivalence_mass_independent}, we arrive at the first main result of this work.

\begin{theorem}\label{thm:main_1}
    Let $\Omega \subset \R^d$ be open, bounded and with piecewise smooth boundary. Furthermore, let $m_1, m_2 > 0$ be such that $m_1 \neq m_2$ and denote by $\mu_1 = \mathcal{N} (0,\hat{G}_1)$ and $\mu_2 = \mathcal{N} (0,\hat{G}_2)$ two centred Gaussian measures, where $\hat{G}_i = (- \Delta_\mathrm{X} + m_i^2 )^{-1}$, $i \in \{ 1,2 \}$ and $\mathrm{X} \in \{ \mathrm{D}, \mathrm{N}, \mathrm{P}, \sigma \}$. The relative entropy $D_\mathrm{KL} (\mu_1 \| \mu_2)$ is finite if and only if $d < 4$.
\end{theorem}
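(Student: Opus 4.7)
The plan is to reduce the claim to a purely spectral statement via Lemma \ref{lem:equivalence_mass_independent} and then extract the dimensional criterion from Weyl's law. Since both covariance operators are inverses of self-adjoint extensions of $-\Delta + m^2$ with the same boundary conditions (for $\mathrm{X} \in \{\mathrm{D}, \mathrm{N}, \mathrm{P}, \sigma\}$), the lemma tells us that $\mu_1 \sim \mu_2$ iff $\hat{G}_2 \in \mathrm{HS}(L^2(\Omega))$, and otherwise $\mu_1 \perp \mu_2$, in which case $D_\mathrm{KL}(\mu_1 \| \mu_2) = +\infty$ by definition. Hence the theorem reduces entirely to deciding when $\hat{G}_2$ is Hilbert--Schmidt.

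For this, I would invoke Weyl's law for the eigenvalues $\{\lambda_n\}$ of $-\Delta_\mathrm{X}$, which gives the asymptotic equivalence $\lambda_n \sim c\, n^{2/d}$ as $n \to \infty$, valid for each of the four classical boundary condition choices under the stated regularity hypothesis on $\partial\Omega$ (for Dirichlet by the classical Weyl estimate, for $\mathrm{N}, \mathrm{P}, \sigma$ by the extensions cited in the paper). Then
\begin{equation}
\| \hat{G}_2 \|_{\mathrm{HS}}^2 = \sum_{n=1}^\infty \frac{1}{(\lambda_n + m_2^2)^2}
\end{equation}
behaves like $\sum_n n^{-4/d}$, which converges iff $4/d > 1$, i.e.\ iff $d < 4$. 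This immediately yields the ``if'' direction via Lemma \ref{lem:equivalence_mass_independent}: for $d < 4$ the measures are equivalent and the relative entropy admits the explicit series representation \eqref{eq:DKL_series}, whose summands behave like $\tfrac{1}{2}[(m_1^2-m_2^2)/(\lambda_n+m_2^2)]^2 + O(n^{-6/d})$ by Taylor expanding the logarithm, hence sum to a finite value.

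Conversely, for $d \geq 4$ the same eigenvalue asymptotics yield $\|\hat{G}_2\|_{\mathrm{HS}}^2 = +\infty$ (logarithmically divergent at $d=4$, polynomially divergent above), so Lemma \ref{lem:equivalence_mass_independent} gives $\mu_1 \perp \mu_2$ and thus $D_\mathrm{KL}(\mu_1 \| \mu_2) = +\infty$.

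The only delicate point I foresee is justifying the application of Weyl's law uniformly across the four boundary condition choices, since the theorem is stated only for $\Omega$ with piecewise smooth boundary rather than $C^\infty$. For Dirichlet boundary conditions the classical Weyl estimate suffices, while for Neumann, periodic, and Robin boundary conditions one needs the refinements due to Arendt et al.\ (already referenced in the text), which require the boundary to be sufficiently regular for the form domains to behave well enough that the same leading-order asymptotics hold. Since the paper adopts ``piecewise smooth'' as covering all these cases by citation, no further analytic work is needed beyond quoting these results. Everything else is elementary: the equivalence/relative-entropy dichotomy is encoded in Lemma \ref{lem:equivalence_mass_independent}, and the series convergence criterion is a direct consequence of the power-law behaviour $\lambda_n \sim n^{2/d}$.
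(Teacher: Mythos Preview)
Your proposal is correct and follows essentially the same approach as the paper: reduce to the Hilbert--Schmidt criterion via Lemma~\ref{lem:equivalence_mass_independent}, then invoke Weyl's law $\lambda_n \sim c\,n^{2/d}$ (with the extensions to $\mathrm{N}, \mathrm{P}, \sigma$ handled by citation to \cite{Arendt2009}) to decide when $\sum_n (\lambda_n+m_2^2)^{-2}$ converges. The paper's argument is in fact slightly terser than yours---it simply states that Weyl's law together with the lemma yields the theorem---so your added remarks on the Taylor expansion of the summands and on the regularity needed for Weyl asymptotics under non-Dirichlet boundary conditions are consistent with, and mildly expand upon, what the paper records.
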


\begin{figure}[t]
    \centering
    \includegraphics[width=0.80\textwidth]{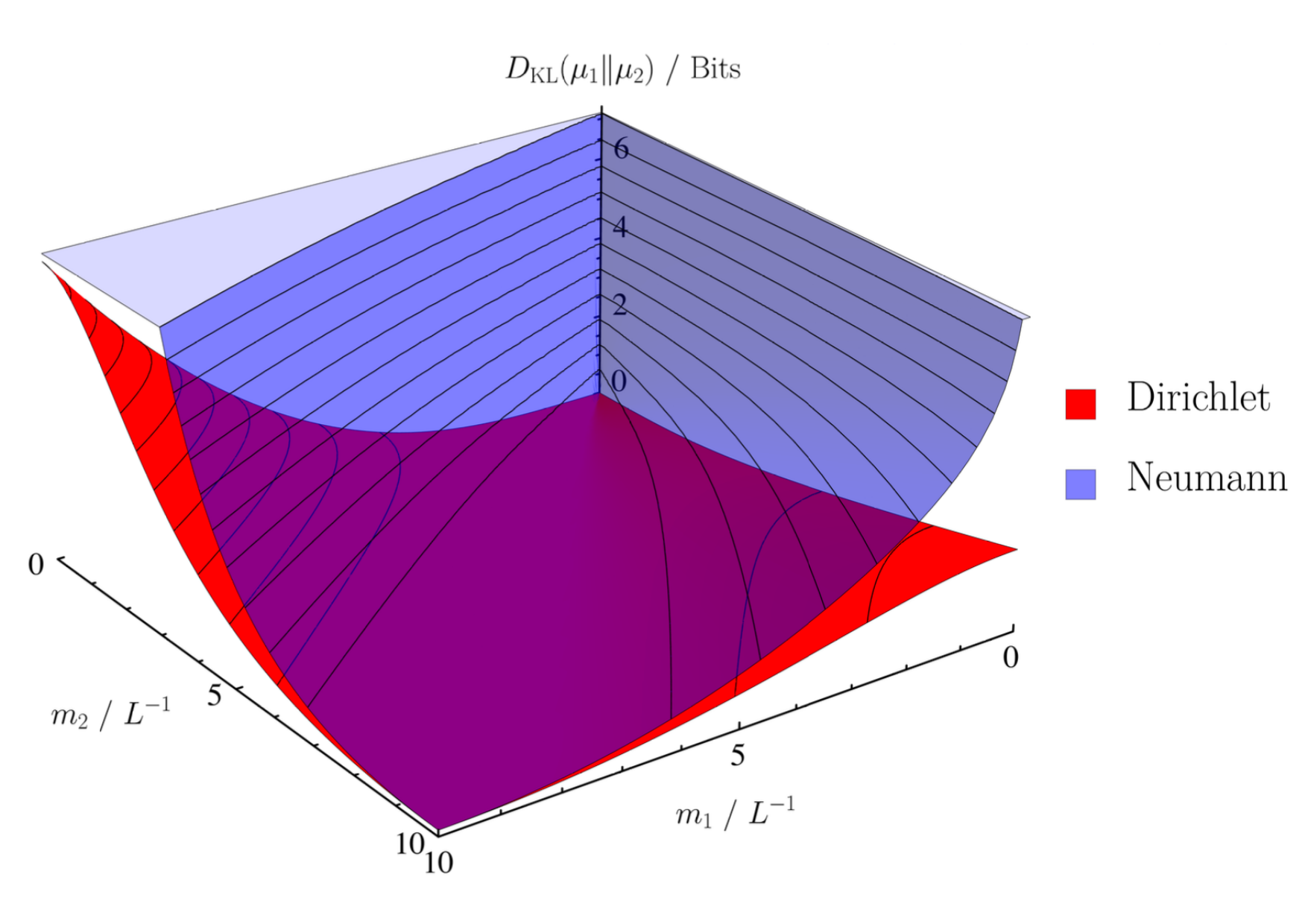}%
    \caption{The relative entropy between two field theories with masses $m_1$ and $m_2$, respectively, on an interval of length $L$. We consider Dirichlet and Neumann boundary conditions and plot the relative entropy in bits against the masses in units of the inverse interval length $L^{-1}$. Note that in the limit $m_i \to 0$ the relative entropy is finite for Dirichlet boundary conditions but diverges for Neumann boundary conditions. This is due to the zero mode of the Laplacian that is present when we choose Neumann boundary conditions.}
    \label{fig:DKL_1d_Plot3D}
\end{figure}

In the following, we restrict ourselves to the case where $\Omega$ is an open $d$-cube of edge length $L$. In this case, the Dirichlet, Neumann and periodic covariance operators ($\hat{G}_\mathrm{D}$, $\hat{G}_\mathrm{N}$ and $\hat{G}_\mathrm{P}$, respectively) are Hilbert-Schmidt precisely when $d < 4$. We start with the case $d = 1$, i.e., we consider a field theory on an interval of length $L$. The eigenvalues of the Dirichlet Laplacian $-\Delta_\mathrm{D}$ on an interval of length $L$ are given by $(n \pi)^2 / L^2$, $n \in \N$. As demonstrated in Appendix \ref{app:1D_Dirichlet_RE}, the relative entropy between two Dirichlet field theories with different masses on an interval of length $L$ admits the following closed-form expression,
\begin{equation}\label{eq:DKL_Dirichlet_1d}
    D^\mathrm{D}_\mathrm{KL} (\mu_1 \| \mu_2) = \frac{1}{4} \left[ \left( 1 - \frac{m_1^2}{m_2^2} \right) + \frac{L (m_1^2 - m_2^2)}{m_2 \tanh (L m_2)} - 2 \log \left( \frac{m_2 \sinh (L m_1)}{m_1 \sinh (L m_2)}  \right) \right] \; .
\end{equation}

We now discuss the properties of this quantity. For a fixed system size $L$, the relative entropy \eqref{eq:DKL_Dirichlet_1d} increases as we increase the absolute value of the mass difference $|m_1 - m_2|$ and is zero precisely when $m_1 = m_2$. This is consistent with our interpretation of the relative entropy as a measure of distinguishability: The greater the difference in masses, the more ``different'' the corresponding field theories are, and the better we can distinguish between them. If the masses are the same, then the theories are the same (remember that we have chosen the same boundary conditions for both fields) and there is no way to distinguish between them. So the relative entropy should be zero in this case.

\begin{figure}[t]
    \centering
    \includegraphics[width=0.475\textwidth]{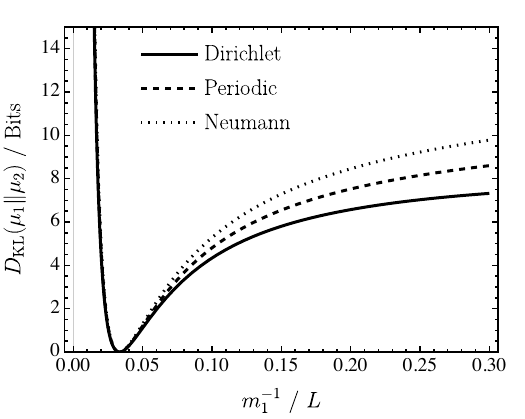}%
    \caption{The relative entropy between two field theories with masses $m_1$ and $m_2 = 30 L^{-1}$, respectively, on an interval of length $L$. We consider Dirichlet, periodic and Neumann boundary conditions and plot the relative entropy in units of bits against the inverse mass (or correlation length) $m_1^{-1}$ in units of the interval length $L$. We observe the ordering $D^\mathrm{D}_\mathrm{KL} \leq D^\mathrm{P}_\mathrm{KL} \leq D^\mathrm{N}_\mathrm{KL}$, with equality only for coinciding masses.}
    \label{fig:DKL_1d_3}
\end{figure}

The expression in \eqref{eq:DKL_Dirichlet_1d} can easily be adapted to other typical boundary conditions. The eigenvalues of the Neumann Laplacian $-\Delta_\mathrm{N}$ on an interval of length $L$ are given by $(n \pi)^2 / L^2$, $n \in \N_0$. Recalling the series representation of the relative entropy \eqref{eq:DKL_series}, the Neumann relative entropy then reads
\begin{equation}\label{eq:DKL_Neumann_1d}
    D^\mathrm{N}_\mathrm{KL} (\mu_1 \| \mu_2) = \frac{1}{2} \left[ \frac{m_1^2}{m_2^2} - \log \left( \frac{m_1^2}{m_2^2} \right) - 1 \right] + D^\mathrm{D}_\mathrm{KL} (\mu_1 \| \mu_2) \; .
\end{equation}
The function $f (x) = x - \log x - 1$ is non-negative for positive $x$ and zero precisely at $x = 1$. Thus, for $m_1 \neq m_2$, the Neumann relative entropy is strictly larger than the Dirichlet relative entropy, cf. Fig. \ref{fig:DKL_1d_Plot3D} and \ref{fig:DKL_1d_3}.

The eigenvalues of the Laplacian $-\Delta_\mathrm{P}$ with periodic boundary conditions on an interval of length $L$ are given by $(2 n \pi)^2 / L^2$, $n \in \N$, each of multiplicity two, together with the smallest eigenvalue $0$ of multiplicity one. The relative entropy for periodic boundary conditions then reads
\begin{equation}\label{eq:DKL_Periodic_1d}
    D^\mathrm{P}_\mathrm{KL} (\mu_1 \| \mu_2) = \frac{1}{2} \left[ \frac{m_1^2}{m_2^2} - \log \left( \frac{m_1^2}{m_2^2} \right) - 1 \right] + 2 D^\mathrm{D}_\mathrm{KL} (\mu_1 \| \mu_2) \bigg|_{L \to L/2} \; .
\end{equation}
Here, $D^\mathrm{D}_\mathrm{KL} (\mu_1 \| \mu_2) |_{L \to L/2}$ denotes the Dirichlet relative entropy given in \eqref{eq:DKL_Dirichlet_1d} but with $L$ replaced by $L/2$. As shown in Fig. \ref{fig:DKL_1d_3}, we observe the ordering $D^\mathrm{D}_\mathrm{KL} \leq D^\mathrm{P}_\mathrm{KL} \leq D^\mathrm{N}_\mathrm{KL}$, where equality holds only for equal masses.

\begin{figure}[t]
    \subfloat[\label{subfig:DKL_1d_a}]{%
        \includegraphics[width=0.475\textwidth]{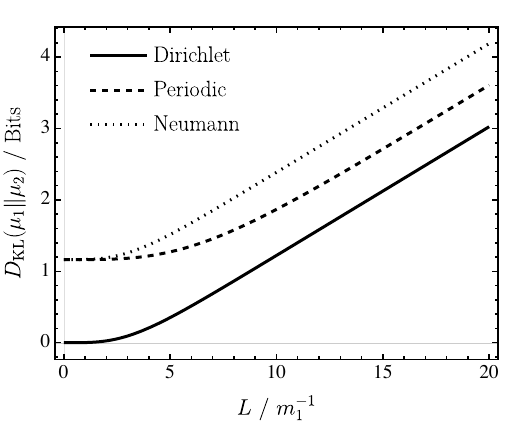}%
    }
    \hfill
    \subfloat[\label{subfig:DKL_1d_b}]{%
        \includegraphics[width=0.475\textwidth]{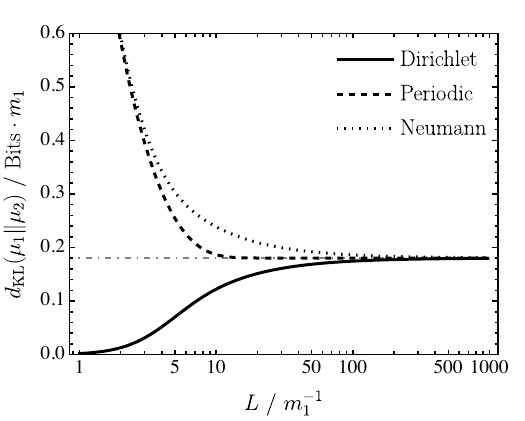}%
    }
    \caption{ \textbf{(a)} The relative entropy between two field theories with masses $m_1$ and $m_2 = \frac{1}{2} m_1$, respectively, on an interval of length $L$ for three different boundary conditions. The relative entropy in units of bits is plotted against the system size $L$ in units of the correlation length or inverse mass $m_1^{-1}$. We see that as soon as the system size is larger than the largest correlation length (in this case, $m_2^{-1} = 2 m_1^{-1}$), the relative entropy scales linearly with $L$. If $L$ is smaller than both correlation lengths, the relative entropy is close to zero for Dirichlet boundary conditions and attains a constant value for Neumann and periodic boundary conditions. \textbf{(b)} The relative entropy density as a function of the system size $L$. We see that in the infinite volume limit $L \to + \infty$, the relative entropy density converges for all three boundary conditions to the limit given in \eqref{eq:infinite_vol_density}, represented in the Figure as a grey dash-dotted line.}
    \label{fig:DKL_1d}
\end{figure}

Instead of changing the masses for some fixed system size, we can also keep the masses fixed and vary $L$. For large $L$, \eqref{eq:DKL_Dirichlet_1d} scales linearly in $L$, which is reminiscent of the extensive behaviour of an entropy. Thus, the distinguishability is large when both length scales set by the inverse masses are small compared to the size of the system. As shown in Fig. \ref{subfig:DKL_1d_a}, this is true for all three boundary conditions considered here. Conversely, if both inverse masses are large compared to the size of the system, distinguishability is low. As can be seen from from \eqref{eq:DKL_Neumann_1d} and \eqref{eq:DKL_Periodic_1d}, as well as from Fig. \ref{subfig:DKL_1d_a}, the additional $L$-independent term from the zero eigenvalue of $- \Delta_\mathrm{N}$ and $- \Delta_\mathrm{P}$ causes the Neumann and periodic relative entropies to attain a constant value in the limit $L \to 0$, while the Dirichlet relative entropy vanishes as the system size approaches zero. This difference in the behaviour is due to the absence of a zero mode of the Dirichlet Laplacian. More precisely, the limit $L \to 0$ corresponds to the limit $m_1 \to 0$ and the value of the relative entropy in this limit is determined by the zero mode of the Laplacian.

We conclude the discussion of the case $d = 1$ by considering a relative entropy density, cf. Fig. \ref{subfig:DKL_1d_b}. We define the relative entropy density  as $d^\mathrm{X}_\mathrm{KL} (\mu_1 \| \mu_2) \coloneqq L^{-1} D^\mathrm{X}_\mathrm{KL} (\mu_1 \| \mu_2)$, where $\mathrm{X} \in \{ \mathrm{D}, \mathrm{P}, \mathrm{N} \}$. In the infinite volume limit, i.e., in the limit $L \to + \infty$, we observe that the relative entropy density converges and the limit is independent of the boundary conditions considered here. More precisely, for all $\mathrm{X} \in \{ \mathrm{D}, \mathrm{P}, \mathrm{N} \}$, the infinite volume limit of the relative entropy density is given by
\begin{equation}\label{eq:infinite_vol_density}
    \lim_{L \to + \infty} d^\mathrm{X}_\mathrm{KL} (\mu_1 \| \mu_2) = \frac{(m_1 - m_2)^2}{4 m_2} \; .
\end{equation}

\begin{figure}[t]
    \subfloat[\label{subfig:DKL_2d_a}]{%
        \includegraphics[width=0.475\textwidth]{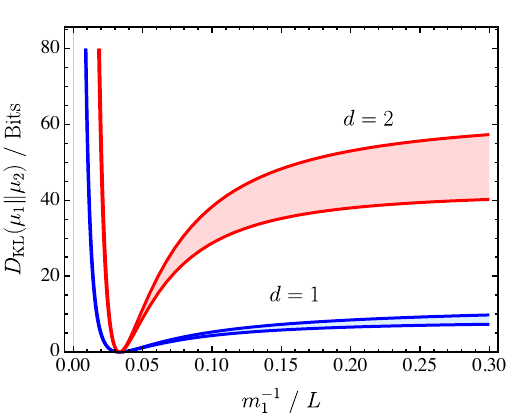}%
    }
    \hfill
    \subfloat[\label{subfig:DKL_2d_b}]{%
        \includegraphics[width=0.505\textwidth]{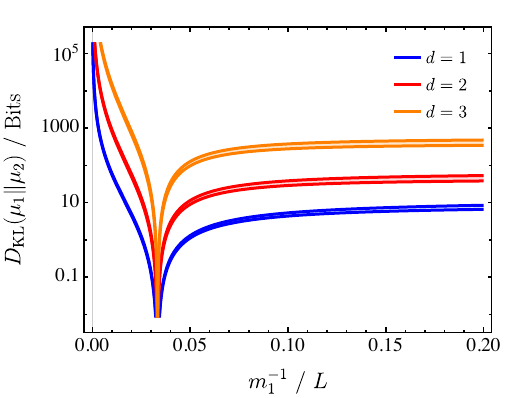}%
    }
    \caption{ \textbf{(a)} The relative entropy between two field theories with masses $m_1$ and $m_2 = 30 L^{-1}$ on a $d$-cube of edge length $L$ in dimensions $d=1$ (blue) and $d=2$ (red). We plot the relative entropy in units of bits against the correlation length or inverse mass $m_1^{-1}$ in units of the edge length $L$. The lower (upper) bound of each shaded region represents the Dirichlet (Neumann) relative entropy. We observe that for a fixed edge length $L$ the relative entropy is larger in higher dimensions. \textbf{(b)} Logarithmic plot of the relative entropy to include the case $d=3$ (orange). We use the same parameters as in (a).}
    \label{fig:DKL_2d}
\end{figure}

We now study field theories with different mass parameters in $d \geq 2$. Recall that we choose $\Omega$ to be an open $d$-cube of edge length $L$. In this case, the eigenvalues of the Laplacians $-\Delta_\mathrm{D}$, $-\Delta_\mathrm{N}$ and $- \Delta_\mathrm{P}$ in $d \geq 2$ can straight-forwardly obtained from the case $d=1$ discussed above. Unlike for $d=1$, we do not compute a closed-form expression for the Dirichlet relative entropy as we did in \eqref{eq:DKL_Dirichlet_1d}. Instead, we approximate the series \eqref{eq:DKL_series} numerically. In Fig. \ref{fig:DKL_2d}, we plot, for a given set of parameters, the relative entropy between two fields with different masses on a cubic region in dimensions $d \leq 3$. We observe that the relative entropy in dimensions $d=2$ and $d=3$ shows the same qualitative behaviour as in $d=1$. In particular, the Neumann relative entropy (top line of shaded region) is strictly greater (except at coinciding masses) than the Dirichlet relative entropy (bottom line of shaded region). Furthermore, we see that for a fixed edge length $L$ the relative entropy is larger in higher dimensions.

We continue with a calculation of the infinite volume relative entropy density. Let $\Omega$ again be the open $d$-cube of edge length $L$ in $\R^d$, $d < 4$. Just as in the case $d = 1$, this limit is independent of the choice of Dirichlet, Neumann or periodic boundary conditions. This follows from the same argument as in the one-dimensional case. As can be seen from \eqref{eq:DKL_Neumann_1d} and \eqref{eq:DKL_Periodic_1d}, the one-dimensional Neumann and periodic relative entropies differ from the Dirichlet relative entropy only by an additional term accounting for the zero mode of the respective Laplacian (and the scaling of the interval length in the periodic case). This additional term is independent of the system size and vanishes in the infinite volume limit when weighted with a factor $L^{-1}$. This argument generalizes to higher dimensions in a straightforward manner and thus establishes that the Dirichlet and Neumann relative entropy densities are equal in the infinite volume limit for $d < 4$. Furthermore, the rescaling $L \to L/2$ in the expression for the periodic relative entropy density causes a rescaling of the integral measure $\D^d p$ (see discussion below) by a factor of $2^{-d}$, which cancels the factor $2^d$ accounting for the multiplicity of the eigenvalues of the periodic Laplacian. We thus find that the infinite volume relative entropy density is independent of the boundary conditions considered here. For simplicity, we will calculate the infinite volume relative entropy density starting from the Dirichlet relative entropy density.

Using \eqref{eq:DKL_series}, we can write the Dirichlet relative entropy between two fields with masses $m_1$ and $m_2$, respectively, over $\Omega$ as
\begin{equation}
    D^\mathrm{D}_\mathrm{KL} (\mu_1 \| \mu_2) = \frac{1}{2} \sum_{\boldsymbol{n} \in \N^d} \left[ \frac{m_1^2 - m_2^2}{\frac{(|\boldsymbol{n}| \pi)^2}{L^2} + m_2^2} - \log \left( \frac{m_1^2 - m_2^2}{\frac{(|\boldsymbol{n}| \pi)^2}{L^2} + m_2^2} + 1 \right) \right] \; ,
\end{equation}
where $\boldsymbol{n} = (n_1, \ldots, n_d)$ and $|\boldsymbol{n}|^2 = n_1^2 + \ldots + n_d^2$. Upon defining $\frac{\Delta k}{2 \pi} \coloneqq L^{-1}$ and replacing the summation over $\N^d$ by a summation over $\Z^d$, the Dirichlet relative entropy density $d^\mathrm{D}_\mathrm{KL} (\mu_1 \| \mu_2) = L^{-d} D^\mathrm{D}_\mathrm{KL} (\mu_1 \| \mu_2)$ reads
\begin{equation}\label{eq:KL_density_Dirichlet_Zd}
    d^\mathrm{D}_\mathrm{KL} (\mu_1 \| \mu_2) = \frac{1}{2^{d+1}} \sum_{\boldsymbol{n} \in \Z^d} \frac{(\Delta k)^d}{(2 \pi)^d} \left[ \frac{m_1^2 - m_2^2}{\frac{(\Delta k |\boldsymbol{n}|)^2}{4} + m_2^2} - \log \left( \frac{m_1^2 - m_2^2}{\frac{(\Delta k |\boldsymbol{n}|)^2}{4} + m_2^2} + 1 \right) \right] + \ldots \; ,
\end{equation}
where the ellipsis denotes terms that take into account summands where one or more $n_i = 0$, i.e., the zero modes along one or more coordinate direction. As discussed above, these terms vanish in the limit $L \to + \infty$, so we will omit them in the remainder of the calculation. Since, as discussed above, the infinite volume limit of the relative entropy density is independent of the choice of boundary conditions, we define $d_\mathrm{KL} (\mu_1 \| \mu_2) \coloneqq \lim_{L \to +\infty} d^\mathrm{D}_\mathrm{KL} (\mu_1 \| \mu_2)$ to be \emph{the} (infinite volume) relative entropy density. Since the function appearing in the sum \eqref{eq:KL_density_Dirichlet_Zd} is Riemann integrable, we can take the limit $L \to + \infty$, which corresponds to $\Delta k \to 0$, and arrive at the convergent improper integral
\begin{equation}\label{eq:KLdensity_continuum}
    d_\mathrm{KL} (\mu_1 \| \mu_2) = \frac{1}{2} \int_{\R^d} \frac{\D^d p}{(2 \pi)^d} \left[ \frac{m_1^2 - m_2^2}{|\boldsymbol{p}|^2 + m_2^2} - \log \left( \frac{m_1^2 - m_2^2}{|\boldsymbol{p}|^2 + m_2^2} + 1 \right) \right] \; ,
\end{equation}
where we made the substitution $p_i \to 2 p_i$ to get rid of the factor $\frac{1}{4}$ in the denominator. Using $d$-dimensional spherical coordinates, the relative entropy density can be written as
\begin{equation}
    d_\mathrm{KL} (\mu_1 \| \mu_2) = \frac{\pi^{-\frac{d}{2}}}{2^d \, \Gamma (d/2)} \int_0^\infty q^{d-1} \left[ \frac{m_1^2 - m_2^2}{q^2 + m_2^2} - \log \left( \frac{m_1^2 - m_2^2}{q^2 + m_2^2} + 1 \right) \right] \D q \; .
\end{equation}
We can evaluate this integral by performing an integration by parts. More precisely, for $d < 4$,
\begin{align*}
    d_\mathrm{KL} (\mu_1 \| \mu_2) &= - \frac{\pi^{-\frac{d}{2}}}{d \, 2^d \, \Gamma (d/2)} \int_0^\infty q^{d} \; \frac{\D}{\D q} \left[ \frac{m_1^2 - m_2^2}{q^2 + m_2^2} - \log \left( \frac{m_1^2 - m_2^2}{q^2 + m_2^2} + 1 \right) \right] \D q \\
    &= \frac{2 \pi^{-\frac{d}{2}}}{d \, 2^d \, \Gamma (d/2)} \int_0^\infty \frac{q^{d+1} (m_1^2 - m_2^2)^2}{(q^2 + m_2^2)^2 \, (q^2 + m_1^2)} \; \D q \\
    &= \frac{\pi^{1-\frac{d}{2}}}{d \, 2^{d+1} \Gamma (d/2)} \, \frac{d m_1^2 m_2^{d-2} + 2 (m_2^d - m_1^d) - d m_2^{d}}{\sin \left( \frac{d \pi}{2} \right) } \; . \label{eq:rel_entropy_density_calc} \numberthis
\end{align*}
For integer dimensions relevant in this work, the infinite volume relative entropy density reads
\begin{alignat}{2}
  &d = 1: \qquad d_\mathrm{KL} (\mu_1 \| \mu_2) &&= \frac{(m_1 - m_2)^2}{4 m_2} \; , \label{eq:rel_entropy_density_1d} \\
  &d = 2: \qquad d_\mathrm{KL} (\mu_1 \| \mu_2) &&= \frac{1}{8 \pi} \left( m_2^2 - m_1^2 + m_1^2 \log \frac{m_1^2}{m_2^2} \right) \; , \\
  &d = 3: \qquad d_\mathrm{KL} (\mu_1 \| \mu_2) &&= \frac{\left( 2 m_1 + m_2 \right) \left( m_1 - m_2 \right)^2}{24 \pi} \; , \label{eq:rel_entropy_density_3d}
\end{alignat}
where we used L'H\^opital's rule for the case $d=2$. Note that the result for $d=1$ coincides with result obtained previously.

Finally, we discuss the dependence of the relative entropy density on the dimension of Euclidean spacetime. We define $g \coloneqq m_1^2 - m_2^2$ and interpret it as a coupling constant. For $m_1^2 < 2 m_2^2$, we can expand the logarithm in \eqref{eq:KLdensity_continuum}, yielding the expansion
\begin{align*}
    d_\mathrm{KL} (\mu_1 \| \mu_2) &= \frac{1}{2} \int_{\R^d} \frac{\D^d p}{(2 \pi)^d} \left[ \frac{1}{2} \frac{g^2}{(|\boldsymbol{p}|^2 + m_2^2)^2} - \frac{1}{3} \frac{g^3}{(|\boldsymbol{p}|^2 + m_2^2)^3} + \frac{1}{4} \frac{g^4}{(|\boldsymbol{p}|^2 + m_2^2)^4} - \ldots \right] \\
    &= \frac{1}{2} \left[ 
    \frac{1}{2} \; \vcenter{\hbox{\begin{tikzpicture}[baseline={(X.base)}]
        \node[circle,draw,thick,inner sep=12pt] (X) at (0,0) {};
        \foreach \X in {0,180}
            {\draw[thick] node[Vertex] at (\X:0.6) {};}
        \foreach \X in {0,180}
            {\draw[thick] (\X:0.6) -- (\X+10:0.75);
             \draw[thick] (\X:0.6) -- (\X-10:0.75);}
        \end{tikzpicture}}}
    - \frac{1}{3} \; \vcenter{\hbox{\begin{tikzpicture}[baseline={(X.base)}]
        \node[circle,draw,thick,inner sep=12pt] (X) at (0,0) {};
        \foreach \X in {-30,90,210}
            {\draw[thick] node[Vertex] at (\X:0.6) {};}
        \foreach \X in {-30,90,210}
            {\draw[thick] (\X:0.6) -- (\X+10:0.75);
             \draw[thick] (\X:0.6) -- (\X-10:0.75);}
        \end{tikzpicture}}}
    + \frac{1}{4} \; \vcenter{\hbox{\begin{tikzpicture}[baseline={(X.base)}]
        \node[circle,draw,thick,inner sep=12pt] (X) at (0,0) {};
        \foreach \X in {0,90,180,270}
            {\draw[thick] node[Vertex] at (\X:0.6) {};}
        \foreach \X in {0,90,180,270}
            {\draw[thick] (\X:0.6) -- (\X+10:0.75);
             \draw[thick] (\X:0.6) -- (\X-10:0.75);}
        \end{tikzpicture}}}
    - \frac{1}{5} \; \vcenter{\hbox{\begin{tikzpicture}[baseline={(X.base)}]
        \node[circle,draw,thick,inner sep=12pt] (X) at (0,0) {};
        \foreach \X in {0+90,72+90,144+90,216+90,288+90}
            {\draw[thick] node[Vertex] at (\X:0.6) {};}
        \foreach \X in {0+90,72+90,144+90,216+90,288+90}
            {\draw[thick] (\X:0.6) -- (\X+10:0.75);
             \draw[thick] (\X:0.6) -- (\X-10:0.75);}
        \end{tikzpicture}}}
    + \ldots \right] \; . \numberthis \label{eq:dKL_diagrams}
\end{align*}
In the diagrammatic expression in the second line, each vertex contributes a factor $g$ and all external momenta are set to zero. Note that the leading (order $g^2$) diagram,
$\vcenter{\hbox{\begin{tikzpicture}[baseline={(X.base)}]
    \node[circle,draw,thick,inner sep=4pt] (X) at (0,0) {};
    \foreach \X in {0,180}
        {\draw[thick] node[Vertex] at (\X:0.2) {};}
    \foreach \X in {0,180}
        {\draw[thick] (\X:0.2) -- (\X+20:0.4);
        \draw[thick] (\X:0.2) -- (\X-20:0.4);}
\end{tikzpicture}}}$
, is divergent for $d \geq 4$, reflecting the crucial dependence of the relative entropy on the Euclidean spacetime dimension.

We note that \eqref{eq:dKL_diagrams} is structurally equivalent to the renormalized one loop effective potential of a scalar field theory with quartic self-interaction in two or three Euclidean spacetime dimensions, see, e.g., \cite[Sec.~5.3.3]{Coleman1985}. In particular, the \emph{regularized} Fredholm determinant in the expression for the relative entropy between two Gaussian measures, \eqref{eq:DKL_logdet2}, provides precisely the mass counterterm that cancels the divergent diagram
$\vcenter{\hbox{\begin{tikzpicture}[baseline={(X.base)}]
    \node[circle,draw,thick,inner sep=4pt] (X) at (0,0) {};
    \draw[thick] node[Vertex] at (270:0.2) {};
    \draw[thick] (270:0.2) -- (270+60:0.4);
    \draw[thick] (270:0.2) -- (270-60:0.4);
\end{tikzpicture}}}$
. However, for $d \geq 4$, the diagram
$\vcenter{\hbox{\begin{tikzpicture}[baseline={(X.base)}]
    \node[circle,draw,thick,inner sep=4pt] (X) at (0,0) {};
    \foreach \X in {0,180}
        {\draw[thick] node[Vertex] at (\X:0.2) {};}
    \foreach \X in {0,180}
        {\draw[thick] (\X:0.2) -- (\X+20:0.4);
        \draw[thick] (\X:0.2) -- (\X-20:0.4);}
\end{tikzpicture}}}$
is also divergent and we would need an additional counterterm which the regularized Fredholm determinant does not provide. This is the origin of the divergence of the relative entropy (density) in dimensions $d \geq 4$.

Finally, we comment briefly on the limit where one of the fields becomes massless. For Dirichlet boundary conditions in $d < 4$, the limit $m_i \to 0$ gives a finite result, while for Neumann and periodic boundary conditions this limit does not exist (see also Fig. \ref{fig:DKL_1d_Plot3D}), which is due to the zero mode of the Neumann and periodic Laplacians. For the relative entropy densities given in \eqref{eq:rel_entropy_density_1d} to \eqref{eq:rel_entropy_density_3d} we observe that the limit $m_2 \to 0$ exists only for $d=3$ or, more precisely, for $2 < d < 4$, as can be seen from \eqref{eq:rel_entropy_density_calc}. This is of course due to the well-known infrared divergence of massless theories in $d \leq 2$. Note that the limit $m_1 \to 0$ yields a finite result for all $d < 4$.

\subsection{Field Theories with Different Boundary Conditions}\label{sec:diff_bcs}

In this Section, we study the relative entropy between two fields over a bounded region with different boundary conditions. For simplicity, we assume that both fields have the same mass, and note that different masses can be incorporated using the results of Section \ref{sec:diff_masses}.

In general, two field theories with different boundary conditions can be mutually singular in all Euclidean spacetime dimensions even if they have the same mass parameter. As a specific example of two such fields, we consider the relative entropy between a Dirichlet and a Robin field.
\begin{theorem}\label{thm:main_2}
    Let $\Omega \subset \R^d$ be open, bounded and with $C^1$-boundary. Furthermore, let $m > 0$ and denote by $\mu_\mathrm{D} = \mathcal{N} (0,\hat{G}_\mathrm{D})$ and $\mu_\sigma = \mathcal{N} (0,\hat{G}_\sigma)$ the centred Gaussian measures over $\Omega$ with covariance operators $\hat{G}_\mathrm{D} = (- \Delta_\mathrm{D} + m^2)^{-1}$ and $\hat{G}_\sigma = (- \Delta_\sigma + m^2)^{-1}$, respectively. Then, $D_\mathrm{KL} (\mu_\mathrm{D} \| \mu_\sigma) = D_\mathrm{KL} (\mu_\sigma \| \mu_\mathrm{D}) = + \infty$ for all $d \in \N$.
\end{theorem}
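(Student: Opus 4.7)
The strategy is to appeal to the Feldman--H\'ajek dichotomy: two Gaussian measures are either equivalent or mutually singular, and since the relative entropy between mutually singular measures is $+\infty$ by definition, it suffices to show that $\mu_\mathrm{D}$ and $\mu_\sigma$ are mutually singular. The necessary and sufficient conditions for equivalence (Theorem~\ref{thm:equivalence} in the appendix) require, as a first condition, that the Cameron--Martin spaces coincide, i.e.\ $\hat{G}_\mathrm{D}^{1/2}[L^2(\Omega)] = \hat{G}_\sigma^{1/2}[L^2(\Omega)]$. The plan is to argue that this condition already fails.

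The key identification I would use is that, for a strictly positive self-adjoint operator $T$ bounded below away from zero, the range of $T^{-1/2}$ equals the form domain $\mathcal{Q}(T)$; equivalently, $\hat{G}^{1/2}[L^2(\Omega)] = \mathcal{Q}(\hat{G}^{-1})$. This is precisely the content of Kato's second representation theorem (Theorem~\ref{thm:second_rep_thm}) that was invoked in Section~\ref{sec:covariance_operators}. Thus the problem reduces to identifying and comparing the form domains of $-\Delta_\mathrm{D} + m^2$ and $-\Delta_\sigma + m^2$.

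By construction in Section~\ref{sec:classical_bcs}, the closed Dirichlet form $\overline{\mathfrak{q}_\mathrm{D}}$ has form domain equal to the closure of $C_0^\infty(\Omega)$ in the $H^{+1}$ norm, namely $H_0^{+1}(\Omega)$, whereas the closed Robin form $\overline{\mathfrak{q}_\sigma}$ has form domain $H^{+1}(\Omega)$ (since the boundary term is infinitesimally relatively form bounded with respect to the Neumann form, which is defined on all of $C^\infty(\overline{\Omega})$ and whose closure yields $H^{+1}(\Omega)$). Hence
\begin{equation*}
    \hat{G}_\mathrm{D}^{1/2}[L^2(\Omega)] = H_0^{+1}(\Omega), \qquad \hat{G}_\sigma^{1/2}[L^2(\Omega)] = H^{+1}(\Omega).
\end{equation*}
Since $\Omega$ is bounded with $C^1$-boundary, the trace operator $\gamma : H^{+1}(\Omega) \to H^{+1/2}(\partial\Omega)$ is well-defined and surjective, and $H_0^{+1}(\Omega)$ coincides with its kernel. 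In particular the constant function $1$ (or any smooth function with non-vanishing trace) lies in $H^{+1}(\Omega) \setminus H_0^{+1}(\Omega)$, so the two Cameron--Martin spaces are distinct.

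Therefore the first equivalence condition of Theorem~\ref{thm:equivalence} is violated, and by Feldman--H\'ajek $\mu_\mathrm{D} \perp \mu_\sigma$, whence $D_\mathrm{KL}(\mu_\mathrm{D}\|\mu_\sigma) = D_\mathrm{KL}(\mu_\sigma\|\mu_\mathrm{D}) = +\infty$ in all dimensions $d \in \N$. The main conceptual point---and the only part requiring any care---is the identification of the Cameron--Martin spaces with the form domains, and the verification that the $C^1$-regularity of $\partial\Omega$ guarantees a well-defined trace so that $H_0^{+1}(\Omega) \subsetneq H^{+1}(\Omega)$; both of these are standard consequences of the machinery already set up in Sections~\ref{sec:covariance_operators} and Appendix~\ref{app:sobolev_spaces}.
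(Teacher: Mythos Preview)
Your proposal is correct and follows essentially the same approach as the paper's proof: both identify the Cameron--Martin spaces (equivalently, form domains of the precision operators) as $H_0^{+1}(\Omega)$ and $H^{+1}(\Omega)$ respectively, and conclude mutual singularity from their inequality via Theorem~\ref{thm:equivalence}. The only cosmetic difference is that the paper cites \cite[Cor.~3.29(vii)]{Chandler2017} directly for $H_0^{+1}(\Omega) \subsetneq H^{+1}(\Omega)$ under $C^0$-regularity, whereas you argue via the trace operator and an explicit nonzero-trace function; both justifications are valid.
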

\begin{proof}
    Recall from Theorem \ref{thm:equivalence} that a necessary condition for the equivalence of two centred Gaussian measures $\mu_1 = \mathcal{N} (0,\hat{C}_1)$ and $\mu_2 = \mathcal{N} (0,\hat{C}_2)$ (and hence for the finiteness of the relative entropy between them) is that the form domains of the precision operators $\hat{C}^{-1}_i$ coincide. For the case where $\mu_1 = \mu_\mathrm{D}$ is the Dirichlet field and $\mu_2 = \mu_\sigma$ is a Robin field, this boils down to whether the form domains of $-\Delta_\mathrm{D}$ and $- \Delta_\sigma$ are equal. Since $\Omega$ is a bounded $C^1$-region in $\R^d$, the form domains of the Dirichlet and Robin Laplacians are given by\footnote{For definitions, see Appendix \ref{app:sobolev_spaces}.} $\mathcal{Q} (-\Delta_\mathrm{D}) = H^{+1}_0 (\Omega)$ and $\mathcal{Q} (-\Delta_\sigma) = H^{+1} (\Omega)$, cf. \cite[Ch.~XIII]{Reed1978} and the discussion below. It can be shown that $H_0^{+1} (\Omega) \subset H^{+1} (\Omega)$ \cite[p.~253]{Reed1978}. In general, however, $H_0^{+1} (\Omega)$ is a proper subset of $H^{+1} (\Omega)$. In particular, if $\Omega$ is $C^0$, then $H_0^{+1} (\Omega) \subsetneq H^{+1} (\Omega)$ \cite[Cor.~3.29(vii)]{Chandler2017}. Therefore, the form domains of $-\Delta_\mathrm{D}$ and $- \Delta_\sigma$ do not coincide, and the relative entropy between a Dirichlet and a Robin field is infinite.
\end{proof}
\begin{remark}
    For the specific case $d = 1$ and $\Omega = (a,b)$, a bounded open interval in $\R$, $H^{+1} ((a,b))$ consists of all absolutely continuous functions on $(a,b)$ whose distributional derivatives are in $L^2((a,b))$, while $H_0^{+1} ((a,b))$ consists precisely of those functions in $H^{+1} ((a,b))$ whose continuous extensions to $[a,b]$ vanish at the endpoints of the interval. Thus, a non-zero constant function is in $H^{+1} ((a,b))$ but not in $H_0^{+1} ((a,b))$.
\end{remark}

We now study the relative entropy between two Robin fields with equal non-zero masses. We first derive a necessary and sufficient condition for the finiteness of the relative entropy between a Robin field and a Neumann field, and the result for two Robin fields then follows from transitivity. Let $\Omega \subset \R^d$ be a bounded Lipschitz domain. Recall from Section \ref{sec:classical_bcs} that the Neumann and Robin forms are defined by
\begin{align}
    \mathfrak{q}_\mathrm{N} (f,g) &= \int_\Omega \left( \boldsymbol{\nabla} f \cdot \boldsymbol{\nabla} g + m^2 f g \right) \; \D^d x \; , \\
    \mathfrak{q}_\sigma (f,g) &= \int_\Omega \left( \boldsymbol{\nabla} f \cdot \boldsymbol{\nabla} g + m^2 f g \right) \; \D^d x + \int_{\partial \Omega} \sigma f g \; \D S \; , \label{eq:robin_form}
\end{align}
where $\sigma$ is a positive\footnote{If we choose $\sigma \geq 0$, then $\mathfrak{q}_\sigma$ is strictly positive, and we avoid complications from negative eigenvalues of $- \Delta_\sigma$, see \cite[Sec.~1.1.20]{Robinson1971} and the discussion of the one-dimensional case below.} continuous function on $\partial \Omega$. The Robin form can be written as $\mathfrak{q}_\sigma = \mathfrak{q}_\mathrm{N} + \mathfrak{b}_\sigma$, where $\mathfrak{b}_\sigma$ is the form
\begin{equation}
    \mathfrak{b}_\sigma (f,g) = \int_{\partial \Omega} \sigma (\boldsymbol{x}) f (\boldsymbol{x}) g (\boldsymbol{x}) \; \D S (\boldsymbol{x}) = \braket{\sqrt{\sigma} \gamma f, \sqrt{\sigma} \gamma g}_{L^2 (\partial \Omega)} \; ,
\end{equation}
where $\gamma : H^{+1} (\Omega) \to L^2 (\partial \Omega)$ is the trace operator defined in Appendix \ref{app:sobolev_spaces}. It is shown in \cite[p.~34]{Robinson1971} that the boundary form $\mathfrak{b}_\sigma$ is infinitesimally relatively form bounded with respect to the Neumann form $\mathfrak{q}_\mathrm{N}$. This means that the Neumann and Robin form domains coincide (cf. \cite[Thm.~VI.1.33]{Kato1995}), and in particular $\mathcal{Q} (- \Delta_\mathrm{N}) = \mathcal{Q} (- \Delta_\sigma) = H^{+1} (\Omega)$. Clearly $\mathfrak{b}_\sigma$ is symmetric, densely defined and positive. However, it is not closable\footnote{To see this, note that $\mathfrak{b}_\sigma$ is a generalization of the form considered in \cite[Ex.~VI.1.26]{Kato1995}, and we can use the same arguments as there.}.

As usual, we use the notation $\hat{G}_\mathrm{N} = (- \Delta_\mathrm{N} + m^2)^{-1}$ and $\hat{G}_\sigma = (- \Delta_\sigma + m^2)^{-1}$. Unlike the case of two fields with different masses but the same (local) boundary conditions, studied in Section \ref{sec:diff_masses}, the covariance operators $\hat{G}_\mathrm{N}$ and $\hat{G}_\sigma$ have no common eigenbasis. Therefore, we cannot use the simple condition for $\hat{B} \hat{B}^* - I$ to be a Hilbert-Schmidt operator on $L^2 (\Omega)$, which we used in Lemma \ref{lem:equivalence_mass_independent}. From another point of view, the additional boundary term in the Robin form \eqref{eq:robin_form} prevents us from simply adding and subtracting the precision operators $\hat{G}^{-1}_\mathrm{N}$ and $\hat{G}^{-1}_\sigma$. However, following an idea from \cite{Pinski2015}, we can add and subtract their corresponding bilinear forms.
\begin{lemma}\label{lem:N_vs_Robin}
    Let $\Omega \subset \R^d$ be open, bounded and with $C^1$-boundary. Furthermore, let $m > 0$, $\sigma$ a positive continuous function on $\partial \Omega$ and denote by $\mu_\mathrm{N} = \mathcal{N} (0,\hat{G}_\mathrm{N})$ and $\mu_\sigma = \mathcal{N} (0,\hat{G}_\sigma)$ the centred Gaussian measures over $\Omega$ with covariance operators $\hat{G}_\mathrm{N} = (- \Delta_\mathrm{N} + m^2)^{-1}$ and $\hat{G}_\sigma = (- \Delta_\sigma + m^2)^{-1}$, respectively. Then, $\mu_\mathrm{N} \sim \mu_\sigma$ precisely when $d < 3$.
\end{lemma}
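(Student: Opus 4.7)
The proof will apply the equivalence criterion for Gaussian measures (cf.\ Appendix \ref{app:equivalence}), which, since the strict positivity of $\hat{G}_{\mathrm{N}}$ and $\hat{G}_\sigma$ is already in hand, reduces to two requirements: (i) the Cameron--Martin spaces $\hat{G}_{\mathrm{N}}^{1/2}[L^2(\Omega)]$ and $\hat{G}_\sigma^{1/2}[L^2(\Omega)]$ coincide, and (ii) the operator $\hat{B}\hat{B}^{*}-I$ is Hilbert--Schmidt on $L^2(\Omega)$, where $\hat{B} \coloneqq \hat{G}_{\mathrm{N}}^{-1/2}\hat{G}_\sigma^{1/2}$. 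Condition (i) is automatic: by the second representation theorem (Theorem \ref{thm:second_rep_thm}), each Cameron--Martin space is the form domain of the corresponding precision operator, and these both equal $H^{+1}(\Omega)$ since $\mathfrak{b}_\sigma$ is infinitesimally form bounded relative to $\mathfrak{q}_{\mathrm{N}}$ (as already recorded above the statement).

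To attack (ii), the plan is, following the idea of \cite{Pinski2015}, to compare the two precision operators through their bilinear forms rather than at the operator level. The form identity $\mathfrak{q}_\sigma = \mathfrak{q}_{\mathrm{N}} + \mathfrak{b}_\sigma$ with $\mathfrak{b}_\sigma(f,g)=\langle\sqrt{\sigma}\,\gamma f,\sqrt{\sigma}\,\gamma g\rangle_{L^{2}(\partial\Omega)}$, evaluated on $f=\hat{G}_{\mathrm{N}}^{1/2}u$, $g=\hat{G}_{\mathrm{N}}^{1/2}v$ (both of which lie in $H^{+1}(\Omega)$ for arbitrary $u,v\in L^2(\Omega)$), will yield the operator identity
\begin{equation*}
    \hat{G}_{\mathrm{N}}^{1/2}\,\hat{G}_\sigma^{-1}\,\hat{G}_{\mathrm{N}}^{1/2} \;=\; I + K^{*}K,\qquad K \coloneqq \sqrt{\sigma}\,\gamma\,\hat{G}_{\mathrm{N}}^{1/2}\colon L^{2}(\Omega)\to L^{2}(\partial\Omega).
\end{equation*}
The trace theorem for $C^{1}$-boundaries ensures $K$ is bounded (in fact compact, via the compactness of $H^{+1/2}(\partial\Omega)\hookrightarrow L^{2}(\partial\Omega)$). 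Inverting gives
\begin{equation*}
    \hat{B}\hat{B}^{*}-I \;=\; -(I+K^{*}K)^{-1}K^{*}K,
\end{equation*}
and since $K^{*}K$ is compact, its eigenvalues $\mu_{n}$ tend to zero, so $\mu_{n}/(1+\mu_{n})\sim \mu_{n}$, and the operator on the right is Hilbert--Schmidt if and only if $\sum_{n}\mu_{n}^{2}<\infty$, that is, if and only if $K^{*}K$ (equivalently $KK^{*}$) is Hilbert--Schmidt.

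Finally, I will evaluate this condition via the integral-kernel representation. The operator $KK^{*}=\sqrt{\sigma}\,\gamma\,\hat{G}_{\mathrm{N}}\,\gamma^{*}\sqrt{\sigma}$ on $L^{2}(\partial\Omega)$ has integral kernel $\mathcal{K}(\boldsymbol{x},\boldsymbol{y})=\sqrt{\sigma(\boldsymbol{x})\sigma(\boldsymbol{y})}\,G_{\mathrm{N}}(\boldsymbol{x},\boldsymbol{y};m)$, so the Hilbert--Schmidt requirement becomes $\mathcal{K}\in L^{2}(\partial\Omega\times\partial\Omega)$. By standard elliptic boundary regularity for $C^{1}$ domains, $G_{\mathrm{N}}$ differs from the free fundamental solution \eqref{eq:fundamental_solution} by a remainder that is smooth across the boundary diagonal, so the short-distance behaviour on $\partial\Omega$ is that of \eqref{eq:greens_diagonal}. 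Integrating the local contribution over the $(d-1)$-dimensional surface yields, up to constants and logarithmic modifications,
\begin{equation*}
    \int_{0}^{\varepsilon} r^{(d-1)-1}\,r^{-2(d-2)}\,\mathrm{d}r \;=\; \int_{0}^{\varepsilon} r^{2-d}\,\mathrm{d}r,
\end{equation*}
which converges precisely when $d<3$; the cases $d=1$ (boundary finite) and $d=2$ ($\log^{2}$-integrable) are immediate. The main technical obstacle is the last step, namely the clean separation of the singular and regular parts of $G_{\mathrm{N}}$ on the boundary and the compatibility of this estimate with the $C^{1}$-regularity assumption on $\partial\Omega$; once established, this comparison turns the equivalence question into the explicit dimensional dichotomy $d<3$ vs.\ $d\geq 3$ asserted by the lemma.
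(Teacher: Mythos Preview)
Your proposal is correct and follows essentially the same route as the paper: both verify the form-domain condition via infinitesimal form boundedness of $\mathfrak{b}_\sigma$, then reduce the Hilbert--Schmidt condition to square-integrability of $\sigma(\boldsymbol{z})\sigma(\boldsymbol{z}')\,|G_{\mathrm{N}}(\boldsymbol{z},\boldsymbol{z}')|^{2}$ over $\partial\Omega\times\partial\Omega$, which is then decided by the short-distance asymptotics \eqref{eq:greens_diagonal}. The only cosmetic difference is that the paper works directly with $\hat{B}^{*}\hat{B}-I=K^{*}K$ (with its convention $\hat{B}=\hat{G}_\sigma^{-1/2}\hat{G}_{\mathrm{N}}^{1/2}$), whereas you take the inverse first and obtain $-(I+K^{*}K)^{-1}K^{*}K$; your subsequent eigenvalue comparison $\mu_n/(1+\mu_n)\sim\mu_n$ recovers the same criterion, so the detour is harmless but unnecessary.
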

\begin{proof}
    As by the above discussion, the form domains of $-\Delta_\mathrm{N}$ and $- \Delta_\sigma$ coincide. Using results from Appendix \ref{app:equivalence}, we then see that $\mu_\mathrm{N} \sim \mu_\sigma$ precisely when
    \begin{equation}
        \hat{B}^* \hat{B} - I = \left( \hat{G}_\sigma^{-1/2} \hat{G}_\mathrm{N}^{1/2} \right)^* \left( \hat{G}_\sigma^{-1/2} \hat{G}_\mathrm{N}^{1/2} \right) - I
    \end{equation}
    is a Hilbert-Schmidt operator on $L^2 (\Omega)$. This is equivalent to the requirement that
    \begin{equation}
        \sum_{n=1}^\infty \sum_{m=1}^\infty \left( \widetilde{\Delta \mathfrak{q}} (e_n, f_m) \right)^2 < + \infty \; ,
    \end{equation}
    where $\{e_n\}_{n=1}^\infty$ and $\{f_m\}_{m=1}^\infty$ are any two orthonormal bases in $L^2 (\Omega)$ and $\widetilde{\Delta \mathfrak{q}}$ is the bounded bilinear form on $L^2 (\Omega)$ defined by
    \begin{equation}\label{eq:def_tildedeltaq}
        \widetilde{\Delta \mathfrak{q}} (f,g) \coloneqq \mathfrak{b}_\sigma \left( \hat{G}_\mathrm{N}^{1/2} f, \hat{G}_\mathrm{N}^{1/2} g \right) = \braket{f, (\hat{B}^* \hat{B} - I) \, g}_{L^2 (\Omega)} \; .
    \end{equation}

    Let $R_\mathrm{N} (\boldsymbol{x},\boldsymbol{y})$ be the distributional kernel of $\hat{G}^{1/2}_\mathrm{N}$. We conclude from \eqref{eq:def_tildedeltaq} that the operator $\hat{B}^* \hat{B} - I$ has a kernel $\Psi_\sigma$ given by
    \begin{equation}
        \Psi_\sigma (\boldsymbol{x},\boldsymbol{y}) = \int_{\partial \Omega} \sigma (\boldsymbol{z}) \, R_\mathrm{N} (\boldsymbol{x},\boldsymbol{z}) R_\mathrm{N} (\boldsymbol{z},\boldsymbol{y}) \; \D S (\boldsymbol{z}) \; .
    \end{equation}
    Since $\hat{B}^* \hat{B} - I$ is an operator on $L^2 (\Omega)$, it is Hilbert-Schmidt if and only if it has a kernel that is square integrable. A quick calculation shows that this requirement can be written as
    \begin{equation}\label{eq:int_Psi_finite}
        \iint_\Omega \left| \Psi_\sigma (\boldsymbol{x},\boldsymbol{y})\right|^2 \, \D^d x \, \D^d y = \iint_{\partial \Omega} \sigma (\boldsymbol{z}) \sigma (\boldsymbol{z}') \left| G_\mathrm{N} (\boldsymbol{z},\boldsymbol{z}') \right|^2 \, \D S (\boldsymbol{z}) \, \D S (\boldsymbol{z}') < + \infty \; ,
    \end{equation}
    i.e., $\mu_\mathrm{N} \sim \mu_\sigma$ precisely when the Green's functions are square integrable on the boundary. This is certainly the case in $d=1$, since then the Green's functions are continuous on the diagonal and the integral over the boundary becomes a sum over a finite number of boundary points.

    For $d > 1$, we use the assumption that $\partial \Omega$ is sufficiently regular, i.e., $C^1$. We can then check whether the Green's function is square integrable by checking whether the singularity at $\boldsymbol{z} = \boldsymbol{z}'$ is square integrable. Since the behaviour of the Green's function for small distances $|\boldsymbol{x} - \boldsymbol{y}|$ is a UV property, it does not depend on the choice of boundary conditions, and we can use \eqref{eq:greens_diagonal}, the small distance behaviour of the fundamental solution, to estimate the singularity (see also \cite[Lem.~III.2]{Guerra1976}). In particular, for $|\boldsymbol{x} - \boldsymbol{y}| \ll m^{-1}$,
    \begin{equation}
        G_\mathrm{N} (\boldsymbol{x},\boldsymbol{y}) \sim \begin{cases} 
            \log \frac{1}{|\boldsymbol{x} - \boldsymbol{y}|} \quad & \mathrm{for} \;\; d = 2 \\
            \frac{1}{|\boldsymbol{x} - \boldsymbol{y}|^{d-2}} \quad & \mathrm{for} \;\; d > 2
        \end{cases} \; .
    \end{equation}
    Fix $\boldsymbol{z}' \in \partial \Omega$. Choose the coordinate system in $\R^d$ such that $\boldsymbol{z}' = \boldsymbol{0}$ and $\partial \Omega$ is flat near $\boldsymbol{z}'$ and lying in the plane $y_d = 0$\footnote{If $\partial \Omega$ is not flat near $\boldsymbol{z}'$, we can ``flatten out'' $\partial \Omega$ in the vicinity of $\boldsymbol{z}'$ using a continuously differentiable map, cf. \cite[Sec.~5.4]{Evan2010}.}. Then, the integral of $(G_\mathrm{N})^2$ near $\boldsymbol{z} = \boldsymbol{z}' = \boldsymbol{0}$, e.g., over a ball $\mathbb{B}_\varepsilon$ with radius $\varepsilon \ll m^{-1}$, is essentially given by
    \begin{equation}
        \int_{\mathbb{B}_\varepsilon} \frac{\D^{d-1} z}{|\boldsymbol{z}|^{2d-4}} \; \sim \; \int_0^\varepsilon \frac{\D r}{r^{d-2}}
    \end{equation}
    This integral is finite for $d<3$ and divergent otherwise (in particular, the integral diverges logarithmically for $d = 3$). Therefore, $\mu_\mathrm{N} \sim \mu_\sigma$ precisely when $d < 3$.
\end{proof}
\begin{remark}
    Note that the case where the Robin field is replaced by a Dirichlet field can be formally incorporated by choosing $\sigma (\boldsymbol{x}) \equiv \sigma_* > 0$ and then taking the limit $\sigma_* \to + \infty$. Then the condition in \eqref{eq:int_Psi_finite} is violated in all dimensions, which is consistent with our conclusion from the discussion of the Dirichlet and Neumann form domains above.
\end{remark}

We conclude from the above Lemma that a Neumann and a Robin field with equal masses over a bounded region $\Omega$ are equivalent in the sense of measures (and thus the relative entropy between them is finite) if and only if $d < 3$. As mentioned earlier, this result easily generalizes to the case of two Robin fields. By transitivity, Lemma \ref{lem:N_vs_Robin} implies that the relative entropy between two Robin fields is finite precisely when $d < 3$. We summarize this result in the following Theorem.
\begin{theorem}\label{thm:main_3}
    Let $\Omega \subset \R^d$ be open, bounded and with $C^1$-boundary. Furthermore, let $m > 0$, $\sigma_1$ and $\sigma_2$ two positive continuous functions on $\partial \Omega$ such that $\sigma_1 \neq \sigma_2$, and denote by $\mu_1 = \mathcal{N} (0,\hat{G}_{\sigma_1})$ and $\mu_2 = \mathcal{N} (0,\hat{G}_{\sigma_2})$ the centred Gaussian measures over $\Omega$ with covariance operators $\hat{G}_{\sigma_1} = (- \Delta_{\sigma_1} + m^2)^{-1}$ and $\hat{G}_{\sigma_2} = (- \Delta_{\sigma_2} + m^2)^{-1}$, respectively. Then, $D_\mathrm{KL} (\mu_1 \| \mu_2)$ is finite precisely when $d < 3$.
\end{theorem}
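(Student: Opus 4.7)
The statement splits into a sufficient and a necessary direction, and the plan is to leverage Lemma \ref{lem:N_vs_Robin} in both. For $d<3$ I would use transitivity of measure equivalence; for $d\geq 3$ I would replay the Hilbert-Schmidt computation of Lemma \ref{lem:N_vs_Robin}, with $\mathfrak{q}_{\sigma_1}$ now playing the role that $\mathfrak{q}_\mathrm{N}$ played there.

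For the sufficient direction, suppose $d<3$. Applying Lemma \ref{lem:N_vs_Robin} twice, once with $\sigma=\sigma_1$ and once with $\sigma=\sigma_2$, gives $\mu_\mathrm{N}\sim\mu_{\sigma_1}$ and $\mu_\mathrm{N}\sim\mu_{\sigma_2}$. Since equivalence of measures is an equivalence relation (both $\mu_{\sigma_i}$ have the same null sets as $\mu_\mathrm{N}$), this yields $\mu_{\sigma_1}\sim\mu_{\sigma_2}$, and the general formula for the relative entropy between two equivalent centred Gaussian measures (see Appendix \ref{app:equivalence}) returns a finite value for $D_\mathrm{KL}(\mu_1\|\mu_2)$.

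For the necessary direction I cannot simply invoke transitivity, since mutual singularity is not transitive: $\mu_\mathrm{N}\perp\mu_{\sigma_1}$ and $\mu_\mathrm{N}\perp\mu_{\sigma_2}$ do not imply $\mu_{\sigma_1}\perp\mu_{\sigma_2}$. Instead I would repeat the argument of Lemma \ref{lem:N_vs_Robin} directly for the pair $(\mu_{\sigma_1},\mu_{\sigma_2})$. First, the form domains $\mathcal{Q}(-\Delta_{\sigma_1})$ and $\mathcal{Q}(-\Delta_{\sigma_2})$ both equal $H^{+1}(\Omega)$, because $\mathfrak{b}_\sigma$ is infinitesimally form bounded with respect to $\mathfrak{q}_\mathrm{N}$ for any continuous $\sigma$. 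Writing $\mathfrak{q}_{\sigma_2}=\mathfrak{q}_{\sigma_1}+\mathfrak{b}_{\sigma_2-\sigma_1}$ and setting $\hat{B}=\hat{G}_{\sigma_2}^{-1/2}\hat{G}_{\sigma_1}^{1/2}$, the Hilbert-Schmidt check for $\hat{B}^*\hat{B}-I$ reduces, in complete analogy with \eqref{eq:int_Psi_finite}, to the convergence of
\[
\iint_{\partial\Omega}\bigl(\sigma_2-\sigma_1\bigr)(\boldsymbol{z})\,\bigl(\sigma_2-\sigma_1\bigr)(\boldsymbol{z}')\,\bigl|G_{\sigma_1}(\boldsymbol{z},\boldsymbol{z}')\bigr|^2\,\mathrm{d}S(\boldsymbol{z})\,\mathrm{d}S(\boldsymbol{z}')\,.
\]
The small-distance behaviour of $G_{\sigma_1}$ is a UV property and therefore independent of the boundary condition, so the singularity estimate from Lemma \ref{lem:N_vs_Robin} applies verbatim and gives finiteness of this double integral when $d<3$.

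The principal obstacle is the sign issue in the necessary direction: $\sigma_2-\sigma_1$ need not be of one sign, so the boundary form $\mathfrak{b}_{\sigma_2-\sigma_1}$ is indefinite and a priori cancellations in the weighted boundary integral could conspire against divergence for $d\geq 3$. To rule this out, I would use that $\sigma_1,\sigma_2$ are continuous and distinct, which produces a patch $U\subset\partial\Omega$ on which $\sigma_2-\sigma_1$ has constant sign and is bounded away from zero by some $\delta>0$. The restriction of the double integral to $U\times U$ then dominates $\delta^2$ times the squared $L^2$-norm of $G_{\sigma_1}$ on $U\times U$, and the UV estimate of Lemma \ref{lem:N_vs_Robin} forces this restricted integral to diverge for $d\geq 3$. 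Hence $\hat{B}^*\hat{B}-I$ fails to be Hilbert-Schmidt, so $\mu_{\sigma_1}\perp\mu_{\sigma_2}$ by Theorem \ref{thm:equivalence}, and consequently $D_\mathrm{KL}(\mu_1\|\mu_2)=+\infty$, as claimed.
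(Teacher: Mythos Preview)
Your sufficient direction ($d<3$) is exactly the paper's argument: the paper simply writes ``By transitivity, Lemma~\ref{lem:N_vs_Robin} implies that the relative entropy between two Robin fields is finite precisely when $d<3$'', and for $d<3$ transitivity of equivalence through the Neumann measure is indeed all that is needed.

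For the necessary direction you go further than the paper. You are right that mutual singularity is not transitive, so the paper's one-line ``by transitivity'' cannot literally cover the case $d\geq 3$; at best it must be read as ``the computation of Lemma~\ref{lem:N_vs_Robin} carries over with $\mu_\mathrm{N}$ replaced by $\mu_{\sigma_1}$'', which is precisely what you spell out. Your identification of the sign problem for $\tau=\sigma_2-\sigma_1$ is a genuine issue the paper does not address.

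There is, however, a residual gap in your patch argument. You show that
\[
\iint_{U\times U}\tau(\boldsymbol{z})\tau(\boldsymbol{z}')\,G_{\sigma_1}(\boldsymbol{z},\boldsymbol{z}')^2\,\D S\,\D S'=+\infty,
\]
but this does \emph{not} immediately bound the full boundary integral from below: contributions from $(\partial\Omega\times\partial\Omega)\setminus(U\times U)$ can be negative, and you have not excluded cancellation. Equivalently, writing $\Psi=\Psi_U+\Psi_{U^c}$ for the kernel of $\hat{B}^*\hat{B}-I$, you have shown $\Psi_U\notin L^2(\Omega\times\Omega)$, but if $\Psi_{U^c}\notin L^2$ as well (which happens whenever $\tau$ also fails to vanish somewhere on $U^c$), one cannot conclude $\Psi\notin L^2$ without further work. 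A clean repair is to localise near the diagonal rather than to $U\times U$: the region $\{|\boldsymbol{z}-\boldsymbol{z}'|\geq\varepsilon\}$ gives a bounded contribution since $G_{\sigma_1}$ is smooth off the diagonal, while on $\{|\boldsymbol{z}-\boldsymbol{z}'|<\varepsilon\}$ one writes $\tau(\boldsymbol{z})\tau(\boldsymbol{z}')=\tau(\boldsymbol{z})^2+\tau(\boldsymbol{z})[\tau(\boldsymbol{z}')-\tau(\boldsymbol{z})]$; the first term produces a manifestly positive divergent integral whenever $\tau\not\equiv 0$, and the second is controlled by the modulus of continuity of $\tau$ against the (weaker) singularity $|\boldsymbol{z}-\boldsymbol{z}'|\,G_{\sigma_1}^2$.
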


We now give a concrete calculation of the relative entropy between two Robin fields in $d=1$. As already mentioned (and also discussed in \cite{Guerra1975a}), the relative entropy between two such theories is finite. Consider two field theories on an interval with equal masses but different Robin boundary conditions of the form
\begin{equation}\label{eq:sigma_bcs}
    \frac{\partial f}{\partial n} = - \sigma f \; , \qquad \sigma \geq 0 \; .
\end{equation}
In particular, this class of boundary conditions includes Neumann ($\sigma = 0$), free ($\sigma = m$) and Dirichlet ($\sigma = +\infty$) boundary conditions.

In the following, let $\Omega = (0, L)$. The Radon-Nikodym derivative of the field theory $\mu_0$ with free boundary conditions with respect to a field theory $\mu_\sigma$ with boundary conditions \eqref{eq:sigma_bcs} can be shown to be (cf. \cite[Thm.~II.31]{Guerra1975a}, \cite[p.~263]{Guerra1976})
\begin{equation}\label{eq:density_sigma_bcs}
    \frac{\D \mu_0}{\D \mu_\sigma} (\varphi) = \E^{\frac{\sigma - m}{2} (\varphi(0)^2 + \varphi(L)^2)} \int \E^{- \frac{\sigma-m}{2} (\varphi(0)^2 + \varphi(L)^2)} \; \D \mu_0 (\varphi) \; .
\end{equation}
The normalization constant on the right-hand side can be computed, using the change of variables formula \cite[Eq.~(0.1)]{Bogachev2014}, to wit
\begin{align*}
    \int \E^{- \frac{\sigma-m}{2} (\varphi(0)^2 + \varphi(L)^2)} \; \D \mu_0 (\varphi) &= \frac{1}{2 \pi \sqrt{\det \Sigma}} \int_{\R^2} \E^{- \frac{\sigma-m}{2} \|\boldsymbol{x}\|^2} \, \E^{- \frac{1}{2} \boldsymbol{x}^\mathsf{T} \Sigma^{-1} \boldsymbol{x} } \; \D^2 x \\
    &= \sqrt{\det \left(\Sigma^{-1} \widetilde{\Sigma}\right)} \\
    &= \frac{2 m}{\sqrt{(\sigma + m)^2 - \E^{-2 m L} (\sigma - m)^2}} \; , \numberthis
\end{align*}
where $\Sigma$ and $\widetilde{\Sigma}$ are $2 \times 2$ matrices given by
\begin{equation}
    \Sigma = \begin{pmatrix}
        G (0,0;m) & G (0,L;m) \\
        G (L,0;m) & G (L,L;m)
    \end{pmatrix} = \frac{1}{2 m} \begin{pmatrix}
        1 & \E^{- m L} \\
        \E^{- m L} & 1
    \end{pmatrix} \; ,
\end{equation}
and $\widetilde{\Sigma} = \left( \Sigma^{-1} + (\sigma - m) \mathds{1}_2 \right)^{-1}$.

Given the density in \eqref{eq:density_sigma_bcs}, it is straightforward to calculate the relative entropy between $\mu_0$ and $\mu_\sigma$,
\begin{align*}
    D_\mathrm{KL} (\mu_0 \| \mu_\sigma) &= \int \log \left[ \frac{2 m}{\sqrt{(\sigma + m)^2 - \E^{-2 m L} (\sigma - m)^2}} \; \E^{\frac{\sigma - m}{2} (\varphi(0)^2 + \varphi(L)^2)} \right] \; \D \mu_0 (\varphi) \\
    &= \frac{1}{2} \log \left( \frac{4 m^2}{(\sigma + m)^2 - \E^{-2 m L} (\sigma - m)^2} \right) \\
    &+ \frac{\sigma - m}{2} \underbrace{\int (\varphi(0)^2 + \varphi(L)^2) \; \D \mu_0 (\varphi)}_{= \, G (0,0;m) + G (L,L;m) \, = \, 1/m} \\
    &= \frac{1}{2} \left[ \log \left( \frac{4 m^2}{(\sigma + m)^2 - \E^{-2 m L} (\sigma - m)^2} \right) + \frac{\sigma}{m} - 1 \right] \; . \label{eq:DKLsigma1d} \numberthis
\end{align*}

\begin{figure}[t]
    \centering
    \includegraphics[width=0.5\textwidth]{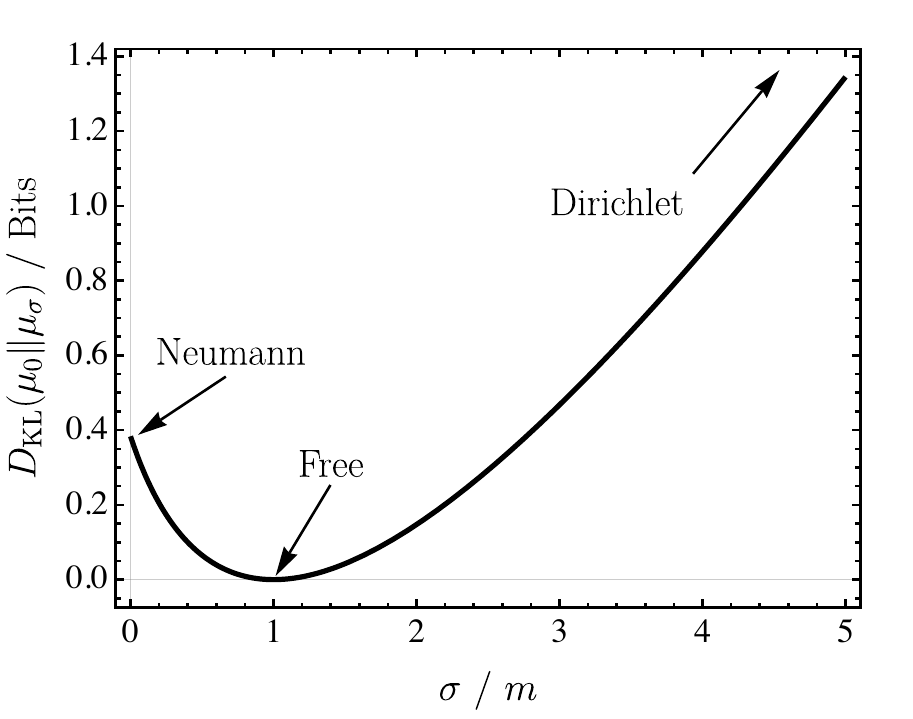}%
    \caption{The relative entropy between a one-dimensional field theory with free boundary conditions and a field theory with $\sigma$-boundary conditions given in \eqref{eq:sigma_bcs}, both with mass $m$, on an interval $\Omega = (0,1)$. As indicated by the arrows, the special values $\sigma = 0$, $\sigma = m$ and $\sigma = + \infty$ correspond to Neumann, free and Dirichlet boundary conditions, respectively. The relative entropy scales linearly in $\sigma$ for large $\sigma$, indicating that a $\sigma$-boundary condition field is mutually singular to a Dirichlet field.}
    \label{fig:sigma_bcs}
\end{figure}

We plot $D_\mathrm{KL} (\mu_0 \| \mu_\sigma)$ in Fig. \ref{fig:sigma_bcs} for $L=1$. We see that $D_\mathrm{KL} (\mu_0 \| \mu_\sigma)$ vanishes precisely when $\sigma = m$. It diverges when we take the limit $\sigma \to + \infty$, which is again consistent with our conclusion from the discussion of the Dirichlet and Neumann form domains. Note that the expression in \eqref{eq:DKLsigma1d} diverges in the limit
\begin{equation}\label{eq:robin_becomes_negative}
    \sigma \to - m \tanh \left( \frac{m L}{2} \right) < 0
\end{equation}
from above. This is due to the fact that for this value of $\sigma$ the operator $-\Delta_\sigma + m^2$ gets a zero eigenvalue. More precisely, for $\sigma < 0$, the Robin Laplacian $- \Delta_\sigma$ has at least one negative eigenvalue, with the smallest eigenvalue $\lambda_1 (\sigma) = - \mu^2$ satisfying $\lambda (\sigma) < - |\sigma|^2$, where $\mu$ is the smallest positive solution of \cite[Sec.~1.1.20]{Robinson1971}
\begin{equation}
    \tanh (\mu L) - \frac{2 |\sigma| \mu}{\mu^2 + \sigma^2} = 0 \; .
\end{equation}
If $\sigma$ is set to the value in \eqref{eq:robin_becomes_negative}, then $\mu = m$ is the smallest solution to the above equation and therefore the operator $- \Delta_\sigma + m^2$ is no longer strictly positive for $\sigma \leq - m \tanh \left( \frac{m L}{2} \right)$. This explains the divergence of \eqref{eq:DKLsigma1d} for too small values of $\sigma$.

\subsection{Mutual Information}\label{sec:mutual_info}

In this Section, we discuss a special kind of relative entropy, the mutual information between two disjoint regions $\Omega_A$ and $\Omega_B$. We show that the mutual information between two bounded regions is always finite if the these regions are separated by a finite distance. Furthermore, we give an example where the mutual information is infinite if the regions touch, i.e., if the separation distance is zero. We argue that the reason for this behaviour is the Markov property of a scalar Euclidean field theory \cite{Nelson1973a,Nelson1973b,Guerra1975a,Guerra1975b,Guerra1976}.

In the following, let $\Omega_A$ and $\Omega_B$ be disjoint open $C^0$ subsets of $\R^d$. Furthermore, let $L^2 (\Omega_A \cup \Omega_B)$ be the Hilbert space of square-integrable functions on $\Omega_A \cup \Omega_B$. Then, $L^2 (\Omega_A \cup \Omega_B) \simeq L^2(\Omega_A) \oplus L^2 (\Omega_B)$ where the isomorphism is given by the map $f \mapsto (f_A,f_B)$, where $f_A$ ($f_B$) is the restriction of $f$ to $\Omega_A$ ($\Omega_B$). We define the mutual information between the regions $\Omega_A$ and $\Omega_B$ to be the relative entropy
\begin{equation}\label{eq:mutual_info_regions}
    I(\Omega_A : \Omega_B) \coloneqq D_\mathrm{KL} ( \mu_{AB} \| \mu_A \otimes \mu_B ) \; .
\end{equation}
In the above expression, $\mu_{AB}$ is the free scalar field theory of mass $m > 0$ (and vanishing mean) over the region $\Omega_A \cup \Omega_B$ with free boundary conditions. Similarly, $\mu_A$ and $\mu_B$ are the free scalar field theories with the same mass $m$ (and vanishing means) over the regions $\Omega_A$ and $\Omega_B$, respectively, and $\mu_A \otimes \mu_B$ is the product measure.

These field theories can be interpreted as follows. The measure $\mu_{AB}$ describes a theory reduced to the region $\Omega_A \cup \Omega_B$, containing correlations between all spacetime points in $\Omega_A \cup \Omega_B$. In contrast, the field theory $\mu_A \otimes \mu_B$ only describes correlations between spacetime points within each regions, but it does not contain any cross-correlations between the regions. More precisely, if $\boldsymbol{x}_A \in \Omega_A$ and $\boldsymbol{x}_B \in \Omega_B$, then $\boldsymbol{x}_A$ and $\boldsymbol{x}_B$ are correlated in the theory $\mu_{AB}$ but uncorrelated in the theory $\mu_A \otimes \mu_B$. In other words, the field $\varphi_A$ in $\Omega_A$ and the field $\varphi_B$ in $\Omega_B$ are independent Gaussian random variables in the theory $\mu_A \otimes \mu_B$. The mutual information $I(\Omega_A : \Omega_B)$ can thus be interpreted as the distinguishability of a theory containing all cross-correlations and a theory containing no cross-correlations at all. We proceed by discussing the covariance operators of these two field theories.

The covariance operator of the centred Gaussian measure $\mu_{AB}$, denoted $\hat{G}_0$, is given by
\begin{equation}
    (\hat{G}_0 f) (\boldsymbol{x}) = \int_{\Omega_A \cup \Omega_B} G (\boldsymbol{x},\boldsymbol{y};m) f (\boldsymbol{y}) \; \D^n y \; , \qquad f \in L^2 (\Omega_A \cup \Omega_B) \; ,
\end{equation}
where $G$ is the fundamental solution of $- \Delta + m^2$ given in \eqref{eq:fundamental_solution}. Thus, we can interpret $\mu_{AB}$ as the free scalar field over $\Omega_A \cup \Omega_B$ with free boundary conditions, cf. Section \ref{sec:covariance_operators}. Recalling that $L^2 (\Omega_A \cup \Omega_B) \simeq L^2(\Omega_A) \oplus L^2 (\Omega_B)$, we can represent $\hat{G}_0$ as an operator on $L^2 (\Omega_A) \oplus L^2 (\Omega_B)$ by the matrix
\begin{equation}\label{eq:G0_matrix}
    \hat{G}_0 = \begin{pmatrix}
                    \hat{G}_A & \hat{G}_{AB} \\
                    \hat{G}^*_{AB} & \hat{G}_B
                \end{pmatrix} \; ,
\end{equation}
where the operators $\hat{G}_A : L^2 (\Omega_A) \to L^2 (\Omega_A)$, $\hat{G}_B : L^2 (\Omega_B) \to L^2 (\Omega_B)$ and $\hat{G}_{AB} : L^2 (\Omega_B) \to L^2 (\Omega_A)$ are defined as
\begin{alignat}{3}
    f_A &\mapsto (\hat{G}_A f_A) (\boldsymbol{x}) &&= \int_{\Omega_A} G (\boldsymbol{x},\boldsymbol{y};m) f_A (\boldsymbol{y}) \; \D^n y \; , \qquad &&\boldsymbol{x} \in \Omega_A \; , \\
    f_B &\mapsto (\hat{G}_B f_B) (\boldsymbol{x}) &&= \int_{\Omega_B} G (\boldsymbol{x},\boldsymbol{y};m) f_B (\boldsymbol{y}) \; \D^n y \; , \qquad &&\boldsymbol{x} \in \Omega_B \; , \\
    f_B &\mapsto (\hat{G}_{AB} f_B) (\boldsymbol{x}) &&= \int_{\Omega_B} G (\boldsymbol{x},\boldsymbol{y};m) f_B (\boldsymbol{y}) \; \D^n y \; , \qquad &&\boldsymbol{x} \in \Omega_A \; .
\end{alignat}
The corresponding covariance reads
\begin{align*}
    \mathrm{Cov} (f,g) &= \braket{f,\hat{G}_0 g}_{L^2 (\Omega_A \cup \Omega_B)} \\
    &= \braket{f_A,\hat{G}_A g_A}_{L^2 (\Omega_{A})} + \braket{f_B,\hat{G}_B g_B}_{L^2 (\Omega_{B})} + \braket{f_A,\hat{G}_{AB} g_B}_{L^2 (\Omega_{A})} \\
    &+ \braket{f_B,\hat{G}^*_{AB} g_A}_{L^2 (\Omega_{B})} \; . \numberthis
\end{align*}
From the above expression we can see that the operator $\hat{G}_{AB}$, together with its adjoint $\hat{G}^*_{AB}$, describes correlations across the two regions $\Omega_A$ and $\Omega_B$. Therefore, we call $\hat{G}_{AB}$ the cross-covariance operator of $\mu_{AB}$ \cite{Baker1970,Baker1973}.

For the product measure $\mu_A \otimes \mu_B$, the covariance operator $\hat{G}_\otimes$ is given by
\begin{equation}
    (\hat{G}_\otimes f) (\boldsymbol{x}) = \chi_{\Omega_A} (\boldsymbol{x}) \int_{\Omega_A} G (\boldsymbol{x},\boldsymbol{y};m) f_A (\boldsymbol{y}) \; \D^n y + \chi_{\Omega_B} (\boldsymbol{x}) \int_{\Omega_B} G (\boldsymbol{x},\boldsymbol{y};m) f_B (\boldsymbol{y}) \; \D^n y \; ,
\end{equation}
where $\chi_{\Omega_i}$ is the indicator function on $\Omega_i$. We can represent $\hat{G}_\otimes$ as an operator on $L^2 (\Omega_A) \oplus L^2 (\Omega_B)$ by the matrix
\begin{equation}
    \hat{G}_\otimes = \begin{pmatrix}
                    \hat{G}_A & 0 \\
                    0 & \hat{G}_B
                \end{pmatrix} \; .
\end{equation}
By construction $\mu_A \otimes \mu_B$ contains no cross-correlations between the regions $\Omega_A$ and $\Omega_B$.

From the definition of the mutual information in \eqref{eq:mutual_info_regions} and the discussion in Section \ref{sec:info_theory}, we see that $I (\Omega_A : \Omega_B)$ is finite precisely when $\mu_{AB} \sim \mu_A \otimes \mu_B$. In order to find conditions when this is the case, we first need some auxiliary results from the theory of Sobolev spaces, most of which are obtained in \cite{Guerra1975a,Guerra1976}. For a short collection of definitions and properties of Hilbert-Sobolev spaces see Appendix \ref{app:sobolev_spaces}.

Recall that $D$ is the unique self-adjoint extension of $(-\Delta + m^2) |_{C_0^\infty (\R^d)}$. The Hilbert-Sobolev space $H^{\pm 1} (\R^d)$ of order $\pm 1$ is the closure of $C_0^\infty (\R^d)$ in the inner product\footnote{Note that these inner products differ from those used in the definition of Sobolev spaces in Appendix \ref{app:sobolev_spaces} due to the general mass term. Nevertheless, the norms induced by these inner products and those used in Appendix \ref{app:sobolev_spaces} are equivalent.} $\braket{f,g}_{\pm 1} = \braket{f,D^{\pm 1} g}_{L^2}$. We recall the scale of Hilbert spaces
\begin{equation}
    H^{+1} (\R^d) \subset L^2 (\R^d) \subset H^{-1} (\R^d) \; .
\end{equation}
In particular, every $\varphi \in H^{-1} (\R^d)$ can be interpreted as a tempered distribution, i.e., an element in $\mathcal{S}^*$. For any closed subset $K$ of $\R^d$, the space of distributions with support in $K$,
\begin{equation}
    H^{-1}_K = \left\lbrace \varphi \in H^{-1} (\R^d) \; : \; \mathrm{supp} \, \varphi \subseteq K \right\rbrace
\end{equation}
is a closed subspace of $H^{-1} (\R^d)$. We denote by $e_K$ the orthogonal projection from $H^{-1} (\R^d)$ onto $H^{-1}_K$.

Let $\Omega \subset \R^d$ be an open subset, $H^{-1} (\Omega)$ the Hilbert-Sobolev space as defined in Appendix \ref{app:sobolev_spaces} and define $p_\Omega \coloneqq I - e_{\R^d \setminus \Omega}$, the projection onto $(H^{-1}_{\R^d \setminus \Omega})^\perp$. This is essentially a projection to distributions with support in $\Omega$.
\begin{lemma}[{\cite[Lem.~II.24]{Guerra1975a}}]\label{lem:Dirichlet_cov}
    Let $- \Delta_\mathrm{D}$ be the Dirichlet Laplacian on $L^2 (\Omega)$. For every $f \in H^{-1} (\Omega) \supset L^2 (\Omega)$,
    \begin{equation}
        (- \Delta_\mathrm{D} + m^2 )^{-1} f = \hat{G}_\mathrm{D} f = D^{-1} p_\Omega f \; .
    \end{equation}
\end{lemma}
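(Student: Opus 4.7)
The plan is to identify $u := D^{-1} p_\Omega f$ as the unique element of the form domain $\mathcal{Q}(-\Delta_\mathrm{D}) = H_0^{+1}(\Omega)$ satisfying $\mathfrak{q}_\mathrm{D}(v,u) = \langle v, f\rangle$ for all $v \in H_0^{+1}(\Omega)$; by the representation theorem \ref{thm:second_rep_thm} this forces $u = \hat{G}_\mathrm{D} f$. First I would spell out what $p_\Omega f$ means when $f \in H^{-1}(\Omega)$: any continuous functional on $H_0^{+1}(\Omega)$ extends (non-uniquely) to $H^{+1}(\R^d)$, i.e.\ to an element of $H^{-1}(\R^d)$, and two such extensions differ by a distribution supported in $\R^d \setminus \Omega$. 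Since $p_\Omega = I - e_{\R^d\setminus\Omega}$ annihilates exactly this ambiguity, $p_\Omega f$ is a well-defined element of $H^{-1}(\R^d)$.

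The structural heart of the argument is the orthogonal decomposition
\begin{equation*}
    H^{+1}(\R^d) \;=\; H_0^{+1}(\Omega) \;\oplus^{\perp}\; D^{-1} H^{-1}_{\R^d\setminus \Omega}
\end{equation*}
with respect to the energy inner product $\langle u,v\rangle_{+1} = \langle u, Dv\rangle_{L^2}$. Orthogonality is immediate from the density of $C_0^\infty(\Omega)$ in $H_0^{+1}(\Omega)$: for $v\in H_0^{+1}(\Omega)$ and $w\in D^{-1}H^{-1}_{\R^d\setminus\Omega}$, $Dw$ is supported in $\R^d \setminus \Omega$, and the pairing $\langle \varphi, Dw\rangle$ vanishes for every $\varphi \in C_0^\infty(\Omega)$, hence for every $v$. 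To get equality, one shows conversely that any $w \in H_0^{+1}(\Omega)^\perp$ satisfies $Dw|_\Omega = 0$ in $\mathcal{D}'(\Omega)$, so that $Dw \in H^{-1}_{\R^d\setminus\Omega}$. Transporting this decomposition by the isometry $D\colon H^{+1}(\R^d)\to H^{-1}(\R^d)$ yields
\begin{equation*}
    H^{-1}(\R^d) \;=\; D H_0^{+1}(\Omega) \;\oplus^{\perp}\; H^{-1}_{\R^d\setminus \Omega},
\end{equation*}
so $p_\Omega$ is precisely the projection onto $DH_0^{+1}(\Omega)$, and therefore $u = D^{-1} p_\Omega f \in H_0^{+1}(\Omega) = \mathcal{Q}(-\Delta_\mathrm{D})$.

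Finally I would verify the weak Dirichlet equation. For any $v \in H_0^{+1}(\Omega)$,
\begin{equation*}
    \mathfrak{q}_\mathrm{D}(v,u) \;=\; \langle v, D u\rangle \;=\; \langle v, p_\Omega f\rangle \;=\; \langle v, f\rangle - \langle v, e_{\R^d\setminus\Omega} f\rangle,
\end{equation*}
and the last term vanishes because $e_{\R^d\setminus\Omega} f$ is supported in $\R^d\setminus\Omega$ while $v$, being an $H^{+1}$-limit of elements of $C_0^\infty(\Omega)$, pairs trivially with any distribution supported in $\R^d\setminus\Omega$. Uniqueness from the representation theorem then identifies $u$ with $\hat{G}_\mathrm{D} f$.

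The main obstacle is the second step: cleanly establishing the orthogonal decomposition, particularly the surjectivity direction that characterises $H_0^{+1}(\Omega)^\perp$ as the subspace of $D$-harmonic functions on $\Omega$. Once that decomposition is in hand, the remaining steps amount to unpacking definitions, with only mild care needed in how $f\in H^{-1}(\Omega)$ is extended to $H^{-1}(\R^d)$ before applying $p_\Omega$.
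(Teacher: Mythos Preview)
The paper does not actually prove this lemma; it is stated with a citation to \cite[Lem.~II.24]{Guerra1975a} and used as an input. So there is no in-paper proof to compare against.

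That said, your argument is correct and is essentially the standard proof from the cited reference. The orthogonal decomposition $H^{-1}(\R^d) = D H_0^{+1}(\Omega) \oplus^\perp H^{-1}_{\R^d\setminus\Omega}$ is exactly how Guerra--Rosen--Simon characterise $p_\Omega$, and once $D^{-1}p_\Omega f$ is known to lie in $H_0^{+1}(\Omega)$, verifying the weak Dirichlet equation is a one-line unpacking of the definitions, just as you write. The only point worth tightening is the ``surjectivity'' direction you flag: since $H_0^{+1}(\Omega)$ is a closed subspace of $H^{+1}(\R^d)$ (being the closure of $C_0^\infty(\Omega)$), the orthogonal complement exists automatically, and your computation $\langle \varphi, Dw\rangle = 0$ for all $\varphi\in C_0^\infty(\Omega)$ directly shows $\mathrm{supp}\, Dw \subset \R^d\setminus\Omega$, i.e.\ $Dw \in H^{-1}_{\R^d\setminus\Omega}$. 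No additional argument is needed beyond closedness of $H_0^{+1}(\Omega)$.
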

Upon recalling that the inner product in $H^{-1} (\Omega)$ is given by $\braket{f,g}_{H^{-1} (\Omega)} = \braket{p_\Omega f, p_\Omega g}_{-1}$ and using \cite[Cor.~3.29(ii)]{Chandler2017}, we infer that $H^{-1} (\Omega)$ is the closure of $C_0^\infty (\Omega)$ with respect to the inner product
\begin{equation}\label{eq:inner_product_H-1}
    \braket{f,g}_{H^{-1} (\Omega)} = \braket{f, \hat{G}_\mathrm{D} g}_{L^2 (\Omega)} \; , \qquad f,g \in C_0^\infty (\Omega) \; .
\end{equation}

The proof that $\mu_{AB} \sim \mu_A \otimes \mu_B$ if the regions are separated relies on the following
\begin{lemma}[{\cite[Lem.~II.35]{Guerra1975a}, \cite[Thm.~III.16]{Simon1974}}]\label{lem:projection_traceclass}
    Let $\Omega_1$ and $\Omega_2$ be open subsets in $\R^d$ and denote their closures by $\Lambda_1$ and $\Lambda_2$, respectively. Suppose that $\Omega_1$ is bounded and $\Omega_1$ and $\Omega_2$ are separated by a finite distance. Then, the operator $\alpha = e_{\Lambda_1} e_{\Lambda_2} e_{\Lambda_1}$ is of trace class on $H^{-1} (\R^d)$ and $\|\alpha\| < 1$. Moreover, if $f$ is an eigenvector of $\alpha$ such that the corresponding eigenvalue is nonzero, then $f \in H^{-1} (\partial \Omega_1)$.
\end{lemma}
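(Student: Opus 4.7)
The strategy is to first exploit Nelson's Markov property of the free field in Hilbert-space form, which simultaneously localizes $\alpha$ to the boundary $\partial \Omega_1$ and yields the eigenvector claim directly. The trace class statement then becomes the main technical task, and $\|\alpha\| < 1$ follows via a short compactness argument.

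I begin with the Markov identity (cf. \cite{Nelson1973a,Nelson1973b,Guerra1975a}),
\begin{equation*}
    e_{\overline{\Omega}} \, e_{\R^d \setminus \Omega} \; = \; e_{\R^d \setminus \Omega} \, e_{\overline{\Omega}} \; = \; e_{\partial \Omega} \; ,
\end{equation*}
which holds for sufficiently regular open $\Omega \subset \R^d$. Since $\mathrm{dist}(\Omega_1,\Omega_2) > 0$, we have $\Lambda_1 \cap \Lambda_2 = \emptyset$, hence $\Lambda_2 \subseteq \R^d \setminus \Omega_1$. Consequently $H^{-1}_{\Lambda_2} \subseteq H^{-1}_{\R^d \setminus \Omega_1}$, i.e., $e_{\R^d \setminus \Omega_1} e_{\Lambda_2} = e_{\Lambda_2} = e_{\Lambda_2} e_{\R^d \setminus \Omega_1}$. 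Inserting these identities and applying the Markov property on both flanks gives
\begin{equation*}
    \alpha \; = \; e_{\Lambda_1} e_{\Lambda_2} e_{\Lambda_1} \; = \; e_{\Lambda_1} \bigl( e_{\R^d \setminus \Omega_1} e_{\Lambda_2} e_{\R^d \setminus \Omega_1} \bigr) e_{\Lambda_1} \; = \; e_{\partial \Omega_1} \, e_{\Lambda_2} \, e_{\partial \Omega_1} \; .
\end{equation*}
The eigenvector statement then follows at once: if $\alpha f = \lambda f$ with $\lambda \neq 0$, then $f = \lambda^{-1} \alpha f \in \mathrm{Ran}(e_{\partial \Omega_1}) = H^{-1}(\partial \Omega_1)$.

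For the trace-class claim, I write $\alpha = T T^*$ with $T = e_{\Lambda_1} e_{\Lambda_2}$ and aim to show $T$ is Hilbert--Schmidt on $H^{-1}(\R^d)$. The unitary map $D^{-1/2} \colon H^{-1}(\R^d) \to L^2(\R^d)$ conjugates each $e_{\Lambda_i}$ to an orthogonal projection on $L^2(\R^d)$, so it is equivalent to show Hilbert--Schmidtness of the conjugated operator. Using Lemma~\ref{lem:Dirichlet_cov} together with the inner-product identity \eqref{eq:inner_product_H-1}, I would approximate elements of $H^{-1}_{\Lambda_i}$ by smooth compactly supported densities and express the bilinear form $\braket{f, T g}_{H^{-1}}$ through the fundamental solution $G(\cdot,\cdot;m)$. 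The Schwartz kernel theorem then identifies $T$ as an integral operator whose kernel is controlled by $G$ on the product $\Lambda_1 \times \Lambda_2$. Since $\mathrm{dist}(\Lambda_1,\Lambda_2) > 0$, this kernel avoids the coincident-point singularity described in \eqref{eq:greens_diagonal}, and the exponential decay of $G$ inherent in \eqref{eq:fundamental_solution} together with the boundedness of $\Omega_1$ yields an estimate of the form
\begin{equation*}
    \|T\|_{\mathrm{HS}}^2 \; \leq \; C \int_{\Lambda_1} \! \int_{\Lambda_2} G(\boldsymbol{x},\boldsymbol{y};m)^2 \; \D^d x \, \D^d y \; < \; +\infty \; ,
\end{equation*}
which implies $\alpha \in \mathfrak{T}^+(H^{-1}(\R^d))$.

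Finally, once trace class (hence compactness) is available, $\|\alpha\|$ is attained by an eigenvalue because $\alpha$ is also self-adjoint. If $\alpha f = f$ with $\|f\|_{H^{-1}} = 1$, the chain
\begin{equation*}
    1 \; = \; \|\alpha f\|_{H^{-1}} \; \leq \; \|e_{\Lambda_2} e_{\Lambda_1} f\|_{H^{-1}} \; \leq \; \|e_{\Lambda_1} f\|_{H^{-1}} \; \leq \; \|f\|_{H^{-1}} \; = \; 1
\end{equation*}
must be saturated, which for orthogonal projections forces $e_{\Lambda_1} f = f$ and then $e_{\Lambda_2} f = f$. Hence $f \in H^{-1}_{\Lambda_1} \cap H^{-1}_{\Lambda_2} = H^{-1}_{\Lambda_1 \cap \Lambda_2} = \{0\}$, contradicting $\|f\|_{H^{-1}} = 1$; so $\|\alpha\| < 1$. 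The main technical obstacle I anticipate is the rigorous identification of $T$ as a Hilbert--Schmidt integral operator with a $G$-controlled kernel: the projections $e_{\Lambda_i}$ are defined via the $H^{-1}$ inner product rather than by pointwise restriction, so some care is required when unwinding the isomorphism $D^{-1/2}$ to arrive at the concrete double-integral bound displayed above.
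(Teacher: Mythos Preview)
The paper does not supply its own proof of this lemma; it is quoted verbatim from \cite{Guerra1975a} and \cite{Simon1974} and used as a black box in the proof of Theorem~\ref{thm:equiv_open_Dirichlet}. So there is no in-paper argument to compare against.

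On its own merits, your outline is sound in structure. The Markov reduction $\alpha = e_{\partial \Omega_1} e_{\Lambda_2} e_{\partial \Omega_1}$ and the eigenvector claim are correct and match the cited references. Your compactness argument for $\|\alpha\|<1$ is clean: the step $H^{-1}_{\Lambda_1}\cap H^{-1}_{\Lambda_2} = H^{-1}_{\Lambda_1\cap\Lambda_2}=\{0\}$ is exactly how Simon argues.

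The one place where your sketch diverges from the cited proofs is the Hilbert--Schmidt estimate. The displayed bound $\|T\|_{\mathrm{HS}}^2 \leq C\int_{\Lambda_1}\!\int_{\Lambda_2} G^2$ is the HS norm of $\chi_{\Lambda_1} D^{-1} \chi_{\Lambda_2}$ on $L^2(\R^d)$, not of $e_{\Lambda_1}e_{\Lambda_2}$ on $H^{-1}(\R^d)$; the projections $e_{\Lambda_i}$ do not conjugate to multiplication by indicators under $D^{-1/2}$, as you yourself flag. The arguments in \cite{Simon1974} and \cite{Guerra1975a} bypass this by inserting a smooth cutoff $j\in C_0^\infty(\R^d)$ with $j\equiv 1$ on a neighbourhood of $\Lambda_1$ and $\mathrm{supp}\,j\cap\Lambda_2=\emptyset$ (possible since the regions are separated and $\Omega_1$ is bounded). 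Multiplication by $j$ acts as the identity on $H^{-1}_{\Lambda_1}$ and annihilates $H^{-1}_{\Lambda_2}$, which lets one write $e_{\Lambda_2}e_{\Lambda_1}$ in terms of the commutator-type object $j D^{-1}(1-j)$, whose kernel $j(\boldsymbol{x})G(\boldsymbol{x},\boldsymbol{y};m)(1-j(\boldsymbol{y}))$ is smooth, compactly supported in $\boldsymbol{x}$, and exponentially decaying in $\boldsymbol{y}$---hence manifestly Hilbert--Schmidt. This is the missing technical bridge between your heuristic double integral and the actual HS bound.
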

The following Lemma is useful for showing that the form domains of two precision operators coincide.
\begin{lemma}[{\cite[Prop.~B.1]{DaPrato2014}}]\label{lem:same_range}
    Let $T_1$ and $T_2$ be two bounded self-adjoint operators on a separable Hilbert space $\mathcal{H}$ with norm $\|.\|$. The ranges of $T_1$ and $T_2$ coincide, i.e., $T_1 [\mathcal{H}] = T_2 [\mathcal{H}]$, if and only if there exist constants $\gamma, \Gamma > 0$, $\gamma \leq \Gamma$, such that $\gamma \|T_1 f\| \leq \|T_2 f\| \leq \Gamma \|T_1 f\|$ for all $f \in \mathcal{H}$.
\end{lemma}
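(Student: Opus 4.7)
The plan is to deduce this from Douglas' range inclusion theorem, a classical tool in operator theory, which states that for bounded operators $A, B$ on a Hilbert space $\mathcal{H}$ the following are equivalent: (i) $\mathrm{Range}(A) \subseteq \mathrm{Range}(B)$; (ii) $A A^* \leq \lambda^2 B B^*$ for some $\lambda \geq 0$; (iii) there is a bounded $C$ with $A = BC$. Specialized to self-adjoint $T_1$, $T_2$ (so $T_i T_i^* = T_i^2$), condition (ii) reads $T_2^2 \leq \lambda^2 T_1^2$, which by spectral/quadratic-form arguments is equivalent to $\|T_2 f\| \leq \lambda \|T_1 f\|$ for every $f \in \mathcal{H}$. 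Thus Douglas furnishes the clean dictionary
\begin{equation*}
    \mathrm{Range}(T_2) \subseteq \mathrm{Range}(T_1) \;\;\Longleftrightarrow\;\; \exists \, \lambda \geq 0 : \|T_2 f\| \leq \lambda \|T_1 f\| \;\; \forall f \in \mathcal{H} ,
\end{equation*}
and symmetrically with the roles of $T_1, T_2$ exchanged. Given this, the lemma is essentially a reformulation.

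For the implication ``norm equivalence $\Rightarrow$ equal ranges'', I would read off from $\|T_2 f\| \leq \Gamma \|T_1 f\|$ the inclusion $T_2[\mathcal{H}] \subseteq T_1[\mathcal{H}]$, and rewrite $\gamma \|T_1 f\| \leq \|T_2 f\|$ as $\|T_1 f\| \leq \gamma^{-1} \|T_2 f\|$ to obtain the reverse inclusion $T_1[\mathcal{H}] \subseteq T_2[\mathcal{H}]$; both together give $T_1[\mathcal{H}] = T_2[\mathcal{H}]$. For the converse, the equality $T_1[\mathcal{H}] = T_2[\mathcal{H}]$ gives both range inclusions for free, and Douglas produces constants $\Gamma > 0$ and $\gamma' > 0$ with $\|T_2 f\| \leq \Gamma \|T_1 f\|$ and $\|T_1 f\| \leq \gamma'^{-1}\|T_2 f\|$. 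Setting $\gamma = \min(\gamma', \Gamma)$ ensures $0 < \gamma \leq \Gamma$ and yields the desired two-sided bound. The degenerate situation $T_1[\mathcal{H}] = T_2[\mathcal{H}] = \{0\}$ forces $T_1 = T_2 = 0$ by self-adjointness (since $\ker T_i = \overline{\mathrm{Range}(T_i)}^{\perp} = \mathcal{H}$), and any $0 < \gamma \leq \Gamma$ trivially works.

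The main obstacle is conceptual rather than technical: once Douglas' theorem is available, the proof is essentially a transcription. The subtlety hidden inside Douglas itself, which I would treat as a black box, is the construction of the factorization: given $\|T_2 f\| \leq \lambda \|T_1 f\|$, one defines a bounded operator $C$ on the dense subspace $\mathrm{Range}(T_1) \subseteq \overline{\mathrm{Range}(T_1)}$ by $C(T_1 f) = T_2 f$ (well-defined thanks to the norm bound, which forces $T_1 f = T_1 g \Rightarrow T_2 f = T_2 g$), extends it continuously to $\overline{\mathrm{Range}(T_1)}$ using completeness of $\mathcal{H}$, and sets $C = 0$ on the orthogonal complement. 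Range inclusion follows immediately from the identity $T_2 = T_1 C$.
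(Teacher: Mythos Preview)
The paper does not supply its own proof of this lemma; it is quoted verbatim from \cite[Prop.~B.1]{DaPrato2014} and used as a black box. So there is no in-paper argument to compare against.

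Your approach via Douglas' range inclusion theorem is correct and is in fact the standard way to prove this result. The dictionary you set up,
\[
    T_2[\mathcal{H}] \subseteq T_1[\mathcal{H}] \;\Longleftrightarrow\; \exists\,\lambda : \|T_2 f\| \leq \lambda \|T_1 f\| \ \forall f,
\]
is exactly what Douglas gives once one uses $T_i T_i^* = T_i^2$ for self-adjoint $T_i$, and the two-sided conclusion follows by symmetry as you describe. The handling of the degenerate case and the adjustment $\gamma = \min(\gamma',\Gamma)$ are fine (in fact $\gamma' \leq \Gamma$ is automatic whenever $T_1 \neq 0$, but the adjustment is harmless).

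One small slip in your informal sketch of Douglas: the construction $C(T_1 f) \coloneqq T_2 f$ yields $T_2 = C T_1$, not $T_2 = T_1 C$, and the former only gives $\mathrm{Range}(T_2) \subseteq \mathrm{Range}(C)$. The fix is a one-liner using self-adjointness: taking adjoints in $T_2 = C T_1$ gives $T_2 = T_1 C^*$, whence $\mathrm{Range}(T_2) \subseteq \mathrm{Range}(T_1)$. Since you explicitly treat Douglas as a black box this does not affect the validity of your main argument, but it is worth getting right if you spell out the construction.
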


\begin{figure}[t]
    \centering
    \includegraphics[width=0.4\textwidth]{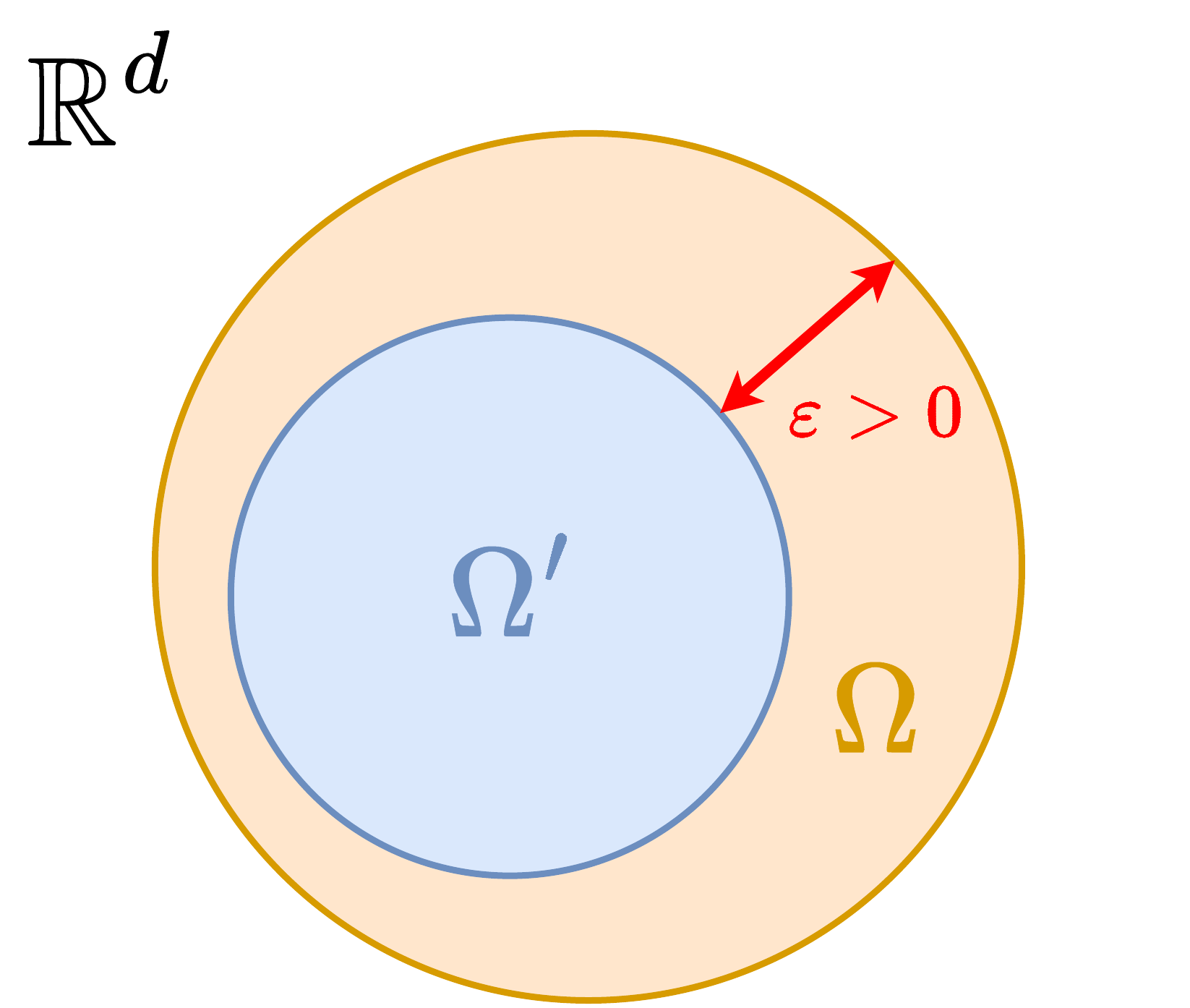}%
    \caption{Sketch of regions considered in Theorem \ref{thm:equiv_open_Dirichlet}. The bounded open sets $\Omega$ and $\Omega'$ are chosen such that $\overline{\Omega'}$ is a subset of $\Omega$ and $\mathrm{dist} (\Omega', \partial \Omega) > 0$. Two field theories with free and Dirichlet boundary conditions on $\partial \Omega$, respectively, will, in general, not be equivalent. However, they are equivalent if we restrict them to the smaller region $\Omega'$.}
    \label{fig:region1}
\end{figure}

The following Theorem states that while a Dirichlet field theory and a field theory with free boundary conditions over a bounded region $\Omega$ are generally mutually singular, if we restrict both field theories to a smaller subregion $\Omega'$, then they are in fact equivalent, see also Fig. \ref{fig:region1}.
\begin{theorem}[{\cite[Thm.~II.34]{Guerra1975a}}]\label{thm:equiv_open_Dirichlet}
    Let $\Omega$ and $\Omega'$ be bounded open regions in $\R^d$ such that $\Omega' \subset \Omega$ and $\mathrm{dist} (\Omega', \partial \Omega) > 0$. Let $\mu = \mathcal{N} (0, \hat{G}_0)$ and $\nu = \mathcal{N} (0, \hat{G}_\mathrm{D}^\Omega)$ be centred Gaussian measures over $\Omega'$, where $\hat{G}_\mathrm{D}^\Omega$ is the integral operator on $L^2 (\Omega')$ whose integral kernel is the restriction to $\Omega' \times \Omega'$ of the Dirichlet Green's function on $\Omega \times \Omega$. Then the measures $\mu$ and $\nu$ are equivalent, which implies that the relative entropy between them is finite.
\end{theorem}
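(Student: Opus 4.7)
The key input is Lemma \ref{lem:projection_traceclass}, which applied to $\Omega_1 = \Omega'$ and $\Omega_2 = \R^d \setminus \overline{\Omega}$ (separated by a positive distance by assumption) guarantees that the operator $\alpha \coloneqq e_{\overline{\Omega'}} \, e_{\R^d \setminus \Omega} \, e_{\overline{\Omega'}}$ is trace class on $H^{-1}(\R^d)$ and satisfies $\|\alpha\| < 1$. The plan is to rewrite the two covariance forms in terms of $\alpha$ and then apply the equivalence criterion from Theorem \ref{thm:equivalence}. Using Lemma \ref{lem:Dirichlet_cov} together with the inclusion $L^2 (\Omega') \hookrightarrow H^{-1} (\R^d)$ via extension by zero, one obtains for $f, g \in L^2 (\Omega')$ the identities $\braket{f, \hat{G}_0 \, g}_{L^2 (\Omega')} = \braket{f,g}_{-1}$ and $\braket{f, \hat{G}_\mathrm{D}^\Omega \, g}_{L^2 (\Omega')} = \braket{p_\Omega f, p_\Omega g}_{-1}$. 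Subtracting and exploiting the orthogonal decomposition $f = p_\Omega f + e_{\R^d \setminus \Omega} f$ in $H^{-1}(\R^d)$ together with $e_{\overline{\Omega'}} f = f$ yields the key identity
\begin{equation}
\braket{f, (\hat{G}_0 - \hat{G}_\mathrm{D}^\Omega) \, g}_{L^2 (\Omega')} = \braket{f, \alpha g}_{-1} \; .
\end{equation}

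For the first condition of Theorem \ref{thm:equivalence}, the equality of Cameron--Martin spaces, I would first establish the two-sided operator inequality $(1 - \|\alpha\|) \, \hat{G}_0 \leq \hat{G}_\mathrm{D}^\Omega \leq \hat{G}_0$ on $L^2 (\Omega')$. The upper bound is immediate from the identity above together with the positivity of $\alpha$, while the lower bound follows from the estimate $\braket{f, \alpha f}_{-1} \leq \|\alpha\| \braket{f,f}_{-1}$. Applying Lemma \ref{lem:same_range} to $\hat{G}_0^{1/2}$ and $(\hat{G}_\mathrm{D}^\Omega)^{1/2}$ then gives the required equality $\hat{G}_0^{1/2}[L^2 (\Omega')] = (\hat{G}_\mathrm{D}^\Omega)^{1/2}[L^2 (\Omega')]$, so in particular $\hat{B} \coloneqq (\hat{G}_\mathrm{D}^\Omega)^{-1/2} \hat{G}_0^{1/2}$ extends to a bounded operator on $L^2(\Omega')$.

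For the remaining Hilbert--Schmidt condition on $\hat{B} \hat{B}^* - I$, I would transfer to the $H^{-1}$-picture. Writing the bounded embedding $\iota : L^2 (\Omega') \hookrightarrow H^{-1} (\R^d)$, one has $\hat{G}_0 = \iota^* \iota$ and $\hat{G}_\mathrm{D}^\Omega = \iota^*(I - e_{\R^d \setminus \Omega}) \iota$, so that $\hat{G}_0 - \hat{G}_\mathrm{D}^\Omega = \iota^* \alpha \iota$. Using the polar decomposition of $\iota$, one finds that $\hat{B} \hat{B}^* - I$ is unitarily equivalent to the restriction of $\alpha$ to the closed subspace $H^{-1}_{\overline{\Omega'}}$ of $H^{-1}(\R^d)$. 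Since $\alpha$ is trace class on $H^{-1}(\R^d)$, and Schatten classes are preserved under compression to a closed subspace, this restriction is trace class and in particular Hilbert--Schmidt. Combining both conditions with Theorem \ref{thm:equivalence} gives $\mu \sim \nu$, and the corresponding relative entropy is finite by the general theory of Gaussian measures summarised in Section \ref{sec:relative_entropy}.

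The principal technical obstacle is the final step, since $(\hat{G}_\mathrm{D}^\Omega)^{-1/2}$ is unbounded on $L^2(\Omega')$ and one must carry the trace class property of $\alpha$ from $H^{-1}(\R^d)$ down to a Hilbert--Schmidt statement on $L^2(\Omega')$ without losing control at the singular point. An equivalent and perhaps more transparent route, which avoids this difficulty altogether, is to realise $\mu$ and $\nu$ intrinsically as centred Gaussian measures on the Hilbert space $H^{-1}_{\overline{\Omega'}}$, where their covariance operators read simply $I$ and $I - \alpha|_{H^{-1}_{\overline{\Omega'}}}$; since equivalence of Gaussian measures is intrinsic and independent of the ambient Hilbert structure, the Feldman--H\'ajek criterion may be verified in whichever realisation is most convenient, and in this one the two conditions follow immediately from $\alpha$ being trace class and $\|\alpha\| < 1$.
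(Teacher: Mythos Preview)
Your proposal is correct and follows the same route as the paper: both proofs reduce to Lemma~\ref{lem:projection_traceclass}, use the identities $\braket{f,\hat{G}_0 g}_{L^2(\Omega')}=\braket{f,g}_{-1}$ and $\braket{f,\hat{G}_\mathrm{D}^\Omega g}_{L^2(\Omega')}=\braket{f,(I-\alpha)g}_{-1}$, establish the two-sided inequality via $\|\alpha\|<1$ and invoke Lemma~\ref{lem:same_range}, and then transfer the Hilbert--Schmidt property of $\alpha$ on $H^{-1}_{\overline{\Omega'}}$ to $\hat B\hat B^*-I$ on $L^2(\Omega')$.

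Two small points of comparison. First, the paper takes $\hat B=\hat{G}_0^{-1/2}(\hat{G}_\mathrm{D}^\Omega)^{1/2}$, i.e.\ the roles of $\mu$ and $\nu$ swapped relative to yours; with that choice one gets \emph{exactly} $\hat B\hat B^*-I=-V^*\alpha V$ and hence literal unitary equivalence to $-\alpha|_{H^{-1}_{\overline{\Omega'}}}$, whereas with your $\hat B$ the object is $(\hat{G}_\mathrm{D}^\Omega)^{-1/2}(\hat{G}_0-\hat{G}_\mathrm{D}^\Omega)(\hat{G}_\mathrm{D}^\Omega)^{-1/2}$, which is only unitarily equivalent to $(I-\alpha)^{-1/2}\alpha(I-\alpha)^{-1/2}$---still Hilbert--Schmidt, but not ``$\alpha$'' on the nose. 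Second, rather than invoking the polar decomposition of $\iota$ abstractly, the paper makes the transfer explicit by choosing the eigenbasis $\{\phi_n\}$ of $\hat{G}_0$ in $L^2(\Omega')$, observing that $\psi_n\coloneqq\hat{G}_0^{-1/2}\phi_n$ is an orthonormal basis of $H^{-1}_{\overline{\Omega'}}$, and computing $\|\hat B\hat B^*-I\|^2_{\mathrm{HS}(L^2(\Omega'))}=\sum_n\|\alpha\psi_n\|^2_{-1}=\|\alpha\|^2_{\mathrm{HS}(H^{-1}_{\overline{\Omega'}})}$ directly; this dissolves the ``technical obstacle'' you flag without any detour.
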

\begin{remark}
    Note that this result also holds when we consider boundary conditions other than Dirichlet \cite[Thm.~II.2]{Guerra1976}. Specifically, it also holds for Neumann boundary conditions \cite[Thm.~III.6]{Guerra1976}.
\end{remark}
\begin{proof}
    Let $\Lambda^\mathrm{ext} = \R^d \setminus \Omega$. For all $f,g \in L^2 (\Omega')$, $\braket{f, \hat{G}_0 g}_{L^2 (\Omega')} = \braket{f,g}_{-1}$, and, by Lemma \ref{lem:Dirichlet_cov}, $\braket{f, \hat{G}_\mathrm{D}^\Omega g} = \braket{f, p_{\Omega} g}_{-1}$. Denote the closure $\overline{\Omega'}$ by $\Lambda'$. Using the self-adjointness of $e_{\Lambda'}$ with respect to $\braket{.,.}_{-1}$ and the fact that $e_{\Lambda'}$ acts as the identity on $H^{-1}_{\Lambda'} = \widetilde{H}^{-1} (\Omega')$, we can furthermore write
    \begin{equation}
        \braket{f, p_{\Omega} g}_{-1} = \braket{f, P g}_{-1} \; , \qquad f,g \in H^{-1}_{\Lambda'} \supset L^2 (\Omega) \; ,
    \end{equation}
    where we defined $P \coloneqq I - \alpha$ and $\alpha \coloneqq e_{\Lambda'} e_{\Lambda^\mathrm{ext}} e_{\Lambda'}$. By Lemma \ref{lem:projection_traceclass}, $\alpha \in \mathrm{HS} (H^{-1}_{\Lambda'})$ (see also \cite[p.~169]{Guerra1975a} and \cite[p.~217]{Simon1974}) and $\|\alpha\| < 1$. In particular, this means that $P = I-\alpha$ is strictly positive and boundedly invertible. Furthermore, $P$ is self-adjoint as an operator on $H^{-1} (\R^d)$, and it has a unique positive square root $P^{1/2}$.

    By Lemma \ref{lem:operator_inequalities}, $\braket{f, \hat{G}_\mathrm{D}^\Omega f}_{L^2 (\Omega')} \leq \braket{f, \hat{G}_0 f}_{L^2 (\Omega')}$ for all $f \in L^2 (\Omega')$. Furthermore,
    \begin{equation}
        \braket{f, \hat{G}_0 f}_{L^2 (\Omega')} = \braket{f, f}_{-1} \leq \| P^{-1} \| \, \braket{f, P f}_{-1} = \| P^{-1} \| \, \braket{f, \hat{G}_\mathrm{D}^\Omega f}_{L^2 (\Omega')} \; .
    \end{equation}
    Thus, by Lemma \ref{lem:same_range}, $\hat{G}_0^{1/2} [L^2 (\Omega')] = (\hat{G}_\mathrm{D}^\Omega)^{1/2} [L^2 (\Omega')]$.

    We define $\hat{B} \coloneqq \hat{G}_0^{-1/2} (\hat{G}_\mathrm{D}^\Omega)^{1/2}$ on $L^2 (\Omega')$. Let $\{\phi_n\}_{n=1}^\infty$ be an eigenbasis of $\hat{G}_0$ in $L^2 (\Omega')$. Then, $\{\psi_n\}_{n=1}^\infty$ where $\psi_n \coloneqq \hat{G}_0^{-1/2} \phi_n$ is an orthonormal basis in $H^{-1}_{\Lambda'} = \widetilde{H}^{-1} (\Omega')$. Then, as $\alpha \in \mathrm{HS} (H^{-1}_{\Lambda'})$,
    \begin{equation}
        \| \hat{B} \hat{B}^* - I \|^2_{\mathrm{HS} (L^2 (\Omega'))} = \sum_{n=1}^\infty \| ( P - I ) \psi_n \|^2_{-1} = \| \alpha \|^2_{\mathrm{HS} \left( H^{-1}_{\Lambda'} \right)} < + \infty \; .
    \end{equation}
    Therefore, by Theorem \ref{thm:equivalence}, the Gaussian measures $\mu = \mathcal{N} (0, \hat{G}_0)$ and $\nu = \mathcal{N} (0, \hat{G}_\mathrm{D}^\Omega)$ are equivalent.
\end{proof}

We will use Theorem \ref{thm:equiv_open_Dirichlet} to show that the mutual information between separated regions is finite. Another result that will turn out to be useful is that a Dirichlet field over $\Omega_A \cup \Omega_B$ (with $\Omega_A$ and $\Omega_B$ separated) factorises into a product measure. We first need the following
\begin{proposition}[{\cite[Prop.~XIII.3]{Reed1978}}]\label{prop:dirichlet_factorisation}
    Let $\Omega_A$ and $\Omega_B$ be disjoint open subsets of $\R^d$. Then, the Dirichlet Laplacian $-\Delta^{\Omega_A \cup \Omega_B}_\mathrm{D}$ on $L^2 (\Omega_A \cup \Omega_B)$ can be represented as an operator on $L^2 (\Omega_A) \oplus L^2 (\Omega_B)$ by the matrix
    \begin{equation}
    -\Delta^{\Omega_A \cup \Omega_B}_\mathrm{D} = \begin{pmatrix}
                    -\Delta^{\Omega_A}_\mathrm{D} & 0 \\
                    0 & -\Delta^{\Omega_B}_\mathrm{D}
                \end{pmatrix} \; ,
\end{equation}
where $-\Delta^{\Omega_i}_\mathrm{D}$ is the Dirichlet Laplacian on $L^2 (\Omega_i)$, $i \in \{A,B\}$.
\end{proposition}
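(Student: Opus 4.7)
The plan is to prove the claim at the level of the Dirichlet form and then to invoke the one-to-one correspondence between closed positive forms and positive self-adjoint operators (Theorem~\ref{thm:second_rep_thm}). The strategy is to show that, under the natural identification $L^2 (\Omega_A \cup \Omega_B) \simeq L^2 (\Omega_A) \oplus L^2 (\Omega_B)$, both the form core $C_0^\infty (\Omega_A \cup \Omega_B)$ and the Dirichlet form itself decompose as orthogonal direct sums, so that the closure, and hence the associated self-adjoint operator, decompose accordingly.

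First, I would show that test functions split. Given $f \in C_0^\infty (\Omega_A \cup \Omega_B)$, the support $\mathrm{supp}\, f$ is a compact subset of the disjoint union $\Omega_A \cup \Omega_B$, so the intersections $\mathrm{supp}\, f \cap \Omega_A$ and $\mathrm{supp}\, f \cap \Omega_B$ are simultaneously open and closed in $\mathrm{supp}\, f$. Each is therefore compact and contained in the respective region, so $f_A \coloneqq f \, \chi_{\Omega_A}$ and $f_B \coloneqq f \, \chi_{\Omega_B}$, extended by zero, lie in $C_0^\infty (\Omega_A)$ and $C_0^\infty (\Omega_B)$, and $f = f_A + f_B$. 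This provides the vector space identification $C_0^\infty (\Omega_A \cup \Omega_B) = C_0^\infty (\Omega_A) \oplus C_0^\infty (\Omega_B)$ under the $L^2$-isomorphism above.

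Second, because $\boldsymbol{\nabla} f_A$ and $\boldsymbol{\nabla} g_B$ have disjoint supports (and similarly for the other cross-term), all cross-contributions to the Dirichlet form integral vanish, yielding
\[
    \mathfrak{q}_\mathrm{D}^{\Omega_A \cup \Omega_B} (f,g) = \mathfrak{q}_\mathrm{D}^{\Omega_A} (f_A, g_A) + \mathfrak{q}_\mathrm{D}^{\Omega_B} (f_B, g_B) \; .
\]
Since the two $L^2$-subspaces are orthogonal, the form-graph norm $\|\cdot\|_{L^2} + \sqrt{\mathfrak{q}_\mathrm{D} (\cdot,\cdot)}$ also splits across the two summands, so a Cauchy sequence in the full form norm is precisely a pair of Cauchy sequences in the component form norms. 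Taking closures therefore gives
\[
    \overline{\mathfrak{q}_\mathrm{D}^{\Omega_A \cup \Omega_B}} = \overline{\mathfrak{q}_\mathrm{D}^{\Omega_A}} \oplus \overline{\mathfrak{q}_\mathrm{D}^{\Omega_B}}
\]
with form domain $\mathcal{Q}(-\Delta_\mathrm{D}^{\Omega_A}) \oplus \mathcal{Q}(-\Delta_\mathrm{D}^{\Omega_B})$.

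Finally, Theorem~\ref{thm:second_rep_thm} associates a unique positive self-adjoint operator to each closed positive form, and the operator associated to a direct sum of closed forms on orthogonal subspaces is the direct sum of the individual operators; applied here this yields the matrix representation of $-\Delta_\mathrm{D}^{\Omega_A \cup \Omega_B}$ stated in the proposition. The main subtlety is the closure step: one has to confirm that form-norm convergence of a sequence in $C_0^\infty (\Omega_A \cup \Omega_B)$ is equivalent to separate form-norm convergence of its restrictions to $\Omega_A$ and $\Omega_B$, which follows immediately from the $L^2$-orthogonality of the two summands together with the additivity of the form.
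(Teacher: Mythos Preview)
Your proposal is correct. The paper itself does not supply a proof of this proposition; it simply quotes the result from \cite[Prop.~XIII.3]{Reed1978} and uses it as a black box. Your argument via the Dirichlet form---splitting $C_0^\infty(\Omega_A\cup\Omega_B)$ into $C_0^\infty(\Omega_A)\oplus C_0^\infty(\Omega_B)$, observing that the form decomposes additively with no cross terms, and then passing to closures and invoking Theorem~\ref{thm:second_rep_thm}---is precisely the standard proof and is in the spirit of how the paper defines the Dirichlet Laplacian in Section~\ref{sec:classical_bcs}.
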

From the above Proposition, we see that the Dirichlet covariance operator on $L^2 (\Omega_A \cup \Omega_B)$ can be represented as an operator on $L^2 (\Omega_A) \oplus L^2 (\Omega_B)$ by the matrix
    \begin{equation}
    \hat{G}_\mathrm{D}^{\Omega_A \cup \Omega_B} = \begin{pmatrix}
                    \left( -\Delta^{\Omega_A}_\mathrm{D} + m^2 \right)^{-1} & 0 \\
                    0 & \left( -\Delta^{\Omega_A}_\mathrm{D} + m^2 \right)^{-1}
                \end{pmatrix} = \begin{pmatrix}
                    \hat{G}^{\Omega_A}_\mathrm{D} & 0 \\
                    0 & \hat{G}^{\Omega_B}_\mathrm{D}
                \end{pmatrix} \; .
\end{equation}
Notice that a Dirichlet field theory over $\Omega_A \cup \Omega_B$ therefore contains no cross-correlations between the regions $\Omega_A$ and $\Omega_B$. Intuitively, we can think of Dirichlet boundary conditions as separating a bounded open region from the interior of its complement. The boundary $\partial \Omega$ acts as a ``wall'' between $\Omega$ and $\mathrm{int} (\R^d \setminus \Omega)$ as the fields are fixed to be zero on $\partial \Omega$. In other words, Dirichlet boundary conditions on $\partial \Omega$ cause the field in $\Omega$ to decouple from the field in $\mathrm{int} (\R^d \setminus \Omega)$ \cite[p.~120]{Guerra1975a}. Thus, there is no correlation between a point in $\Omega_A$ and a point in $\Omega_B$ as the field in $\Omega_A$ cannot influence the field in $\Omega_B$ and vice versa. This should be compared to free boundary conditions. As seen from \eqref{eq:nonlocal_bcs}, free boundary conditions are non-local and in particular, the value of the normal derivative at a point on the boundary is given by a surface integral over $\partial (\Omega_A \cup \Omega_B)$. For example, if $\Omega_A$ and $\Omega_B$ are separated by a finite distance, and we consider $- \Delta + m^2$ on $\Omega_A \cup \Omega_B$ with free boundary conditions, then the value of $\partial f / \partial n$ at a point $\boldsymbol{x} \in \partial \Omega_A$ depends on the value of $f$ on the \emph{whole} boundary $\partial (\Omega_A \cup \Omega_B)$. This is due to the fact that the kernel $k$ in \eqref{eq:nonlocal_bcs} is supported everywhere on the boundary, see also \cite{Guerra1976}. Therefore, for a field theory with free boundary conditions, the two regions $\Omega_A$ and $\Omega_B$ ``communicate'' via the non-local boundary conditions. Hence, the inverse of $- \Delta + m^2$ with free boundary conditions, $\hat{G}_0$, cannot be written as $\hat{G}_A \oplus \hat{G}_B$ as is the case for Dirichlet boundary conditions. Rather, $\hat{G}_0$ also contains components describing cross-correlations, see \eqref{eq:G0_matrix}.

\begin{figure}[t]
    \centering
    \includegraphics[width=0.65\textwidth]{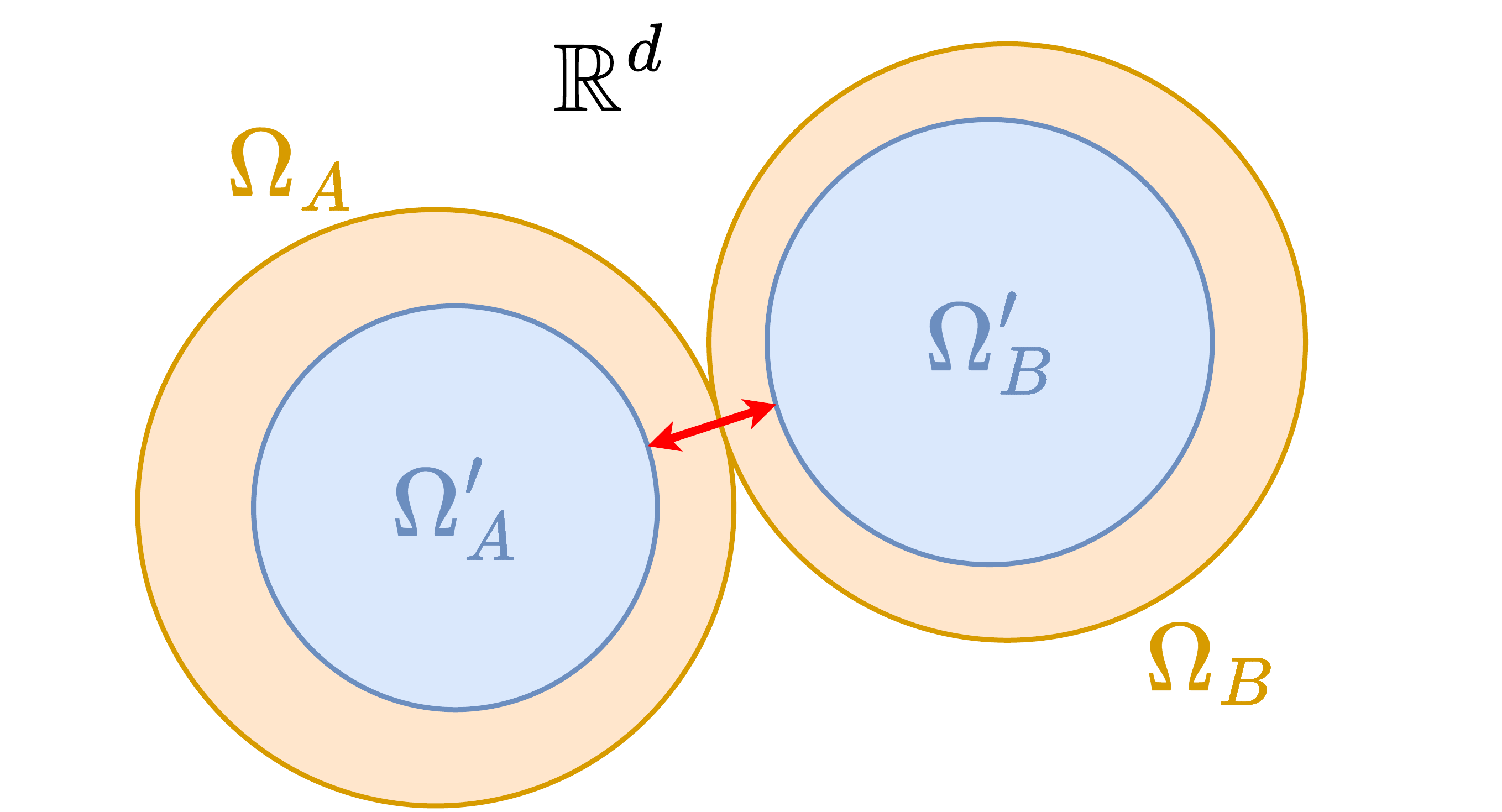}%
    \caption{Sketch of regions considered in Theorem \ref{thm:equiv_mutual_info}. The open subsets $\Omega'_A$ and $\Omega'_B$ are separated by a finite distance, indicated by the red double arrow. Thus, there exist two disjoint (possibly touching) open sets $\Omega_A$ and $\Omega_B$ such that $\mathrm{dist} (\Omega'_A, \partial \Omega_A) > 0$ and $\mathrm{dist} (\Omega'_B, \partial \Omega_B) > 0$. For such $\Omega'_A$ and $\Omega'_B$, the field theory $\mu_{AB}$ with free boundary conditions over $\Omega'_A \cup \Omega'_B$, containing all cross-correlations between $\Omega'_A$ and $\Omega'_B$, is equivalent to the theory $\mu_A \otimes \mu_B$. Therefore, the mutual information $I (\Omega'_A : \Omega'_B)$ for two such regions is finite.}
    \label{fig:region2}
\end{figure}

We can now present the main result of this Section, namely that the mutual information between two disjoint open regions in $\R^d$ is finite if those regions are separated by a finite distance.
\begin{theorem}\label{thm:equiv_mutual_info}
    Let $\Omega'_A$ and $\Omega'_B$ be disjoint bounded open subsets of $\R^d$ such that $\mathrm{dist} (\Omega'_A, \Omega'_B) > 0$. Then, $I (\Omega_A : \Omega_B)$ is finite.
\end{theorem}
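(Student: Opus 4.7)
The strategy is to exhibit a chain of equivalences $\mu_{AB} \sim \nu_A \otimes \nu_B \sim \mu_A \otimes \mu_B$ between centred Gaussian measures on $L^2(\Omega'_A \cup \Omega'_B) \simeq L^2(\Omega'_A) \oplus L^2(\Omega'_B)$; once this is done, the Feldman--H\'ajek dichotomy together with the discussion in Section \ref{sec:relative_entropy} forces $D_\mathrm{KL}(\mu_{AB} \| \mu_A \otimes \mu_B)$ to be finite. The two main ingredients are Theorem \ref{thm:equiv_open_Dirichlet} (equivalence of free and Dirichlet restrictions to a strictly smaller subregion) and Proposition \ref{prop:dirichlet_factorisation} (factorisation of the Dirichlet covariance over disjoint components).

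First, I would use the positive separation $d_0 \coloneqq \mathrm{dist}(\Omega'_A, \Omega'_B) > 0$ to construct enlarged regions: picking any $\varepsilon \in (0, d_0/3)$, set $\Omega_i \coloneqq \{ \boldsymbol{x} \in \R^d : \mathrm{dist}(\boldsymbol{x}, \Omega'_i) < \varepsilon \}$ for $i \in \{A,B\}$. Then $\Omega_A$ and $\Omega_B$ are bounded disjoint open sets with $\overline{\Omega'_i} \subset \Omega_i$ and $\mathrm{dist}(\Omega'_i, \partial \Omega_i) = \varepsilon > 0$, exactly the configuration of Fig.~\ref{fig:region2}. Let $\nu_i$ denote the centred Gaussian measure on $L^2(\Omega'_i)$ whose covariance kernel is the restriction to $\Omega'_i \times \Omega'_i$ of the Dirichlet Green's function on $\Omega_i$.

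Two applications of Theorem \ref{thm:equiv_open_Dirichlet}, to the nested pairs $\Omega'_A \subset \Omega_A$ and $\Omega'_B \subset \Omega_B$, yield $\mu_A \sim \nu_A$ and $\mu_B \sim \nu_B$. Since equivalence of Gaussian measures passes to tensor products (the product Radon--Nikodym derivative being the product of the individual densities), this upgrades to $\mu_A \otimes \mu_B \sim \nu_A \otimes \nu_B$. A third application of Theorem \ref{thm:equiv_open_Dirichlet}, this time to the pair $\Omega'_A \cup \Omega'_B \subset \Omega_A \cup \Omega_B$ (whose boundary distance is again $\varepsilon > 0$), shows that $\mu_{AB}$ is equivalent to the restriction to $\Omega'_A \cup \Omega'_B$ of the Dirichlet measure on $\Omega_A \cup \Omega_B$. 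Proposition \ref{prop:dirichlet_factorisation} factorises this latter Dirichlet covariance into the block-diagonal operator $\hat{G}^{\Omega_A}_\mathrm{D} \oplus \hat{G}^{\Omega_B}_\mathrm{D}$, so the restricted measure coincides with $\nu_A \otimes \nu_B$. Transitivity of $\sim$ then closes the chain and proves the theorem.

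The main obstacle I foresee is the clean identification in the final step: verifying that restricting the Dirichlet measure on $\Omega_A \cup \Omega_B$ to $\Omega'_A \cup \Omega'_B$ really commutes with the direct-sum decomposition of Proposition \ref{prop:dirichlet_factorisation}. This is immediate once one observes that the integral kernel of $\hat{G}^{\Omega_A \cup \Omega_B}_\mathrm{D}$ vanishes whenever its two arguments lie in different connected components (no Dirichlet correlation can propagate between disjoint components), so the off-diagonal blocks of the restricted covariance are identically zero and the resulting Gaussian measure factorises. The remaining steps are bookkeeping of the Hilbert--Schmidt conditions through Theorem \ref{thm:equivalence}, already carried out inside the proof of Theorem \ref{thm:equiv_open_Dirichlet}, so no additional analytic input is required.
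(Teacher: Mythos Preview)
Your proposal is correct and follows essentially the same route as the paper: enlarge the regions, use Theorem~\ref{thm:equiv_open_Dirichlet} three times (once on each piece and once on the union), invoke Proposition~\ref{prop:dirichlet_factorisation} to identify the restricted Dirichlet measure with a product, and conclude by transitivity. Your explicit $\varepsilon$-neighbourhood construction and your remark on why restriction commutes with the block-diagonal decomposition make the argument slightly more detailed than the paper's version, but the logical structure is the same.
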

\begin{proof}
    We show that the measures $\mu_{AB}$ and $\mu_A \otimes \mu_B$ are equivalent. Since $\mathrm{dist} (\Omega'_A, \Omega'_B) > 0$, there exist disjoint bounded open subsets $\Omega_A$ and $\Omega_B$ of $\R^d$ such that $\mathrm{dist} (\Omega'_A, \partial \Omega_A) > 0$ and $\mathrm{dist} (\Omega'_B, \partial \Omega_B) > 0$, cf. Figure \ref{fig:region2}. Let $\mu^\mathrm{D}_{AB}$ the field theory over $\Omega'_A \cup \Omega'_B$ that is the restriction of the Dirichlet theory over $\Omega_A \cup \Omega_B$ in the sense described in Theorem \ref{thm:equiv_open_Dirichlet}. By Theorem \ref{thm:equiv_open_Dirichlet}, $\mu_{AB} \sim \mu^\mathrm{D}_{AB}$.

    By Proposition \ref{prop:dirichlet_factorisation}, $\hat{G}_\mathrm{D}^{\Omega_A \cup \Omega_B} = \hat{G}_\mathrm{D}^{\Omega_A} \oplus \hat{G}_\mathrm{D}^{\Omega_B}$ and hence $\mu^\mathrm{D}_{AB} = \mu^\mathrm{D}_A \otimes \mu^\mathrm{D}_B$. By employing Theorem \ref{thm:equiv_open_Dirichlet} again, we see that $\mu_A \sim \mu^\mathrm{D}_A$ and $\mu_B \sim \mu^\mathrm{D}_B$, which implies that $\mu_A \otimes \mu_B \sim \mu^\mathrm{D}_A \otimes \mu^\mathrm{D}_B$. Thus $\mu_{AB} \sim \mu^\mathrm{D}_{AB} = \mu^\mathrm{D}_A \otimes \mu^\mathrm{D}_B \sim \mu_A \otimes \mu_B$ and therefore, by transitivity, $\mu_{AB} \sim \mu_A \otimes \mu_B$.
\end{proof}

\begin{figure}[t]
    \centering
    \includegraphics[width=0.60\textwidth]{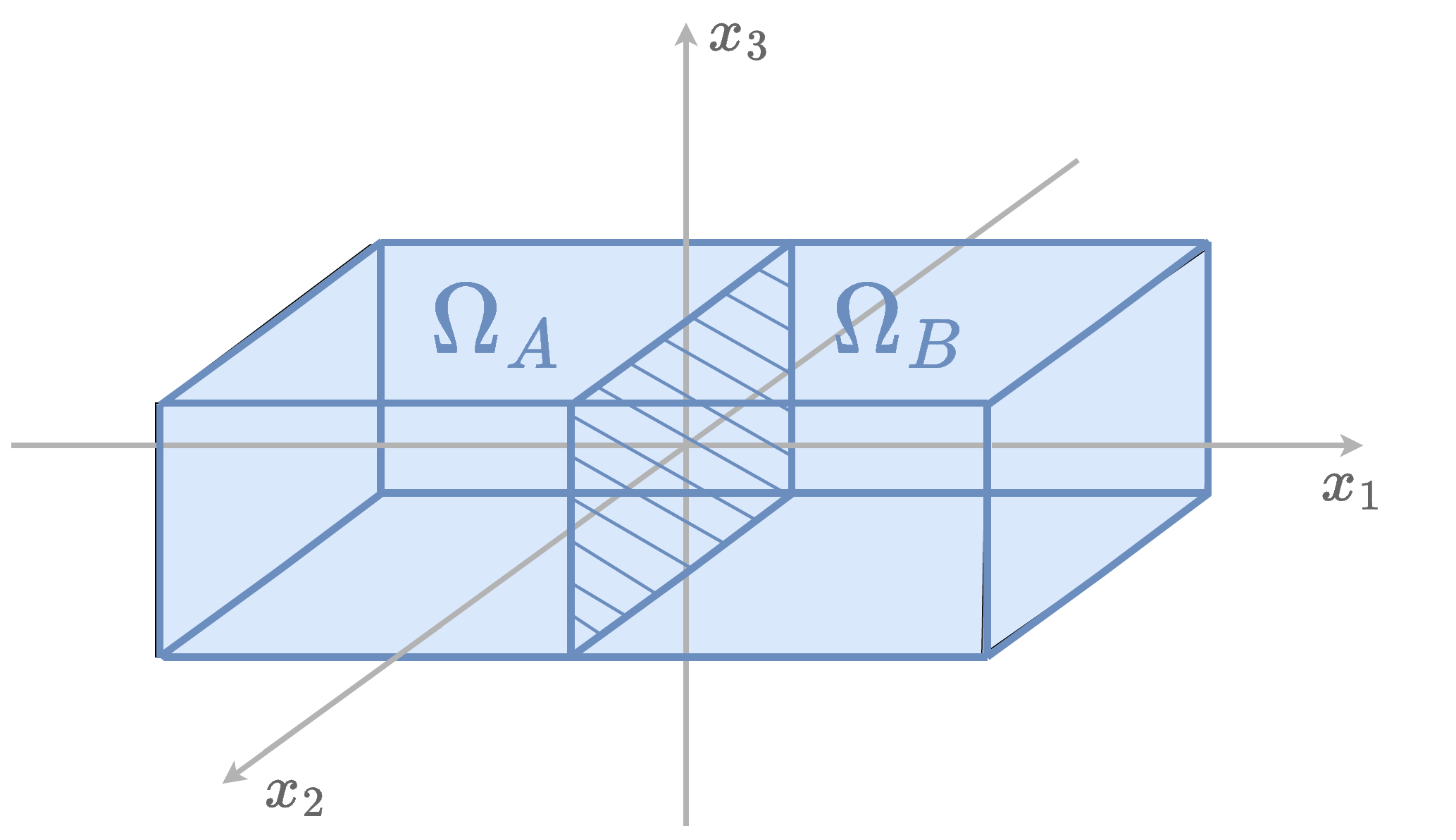}%
    \caption{Sketch of touching rectangular regions in $d = 3$. The two open rectangles $\Omega_A$ and $\Omega_B$ touch in the $x_1 = 0$ plane. Notice that for this configuration $\partial (\Omega_A \cup \Omega_B) \neq \partial ( \overline{\Omega_A \cup \Omega_B})$ as the rectangle in the $x_1 = 0$ plane (hatched region) is a subset of $\partial (\Omega_A \cup \Omega_B)$ but not of $\partial ( \overline{\Omega_A \cup \Omega_B})$. As discussed in the main text, one can think of $\Omega_A$ and $\Omega_B$ ``sharing'' this part of the boundary. Due to the Markov property of free scalar fields, the information on this surface is equal to the information in the whole region and thus the mutual information $I (\Omega_A : \Omega_B)$ is infinite.}
    \label{fig:rectangles}
\end{figure}

From the above Theorem we see that $\mathrm{dist} (\Omega_A, \Omega_B) > 0$ is a sufficient condition for the mutual information $I (\Omega_A : \Omega_B)$ to be finite. We now present an example where two regions $\Omega_A$ and $\Omega_B$ touch, i.e., where the separation distance is zero, and where the corresponding measures are mutually singular.
\begin{theorem}\label{thm:main_5}
    Let $\Omega_A, \Omega_B \subset \R^d$ be touching open $d$-rectangles as sketched in Fig. \ref{fig:rectangles} for $d = 3$. Then, $I (\Omega_A : \Omega_B)$ is infinite.
\end{theorem}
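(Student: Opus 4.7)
The plan is to invoke the Feldman--H\'ajek dichotomy and show that $\mu_{AB}$ and $\mu_A \otimes \mu_B$ are not equivalent, which forces mutual singularity and therefore $I(\Omega_A:\Omega_B) = +\infty$ by the definition \eqref{eq:mutual_info_regions}. Since both measures are centred Gaussians on $L^2(\Omega_A \cup \Omega_B)$, Theorem \ref{thm:equivalence} tells us that a necessary condition for equivalence is that the Cameron--Martin spaces $\mathcal{H}_0 \coloneqq \hat G_0^{1/2}[L^2]$ and $\mathcal{H}_\otimes \coloneqq \hat G_\otimes^{1/2}[L^2]$ coincide. I aim to prove instead that $\mathcal{H}_0 \subsetneq \mathcal{H}_\otimes$, with the discrepancy driven entirely by the shared face $\Sigma \coloneqq \overline{\Omega_A} \cap \overline{\Omega_B}$, which is interior to $\overline{\Omega_A \cup \Omega_B}$ but belongs to both $\partial \Omega_A$ and $\partial \Omega_B$.

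The product structure of $\mu_A \otimes \mu_B$ gives $\mathcal{H}_\otimes = H^{+1}(\Omega_A) \oplus H^{+1}(\Omega_B)$ (each summand being the form domain of the free-boundary-condition precision operator on its own rectangle, per Theorem \ref{thm:second_rep_thm}), so elements of $\mathcal{H}_\otimes$ admit \emph{independent} and generally discontinuous traces on $\Sigma$ from the two sides. For $\mathcal{H}_0$, I would use that the free-form \eqref{eq:free_form} defining $\hat G_0^{-1/2}$ on $\Omega = \Omega_A \cup \Omega_B$ is initially given on $C^\infty(\overline{\Omega_A \cup \Omega_B})$; by the paper's convention, these are restrictions of globally smooth functions on $\R^d$ and are hence smooth \emph{across} $\Sigma$. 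Closing in the form norm and invoking the trace inequality on the Lipschitz boundaries of the rectangles (cf.\ Appendix \ref{app:sobolev_spaces}) identifies $\mathcal{H}_0$ with the trace-matching subspace $\{u \in H^{+1}(\Omega_A) \oplus H^{+1}(\Omega_B) : \gamma_A u = \gamma_B u \text{ on } \Sigma\}$, where $\gamma_A$ and $\gamma_B$ denote the trace operators onto $\Sigma$ from the two sides.

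Given these characterisations, the proper inclusion $\mathcal{H}_0 \subsetneq \mathcal{H}_\otimes$ is immediate: any $u \in C_0^\infty(\Omega_A)$ that is non-zero near $\Sigma$, extended by zero to $\Omega_B$, lies in $\mathcal{H}_\otimes$ but satisfies $\gamma_A u \neq 0 = \gamma_B u$ on $\Sigma$ and hence does not lie in $\mathcal{H}_0$. Feldman--H\'ajek then closes the argument. The one-dimensional case $\Omega_A = (a,b)$, $\Omega_B = (b,c)$ is particularly transparent: $\Sigma = \{b\}$, the explicit kernel $G(x,y;m) = \tfrac{1}{2m}\E^{-m|x-y|}$ of $\hat G_0$ makes $\mathcal{H}_0 = H^{+1}((a,c))$ manifest, and any bump function supported in $(a,b)$ and non-zero near $b$, extended by zero across $b$, lies in $\mathcal{H}_\otimes \setminus \mathcal{H}_0$ because of its jump at $b$.

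The main technical obstacle is the rigorous identification of $\mathcal{H}_0$ in $d \geq 2$ as the trace-matching subspace, since the face $\Sigma$ of a $d$-rectangle meets the remainder of $\partial(\Omega_A \cup \Omega_B)$ in lower-dimensional corners where the standard trace theorem of Appendix \ref{app:sobolev_spaces} requires some care. The cleanest workaround is to work on the relative interior $\Sigma^\circ$, which is $C^\infty$ and carries full $(d-1)$-dimensional measure, and to combine density of $C^\infty$ in $\mathcal{H}_0$ with the trace inequality on the $C^1$-interior of $\Sigma$. Alternatively, one could dispense with the Cameron--Martin argument in $d \geq 3$ and verify failure of the Hilbert--Schmidt criterion of Theorem \ref{thm:equivalence} directly, showing as in the proof of Lemma \ref{lem:N_vs_Robin} that the Hilbert--Schmidt norm of $\hat G_\otimes^{-1/2} \hat G_0 \hat G_\otimes^{-1/2} - I$ is controlled by $\iint_{\Sigma \times \Sigma} |G(\boldsymbol z,\boldsymbol z';m)|^2\,\D S\,\D S$, which diverges via the short-distance behaviour \eqref{eq:greens_diagonal} whenever $d \geq 3$.
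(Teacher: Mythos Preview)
Your approach matches the paper's: show via Feldman--H\'ajek that the Cameron--Martin spaces of $\mu_{AB}$ and $\mu_A\otimes\mu_B$ differ. The paper reaches the identification $\mathcal{H}_0 = H^{+1}(\Omega_{AB})$ (where $\Omega_{AB}$ is the interior of $\overline{\Omega_A\cup\Omega_B}$) not via form closure but through the RKHS: it invokes the Sobolev identities $\widetilde{H}^{-1}(\Omega_{AB}) = H^{-1}(\overline{\Omega_A\cup\Omega_B}) = \widetilde{H}^{-1}(\Omega_A\cup\Omega_B)$ from \cite{McLean2000,Chandler2017}, which say the interface is negligible at the $H^{-1}$ level, and then applies $\mathsf{H}_{\mu_{AB}} = \hat{G}_0[\widetilde{H}^{-1}(\Omega_{AB})] = H^{+1}(\Omega_{AB})$. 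This sidesteps the corner-regularity issues you flag; note also that for the argument only the inclusion $\mathcal{H}_0 \subseteq \{\text{trace-matching functions}\}$ is needed, not the full equality you aim for.

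Your counterexample, however, fails as written. A function $u \in C_0^\infty(\Omega_A)$ has compact support in the \emph{open} set $\Omega_A$, so its support lies a positive distance from $\partial\Omega_A \supset \Sigma$; such a $u$ necessarily vanishes near $\Sigma$ and has $\gamma_A u = 0$ there, so it does satisfy the trace-matching condition and lies in $\mathcal{H}_0$. (The same slip recurs in your one-dimensional version: a bump function with support contained in the open interval $(a,b)$ cannot be non-zero near $b$.) The paper instead takes the step function equal to $0$ on $\Omega_A$ and $1$ on $\Omega_B$: it lies in $H^{+1}(\Omega_A)\oplus H^{+1}(\Omega_B) = H^{+1}(\Omega_A\cup\Omega_B)$ since each piece is constant, but fails absolute continuity along the $x_1$-direction and hence is not in $H^{+1}(\Omega_{AB})$. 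In your trace-matching language, any element of $C^\infty(\overline{\Omega_A})$ with non-vanishing trace on $\Sigma$, extended by zero to $\Omega_B$, would serve just as well.
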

\begin{proof}
    We denote by $\Omega_{AB}$ the interior of $\overline{\Omega_A \cup \Omega_B}$. Notice that $\partial (\Omega_A \cup \Omega_B) \neq \partial \Omega_{AB}$. Let $\mu_{AB}$ be the field theory of mass $m>0$ with free boundary conditions over $\Omega_A \cup \Omega_B$. Recall that the reproducing kernel Hilbert space (RKHS) (cf. Appendix \ref{app:equivalence}) of $\mu_{AB}$ is defined as the closure of $C_0^\infty (\Omega_A \cup \Omega_B)$ in $\|.\|_{-1}$-norm, which is just the Hilbert-Sobolev space $\widetilde{H}^{-1} (\Omega_A \cup \Omega_B)$, for definitions see \cite[Ch.~3]{McLean2000} and \cite{Chandler2017} as well as Appendix \ref{app:sobolev_spaces}. For the specific open sets $\Omega_A$ and $\Omega_B$ considered here, namely the touching open rectangles, we furthermore have
    \begin{equation}
        \widetilde{H}^{-1} (\Omega_{AB}) = H^{-1} (\overline{\Omega_A \cup \Omega_B}) = \widetilde{H}^{-1} (\Omega_A \cup \Omega_B) \; .
    \end{equation}
    The first equality follows from \cite[Thm.~3.29(ii)]{McLean2000} and the fact that $\overline{\Omega_{AB}} = \overline{\Omega_A \cup \Omega_B}$ and the second equality follows from \cite[Lem.~3.17(v)]{Chandler2017} using \cite[Lem.~3.10(vii)]{Chandler2017}. The equality $\widetilde{H}^{-1} (\Omega_{AB}) = \widetilde{H}^{-1} (\Omega_A \cup \Omega_B)$ implies that we may equivalently think of $\mu_{AB}$ as the field theory with open boundary conditions over $\Omega_{AB}$ rather than over $\Omega_A \cup \Omega_B$. Informally speaking, we conclude that it does not matter whether we include $\Gamma = \Omega_{AB} \setminus (\Omega_A \cup \Omega_B)$ (the hatched region in Fig. \ref{fig:rectangles}) in the definition of the field theory.

    The Cameron-Martin space of $\mu_{AB}$, $\mathsf{H}_{\mu_{AB}}$, is given by $\hat{G}_0 [\widetilde{H}^{-1} (\Omega_{AB})]$, cf. \cite[Thm.~3.2.3]{Bogachev2015}. By noticing that we can write $\hat{G}_0 = r_{\Omega_{AB}} \circ D^{-1}$, where $r_{\Omega_{AB}}$ denotes the restriction of a function in $H^{+1} (\R^d)$ to $\Omega_{AB}$, \cite[Lem.~3.2 \& Eq.~(19)]{Chandler2017} imply that $\mathsf{H}_{\mu_{AB}} = H^{+1} (\Omega_{AB})$, see also \cite[Thm.~3.12(iii)]{Chandler2017}. As $\mathsf{H}_{\mu_{AB}}$ and the form domain of $\hat{G}^{-1}_0$ coincide as sets (cf. Lemma \ref{lem:CM_space}), we have $\mathcal{Q} (\hat{G}^{-1}_0) = H^{+1} (\Omega_{AB})$.

    On the other hand, using an analogous argument as above, the form domain of the precision operator $\hat{G}^{-1}_\otimes = \hat{G}^{-1}_A \oplus \hat{G}^{-1}_B$ of the product measure $\mu_A \otimes \mu_B$ is given by
    \begin{equation}
        \mathcal{Q} (\hat{G}^{-1}_\otimes) = \mathcal{Q} (\hat{G}^{-1}_A) \oplus \mathcal{Q} (\hat{G}^{-1}_B) = H^{+1} (\Omega_A) \oplus H^{+1} (\Omega_B) = H^{+1} (\Omega_A \cup \Omega_B) \; ,
    \end{equation}
    where we used \cite[p.~268]{Reed1978}. Suppose for a moment that $\Omega_A$ and $\Omega_B$ do not touch, i.e., there is a finite distance $\varepsilon > 0$ between these rectangles. Then, $\Omega_{AB} = \Omega_A \cup \Omega_B$ and $\mathcal{Q} (\hat{G}^{-1}_0) = \mathcal{Q} (\hat{G}^{-1}_\otimes)$, which is a necessary condition for the equivalence of $\mu_{AB}$ and $\mu_A \otimes \mu_B$, cf. Theorem \ref{thm:equivalence}. This is of course in accordance with the result obtained in Theorem \ref{thm:equiv_mutual_info}, which implies that $\mathcal{Q} (\hat{G}^{-1}_0) = \mathcal{Q} (\hat{G}^{-1}_\otimes)$ if $\Omega_A$ and $\Omega_B$ are separated. If $\Omega_A$ and $\Omega_B$ do touch (in the sense described above), however, then $\Omega_{AB} \neq \Omega_A \cup \Omega_B$ and we need to check whether $H^{+1} (\Omega_{AB})$ and $H^{+1} (\Omega_A \cup \Omega_B)$ coincide as sets.

    We recall that for $\Omega$ an open subset of $\R^d$, every $f \in H^{+1} (\Omega)$ is absolutely continuous on almost all straight lines that are parallel to the coordinate axes, cf. \cite[Thm.~1.1.3/1 \& p.~7]{Mazya2011}. Consider the function that is zero on $\Omega_A$ and one on $\Omega_B$. This function is in $H^{+1} (\Omega_A \cup \Omega_B)$ but not in $H^{+1} (\Omega_{AB})$, as it is not continuous along the $x_1$-direction. Therefore, $H^{+1} (\Omega_{AB}) \neq H^{+1} (\Omega_A \cup \Omega_B)$ and, by Theorem \ref{thm:equivalence}, $\mu_{AB} \perp \mu_A \otimes \mu_B$.
\end{proof}
\begin{remark}
    We emphasize that this result also holds in $d=1$. In particular, let $a,b,c \in \R$ such that $a < b < c$ and let $\Omega_A = (a,b)$, $\Omega_B = (b,c)$ and $\Omega_{AB} = (a,c)$. The Hilbert-Sobolev spaces $H^{+1} ((a,c))$ and $H^{+1} ((a,b) \cup (b,c))$ do not coincide as sets, see also \cite[p.~343]{Arendt2003}. More precisely, a function that is absolutely continuous on $(a,c)$ except at the point $b$ is an element of $H^{+1} (\Omega_A \cup \Omega_B)$ but not of $H^{+1} (\Omega_{AB})$. Therefore, for $\Omega_A = (a,b)$ and $\Omega_B = (b,c)$ we have $I (\Omega_A : \Omega_B) = +\infty$.
\end{remark}

One may be tempted to explain the divergence of the mutual information between two touching regions by the singularity of the fundamental solution $G$ at coinciding Euclidean spacetime points. More precisely, if the regions touch, the correlations between points close to the touching surface are arbitrarily large and this may seem like the cause of the divergence of the mutual information. However, in $d=1$, the fundamental solution is continuous and bounded, cf. \eqref{eq:fund_sol_1d}. Yet, as demonstrated above, the mutual information between two touching open intervals is infinite. Therefore, the analytic properties of the fundamental solution around the diagonal does not seem to be the cause of the divergence of the mutual information. We note that a similar observation has been made in \cite{Nesterov2010} in the context of entanglement entropy of relativistic quantum field theories. In particular, there the authors show that no matter how regular the Green's function is on the diagonal, the entanglement entropy in the corresponding quantum field theory is always UV divergent.

We argue that the divergence of the mutual information is due to the Markov property of the free Euclidean scalar field rather than due to the analytic properties of the Green's function in the vicinity of coinciding Euclidean spacetime points. We will show this in the case $d = 1$, where we can, at least formally, calculate the mutual information explicitly. As a corollary, we find that the mutual information satisfies an \emph{area law} \cite{Wolf2008,Lau2013}, i.e., only the degrees of freedom at the boundaries of the regions contribute to the mutual information. Note, however, that we will not provide a rigorous proof of the area law in this work.

Let $\Omega \subset \R^d$ be open and denote by $\varphi$ the field with free boundary conditions. This random variable can be decomposed as \cite[p.~242]{Guerra1976}
\begin{equation}\label{eq:field_decomposition}
    \varphi (f) = \varphi_\mathrm{D} (f) + \varphi_{\partial} (e_{\partial \Omega} f) \; ,
\end{equation}
where $\varphi_\mathrm{D}$ is the Dirichlet field over $\Omega$ and $\varphi_\partial$ is the field on the boundary $\partial \Omega$ with covariance operator $\hat{G}_0$, called the boundary field. The expression \eqref{eq:field_decomposition} has to be understood as the independent sum of the Gaussian random variables $\varphi_\mathrm{D}$ and $\varphi_\partial$, i.e., the law of $\varphi$ is a product measure of the form $\mu_\mathrm{D} \otimes \mu_\partial$, see \cite[Sec.~III.2]{Guerra1976} and \cite[Prop.~I.7]{Simon1974}.

Note that the decomposition in \eqref{eq:field_decomposition} is a consequence of the Markov property of the free Euclidean scalar field. More precisely, we have, for every $f \in C_0^\infty (\Omega)$ understood as a distribution,
\begin{equation}
    f = \left( p_\Omega + e_{\R^d \setminus \Omega} \right) f = \left( p_\Omega + e_{\overline{\Omega}} \, e_{\R^d \setminus \Omega} \, e_{\overline{\Omega}} \right) f = p_\Omega f + e_{\partial \Omega} f \; ,
\end{equation}
see also \cite[Sec.~II.1]{Guerra1976}. For the second equality, we used the same argument as in the proof of Theorem \ref{thm:equiv_open_Dirichlet}, and for the last equality we used the one-particle (or pre-) Markov property of the Euclidean scalar field, see \cite[Lem.~III.10]{Simon1974} and \cite[Prop.~II.3(i)]{Guerra1975a}. The decomposition in \eqref{eq:field_decomposition} then follows from \cite[Prop.~I.7]{Simon1974}, and the identification of $\varphi (p_\Omega f)$ with the Dirichlet field $\varphi_\mathrm{D} (f)$ follows from Lemma \ref{lem:Dirichlet_cov} and \eqref{eq:inner_product_H-1}.

Let $\Omega_A$ and $\Omega_B$ be open subsets of $\R^d$ separated by a finite distance. We denote by $\varphi^i_\mathrm{D}$ and $\varphi_{\partial i}$ the Dirichlet and boundary field over $\Omega_i$ and $\partial \Omega_i$, respectively, $i \in \{A,B\}$, and by $\varphi^{AB}_\mathrm{D}$ and $\varphi_{\partial AB}$ the Dirichlet and boundary field over $\Omega_A \cup \Omega_B$ and $\partial ( \Omega_A \cup \Omega_B )$. Recall from Section \ref{sec:info_theory} that we can write the mutual information between the two separated regions $\Omega_A$ and $\Omega_B$ as
\begin{equation}\label{eq:MI_entropies_1}
    I (\Omega_A : \Omega_B) = S (\varphi_A) + S (\varphi_B) - S (\varphi_{AB}) \; ,
\end{equation}
where $S (X)$ denotes the entropy of the random variable $X$. Note that this expression is purely formal as the entropies appearing on the right-hand side are not defined when the random variable $X$ is a statistical field $\varphi$. Upon splitting the random variables $\varphi_i$ according to \eqref{eq:field_decomposition} and noting that the entropy is additive for independent random variables, we can formally write
\begin{equation}\label{eq:MI_entropies_2}
    I (\Omega_A : \Omega_B) = S (\varphi^A_\mathrm{D}) + S (\varphi_{\partial A}) + S (\varphi^B_\mathrm{D}) + S (\varphi_{\partial B}) - S (\varphi^{AB}_\mathrm{D}) - S (\varphi_{\partial AB}) \; .
\end{equation}
However, the Dirichlet field $\varphi^{AB}_\mathrm{D}$ factorizes also into two independent Gaussian variables, see Proposition \ref{prop:dirichlet_factorisation} and the subsequent discussion. In particular, $\varphi^{AB}_\mathrm{D} = \varphi^A_\mathrm{D} + \varphi^B_\mathrm{D}$. Thus, the contributions from the Dirichlet fields cancel in \eqref{eq:MI_entropies_2}, and we conclude that for separated regions $\Omega_A$ and $\Omega_B$
\begin{equation}\label{eq:area_law}
    I (\Omega_A : \Omega_B) = S (\varphi_{\partial A}) + S (\varphi_{\partial B})  - S (\varphi_{\partial AB}) = I (\partial \Omega_A : \partial \Omega_B) \; .
\end{equation}
We see that only the boundaries of the two regions $\Omega_A$ and $\Omega_B$ matter for the mutual information, and we call \eqref{eq:area_law} the \emph{area law} for the mutual information, cf. \cite{Wolf2008,Lau2013}. Note that the area law is a direct consequence of the decomposition of the field in \eqref{eq:field_decomposition}, and therefore of the Markov property of scalar Euclidean field.

We now explicitly consider the case $d=1$. Let $a,b,c,d \in \R$ such that $a < b < c < d$ and define $\Omega_A = (a,b)$ and $\Omega_B = (c,d)$. In this case, the laws of the random variables $\varphi_{\partial A}$, $\varphi_{\partial B}$ and $\varphi_{\partial A B}$ are ordinary multivariate Gaussian distributions with vanishing mean and covariance matrices
\begin{equation}
    \Sigma_A = \frac{1}{2 m} \begin{pmatrix}
                    1 & \E^{- m|a-b|} \\
                    \E^{- m|a-b|} & 1
                \end{pmatrix} \; , \qquad \Sigma_B = \frac{1}{2 m} \begin{pmatrix}
                    1 & \E^{- m|c-d|} \\
                    \E^{- m|c-d|} & 1
                \end{pmatrix}
\end{equation}
and
\begin{equation}
    \Sigma_{AB} = \frac{1}{2 m} \begin{pmatrix}
                    1 & \E^{- m|a-b|} & \E^{- m|a-c|} & \E^{- m|a-d|} \\
                    \E^{- m|a-b|} & 1 & \E^{- m|b-c|} & \E^{- m|b-d|} \\
                    \E^{- m|a-c|} & \E^{- m|b-c|} & 1 & \E^{- m|c-d|} \\
                    \E^{- m|a-d|} & \E^{- m|b-d|} & \E^{- m|c-d|} & 1
                \end{pmatrix} \; ,
\end{equation}
respectively. Here, we used the expression of the fundamental solution in $d=1$, \eqref{eq:fund_sol_1d}. Using the expression for the relative entropy between multivariate Gaussian distributions, \eqref{eq:KL_multivariate_Gaussian}, the mutual information between $\Omega_A$ and $\Omega_B$ can be written as
\begin{equation}
    I ((a,b) : (c,d)) = \frac{1}{2} \log \left[ \frac{\det \Sigma_A \, \det \Sigma_B}{\det \Sigma_{AB}} \right] \; .
\end{equation}
The determinants of the covariance matrices can be computed explicitly and we arrive at
\begin{equation}
    I ((a,b) : (c,d)) = - \frac{1}{2} \log \left( 1 - \E^{-2m(c-b)} \right) \; .
\end{equation}

We observe that only the boundary points $b$ and $c$, i.e., the boundary points ``facing each other'', appear in the expression for the mutual information. This once again reflects the Markov property of the scalar field\footnote{Note that in $d=1$ we can interpret a scalar field as a continuous time Markov process and the Markov property implies that ``the future depends on the past only through the present'' \cite{Simon1979}.}, and in particular it is consistent with \cite[Prop.~II.3(ii)]{Guerra1975a}. It also means that the mutual information between two intervals is independent of the length of each of the intervals and only depends on the separation distance $\delta \coloneqq c - b$ between the two intervals. For small $\delta$, the mutual information behaves like $- \frac{1}{2} \log (2 m \delta)$, and we see that the mutual information diverges as $\delta \to 0^+$, which is consistent with our previous result that the mutual information has to be set to $+ \infty$ when the two intervals touch.

Furthermore, we note that, for fixed separation distance $\delta$, the mutual information decreases as the mass $m$ is increased. This is consistent with the interpretation of the mass as the inverse of the correlation length. In particular, as we decrease the correlation length, the mutual information between two finitely separated regions decreases. On the other hand, we see that the mutual information diverges in the limit $m \to 0$. Interpreting the mass again as the inverse of the correlation length, the limit of vanishing mass corresponds to the limit of infinite correlation length, i.e., the system approaches criticality. The divergence of the mutual information in this limit is consistent with previous works on mutual information in classical spin systems, where the divergence of the mutual information at the critical point is used to characterize second order phase transitions \cite{Matsuda1996,Wicks2007,Wilms2011,Wilms2012,Lau2013,Iaconis2013,Stphan2014,Sriluckshmy2018}.

Finally, the need for a finite separation distance between $\Omega_A$ and $\Omega_B$ for finite values of the mutual information is consistent with results obtained in the context of quantum mutual information in relativistic quantum field theories, see, e.g., \cite{Casini2004,Casini2006,Casini2015}.

\section{Conclusion}

In summary, we have studied the properties of relative entropy in the context of Gaussian statistical field theory. More specifically, we have considered the relative entropy between theories with different masses or boundary conditions. In the case of different masses (but equal classical boundary conditions), we showed a crucial dependence on the dimension of Euclidean spacetime: The relative entropy is finite in dimensions $d < 4$ and infinite in higher dimensions. We also showed that in dimensions $d < 4$ the relative entropy behaves in a way that is consistent with its interpretation in terms of the distinguishability of two theories. For situations where the spectrum of eigenvalues of the Laplacian is known, one can write relative entropies as infinite sums involving these eigenvalues. These sums can be conducted analytically or numerically.

In the case of fields with different boundary conditions, we showed that the relative entropy between a Dirichlet and a Neumann field (or, more generally, a Robin field) is always infinite. Whether the relative entropy between two Robin fields is finite depends again on the dimension $d$ of Euclidean spacetime. More specifically, we find that in this case the relative entropy is finite for $d < 3$. We explicitly calculated the relative entropy between a Robin field and a field with free boundary conditions in $d=1$.

Furthermore, we discussed the mutual information between two disjoint regions in Euclidean space. We showed that the mutual information between two such regions is finite if the regions are separated by a finite distance and constructed an example of touching regions between which the mutual information is infinite. We argue that this is due to the Markov property of the scalar field and therefore only the degrees of freedom at the boundary of the region contribute to the mutual information, i.e., the mutual information satisfies an area law. This idea is supported by an explicit calculation of the mutual information in $d = 1$.

An interesting observation concerns the importance of boundary terms for information-theoretic aspects of field theories. For example, the mutual information between degrees of freedom in non-overlapping regions would vanish if Dirichlet boundary conditions (or, more generally, any other local boundary conditions) were chosen everywhere on the boundaries. It is only through less restrictive choices, such as free boundary conditions, that information can be shared between regions.

For all boundary conditions we have investigated, it has been possible to introduce a suitable boundary term to the Euclidean field-theoretic action that realizes them. This boundary term depends only on fields on the boundary, but otherwise contributes to the variational principle and as a weight in the Gaussian measure like the other (volume) terms in the action. At the Gaussian level, this boundary term is specified by a bilinear form with a (possibly non-local) distributional kernel. Many of the information-theoretic properties are determined by the form of this kernel. It would be highly interesting to investigate in more detail how different states are fixed through such boundary terms also beyond the classical boundary conditions we have investigated so far.

We suggest a number of directions for further investigation. An obvious extension of the study of relative entropy in classical field theory presented here is to allow for more general choices of theories. For example, one can consider space-dependent masses, which would lead to the theory of Schr\"{o}dinger operators and is closely related to the possible inclusion of gauge fields and a space-dependent Riemannian metric. Finally, the generalization to interacting theories, i.e. non-Gaussian theories, immediately comes to mind when studying this manuscript. In a functional formulation of Euclidean field theory this is possible, but requires further regularization (in the ultraviolet) and renormalization. For the information-theoretic properties and the propagation of information between regions, the quadratic sector of the theory seems to play the most important role, and we therefore believe that many of the insights gained here will persist, at least on the conceptual and qualitative level, also for interacting theories.

Beyond classical field theory, relative entropies can also be applied to \emph{quantum} field theories. As mentioned in the introduction, this has already been done successfully in the framework of algebraic quantum field theory \cite{Araki1975,Araki1977,Haag2012,Witten2018,Hollands2018,Longo2018,Longo2019,Bostelmann2021,Ciolli2021,Galanda2023}. It would now be interesting to investigate how the functional integral approach can be combined with the algebraic formalism. Functional methods have great practical advantages over the algebraic approach in the sense that they allow one to treat (phenomenologically interesting) interacting theories non-perturbatively, although at the cost of mathematical rigour. For quantum field theories, but also for non-equilibrium statistical field theories, it would furthermore be interesting to study the time evolution of information in more detail. This requires an extension of our setup to a real-time formalism such as the Schwinger-Keldysh double time path or its classical analogue. Boundaries can then be imposed in space and in time. This could be used, for example, to study dynamical processes such as local thermalization in out-of-equilibrium quantum field theories \cite{Dowling2020,Calzetta2023}. In this sense, the present work can also be seen as a step towards a functional treatment of information theory in continuum quantum field theories.

\backmatter

\bmhead{Acknowledgments}

We acknowledge valuable discussions with Razvan Gurau, Tobias Haas, Christian Schmidt, Tim St\"otzel, Mohammadamin Tajik, Andreas Wipf and Jobst Ziebell and are grateful to Tim St\"otzel for helpful suggestions regarding the computation of the closed-form expression of the Dirichlet relative entropy. This work is supported by the Deutsche Forschungsgemeinschaft (DFG, German Research Foundation) under 273811115 -- SFB 1225 ISOQUANT.

\begin{appendices}

    \section{Compact Operators}\label{app:covariance_operators}
    
    In this Section, we collect miscellaneous results used in Section \ref{sec:covariance_operators} for the description of the covariance operators used in this work, which are inverses of self-adjoint realizations of the differential operator $- \Delta + m^2$ \cite[Sec.~4.1]{Grubb2009}. These covariance operators are compact (at least when $\Omega \subset \R^d$ is bounded) self-adjoint operators, and we begin this Section with a brief summary of the properties of such operators. We then state two results used in Section \ref{sec:covariance_operators}.
    
    Let $\mathcal{H}$ be a real or complex separable Hilbert space with inner product $\braket{.,.}$ and norm $\|.\|$. We can make some useful statements about the spectrum of self-adjoint compact operators. The Riesz-Schauder theorem \cite[Thm.~VI.15]{Reed1981} states that the spectrum $\sigma(T)$ of any compact operator $T$ on $\mathcal{H}$ is a discrete set with either no limit points or $\lambda = 0$ as its only limit point. Any non-zero $\lambda \in \sigma (T)$ is an eigenvalue of $T$ of finite multiplicity. If $T$ is, in addition, self-adjoint, then, by the Hilbert-Schmidt theorem \cite[Thm.~VI.16]{Reed1981}, there exists a complete orthonormal basis $\{\phi_n\}_{n=1}^\infty$ of $\mathcal{H}$ such that $T \phi_n = \lambda_n \phi_n$ and $\{\lambda_n\}_{n=1}^\infty$ is a null sequence in $\R$. This also implies that a self-adjoint and compact $T$ is strictly positive precisely when all its eigenvalues are positive. Furthermore, $\min \sigma (T) = 0$ and $\max \sigma (T) = \|T\|$, see \cite[Thm.~VI.6]{Reed1981}. As such a $T$ is both self-adjoint and injective, its inverse $T^{-1}$ is an unbounded densely defined self-adjoint operator on $\mathcal{H}$, see \cite[Prop.~A.8.2]{Taylor2010}.
    
    Two important special cases of compact operators on $\mathcal{H}$ are operators of trace class and Hilbert-Schmidt operators. For any $T \in \mathfrak{B}^+ (\mathcal{H})$, its trace is defined as $\mathrm{Tr} (T) = \sum_{n=1}^\infty \braket{\phi_n, T \phi_n}$, where $\{\phi_n\}_{n=1}^\infty$ is some orthonormal basis of $\mathcal{H}$. An operator $T \in \mathfrak{B} (\mathcal{H})$ is said to be of trace class if and only if $\mathrm{Tr} (|T|) < + \infty$, where $|T| = \sqrt{T^* T}$. We denote the family of all trace class operators on $\mathcal{H}$ by $\mathfrak{T} (\mathcal{H})$. If $T \in \mathfrak{T} (\mathcal{H})$ is self-adjoint, then $\mathrm{Tr} (T) = \sum_{n=1}^\infty \lambda_n$, where $\{\lambda_n\}_{n=1}^\infty$ is the sequence of eigenvalues of $T$. An operator $T \in \mathfrak{B} (\mathcal{H})$ is called Hilbert-Schmidt if and only if $\mathrm{Tr} (T^* T) < + \infty$. We denote the family of all Hilbert-Schmidt operators on $\mathcal{H}$ by $\mathrm{HS} (\mathcal{H})$. Hilbert-Schmidt operators on $L^2$-spaces have a special form. An operator $T \in \mathfrak{B} (L^2 (M, \mu))$, where $(M, \mu)$ is a measure space, is Hilbert-Schmidt precisely when there exists a function $K \in L^2 (M \times M, \mu \otimes \mu)$, called the kernel of $T$, such that $(Tf) (x) = \int K (x,y) f (y) \, \D \mu (y)$ $\mu$-a.e. for all $f \in L^2 (M, \mu)$. In other words, an operator on $L^2 (M, \mu)$ is Hilbert-Schmidt if and only if its kernel is square integrable in the sense that
    \begin{equation}
        \iint_M \left| K (x,y) \right|^2 \, \D\mu (x) \, \D \mu(y) < + \infty \; .
    \end{equation}
    For a useful collection of properties of bounded integral operators on $L^2$-spaces, see, e.g., \cite{Halmos1978}.
    
    As already stated, the covariance operators used in this work are the inverses of self-adjoint extensions of $(- \Delta + m^2)|_{C_0^\infty (\Omega)}$. The following Proposition states that such an operator has an inverse that is defined on all of $\mathcal{H}$.
    \begin{proposition}\label{prop:bounded_below_bijective}
        Let $A \, : \, \mathfrak{D} (A) \to \mathcal{H}$ be a self-adjoint operator that is bounded from below by $c > 0$. Then, $A$ is bijective.
    \end{proposition}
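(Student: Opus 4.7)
The plan is to establish injectivity and surjectivity of $A$ separately, relying on the lower bound and the self-adjointness hypothesis.

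Injectivity is immediate from the semi-boundedness: if $A\psi = 0$ for some $\psi \in \mathfrak{D}(A)$, then $0 = \braket{\psi, A\psi} \geq c\|\psi\|^2$ with $c > 0$, forcing $\psi = 0$.

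For surjectivity, I would proceed in two stages. First I would show that the range $\mathrm{Ran}(A)$ is dense in $\mathcal{H}$. Since $A$ is self-adjoint, $A = A^*$ and standard Hilbert space theory gives
\begin{equation*}
    \mathrm{Ran}(A)^\perp = \ker(A^*) = \ker(A) = \{0\},
\end{equation*}
where the last equality is the injectivity just shown. Hence $\mathrm{Ran}(A)$ is dense. Second, I would show that $\mathrm{Ran}(A)$ is closed. The key estimate comes from the lower bound together with Cauchy--Schwarz: for every $\psi \in \mathfrak{D}(A)$,
\begin{equation*}
    c\,\|\psi\|^2 \leq \braket{\psi, A\psi} \leq \|\psi\|\,\|A\psi\|,
\end{equation*}
so $\|A\psi\| \geq c\,\|\psi\|$. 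If $\{A\psi_n\}$ is a sequence in $\mathrm{Ran}(A)$ converging to some $y \in \mathcal{H}$, this inequality implies that $\{\psi_n\}$ is Cauchy in $\mathcal{H}$ and therefore converges to some $\psi \in \mathcal{H}$. Because $A$ is self-adjoint it is in particular closed, so $(\psi_n, A\psi_n) \to (\psi, y)$ in the graph forces $\psi \in \mathfrak{D}(A)$ and $A\psi = y$. Thus $y \in \mathrm{Ran}(A)$, establishing closedness. Combining density and closedness yields $\mathrm{Ran}(A) = \mathcal{H}$.

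No step is really the main obstacle here; the only place one needs to be slightly careful is in concluding $\psi \in \mathfrak{D}(A)$ when promoting strong convergence of $\{\psi_n\}$ and of $\{A\psi_n\}$ to membership in the graph. This is handled by invoking closedness of $A$, which is an automatic consequence of self-adjointness and so requires no further argument. An alternative, essentially equivalent, route would be to invoke the spectral theorem: the lower bound $A \geq c\,I$ implies $\sigma(A) \subseteq [c, \infty)$, so $0 \in \rho(A)$ and $A$ has a bounded everywhere-defined inverse with $\|A^{-1}\| \leq c^{-1}$ (as already quoted in Corollary \ref{cor:bounded_inverse}). I prefer the direct argument above because it only uses the definition of self-adjointness and the given lower bound, keeping the proof self-contained at the level of this appendix.
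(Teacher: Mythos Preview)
Your proof is correct and follows essentially the same approach as the paper: establish injectivity from the lower bound, density of the range from $\mathrm{Ran}(A)^\perp = \ker(A^*) = \ker(A) = \{0\}$, and closedness of the range via the estimate $\|A\psi\| \geq c\,\|\psi\|$ together with closedness of $A$. The only cosmetic difference is that the paper proves injectivity by first passing to the positive square root $B = A^{1/2}$ and arguing that $B$ is injective, whereas your direct use of the lower bound is cleaner and avoids this detour.
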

    \begin{proof}
        As $A$ is bounded from below by a positive number, it is strictly positive and hence possesses a unique strictly positive self-adjoint square root \cite{Wouk1966, Bernau1968}, which we denote by $B$. Then, for all $f \in \mathfrak{D} (A)$, $0 \leq \braket{f, A f} = \| B f \|^2$, where equality holds if and only if $f = 0$. This implies that $B f = 0$ if and only if $f = 0$, which means $B$ is injective. As the composition of two injective maps is injective, $A = B^2$ is injective.
        
        Since $\mathrm{ran} (T)^\perp = \mathrm{ker} (T^*)$ for any densely defined $T$ \cite[Prop.~1.6]{Faris1975}, the fact that $A$ is injective implies
        \begin{equation}
            \overline{\mathrm{ran} (A)} = \{ 0 \}^\perp = \mathcal{H} \; ,
        \end{equation}
        i.e., the range of $A$ is dense in $\mathcal{H}$. As $A$ is bounded from below by some $c > 0$, we have $c \|f\|^2 \leq \braket{f, A f} \leq \|f\| \, \|A f\|$, i.e., $\| A f \| \geq c \, \| f \|$ for all $f \in \mathfrak{D} (A)$. Fix $\varphi \in \mathcal{H}$. Then, there exists $\{\varphi_n\}_{n=1}^\infty \subset \mathrm{ran} (A)$ such that $\varphi_n \to \varphi$ in $\mathcal{H}$. Let $\{f_n\}_{n=1}^\infty \subset \mathfrak{D} (A)$ be such that $\varphi_n = A f_n$ for all $n \in \N$. By the semi-boundedness of $A$,
        \begin{equation}
            \| f_n - f_m \| \leq c^{-1} \| A (f_n - f_m) \| = c^{-1} \| \varphi_n - \varphi_m \| \; ,
        \end{equation}
        which implies that $\{f_n\}_{n=1}^\infty$ is Cauchy. We denote the limit of $\{f_n\}_{n=1}^\infty$ by $f$. As $A$ is self-adjoint, it is closed. Thus, its graph $\Gamma (A) = \{ (\psi, A \psi) \, : \, \psi \in \mathfrak{D} (A) \}$ is closed in $\mathcal{H} \oplus \mathcal{H}$ (and hence complete) and $\lim_{n \to \infty} (f_n, A f_n) = (f, \varphi) = (f, Af) \in \Gamma (A)$, which implies that $\varphi \in \mathrm{ran} (A)$. Therefore, $\mathrm{ran} (A) = \mathcal{H}$.
    \end{proof}
    Furthermore, this inverse is bounded.
    \begin{corollary}\label{cor:bounded_inverse}
        Let $A \, : \, \mathfrak{D} (A) \to \mathcal{H}$ be a self-adjoint operator that is bounded from below by $c > 0$. Then, the inverse of $A$ is a strictly positive bounded operator on $\mathcal{H}$ with $\|A^{-1}\| \leq c^{-1}$.
    \end{corollary}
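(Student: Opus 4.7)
The plan is to assemble the three claims (boundedness with the explicit norm bound, self-adjointness, strict positivity) from ingredients that are already in hand after Proposition \ref{prop:bounded_below_bijective}. By that Proposition, $A$ is a bijection from $\mathfrak{D}(A)$ onto $\mathcal{H}$, so the algebraic inverse $A^{-1}$ is a well-defined linear operator with domain all of $\mathcal{H}$.

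First I would establish boundedness. In the proof of Proposition \ref{prop:bounded_below_bijective} the inequality $\|Af\| \geq c\,\|f\|$ was already derived for every $f \in \mathfrak{D}(A)$ from the lower bound and Cauchy-Schwarz. For an arbitrary $\varphi \in \mathcal{H}$, set $f \coloneqq A^{-1}\varphi \in \mathfrak{D}(A)$; then $\varphi = Af$ and
\begin{equation}
    \|\varphi\| \;=\; \|Af\| \;\geq\; c\,\|f\| \;=\; c\,\|A^{-1}\varphi\| \, ,
\end{equation}
which gives $\|A^{-1}\varphi\| \leq c^{-1}\|\varphi\|$, hence $A^{-1} \in \mathfrak{B}(\mathcal{H})$ with $\|A^{-1}\| \leq c^{-1}$.

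Next I would verify self-adjointness. Since $A$ is self-adjoint and bijective with everywhere-defined inverse, a standard result (see, e.g., \cite[Thm.~VIII.3]{Reed1981}) gives $(A^{-1})^* = (A^*)^{-1} = A^{-1}$; alternatively, for $\varphi, \psi \in \mathcal{H}$ write $\varphi = Af$, $\psi = Ag$ with $f,g \in \mathfrak{D}(A)$ and compute $\braket{\varphi, A^{-1}\psi} = \braket{Af, g} = \braket{f, Ag} = \braket{A^{-1}\varphi, \psi}$ using the self-adjointness of $A$. Either route confirms $A^{-1} = (A^{-1})^*$.

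Finally, strict positivity follows directly: for any $\varphi \in \mathcal{H} \setminus \{0\}$, writing $\varphi = Af$ with $f \in \mathfrak{D}(A) \setminus \{0\}$ (by injectivity of $A$), one has
\begin{equation}
    \braket{\varphi, A^{-1}\varphi} \;=\; \braket{Af, f} \;=\; \braket{f, Af} \;\geq\; c\,\|f\|^2 \;>\; 0 \, ,
\end{equation}
where we used the semi-boundedness of $A$ together with $c>0$. No step is really an obstacle here; the only mild subtlety is making sure that the inequality $\|Af\| \geq c\|f\|$ is read off from the earlier proof rather than reproved, so that the corollary becomes a short and direct consequence of Proposition \ref{prop:bounded_below_bijective}.
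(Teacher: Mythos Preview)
Your proof is correct and follows essentially the same approach as the paper. The only minor difference is in the order of the steps: the paper first deduces self-adjointness of $A^{-1}$ and then invokes the Hellinger--Toeplitz Theorem to conclude boundedness before separately computing the norm bound, whereas you obtain boundedness together with the explicit bound $\|A^{-1}\| \leq c^{-1}$ in one stroke from the inequality $\|Af\| \geq c\|f\|$; your route is slightly more self-contained since it avoids the appeal to Hellinger--Toeplitz.
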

    \begin{proof}
        As $A$ is self-adjoint, $A^{-1} : \mathcal{H} \to \mathfrak{D} (A)$ is also self-adjoint. Then, the Hellinger-Toeplitz Theorem \cite[p.~84]{Reed1981} implies that $A^{-1}$ is bounded. Fix $\psi \in \mathcal{H}$, which can be written as $\psi = A \varphi$ as $\mathcal{H} = \mathrm{ran} (A)$. Then,
        \begin{equation}
            \braket{\psi, A^{-1} \psi} = \braket{A \varphi, \varphi} \geq c \| \varphi \|^2 \geq 0 \; ,
        \end{equation}
        where equality holds if and only if $\varphi = 0$. As $A$ is injective, $\varphi = 0$ if and only if $\psi = 0$. Thus, $A^{-1}$ is strictly positive. Lastly,
        \begin{equation}
            \|A^{-1}\| = \sup_{\psi \neq 0} \frac{\|A^{-1} \psi\|}{\|\psi\|} = \sup_{\varphi \neq 0} \frac{\|\varphi\|}{\|A \varphi\|} = \sup_{\|\varphi\|=1} \left( \|A \varphi\| \right)^{-1} \leq c^{-1} \; .
        \end{equation}
    \end{proof}
    
    \section{Sobolev Spaces}\label{app:sobolev_spaces}

    In this Section, we define the Hilbert-Sobolev spaces needed in the main part of this work. For the most part, we follow the definitions and notations of \cite[Ch.~3]{McLean2000}.

    In the following, let $s \in \R$, $d \in \N$ and $m > 0$. We denote by $\mathscr{S} (\R^d)$ the Schwartz space of functions of rapid decrease. This space equipped with its usual Fr\'echet topology is denoted by $\mathcal{S} (\R^d)$, see, e.g., \cite{Treves2006}. The operator $\mathcal{J}^s : \mathcal{S} (\R^d) \to \mathcal{S} (\R^d)$, called the Bessel potential of order $s$ and mass $m$, defined by
    \begin{equation}
        (\mathcal{J}^s f) (x) = \int_{\R^d} (|\boldsymbol{p}|^2 + m^2)^{\frac{s}{2}} \hat{f} (\boldsymbol{p}) \; \E^{\I \boldsymbol{p} \cdot \boldsymbol{x}} \; \frac{\D^d p}{(2 \pi)^d} \; ,
    \end{equation}
    where $\hat{f}$ denotes the Fourier transform of $f$, is continuous and can be  continuously extended (with respect to the weak-* topology on $\mathcal{S}^* (\R^d))$ to an operator acting on tempered distributions, $\mathcal{J}^s : \mathcal{S}^* (\R^d) \to \mathcal{S}^* (\R^d)$. The Hilbert-Sobolev space $H^s (\R^d)$ of order $s$ on $\R^d$ is defined as the completion of $\mathscr{S} (\R^d)$ with respect to the norm induced by the inner product
    \begin{equation}
        \braket{f,g}_s \coloneqq \braket{\mathcal{J}^s f, \mathcal{J}^s g}_{L^2 (\R^d)} \; .
    \end{equation}
    We have the following scale of Hilbert spaces,
    \begin{equation}
        \ldots \subset H^{+2} (\R^d) \subset H^{+1} (\R^d) \subset H^0 (\R^d) = L^2 (\R^d) \subset H^{-1} (\R^d) \subset H^{-2} (\R^d) \subset \ldots \; ,
    \end{equation}
    where each inclusion $H^s (\R^d) \hookrightarrow H^{s-1} (\R^d)$ is continuous with dense image. We thus see that $H^s (\R^d)$ for $s > 0$ is a subset of $L^2 (\R^d)$ and consists precisely of those square integrable functions that fulfil additional regularity requirements, see the discussion of the spaces $H^s (\Omega)$, $\Omega \subset \R^d$ open, below.  On the other hand, for $s < 0$, elements in $H^s (\R^d)$ need not be square integrable and furthermore need not be functions at all. For example, the Dirac delta distribution is an element of $H^s (\R^d)$ if (and only if) $s < -d / 2$, as can be seen from the definition of the norm in $H^s (\R^d)$ and the Fourier transform of the Dirac delta.

    Let $K$ be a closed subset of $\R^d$. The linear space of distributions with support in $K$,
    \begin{equation}
        H^s_K = \left\lbrace \varphi \in H^s (\R^d) \; : \; \mathrm{supp} \, \varphi \subseteq K \right\rbrace \; ,
    \end{equation}
    is a closed subspace of $H^s (\R^d)$ and a Hilbert space when equipped with the inner product of $H^s (\R^d)$. We denote by $e_K$ the orthogonal projection from $H^s (\R^d)$ onto $H^s_K$.

    Let $\Omega \subset \R^d$ be open. The set of restrictions to $\Omega$ (in the sense of distributions) of elements in $H^s (\R^d)$,
    \begin{equation}
        H^s (\Omega) = \left\lbrace f \in \mathcal{D}^* (\Omega) \, : \, f = F|_\Omega \, , \; F \in H^s (\R^d) \right\rbrace \; ,
    \end{equation}
    is a Hilbert space when equipped with the inner product
    \begin{equation}\label{eq:H(Omega)_inner_product}
        \braket{f,g}_{H^s (\Omega)} \coloneqq \braket{p_\Omega F, p_\Omega G}_s \; ,
    \end{equation}
    where $F,G \in H^s (\R^d)$ such that $f = F|_\Omega$ and $g = G|_\Omega$ and $p_\Omega \coloneqq I - e_{\R^d \setminus \Omega}$. The space $H^s (\Omega)$ is furthermore the completion of $C^\infty (\overline{\Omega})$, the space of restrictions to $\Omega$ of functions in $C^\infty (\R^d)$, in the norm induced by the inner product \eqref{eq:H(Omega)_inner_product}.
    
    Let $\alpha \in \N_0^d$ be a multi-index. We define $\mathsf{D}^\alpha \coloneqq \partial_{x_1}^{\alpha_1} \ldots \partial_{x_d}^{\alpha_d}$, $\partial_{x_j}^{\alpha_j} \coloneqq (\partial / \partial x_j)^{\alpha_j}$, a partial differential operator of order $|\alpha| = \alpha_1 + \ldots + \alpha_d$. If $s \in \N_0$ and $\Omega$ is a Lipschitz domain (see, e.g., \cite[Def.~3.28]{McLean2000}), an equivalent norm on $H^s (\Omega)$ arises from the inner product
    \begin{equation}
        \sum_{|\alpha| \leq s} \braket{\mathsf{D}^\alpha f, \mathsf{D}^\alpha g}_{L^2 (\Omega)} \; ,
    \end{equation}
    cf. \cite[Thm.~3.30]{McLean2000} and \cite[Sec.~1.1.1]{Chandler2017}. In particular, in this case a function $f$ is in $H^s (\Omega)$ if and only if it and all its weak derivatives up to order $s$ are square integrable.
    
    More generally, Hilbert-Sobolev spaces on an open subset $\Omega$ can be defined either \emph{\`a la} Slobodeckij-Gagliardo (with the corresponding space denoted by $W_2^s (\Omega)$) or \emph{\`a la} Bessel-Fourier (with the corresponding space denoted by $H^s (\Omega)$). Whether or not these two definitions lead to the same space depends on $s$ and the regularity of $\Omega$, see \cite[Sec.~1.1.1]{Chandler2017} and references therein. We principally work with spaces $H^s (\Omega)$ but occasionally (when $H^s (\Omega) = W^s_2 (\Omega)$) make use of results from the theory of $W^s_2 (\Omega)$ spaces. Notice that for the case $\Omega = \R^d$, $H^s (\R^d) = W^s_2 (\R^d)$ for all $s \geq 0$ (in the sense that $H^s (\R^d)$ and $W^s_2 (\R^d)$ are norm-equivalent Hilbert spaces), see \cite[Thm.~3.16]{McLean2000}.

    Finally, we define, for every non-empty open $\Omega \subset \R^d$ and $s \in \R$, the space $\widetilde{H}^s (\Omega)$ as the closure of $C^\infty_0 (\Omega)$ in $H^s (\R^d)$ and the space $H^s_0 (\Omega)$ as the closure of $C^\infty_0 (\Omega)$ in $H^s (\Omega)$. The spaces $\widetilde{H}^s (\Omega)$ and $H^s_0 (\Omega)$ are Hilbert spaces when equipped with the inner products of $H^s (\R^d)$ and $H^s (\Omega)$, respectively.

    We conclude this Section with a brief discussion of the trace operator. Let $\Omega$ be a Lipschitz domain with boundary $\partial \Omega$. For every $f \in C^\infty (\overline{\Omega})$ we define the operator $\widetilde{\gamma}$ to be the restriction of $f$ to the boundary, i.e., $\widetilde{\gamma} f = f|_{\partial \Omega}$. This operator can be continuously extended to an operator $\gamma \, : \, H^{+1} (\Omega) \to H^{+ \frac{1}{2}} (\partial \Omega)$, see \cite[Thm.~3.37]{McLean2000}. The operator $\gamma$ is called the trace operator. For definitions of the trace spaces $H^s (\partial \Omega)$ see, e.g., \cite[Sec.~4.2]{Hsiao2021}.

    \section{Quadratic Forms}\label{app:quadratic_forms}
    
    In this Section we provide a brief summary of quadratic forms on Hilbert spaces. We closely follow \cite{Reed1981,Reed1975,Kato1995,Robinson1971}. Let $\mathcal{H}$ be a complex separable Hilbert space with inner product $\braket{.,.}$ and norm $\|.\|$ and $\mathcal{D}$ a dense linear subspace of $\mathcal{H}$. Let $\mathfrak{q} : \mathcal{D} \times \mathcal{D} \to \C$ be a (not necessarily bounded) sesquilinear form. Then $\mathcal{D}$ is called the form domain of $\mathfrak{q}$, and we write $\mathcal{Q} (\mathfrak{q}) = \mathcal{D}$. The sesquilinear form $\mathfrak{q}$ defines a quadratic form from $\mathcal{Q} (\mathfrak{q})$ into $\C$ via $f \mapsto \mathfrak{q} (f) \coloneqq \mathfrak{q} (f,f)$. On the other hand, a quadratic form on a complex Hilbert space defines a sesquilinear form via polarization. We call a sesquilinear form $\mathfrak{q}$ symmetric if $\mathfrak{q}(f,g) = \overline{\mathfrak{q}(g,f)}$ for all $f,g \in \mathcal{Q}(\mathfrak{q})$. Clearly, the quadratic form associated with a symmetric form is real-valued. A symmetric form $\mathfrak{q}$ is called bounded from below if there exists a real number $c$ such that $\mathfrak{q} (f) \geq c \|f\|^2$ for all $f \in \mathcal{Q} (\mathfrak{q})$, in which case we simply write $\mathfrak{q} \geq c$. The largest such number $c$ is called the lower bound of $\mathfrak{q}$. The symmetric form $\mathfrak{q}$ is called positive if $\mathfrak{q} \geq 0$.
    
    A form $\mathfrak{q}$, bounded from below by $c \in \R$, is called closed if the form domain $\mathcal{Q} (\mathfrak{q})$ is complete with respect to the norm
    \begin{equation}
        |\!|\!|f|\!|\!|^2 = \mathfrak{q} (f) + (1-c) \|f\|^2 \; .
    \end{equation}
    If $\mathfrak{q}$ is closed and $S \subset \mathcal{Q} (\mathfrak{q})$ is $|\!|\!|.|\!|\!|$-dense in $\mathcal{Q}(\mathfrak{q})$, we say that $S$ is a form core of $\mathfrak{q}$. A form $\mathfrak{q}$ is called closable if it has a closed extension. The following Theorem captures an important connection between closed forms that are bounded from below and self-adjoint operators.
    \begin{theorem}[{\cite[Thm.~VIII.15]{Reed1981}, \cite[Thm.~VI.2.1 \& Thm.~VI.2.6]{Kato1995}}]\label{thm:first_rep_thm}
        Let $\mathfrak{q}$ be a densely defined closed symmetric form bounded from below with form domain $\mathcal{Q} (\mathfrak{q})$. Then, there exists a unique densely defined self-adjoint operator $T$, bounded from below with the same lower bound as $\mathfrak{q}$, with domain $\mathfrak{D} (T) \subset \mathcal{H}$ such that
        \begin{itemize}
            \item $\mathfrak{D} (T) \subset \mathcal{Q} (\mathfrak{q})$ and $\mathfrak{q} (f,g) = \braket{T f, g}$ for all $f \in \mathfrak{D} (T)$ and $g \in \mathcal{Q} (\mathfrak{q})$,
            \item $\mathfrak{D} (T)$ is a form core of $\mathfrak{q}$,
            \item if $f \in \mathcal{Q} (\mathfrak{q})$, $h \in \mathcal{H}$ and $\mathfrak{q}(f,g) = \braket{h,g}$ for every $g$ belonging to a core of $\mathfrak{q}$, then $f \in \mathfrak{D} (T)$ and $T f = h$.
        \end{itemize}
    \end{theorem}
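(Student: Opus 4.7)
The plan is to reduce to the positive-definite case and then invoke the Riesz representation theorem on the form-norm Hilbert space. First I would assume without loss of generality that $\mathfrak{q} \geq 1$; if $\mathfrak{q} \geq c$ I replace $\mathfrak{q}$ by $\mathfrak{q}' (f,g) = \mathfrak{q}(f,g) + (1-c)\braket{f,g}$, note that $\mathfrak{q}'$ is still closed, densely defined and symmetric, construct the corresponding operator $T'$ and recover $T = T' - (1-c) I$. Under this normalization, $\mathfrak{q}(f,f) \geq \|f\|^2$ on $\mathcal{Q}(\mathfrak{q})$, so the form norm $|\!|\!|f|\!|\!|^2 = \mathfrak{q}(f,f)$ coincides with the completeness norm of the closed form; hence $\mathcal{H}_\mathfrak{q} \coloneqq (\mathcal{Q}(\mathfrak{q}), \mathfrak{q})$ is itself a Hilbert space, and the inclusion $\iota : \mathcal{H}_\mathfrak{q} \hookrightarrow \mathcal{H}$ is a continuous injection with dense range.

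Next I would produce the operator. For fixed $h \in \mathcal{H}$ the map $g \mapsto \braket{h,g}$ is bounded on $\mathcal{H}_\mathfrak{q}$ since $|\braket{h,g}| \leq \|h\|\,\|g\| \leq \|h\|\,|\!|\!|g|\!|\!|$, so by the Riesz representation theorem applied in $\mathcal{H}_\mathfrak{q}$ there is a unique $R h \in \mathcal{Q}(\mathfrak{q})$ with
\begin{equation}
    \mathfrak{q}(R h, g) = \braket{h, g} \qquad \forall\, g \in \mathcal{Q}(\mathfrak{q}).
\end{equation}
The resulting map $R : \mathcal{H} \to \mathcal{H}$ is bounded with $\|R\| \leq 1$, symmetric (so self-adjoint by Hellinger--Toeplitz), strictly positive (since $\braket{h, R h} = \mathfrak{q}(Rh, Rh) \geq \|R h\|^2$), and injective. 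Setting $T \coloneqq R^{-1}$ on $\mathfrak{D}(T) \coloneqq \mathrm{ran}(R)$ yields a self-adjoint operator by Proposition~\ref{prop:bounded_below_bijective} and Corollary~\ref{cor:bounded_inverse} applied to $R$; the defining identity $\mathfrak{q}(Rh, g) = \braket{h, g}$ rewrites, with $f = Rh \in \mathfrak{D}(T)$, as $\mathfrak{q}(f, g) = \braket{Tf, g}$ for all $g \in \mathcal{Q}(\mathfrak{q})$. Lower-boundedness of $T$ by $1$ follows by choosing $g = f$.

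The hardest step, I expect, will be verifying that $\mathfrak{D}(T)$ is a form core. To establish this I would argue by contradiction: suppose $f_0 \in \mathcal{Q}(\mathfrak{q})$ is orthogonal to $\mathfrak{D}(T)$ in the form inner product, i.e.\ $\mathfrak{q}(f_0, Rh) = 0$ for every $h \in \mathcal{H}$. By the defining property of $R$, this reads $\braket{f_0, h} = 0$ for all $h \in \mathcal{H}$, forcing $f_0 = 0$. Hence $\mathfrak{D}(T)$ is $|\!|\!|\cdot|\!|\!|$-dense in $\mathcal{Q}(\mathfrak{q})$. The third bullet (characterization of the domain) then follows by noting that if $\mathfrak{q}(f,g) = \braket{h,g}$ on a core, both sides extend by continuity in the form norm to all $g \in \mathcal{Q}(\mathfrak{q})$, so $\mathfrak{q}(f - Rh, g) = 0$ for all such $g$, which by the above density argument gives $f = Rh \in \mathfrak{D}(T)$ and $Tf = h$.

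Finally, for uniqueness, if $T_1, T_2$ are two self-adjoint operators satisfying the first bullet with the same form $\mathfrak{q}$, then for $f \in \mathfrak{D}(T_1) \cap \mathfrak{D}(T_2)$ one gets $\braket{(T_1 - T_2)f, g} = 0$ for all $g$ in a common dense set, hence $T_1 f = T_2 f$; combined with self-adjointness of both and the observation that each $\mathfrak{D}(T_i)$ is determined by the characterization in the third bullet, this forces $T_1 = T_2$. The main obstacle is really the clean verification that $R$ is everywhere defined and that the form-orthogonal complement of its range is trivial; once this is in hand, the rest is bookkeeping with the normalization $\mathfrak{q} \geq 1$.
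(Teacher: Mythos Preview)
The paper does not supply a proof of this theorem; it is quoted as a standard result with references to Reed--Simon and Kato. Your argument is essentially the textbook proof one finds in those references (construct the form Hilbert space, apply Riesz to obtain a bounded self-adjoint $R$, and set $T = R^{-1}$), and it is correct in outline and in almost all details.

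Two small points worth tightening. First, the self-adjointness of $T = R^{-1}$ does not follow from Proposition~\ref{prop:bounded_below_bijective} and Corollary~\ref{cor:bounded_inverse} as you invoke them: those results go from an unbounded self-adjoint operator bounded below to its bounded inverse, whereas here you need the converse (the inverse of a bounded injective self-adjoint operator is self-adjoint on its natural domain). The paper does have this fact available in Appendix~\ref{app:covariance_operators}, citing \cite[Prop.~A.8.2]{Taylor2010}. Second, your uniqueness argument can be streamlined: once the third bullet is established for your constructed $T$, any self-adjoint $T_1$ satisfying the first bullet is, by that very characterization, a restriction of $T$; since a self-adjoint operator admits no proper self-adjoint extension, $T_1 = T$. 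This avoids having to reason about $\mathfrak{D}(T_1) \cap \mathfrak{D}(T_2)$ directly.
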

    We call $T$ the operator associated with the form $\mathfrak{q}$ and call $\mathcal{Q} (T) \coloneqq \mathcal{Q} (\mathfrak{q})$ the form domain of the operator $T$. The case where $\mathfrak{q}$ is positive is of special importance in this work, namely when we consider self-adjoint realizations of the differential operator $- \Delta + m^2$ in Section \ref{sec:covariance_operators}.
    \begin{theorem}[{\cite[Thm.~VI.2.23]{Kato1995}}]\label{thm:second_rep_thm}
        Let $\mathfrak{q}$ be a densely defined, closed, symmetric and positive quadratic form and let $T$ be the positive self-adjoint operator associated with $\mathfrak{q}$. Then, $\mathcal{Q} (T) = \mathfrak{D} (T^{1/2})$ and $\mathfrak{q} (f,g) = \braket{T^{1/2} f, T^{1/2} g}$ for all $f,g \in \mathcal{Q} (\mathfrak{q})$.
    \end{theorem}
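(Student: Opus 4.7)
The plan is to compare $\mathfrak{q}$ with the auxiliary form
\[
    \mathfrak{q}_{\star}(f,g) \coloneqq \braket{T^{1/2} f, T^{1/2} g}, \qquad \mathcal{Q}(\mathfrak{q}_{\star}) \coloneqq \mathfrak{D}(T^{1/2}),
\]
where $T^{1/2}$ is the unique positive self-adjoint square root of the operator $T$ furnished by Theorem~\ref{thm:first_rep_thm} (its existence and uniqueness following from the functional calculus applied to the positive self-adjoint operator $T$). The strategy is to show that $\mathfrak{q}$ and $\mathfrak{q}_{\star}$ are both densely defined, closed, symmetric, positive forms which agree on a common core, and therefore coincide as forms.

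First I would verify the elementary properties of $\mathfrak{q}_{\star}$. Density follows because $T^{1/2}$ is self-adjoint and hence its domain is dense in $\mathcal{H}$. Symmetry and positivity are immediate from the definition. For closedness, let $(f_n) \subset \mathfrak{D}(T^{1/2})$ be Cauchy in the form norm $|\!|\!|f|\!|\!|^2 = \|T^{1/2} f\|^2 + \|f\|^2$; then $(f_n)$ and $(T^{1/2} f_n)$ are both Cauchy in $\mathcal{H}$, and closedness of the self-adjoint operator $T^{1/2}$ yields a limit in $\mathfrak{D}(T^{1/2})$, so $\mathfrak{q}_{\star}$ is closed. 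Next, for $f \in \mathfrak{D}(T)$ one has $T^{1/2} f \in \mathfrak{D}(T^{1/2})$ with $T^{1/2}(T^{1/2} f) = T f$, so for any $g \in \mathfrak{D}(T^{1/2})$,
\[
    \mathfrak{q}_{\star}(f,g) = \braket{T^{1/2} f, T^{1/2} g} = \braket{T f, g},
\]
using self-adjointness of $T^{1/2}$. Comparing with Theorem~\ref{thm:first_rep_thm}, this shows that on $\mathfrak{D}(T)$ the two forms agree: $\mathfrak{q}(f,g) = \braket{T f, g} = \mathfrak{q}_{\star}(f,g)$.

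The main obstacle, and the nontrivial step, is to show that $\mathfrak{D}(T)$ is a form core of $\mathfrak{q}_{\star}$, i.e.\ that it is $|\!|\!|\cdot|\!|\!|$-dense in $\mathfrak{D}(T^{1/2})$. This I would establish via the spectral theorem: writing $T = \int_0^\infty \lambda \, dE_\lambda$, for any $f \in \mathfrak{D}(T^{1/2})$ the truncations $f_n \coloneqq E_{[0,n]} f$ lie in $\mathfrak{D}(T)$ because $\int_0^\infty \lambda^2 \, d\|E_\lambda f_n\|^2 = \int_0^n \lambda^2 \, d\|E_\lambda f\|^2 < \infty$. Since $\int_0^\infty \lambda \, d\|E_\lambda f\|^2 = \|T^{1/2} f\|^2 < \infty$, dominated convergence gives $f_n \to f$ and $T^{1/2} f_n \to T^{1/2} f$ in $\mathcal{H}$, hence $f_n \to f$ in $|\!|\!|\cdot|\!|\!|$. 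Thus $\mathfrak{D}(T)$ is a core for $\mathfrak{q}_{\star}$, and by Theorem~\ref{thm:first_rep_thm} it is also a core for $\mathfrak{q}$.

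To finish, note that $\mathfrak{q}$ and $\mathfrak{q}_{\star}$ are two closed, positive, symmetric forms both having $\mathfrak{D}(T)$ as a core and agreeing on $\mathfrak{D}(T)$. A closed form is uniquely determined by its values on any core (by continuous extension in the form norm, which coincides on $\mathfrak{D}(T)$ for both forms since the form norms agree wherever the forms do). Therefore $\mathcal{Q}(\mathfrak{q}) = \mathcal{Q}(\mathfrak{q}_{\star}) = \mathfrak{D}(T^{1/2})$, which is the first assertion (recalling $\mathcal{Q}(T) = \mathcal{Q}(\mathfrak{q})$), and $\mathfrak{q}(f,g) = \braket{T^{1/2} f, T^{1/2} g}$ for all $f,g \in \mathcal{Q}(\mathfrak{q})$, which is the second.
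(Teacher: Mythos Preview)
The paper does not supply its own proof of this statement; it merely cites Kato~\cite[Thm.~VI.2.23]{Kato1995}. Your argument is correct and is essentially the standard proof one finds in Kato: introduce the auxiliary closed form $\mathfrak{q}_\star(f,g) = \braket{T^{1/2}f,T^{1/2}g}$, verify that $\mathfrak{D}(T)$ is a core for both $\mathfrak{q}$ and $\mathfrak{q}_\star$ (the former by Theorem~\ref{thm:first_rep_thm}, the latter via spectral truncation), observe that the two forms agree on $\mathfrak{D}(T)$, and conclude by uniqueness of the closed extension from a core. The only cosmetic difference is that Kato phrases the final step as ``the operator associated with $\mathfrak{q}_\star$ via the first representation theorem is again $T$, so by uniqueness $\mathfrak{q}=\mathfrak{q}_\star$''; your formulation in terms of closed extensions from a common core is equivalent.
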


    \section{Gaussian Functional Integrals}\label{app:func_integrals}

    This Appendix contains details of the discussion of functional integrals omitted in Section \ref{sec:func_integrals}. We discuss Gaussian measures on infinite dimensional spaces, particularly on Hilbert and locally convex spaces and recall Minlos' Theorem \cite{Minlos1959}, which ensures the existence of Gaussian measures on spaces of distributions with covariance operators relevant for this work.
    
    A free scalar field theory can be defined by the formal expression
    \begin{equation}
        \D \mu = \frac{1}{Z} \, \exp \left[ - S_\mathrm{E} [\varphi] \right] \, \mathcal{D} \varphi \; ,
    \end{equation}
    where $S_\mathrm{E} [\varphi]$ is the Euclidean action functional. As this measure is Gaussian, it is completely specified by a covariance operator and a mean. In statistical field theory, the covariance operator is the inverse of the elliptic differential operator appearing in the free Euclidean action, and its integral kernel is usually referred to as the Green's function, propagator or two-point function. A detailed discussion of the covariance operators needed in this work can be found in Section \ref{sec:covariance_operators}. The mean of the Gaussian measure is interpreted as the expected field configuration. The exposition in this Section is taken primarily from \cite{Bogachev2007,Bogachev2014,Bogachev2015}. We start with some very basic results from probability theory.

    Before we continue, we note that, throughout this Section, every time we consider a Gaussian measure on a locally convex space $X$ it is useful to keep the specific choice employed in the main part in mind, namely a Gaussian measure on the space of distributions $\mathcal{D}^*_\beta (\Omega)$, the strong topological dual of $\mathcal{D} (\Omega)$, the space of test functions equipped with the usual $LF$-topology. For definitions of these spaces see, e.g., \cite{Treves2006} as well as \cite[Ch.~V]{Reed1981}. A similar remark holds for the discussion of nuclear spaces and Minlos' Theorem at the end of this Section. More precisely, the choices $X = \mathcal{D}^*_\beta (\Omega)$, $X^* = \mathcal{D} (\Omega)$, $\mathfrak{X} = \mathcal{D} (\Omega)$ and $\mathfrak{X}^*_\beta = \mathcal{D}^*_\beta (\Omega)$ correspond to the specific spaces used in the main body of this work.
    
    A measure space is a triple $(X, \mathcal{A}, \mu)$, where $\mathcal{A}$ is a $\sigma$-algebra of subsets of a set $X$ and $\mu$ is a measure on $\mathcal{A}$. In case that $\mu$ is a probability measure, i.e., $\mu : \mathcal{A} \to [0,1]$ with $\mu (X) = 1$, the triple $(X, \mathcal{A}, \mu)$ is called a probability space. Let $(X, \mathcal{A}, \mu)$ be a measure space and $(Y,\mathcal{A}')$ a measurable space. A function $f : X \to Y$ is called measurable with respect to the pair $(\mathcal{A}, \mathcal{A}')$ if, for all $A \in \mathcal{A}'$, $f^{-1} (A) \in \mathcal{A}$. It is furthermore called $\mu$-measurable if it is measurable with respect to the pair $(\mathcal{A}, \mathcal{A}')$ and $\mathcal{A}$ is complete with respect to $\mu$. A $\mu$-measurable function induces a measure on $Y$, called the push-forward measure, $f_* \mu$, defined by
    \begin{equation}
        (f_* \mu) (A) = (\mu \circ f^{-1}) (A) = \mu (f^{-1}(A))
    \end{equation}
    for all $A \in \mathcal{A}'$. Measurable functions from a probability space $(X, \mathcal{A}, \mu)$ to some measurable space $(Y,\mathcal{A}')$ are called random variables.
    
    We are interested in $\sigma$-algebras generated by families of sets or functions on $X$. If $S$ is some family of subsets of $X$, then there exists a smallest $\sigma$-algebra in $X$, denoted $\sigma(S)$, containing $S$. Given a family $F$ of functions on $X$, there exists a smallest $\sigma$-algebra of subsets of $X$, denoted $\mathcal{E} (X,F)$, with respect to which all $f \in F$ are measurable.
    
    We now give some important examples of $\sigma$-algebras which we will need in this work. Suppose $X$ is a topological space and let $\mathscr{T}_X$ denote the topology of $X$. Then $\sigma(\mathscr{T}_X) \eqqcolon \mathcal{B} (X)$ is called the Borel $\sigma$-algebra of $X$. A measure that is defined on $\mathcal{B} (X)$ is called a Borel measure on $X$. Now suppose that $X$ is a locally convex space. The cylindrical $\sigma$-algebra generated by the topological dual $X^*$, denoted $\mathcal{E} (X) \coloneqq \mathcal{E} (X, X^*)$, is the smallest $\sigma$-field, with respect to which all continuous linear functionals on $X$ are measurable. As $\mathcal{E} (X)$ is generated by continuous functions, $\mathcal{E} (X) \subseteq \mathcal{B} (X)$ and in some important cases equality holds, see \cite[Thm.~A.3.7 \& Thm.~A.3.8]{Bogachev2015}.
    
    Given a topological space $X$, we can generate a $\sigma$-algebra on it via its topology, which yields the Borel $\sigma$-algebra $\mathcal{B} (X)$. Radon measures are Borel measures that satisfy a collection of natural properties one expects from a measure on a topological space. For the definition of a Radon measure used in this work, see \cite[Def.~A.3.10]{Bogachev2015}. A Radon measure on a locally convex space $X$ is uniquely determined by its restriction to $\mathcal{E} (X)$ \cite[Prop.~A.3.12]{Bogachev2015}. All measures considered in this work will be Radon measures on locally convex spaces, hence many theorems in the remainder of this section will be formulated in terms of Radon measures.
    
    We now consider a special class of probability measures, namely Gaussian measures. We start by recalling some basic facts of Gaussian measures on $\R^n$. A Gaussian measure on $\R$ is a Borel probability measure such that it is either the Dirac measure or its Radon-Nikodym derivative (or density) with respect to the Lebesgue measure is given by
    \begin{equation}
        p (x) = \frac{1}{\sqrt{2 \pi \sigma^2}} \exp \left[ - \frac{(x-a)^2}{2 \sigma^2} \right] \; ,
    \end{equation}
    with $a \in \R$ the mean and $\sigma^2 > 0$ the variance. The case $\sigma^2 = 0$ corresponds formally to the Dirac measure, in which case the measure is also called a degenerate Gaussian measure. If the mean vanishes, we call the Gaussian measure centred. Similarly, a (non-degenerate) Gaussian measure on $\R^n$ is defined as a Borel measure on $\R^n$ with density
    \begin{equation}\label{eq:def_gaussian_Rn}
        p (\boldsymbol{x}) = \frac{1}{\sqrt{\det ( 2 \pi A )}} \exp \left[ - \frac{1}{2} (\boldsymbol{x}-\boldsymbol{a})^\mathsf{T} A^{-1} (\boldsymbol{x}-\boldsymbol{a}) \right] \; ,
    \end{equation}
    where $A$ is a strictly positive symmetric matrix, called the covariance matrix of the Gaussian measure, and $\boldsymbol{a} \in \R^n$ is the mean vector.
    
    So far, we defined Gaussian measures as Borel measures on $\R^n$ whose densities with respect to the Lebesgue measure are (normalized) Gaussian distributions. We can also use the following equivalent definition of Gaussian measures on $\R^n$, which is particularly suitable for generalizations to infinite dimensional spaces. We define the family of Gaussian measures on $\R^n$ to be precisely the family of measures on $\R^n$ such that, for every linear functional\footnote{We identify the dual of $\R^n$ with $\R^n$.} $f$ on $\R^n$, the push-forward measure $f_* \mu$ is a Gaussian measure on $\R$ \cite[Def.~1.2.1]{Bogachev2015}. A (real-valued) random variable $\xi$ on a probability space $(X,\mathcal{A},\mu)$ is called Gaussian if the push-forward measure $\xi_* \mu$ is a Gaussian measure on $\mathcal{B} (\R)$. This leads to the following
    \begin{definition}
        A Borel probability measure $\mu$ on $\R^n$ is called Gaussian if every linear functional on $\R^n$ is a Gaussian random variable with respect to $\mu$.
    \end{definition}
    
    Intuitively, we can think of Gaussian measures on $\R^n$ to be Gaussian ``in every direction''. We can now use the above definition of Gaussian measures on $\R^n$ to generalize the concept of a Gaussian measure to arbitrary linear spaces, including infinite dimensional ones. This general definition does not require any topology on the linear space, see \cite{Bogachev2014,Bogachev2015}. Nevertheless, we will specify on the case of Gaussian measures on locally convex spaces\footnote{In fact, we can always trace the discussion back to Gaussian measures on locally convex spaces by introducing suitable locally convex topologies, see \cite[p.~42]{Bogachev2015}.}. Let $X$ be a locally convex space and $X^*$ its topological dual.
    \begin{definition}[{\cite[Def.~2.2.1]{Bogachev2015}}]
        Let $X$ be a locally convex space. A probability measure $\mu$ on the measurable space $(X,\mathcal{E}(X))$ is called Gaussian if every $f \in X^*$ is a Gaussian random variable. It is called a centred Gaussian measure if all $f \in X^*$ are centred Gaussian random variables.
    \end{definition}
    This definition can be restated in the language of random processes, see, e.g., \cite{Lifshits2012}. A probability measure $\mu$ on $(X,\mathcal{E} (X))$ is Gaussian if the random process on $(X, \mathcal{E} (X), \mu)$ indexed by $X^*$, $\{\varphi_f\}_{f \in X^*}$, where $\varphi_f = f (\varphi)$, is Gaussian \cite{Bogachev2014}. Recalling that a Radon measure is uniquely determined by its restriction to $\mathcal{E} (X)$, a Radon measure $\mu$ on a locally convex space $X$ is called Gaussian if its restriction to $\mathcal{E} (X)$ is Gaussian \cite[Def.~3.1.1]{Bogachev2015}.
    \begin{definition}[{\cite[Def.~2.2.7]{Bogachev2015}, see also \cite[Thm.~3.2.3]{Bogachev2015}}]\label{def:mean_covariance}
        Let $\mu$ be a Radon Gaussian measure on a locally convex space $X$. The mean of $\mu$, denoted $a_\mu$, is an element of $X$ and is defined by
        \begin{equation}
            a_\mu (f) = \EE [f] = \int_X f (\varphi) \; \D \mu (\varphi) \; , \qquad f \in X^* \; .
        \end{equation}
        The covariance operator of $\mu$, denoted $R_\mu$, is a linear map $R_\mu : X^* \to X$, defined by
        \begin{align*}
            R_\mu (f) (g) &= \EE [(f-a_\mu(f))(g-a_\mu(g))] \\
            &= \int_X \left( f (\varphi) - a_\mu (f) \right) \left( g (\varphi) - a_\mu (g) \right) \D \mu (\varphi) \; , \qquad f, g \in X^* \; . \numberthis
        \end{align*}
        The covariance operator $R_\mu$ induces the symmetric bilinear form $\mathrm{Cov}$ on $X^* \times X^*$ via $\mathrm{Cov} (f,g) \coloneqq R_\mu (f) (g)$ for all $f,g \in X^*$. The corresponding quadratic form, called the covariance of $\mu$, is positive, i.e., $\mathrm{Cov} (f,f) \geq 0$ for all $f \in X^*$.
    \end{definition}
    
    A Borel measure $\mu$ on $\R^n$ is Gaussian precisely when its Fourier transform is given by \cite[Prop.~1.2.2]{Bogachev2015}
    \begin{equation}
        \hat{\mu} (\boldsymbol{p}) \coloneqq \int_{\R^n} \exp \left[ \I \; \boldsymbol{p} \cdot \boldsymbol{x} \right] \; \D \mu (\boldsymbol{x}) = \exp \left[ - \frac{1}{2} \, \boldsymbol{p}^\mathsf{T} A \boldsymbol{p} + \I \; \boldsymbol{a} \cdot \boldsymbol{p} \right] \; ,
    \end{equation}
    with $\boldsymbol{a} \in \R^n$ the mean and $A$ the positive symmetric covariance matrix. An analogous statement also holds for measures on infinite dimensional spaces. First, however, we need to define the Fourier transform of such measures. Let $X$ be a locally convex space and let $\mu$ be a measure on $\mathcal{E} (X)$. Following \cite[Def.~A.3.17]{Bogachev2015}, the Fourier transform of $\mu$, denoted $\hat{\mu}$, is a map $\hat{\mu} \, : \, X^* \to \C$ defined by
        \begin{equation}
            \hat{\mu} (f) = \int_X \exp \left[ \I f (\varphi) \right] \; \D \mu (\varphi) \; , \qquad f \in X^* \; .
        \end{equation}
    A Gaussian measure on a locally convex space $X$ is fully characterised by its Fourier transform.
    \begin{theorem}[{\cite[Thm.~2.2.4]{Bogachev2015}}]\label{th:FT_Gaussian_measure}
        A Radon measure $\mu$ on a locally convex space $X$ is Gaussian precisely when its Fourier transform is given by
        \begin{equation}\label{eq:FT_Gaussian_measure}
            \hat{\mu} (f) = \exp \left[ - \frac{1}{2} B (f,f) + \I m (f) \right] \; .
        \end{equation}
        Here, $m$ is a linear functional on $X^*$ and $B$ is a symmetric bilinear form on $X^* \times X^*$ such that the corresponding quadratic form is positive, i.e., $B (f,f) \geq 0$ for all $f \in X^*$.
    \end{theorem}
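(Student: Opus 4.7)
The strategy is to prove both implications by reducing to the one-dimensional case and exploiting the fact that the Fourier transform of a finite Borel measure on $\R$ uniquely determines the measure.

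For the forward implication, assume $\mu$ is a Radon Gaussian measure on $X$. By definition, every $f \in X^*$ is a Gaussian random variable on $(X,\mathcal{E}(X),\mu)$, so the pushforward $f_*\mu$ is a Gaussian measure on $\R$ with some mean $m(f)$ and variance $B(f,f) \geq 0$. Since Gaussian random variables have moments of all orders, the mean $m(f) \coloneqq \int_X f(\varphi) \, \D\mu(\varphi) = a_\mu(f)$ and the covariance $B(f,g) \coloneqq \int_X (f-m(f))(g-m(g)) \, \D\mu(\varphi)$ are well-defined. Linearity of $m$ follows from linearity of the integral, while bilinearity and symmetry of $B$ are immediate from its definition; positivity of the diagonal follows from $B(f,f) = \mathrm{Var}_\mu(f) \geq 0$. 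The characteristic function of a Gaussian random variable on $\R$ with mean $m(f)$ and variance $B(f,f)$, evaluated at $t=1$, gives
\begin{equation}
\hat{\mu}(f) = \int_X \E^{\I f(\varphi)} \, \D\mu(\varphi) = \int_\R \E^{\I x} \, \D(f_*\mu)(x) = \exp\!\left[-\tfrac{1}{2} B(f,f) + \I m(f)\right] \; .
\end{equation}

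For the converse, assume $\mu$ is a Radon probability measure on $X$ with $\hat{\mu}$ of the stated form. Fix $f \in X^*$ and consider the pushforward $f_*\mu$ on $\R$. Its characteristic function is
\begin{equation}
\widehat{f_*\mu}(t) = \int_\R \E^{\I t x} \, \D(f_*\mu)(x) = \int_X \E^{\I (tf)(\varphi)} \, \D\mu(\varphi) = \hat{\mu}(tf) = \exp\!\left[-\tfrac{t^2}{2} B(f,f) + \I t m(f)\right] \; ,
\end{equation}
where I used that $tf \in X^*$ together with the bilinearity of $B$ and linearity of $m$ built into the assumed form of $\hat{\mu}$. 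This is precisely the characteristic function of a Gaussian measure on $\R$ with mean $m(f)$ and variance $B(f,f) \geq 0$ (with the degenerate case $B(f,f) = 0$ giving a Dirac mass at $m(f)$, which is also Gaussian in the sense defined earlier). By uniqueness of the characteristic function for finite Borel measures on $\R$, $f_*\mu$ is Gaussian. Hence every $f \in X^*$ is a Gaussian random variable, so $\mu$ is a Gaussian measure by definition.

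The only genuinely nontrivial input is the uniqueness of a (Radon) measure determined by its Fourier transform in the one-dimensional setting, which is classical. The main conceptual point is that the factorization of $\hat{\mu}(tf)$ as a Gaussian characteristic function in $t$ is exactly what is needed to identify the one-dimensional marginals, and the prescribed algebraic form of $\hat{\mu}$ on $X^*$ provides precisely this dependence via the bilinearity of $B$ and linearity of $m$. No obstacle arises from the infinite dimensionality of $X$, since the entire argument is tested against a single fixed $f \in X^*$ at a time.
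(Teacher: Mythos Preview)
Your proof is correct and follows the standard argument. Note, however, that the paper does not actually prove this theorem; it is simply quoted from \cite[Thm.~2.2.4]{Bogachev2015} without proof, so there is no ``paper's own proof'' to compare against. Your reduction to one-dimensional marginals via the change-of-variables formula and the uniqueness of the Fourier transform on $\R$ is exactly the approach taken in the cited reference.
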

    \begin{remark}
        Following Definition \ref{def:mean_covariance}, we identify $B$ with the covariance $\mathrm{Cov}$ and $m$ with the mean $a_\mu$, see also \cite[Prop.~2.2]{Bogachev2014} and \cite[Thm.~2.2.4]{Bogachev2015}.
    \end{remark}
    
    As an important special case, let us consider Gaussian measures on separable Hilbert spaces, $X = \mathcal{H}$. The following theorem establishes a one-to-one correspondence between pairs $(m,\hat{C})$, $m \in \mathcal{H}$, $\hat{C} \in \mathfrak{T} (\mathcal{H})$ self-adjoint and positive, and Gaussian measures on $\mathcal{H}$.
    \begin{theorem}[Mourier \cite{Mourier1953}]\label{th:Mourier}
        Let $\mu$ be a Gaussian measure on a separable Hilbert space $\mathcal{H}$. Then, there exists an element $m \in \mathcal{H}$, called the mean of $\mu$, and a positive self-adjoint trace class operator $\hat{C} : \mathcal{H} \to \mathcal{H}$, called the covariance operator of $\mu$, such that, for all $f,g \in \mathcal{H}$,
        \begin{align}
            m (f) &= \braket{m,f}_\mathcal{H} = \int_\mathcal{H} \varphi(f) \; \D \mu (\varphi) \; , \\
            \mathrm{Cov} (f,g) &= \braket{f, \hat{C} g}_\mathcal{H} = \int_\mathcal{H} \left( \varphi (f) - m (f) \right) \, \left( \varphi (g) - m (g) \right) \; \D \mu (\varphi) \; .
        \end{align}
        Furthermore, $\mu$ is uniquely determined by $m$ and $\hat{C}$.
        
        Conversely, for every pair of $m \in \mathcal{H}$ and positive self-adjoint trace class operator $\hat{C}$, there exists a unique Gaussian measure $\mu$ on $\mathcal{H}$ such that $m$ is the mean of $\mu$ and $\hat{C}$ is its covariance operator.
    \end{theorem}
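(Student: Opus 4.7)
The plan is to prove Mourier's Theorem in both directions, using as the central tool the general Fourier-transform characterisation of Gaussian Radon measures on locally convex spaces (Theorem \ref{th:FT_Gaussian_measure}). For the forward direction, starting from a Gaussian Radon measure $\mu$ on a separable Hilbert space $\mathcal{H}$, I would first identify $m$ and $\hat{C}$ via Riesz's representation theorem, exploiting the canonical isomorphism $\mathcal{H}^\ast \simeq \mathcal{H}$. Once the second norm moment of $\mu$ is known to be finite (see below), the map $f \mapsto \int_\mathcal{H} \braket{\varphi,f}_\mathcal{H} \, \D \mu(\varphi)$ is a bounded linear functional on $\mathcal{H}$, yielding a unique $m \in \mathcal{H}$. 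The centred second moment $(f,g) \mapsto \int (\braket{\varphi-m,f})(\braket{\varphi-m,g}) \, \D \mu(\varphi)$ is a bounded, symmetric, positive bilinear form, hence representable as $\braket{f,\hat{C} g}_\mathcal{H}$ for a unique positive self-adjoint bounded operator $\hat{C}$.

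The crucial step, and the main obstacle, is upgrading $\hat{C}$ from bounded positive to trace class. Fixing any orthonormal basis $\{e_n\}_{n=1}^\infty$ of $\mathcal{H}$, Tonelli's theorem and Parseval's identity give
\begin{equation*}
    \mathrm{Tr}\,\hat{C} = \sum_{n=1}^\infty \braket{e_n,\hat{C} e_n}_\mathcal{H} = \sum_{n=1}^\infty \int_\mathcal{H} |\braket{\varphi-m,e_n}_\mathcal{H}|^2 \, \D \mu(\varphi) = \int_\mathcal{H} \|\varphi-m\|_\mathcal{H}^2 \, \D \mu(\varphi) \; ,
\end{equation*}
so the problem reduces to showing finiteness of the second norm moment of $\mu$. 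This is precisely Fernique's theorem: any Gaussian Radon measure on a locally convex space has finite exponential moments of every continuous seminorm, and the standard proof proceeds via a rotation/symmetrisation argument applied to two independent copies of $\varphi$, exploiting the fact that, by Theorem \ref{th:FT_Gaussian_measure}, the distribution of $(\varphi - \varphi')/\sqrt{2}$ and $(\varphi + \varphi')/\sqrt{2}$ are again Gaussian with the same covariance. Alternatively, one can argue by tightness of the Radon measure $\mu$ combined with the fact that compact subsets of $\mathcal{H}$ have uniformly summable coordinate tails in any orthonormal basis.

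For the converse direction, I would diagonalise the given trace class operator by the spectral theorem, writing $\hat{C} e_n = \lambda_n e_n$ with $\lambda_n \geq 0$ and $\sum_n \lambda_n < +\infty$. On an auxiliary probability space, let $\{\xi_n\}_{n=1}^\infty$ be independent one-dimensional centred Gaussians with $\mathrm{Var}(\xi_n) = \lambda_n$, and consider the random element
\begin{equation*}
    \Phi \coloneqq m + \sum_{n=1}^\infty \xi_n \, e_n \; .
\end{equation*}
The series converges in $\mathcal{H}$ almost surely because $\EE \sum_n |\xi_n|^2 = \sum_n \lambda_n < +\infty$, which by monotone convergence forces $\sum_n |\xi_n|^2 < +\infty$ almost surely. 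Define $\mu$ as the push-forward of the auxiliary measure along $\Phi$. A direct calculation of its Fourier transform, using independence of the $\xi_n$ and one-dimensional Gaussianity of each factor, produces
\begin{equation*}
    \hat{\mu}(f) = \exp \! \left[ - \tfrac{1}{2} \braket{f,\hat{C} f}_\mathcal{H} + \I \braket{m,f}_\mathcal{H} \right] \; ,
\end{equation*}
which matches the canonical form \eqref{eq:FT_Gaussian_measure}. Hence $\mu$ is Gaussian with the prescribed mean and covariance operator, and uniqueness follows from injectivity of the Fourier transform on Radon measures on $\mathcal{H}$.
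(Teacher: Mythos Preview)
The paper does not actually prove Theorem~\ref{th:Mourier}; it states the result and then refers the reader to \cite[Theorem~2.3.1]{Bogachev2015}, \cite[Theorem~IV.2.4]{Vakhania1987}, and \cite[Sec.~2.3]{DaPrato2014}. So there is no in-paper proof to compare your proposal against.

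That said, your outline is sound and broadly matches the arguments found in those references. The forward direction via Riesz representation plus Fernique's theorem to obtain $\int \|\varphi - m\|^2 \, \D\mu < +\infty$ and hence $\mathrm{Tr}\,\hat{C} < +\infty$ is the standard route (this is essentially how Bogachev and Da~Prato--Zabczyk proceed). For the converse, your Karhunen--Lo\`eve-type construction $\Phi = m + \sum_n \xi_n e_n$ with $\mathrm{Var}(\xi_n) = \lambda_n$ is again the canonical one; the almost-sure convergence argument via $\EE\sum_n |\xi_n|^2 = \mathrm{Tr}\,\hat{C} < +\infty$ is correct, and the Fourier-transform computation identifies the measure via Theorem~\ref{th:FT_Gaussian_measure}. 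One minor point: you invoke finiteness of the second norm moment \emph{before} establishing boundedness of the mean functional, so you should make clear that Fernique applies directly to the uncentred measure (giving $\int \|\varphi\|^2 \, \D\mu < +\infty$), from which both the existence of $m \in \mathcal{H}$ and the trace-class property of $\hat{C}$ then follow.
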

    For a proof of this theorem, see, e.g., \cite[Theorem~2.3.1]{Bogachev2015} or \cite[Theorem~IV.2.4]{Vakhania1987}, as well as \cite[Sec.~2.3]{DaPrato2014}.
    
    The characterisation of a Gaussian measure in terms of its Fourier transform, as described in Theorem \ref{th:FT_Gaussian_measure} seems practical, and it would be convenient to define our field theory by writing down the Fourier transform of a Gaussian measure. However, there is, \emph{a priori}, no guarantee that a functional of the form \eqref{eq:FT_Gaussian_measure} is the Fourier transform of a Gaussian measure. That this is indeed the case for certain functionals on \emph{nuclear} spaces is the content of Minlos' theorem \cite{Minlos1959}, which we shall briefly discuss in the following.
    
    Minlos' Theorem is a generalisation of Bochner's Theorem \cite[Thm.~IX.9]{Reed1975}. Bochner's Theorem establishes a one-to-one correspondence between probability measures on $\R$ and continuous positive-definite functions $f$ on $\R$ such that $f(0)=1$. We first need the following
    \begin{definition}[{\cite[Def.~1.5.1]{Obata1994}}]
        Let $\mathfrak{X}$ be a nuclear space. A function $\mathcal{C} \, : \, \mathfrak{X} \to \C$ is called a characteristic functional if \emph{(i)} $\mathcal{C}$ is continuous, \emph{(ii)} $\mathcal{C}$ is of positive type, i.e., for all $n \in \N$, $\alpha_1, \ldots, \alpha_n \in \C$ and $f_1, \ldots, f_n \in \mathfrak{X}$, $\sum_{i,j = 1}^n \overline{\alpha_i} \alpha_j \, \mathcal{C} (f_i-f_j) \geq 0$ and \emph{(iii)} $\mathcal{C}$ is normalised, i.e., $\mathcal{C} (0) = 1$.
    \end{definition}
    We now state a version of Minlos' theorem that is particularly useful for this work.
    \begin{theorem}[{Minlos \cite{Minlos1959}, \cite[Thm.~4.3,~Thm.~4.4]{Vakhania1987}}]\label{th:Minlos}
        Let $\mathfrak{X}$ be a nuclear space. Every characteristic functional on $\mathfrak{X}$ is the Fourier transform of a Radon probability measure on $\mathfrak{X}^*_\beta$.
    \end{theorem}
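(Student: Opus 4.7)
The plan is to construct a cylindrical probability on $\mathfrak{X}^*_\beta$ whose Fourier transform reproduces $\mathcal{C}$ and then upgrade it to a genuine Radon measure by exploiting nuclearity. For every finite-dimensional subspace $V = \mathrm{span}(f_1,\ldots,f_n) \subset \mathfrak{X}$, the function $\mathcal{C}_V(t_1,\ldots,t_n) \coloneqq \mathcal{C}(t_1 f_1 + \ldots + t_n f_n)$ inherits continuity, positive-definiteness and normalization from $\mathcal{C}$, so Bochner's theorem on $\R^n$ produces a unique Radon probability measure $\mu_V$ on $\R^n$ with $\hat{\mu}_V = \mathcal{C}_V$. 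The family $\{\mu_V\}$ is consistent under inclusions $V \subset V'$, so Kolmogorov's extension theorem delivers a finitely additive cylindrical measure $\mu$ on $(\mathfrak{X}^*_\beta, \mathcal{E}(\mathfrak{X}^*_\beta))$ whose characteristic functional is $\mathcal{C}$.

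The crux is to show that $\mu$ is $\sigma$-additive, equivalently tight. By continuity of $\mathcal{C}$ at $0$, for each $\varepsilon > 0$ there is a continuous seminorm $p$ on $\mathfrak{X}$ with $|1 - \mathcal{C}(f)| < \varepsilon$ whenever $p(f) \leq 1$. Nuclearity now enters decisively: one can dominate $p$ by a continuous Hilbertian seminorm $q$ such that the canonical map $\mathfrak{X}_q \to \mathfrak{X}_p$ between the Hilbert completions is Hilbert–Schmidt (this is precisely the defining feature of nuclear spaces one should invoke). The Sazonov–Minlos criterion on a Hilbert space then applies: the restriction of $\mathcal{C}$ to $\mathfrak{X}_q$ is continuous in the Sazonov topology and hence is the Fourier transform of a genuine Radon probability measure $\mu_q$ on the dual Hilbert space $\mathfrak{X}_q^*$. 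Pushing $\mu_q$ forward under the continuous dual inclusion $\mathfrak{X}_q^* \hookrightarrow \mathfrak{X}^*_\beta$ yields a Radon probability measure $\mu$ on $\mathfrak{X}^*_\beta$ whose finite-dimensional marginals agree with those produced above, so $\hat{\mu} = \mathcal{C}$. Uniqueness follows from the fact that Radon measures on a locally convex space are determined by their restriction to $\mathcal{E}(\mathfrak{X}^*_\beta)$, which in turn is determined by the finite-dimensional marginals and hence by $\mathcal{C}$.

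The main obstacle is precisely the tightness step: turning the qualitative continuity of $\mathcal{C}$ at $0$ into a quantitative tail bound that concentrates $\mu$ on a countable union of compact ellipsoids in $\mathfrak{X}^*_\beta$. This is where nuclearity is indispensable, since on a generic locally convex space there is no reason a positive-definite continuous functional should be a Fourier transform of a $\sigma$-additive measure. Once the Hilbert–Schmidt factorization is in place, the classical Sazonov argument (Chebyshev-type estimate applied to $\int (1 - \mathcal{C}(tf))\,\D t$ integrated against Gaussian cylinder probabilities) produces the required ellipsoidal concentration, and the rest of the proof is essentially bookkeeping between the cylindrical and Borel $\sigma$-algebras on $\mathfrak{X}^*_\beta$.
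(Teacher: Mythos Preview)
The paper does not supply its own proof of this theorem; it is stated as a cited result (Minlos 1959 and \cite[Thm.~4.3,~Thm.~4.4]{Vakhania1987}) and used as a black box to guarantee existence of the Gaussian measures needed in Appendix~\ref{app:func_integrals}. There is therefore nothing in the paper to compare your argument against.

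That said, your outline is the standard route to Minlos' theorem and is essentially what one finds in the cited references: finite-dimensional Bochner plus consistency gives a cylindrical set function, and nuclearity is used to produce a Hilbert--Schmidt factorization through a Hilbertian seminorm so that the Sazonov criterion upgrades the cylindrical measure to a countably additive Radon measure on $\mathfrak{X}^*_\beta$. Two small points of phrasing: what you call ``Kolmogorov's extension theorem'' at the cylindrical stage is really just the assembly of the consistent family $\{\mu_V\}$ into a finitely additive cylinder-set function (no $\sigma$-additivity yet, which you correctly identify as the crux); and the push-forward step should be accompanied by the observation that bounded sets in $\mathfrak{X}_q^*$ map to relatively compact sets in $\mathfrak{X}^*_\beta$ (again a consequence of nuclearity), which is what makes the pushed-forward measure Radon. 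With those caveats your sketch is sound.
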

    Minlos' Theorem provides a practical way to construct a Gaussian field theory. More precisely, we will construct characteristic functionals of the form \eqref{eq:FT_Gaussian_measure} on a nuclear space and Minlos' Theorem then assures the existence of a Gaussian measure with desired mean and covariance on its topological dual. We first need the following result.
    \begin{lemma}\label{lem:characteristic_functional}
        Let $\mathcal{H}$ be a real Hilbert space with inner product $\braket{.,.}$ and $\mathfrak{X}$ a nuclear space such that the inclusion $\iota : \mathfrak{X} \hookrightarrow \mathcal{H}$ is continuous. Suppose $m \in \mathfrak{X}^*$ and $T \in \mathfrak{B} (\mathcal{H})$ is self-adjoint and positive. Then, the functional $\mathcal{C} : \mathfrak{X} \to \C$ defined by
        \begin{equation}\label{eq:characteristic_functional}
            \mathcal{C} (f) = \exp \left[ - \frac{1}{2} \braket{T (\iota f), \iota f} + \I m (f) \right] \; , \qquad f \in \mathfrak{X} \; ,
        \end{equation}
        is a characteristic functional.
    \end{lemma}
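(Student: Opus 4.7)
The plan is to verify in turn the three defining properties of a characteristic functional. Normalization is immediate: $\mathcal{C}(0) = e^0 = 1$. For continuity, I would observe that $\mathcal{C}$ is a composition of continuous maps---the inclusion $\iota : \mathfrak{X} \to \mathcal{H}$ is continuous by hypothesis, the quadratic form $h \mapsto \braket{T h, h}_\mathcal{H}$ is continuous on $\mathcal{H}$ because $T$ is bounded and the inner product is jointly continuous, $m \in \mathfrak{X}^*$ is continuous by assumption, and the complex exponential is continuous on $\C$---so $\mathcal{C}$ itself is continuous from $\mathfrak{X}$ to $\C$.

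The main content is the positive-type condition, which I would establish by reducing to finite dimensions and invoking (the easy direction of) Bochner's theorem on $\R^n$. Fix $n \in \N$, $f_1, \ldots, f_n \in \mathfrak{X}$, and $\alpha_1, \ldots, \alpha_n \in \C$, and set
\begin{equation*}
    B_{ij} \coloneqq \braket{T \iota f_i, \iota f_j}_\mathcal{H} \; , \qquad \mu_i \coloneqq m(f_i) \; .
\end{equation*}
Self-adjointness of $T$ together with $\mathcal{H}$ being real makes $B = (B_{ij})$ a real symmetric matrix, and positivity of $T$ makes it positive semi-definite, since for every $c \in \R^n$, with $h \coloneqq \sum_{k=1}^n c_k \iota f_k \in \mathcal{H}$,
\begin{equation*}
    \sum_{i,j=1}^n c_i c_j B_{ij} = \braket{T h, h}_\mathcal{H} \geq 0 \; .
\end{equation*}

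Consequently there exists a (possibly degenerate) Gaussian measure $\nu$ on $\R^n$ with mean $\boldsymbol{\mu} = (\mu_1, \ldots, \mu_n)$ and covariance matrix $B$, whose Fourier transform reads
\begin{equation*}
    \hat{\nu}(\boldsymbol{t}) = \exp \left[ - \tfrac{1}{2}\, \boldsymbol{t}^{\mathsf{T}} B\, \boldsymbol{t} + \I\, \boldsymbol{\mu} \cdot \boldsymbol{t} \right] \, , \qquad \boldsymbol{t} \in \R^n \; .
\end{equation*}
A direct expansion using bilinearity of $\braket{T \iota \cdot,\, \iota \cdot}_\mathcal{H}$ and linearity of $m$ shows that $\mathcal{C}(f_i - f_j) = \hat{\nu}(e_i - e_j)$, where $\{e_i\}$ is the standard basis of $\R^n$. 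Since the Fourier transform of any Borel probability measure on $\R^n$ is of positive type, it follows that
\begin{equation*}
    \sum_{i,j=1}^n \overline{\alpha_i}\, \alpha_j\, \mathcal{C}(f_i - f_j) \,=\, \sum_{i,j=1}^n \overline{\alpha_i}\, \alpha_j\, \hat{\nu}(e_i - e_j) \,\geq\, 0 \; ,
\end{equation*}
which is exactly the positivity inequality for $\mathcal{C}$. I do not expect a genuine obstacle; the only subtlety worth flagging is that $B$ may merely be positive semi-definite (for instance if the $\iota f_i$ are linearly dependent or partially annihilated by $T$), which is handled either by working directly with degenerate Gaussian measures on $\R^n$, or by regularising $B \mapsto B + \varepsilon I_n$, applying Bochner to the resulting non-degenerate Gaussian, and passing $\varepsilon \to 0^+$ in the pointwise-continuous positivity inequality.
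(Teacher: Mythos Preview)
Your proof is correct, and it takes a genuinely different route from the paper's own argument for the positive-type condition. The paper factors out phases to rewrite $\sum_{i,j}\overline{\alpha_i}\alpha_j\,\mathcal{C}(f_i-f_j)=\sum_{i,j}\overline{\beta_i}\beta_j\exp[A_{ij}]$ with $A_{ij}=\braket{f_i,Tf_j}$, verifies (via complexification of $\mathcal{H}$) that $A$ is positive semi-definite, and then invokes the Schur product theorem to conclude that the entrywise exponential $(\exp[A_{ij}])_{ij}$ is positive semi-definite. Your argument instead packages the finite family $\{f_1,\dots,f_n\}$ into a genuine Gaussian measure on $\R^n$ and reads off positivity from the elementary fact that Fourier transforms of probability measures are positive definite. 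Your approach is more probabilistic in spirit and arguably more transparent in this context, since it sidesteps both the complexification step and the appeal to the Schur product theorem; the paper's approach, on the other hand, is purely algebraic and self-contained, not relying on the existence of (possibly degenerate) finite-dimensional Gaussian measures or on Bochner's theorem. Both are standard and complete; your remark about handling the degenerate case via $B\mapsto B+\varepsilon I_n$ is a clean way to avoid any fuss, though working directly with degenerate Gaussians is equally unproblematic.
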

    \begin{proof}
        It is clear that $\mathcal{C}$ normalised and continuous. We still need to show that $\mathcal{C}$ is of positive type. Fix $n \in \N$, $\alpha_1, \ldots, \alpha_n \in \C$ and $f_1, \ldots, f_n \in \mathfrak{X}$. Then,
        \begin{equation}\label{eq:characteristic_functional_proof}
            \sum_{i,j = 1}^n \overline{\alpha_i} \alpha_j \, \mathcal{C} (f_i-f_j) = \sum_{i,j = 1}^n \overline{\beta_i} \beta_j \, \exp \left[ A_{ij} \right] \; ,
        \end{equation}
        where $\beta_i = \alpha_i \, \overline{\mathcal{C} (f_i)} \in \C$ and $A_{ij} = \braket{f_i, T f_j}$ (we suppress the inclusion $\iota$ for the remainder of the proof). Let $\mathcal{H}^\C$ be the complexification of $\mathcal{H}$ and denote by $\braket{.,.}_\C$ the canonical inner product in $\mathcal{H}^\C$ \cite{Sharma1988}. Furthermore, let $T^\C$ be the complexification of $T$. We now can write, for all $z \in \C^n$,
        \begin{align*}
            \sum_{i, j = 1}^n \overline{z_i} z_j A_{ij} &= \sum_{i, j = 1}^n \overline{z_i} z_j \braket{f_i, T f_j} \\
            &= \sum_{i, j = 1}^n \braket{z_i f_i, T^\C (z_j f_j)}_\C \\
            &= \left\langle \sum_{i = 1}^n z_i f_i , T^\C \left( \sum_{j = 1}^n z_j f_j \right) \right\rangle_\C \; . \numberthis
        \end{align*}
        By assumption, $T$ is positive and so is $T^\C$. By defining $h = \sum_{i = 1}^n z_i f_i \in \mathcal{H}^\C$, we see that
        \begin{equation}
            \sum_{i, j = 1}^n \overline{z_i} z_j A_{ij} = \braket{h, T^\C h}_\C \geq 0 \; .
        \end{equation}
        
        Therefore, $A$ is a positive matrix. The Schur product theorem \cite[Thm.~VII]{Schur1911} states that the element-wise product of two positive matrices is again a positive matrix. Hence, using the power series representation of the exponential function, we see that the matrix $M_{ij} = \exp [A_{ij}]$ is positive, i.e.,
        \begin{equation}
            \sum_{i,j = 1}^n \overline{\alpha_i} \alpha_j \, \mathcal{C} (f_i-f_j) = \sum_{i,j = 1}^n \overline{\beta_i} \beta_j \, \exp \left[ A_{ij} \right] \geq 0 \; .
        \end{equation}
        Therefore, $\mathcal{C}$ is of positive type.
    \end{proof}

    By Minlos' theorem, a characteristic functional of the form \eqref{eq:characteristic_functional} is the Fourier transform of a Radon probability measure on $\mathfrak{X}^*_\beta$. Due to the self-adjointness and positivity of $T$, the bilinear form $\braket{\iota ., T (\iota .)}$ on $\mathfrak{X} \times \mathfrak{X}$ is symmetric and the corresponding quadratic form is positive. Furthermore, the functional $m$ on $\mathfrak{X}$ is linear. Thus, by Theorem \ref{th:FT_Gaussian_measure}, a characteristic functional of the form \eqref{eq:characteristic_functional} is the Fourier transform of a Radon \emph{Gaussian} measure on $\mathfrak{X}^*_\beta$. The covariance operator of such a Gaussian measure $\mu$ is given by\footnote{Note that for the case we consider in the main part of the work, i.e., $\mathfrak{X} = \mathcal{D} (\Omega)$, $R_\mu$ is a continuous linear map from $\mathcal{D} (\Omega)$ to $\mathcal{D}^*_\beta (\Omega)$. Therefore, by the Schwartz Kernel Theorem \cite[Thm.~51.7]{Treves2006}, $R_\mu$ has a distributional kernel, which is of course just the Green's function of the operator $- \Delta + m^2$ obeying some boundary conditions.} $R_\mu = {^\mathrm{t} \iota} \circ T \circ \iota : \mathfrak{X} \to \mathfrak{X}^*$, $f \mapsto T (\iota f) \circ \iota$, where $T (\iota f)$ is understood as a continuous linear functional on $\mathcal{H}$. In particular, the covariance induced by $R_\mu$ is given by
    \begin{equation}
        \mathrm{Cov} (f,f) = R_\mu (f) (f) = \braket{T (\iota f), \iota f} \; , \qquad f \in \mathfrak{X} \; .
    \end{equation}
    The mean of $\mu$ is given by $a_\mu = m \in \mathfrak{X}^*$.
    
    For Gaussian measures which are defined by characteristic functionals of the form \eqref{eq:characteristic_functional}, we will, for simplicity, call the self-adjoint and positive operator $T \in \mathfrak{B} (\mathcal{H})$ its covariance operator. Furthermore, we call $T^{-1}$ (if it exists) the precision operator of the measure $\mu$. Throughout this paper, following the notation for Gaussian measures on $\R^n$, we denote a Gaussian measure which is defined by the characteristic functional \eqref{eq:characteristic_functional} by $\mathcal{N} (m,T)$.
    
    To summarize this Section, we have shown that the functional \eqref{eq:characteristic_functional} is a characteristic functional. Minlos' Theorem then tells us that this characteristic functional is the Fourier transform of a Radon probability measure on the strong topological dual $\mathfrak{X}^*_\beta$. Furthermore, as the operator $T$ is positive and self-adjoint, this probability measure is, by Theorem \ref{th:FT_Gaussian_measure}, a Gaussian measure. Thus, we can construct centred Gaussian measures by specifying suitable covariance operators on a real Hilbert space. We demonstrate this procedure by constructing the Gaussian measures needed in the main part of this work. Let $\mathfrak{X} = \mathcal{D} (\Omega)$, the $LF$-space of test functions supported in a bounded open subset $\Omega \subset \R^d$. Furthermore, choose $\mathcal{H} = L^2 (\Omega)$ and let $T = \hat{G}_\mathrm{X}$ be the inverse of some self-adjoint extensions of $(-\Delta + m^2)|_{C_0^\infty (\Omega)}$ corresponding to the choice of ``$\mathrm{X}$''-boundary conditions on $\partial \Omega$, for details see Section \ref{sec:covariance_operators}. By the above discussion, there exists a centred Radon Gaussian measure $\mu = \mathcal{N} (0, \hat{G}_\mathrm{X})$ on $\mathcal{D}^*_\beta (\Omega)$ with covariance
    \begin{equation}
        \mathrm{Cov} (f,f) = \braket{\hat{G}_\mathrm{X} (\iota f), \iota f}_{L^2 (\Omega)} \; , \qquad f \in \mathcal{D} (\Omega) \; .
    \end{equation}
    We call $\mathcal{N} (0, \hat{G}_\mathrm{X})$ the free scalar field theory of mass $m$ over $\Omega$ with ``$\mathrm{X}$''-boundary conditions.
    
    \section{Equivalence of Gaussian Measures}\label{app:equivalence}
    
    In this Section, we formulate necessary and sufficient conditions for two centred Gaussian measures $\mu = \mathcal{N} (0,\hat{C}_\mu)$ and $\nu = \mathcal{N} (0,\hat{C}_\nu)$ to be equivalent. Recall from Section \ref{sec:info_theory} that two measures are called equivalent if they have the same null sets or -- in the language of probability theory -- have the same set of impossible events. A Theorem by Feldman and H\'ajek (Theorem \ref{th:FeldmanHajek} below) states that two Gaussian measures are either equivalent of mutually singular. The concepts of equivalence and mutual singularity of Gaussian measures are important for our work as the relative entropy between two Gaussian measures is finite precisely when the measures are equivalent and is defined to be $+ \infty$ otherwise. An interpretation of this result in terms of the distinguishability of two probability measures is outlined in Section \ref{sec:info_theory}.

    Throughout this Section we make two assumptions. First, we assume we have a double $(\mathfrak{X}, \mathcal{H})$, where $\mathcal{H}$ is a real Hilbert space with inner product $\braket{.,.}_\mathcal{H}$ and norm $\|.\|_\mathcal{H}$ and $\mathfrak{X} \subset \mathcal{H}$ is a nuclear space such that the inclusion $\iota : \mathfrak{X} \hookrightarrow \mathcal{H}$ is continuous with dense image. The choice we need in the main part of this paper, namely $\mathcal{H} = L^2 (\Omega)$ and $\mathfrak{X} = \mathcal{D} (\Omega)$ for some bounded open $\Omega \subset \R^d$, fulfils this assumption. Secondly, we assume $\hat{C} : \mathcal{H} \to \mathcal{H}$ is a self-adjoint, strictly positive (hence injective) and compact operator. Furthermore, its inverse $\hat{C}^{-1}$ is an unbounded, densely defined self-adjoint operator that is bounded from below by some $c > 0$. As discussed in Section \ref{sec:covariance_operators}, the operators $\hat{G}_\mathrm{X}$ on $L^2 (\Omega)$, which are the compact inverses of self-adjoint extensions of $(-\Delta + m^2)|_{C_0^\infty (\Omega)}$ corresponding to ``$\mathrm{X}$''-boundary conditions, have this property.
    
    We recall Lemma \ref{lem:characteristic_functional}, which states that, for every $m \in \mathfrak{X}^*$,
    \begin{equation}\label{eq:charfunc}
            \mathcal{C} (f) = \exp \left[ - \frac{1}{2} \braket{\hat{C} (\iota f), \iota f} + \I m (f) \right] \; , \qquad f \in \mathfrak{X} \; ,
        \end{equation}
    is a characteristic functional. Hence, by Minlos' Theorem (Theorem \ref{th:Minlos}), $\mathcal{C} (f)$ is the Fourier transform of a Radon Gaussian measure $\mu = \mathcal{N} (m, \hat{C})$ on $\mathfrak{X}^*_\beta$.
    
    The following result from Gaussian measure theory is central to our work.
    \begin{theorem}[{Feldman-H\'ajek \cite{Feldman1958,Hajek1958}, \cite[Thm.~2.7.2]{Bogachev2015}}]\label{th:FeldmanHajek}
        Let $X$ be a locally convex space and $\mu$ and $\nu$ two Gaussian measures on $X$. Then $\mu$ and $\nu$ are either equivalent or mutually singular.
    \end{theorem}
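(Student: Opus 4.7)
The plan is to establish the dichotomy by exhibiting, for any pair of Gaussian measures $\mu, \nu$ on $X$, a measurable set whose $\mu$- and $\nu$-measures are simultaneously $0$ or $1$, and then to use a zero-one law to conclude that exactly one of the two alternatives (equivalence or mutual singularity) must hold.

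First I would reduce to the centered case. Write $\mu = T_{a_\mu} \mu_0$ and $\nu = T_{a_\nu} \nu_0$, where $\mu_0, \nu_0$ are the centered versions and $T_a$ denotes translation by $a$. The Cameron-Martin theorem characterises when a translate of a Gaussian measure is equivalent to the original: $T_a \mu_0 \sim \mu_0$ iff $a \in \mathsf{H}_{\mu_0}$, and otherwise $T_a \mu_0 \perp \mu_0$. So by transitivity of equivalence and the fact that mutual singularity is preserved under translation, the problem reduces to proving the dichotomy for two centered Gaussian measures $\mu_0, \nu_0$ with covariance forms $B_\mu, B_\nu$ on $X^* \times X^*$.

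Next, I would invoke the zero-one law for Gaussian measures: any measurable linear (or, more generally, affine) subspace of $X$ has $\mu_0$-measure equal to $0$ or $1$, and likewise for $\nu_0$. This follows from the fact that Gaussian measures are quasi-invariant precisely along their Cameron-Martin directions and from a standard argument using the symmetry $\varphi \mapsto -\varphi$ together with convolution of $\mu_0$ with itself. To obtain a concrete set on which to test the dichotomy, I would then perform a simultaneous spectral reduction of $B_\mu$ and $B_\nu$: using the continuous inclusion $\iota : \mathfrak{X} \hookrightarrow \mathcal{H}$ and a theorem of Hajek-Feldman type (see the discussion of Lemma \ref{lem:equivalence_mass_independent}), one can find a countable orthonormal system in $\mathcal{H}$ that simultaneously diagonalizes the two covariance operators up to a Hilbert-Schmidt perturbation. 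Projecting onto the span of this system reduces the problem to comparing two centered Gaussian product measures $\bigotimes_n \mathcal{N}(0,\sigma_n^2)$ and $\bigotimes_n \mathcal{N}(0,\tau_n^2)$ on $\R^\infty$.

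Finally, I would apply Kakutani's dichotomy theorem for infinite products of probability measures: for such product measures either $\prod_n H(\mu_n, \nu_n) > 0$ (in which case the two products are equivalent) or $\prod_n H(\mu_n, \nu_n) = 0$ (in which case they are mutually singular), where $H(\cdot,\cdot)$ is the Hellinger affinity. A direct calculation with one-dimensional Gaussians yields $H(\mathcal{N}(0,\sigma_n^2), \mathcal{N}(0,\tau_n^2)) = \bigl(2\sigma_n\tau_n/(\sigma_n^2+\tau_n^2)\bigr)^{1/2}$, so the convergence or divergence of the Hellinger product is governed by an explicit Hilbert-Schmidt-type series in the ratios $\sigma_n^2/\tau_n^2$. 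The hard part of the whole argument is the simultaneous diagonalization step: in the abstract locally convex setting, one must carefully transfer the comparison between $\mu_0$ and $\nu_0$ on $\mathfrak{X}^*_\beta$ to a genuine comparison on a common Hilbert space where spectral theory applies. Once this reduction is in place, pulling the Kakutani alternative back to $X$ and combining it with the Cameron-Martin step for the means yields the full dichotomy.
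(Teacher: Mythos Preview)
The paper does not prove this theorem; it is stated with references to the original papers of Feldman and H\'ajek and to Bogachev's monograph, and used as a black box. So there is no paper proof to compare against, and your proposal must be judged on its own merits.

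Your overall architecture --- reduce to the centered case, reduce to an infinite product of one-dimensional Gaussians, then apply Kakutani's dichotomy --- is indeed one of the standard proofs, and the Hellinger computation you quote is correct. The gap is in the simultaneous-diagonalization step, which you flag as ``the hard part'' but do not set up correctly. Your reference to Lemma~\ref{lem:equivalence_mass_independent} is misplaced: that lemma treats the very special situation where the two precision operators already share an eigenbasis, which is precisely what you are trying to \emph{establish} in general. You also import the Hilbert space $\mathcal{H}$ and the inclusion $\iota$ from the paper's specific setting, but for an abstract locally convex $X$ no such structure is given a priori. The correct way to organize the reduction is: first compare the Cameron--Martin spaces $\mathsf{H}_{\mu_0}$ and $\mathsf{H}_{\nu_0}$. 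If they do not coincide as sets, mutual singularity follows directly from the zero--one law (no diagonalization needed). If they do coincide, the two Cameron--Martin norms are equivalent, hence related by a bounded, boundedly invertible, positive self-adjoint operator on $\mathsf{H}_{\mu_0}$, and it is the spectral decomposition of \emph{that} operator which yields the product-measure representation on which Kakutani applies. Your treatment of the means is also a bit quick: once the centered dichotomy is in hand, you still need to argue that $\mu\sim\nu$ iff $\mu_0\sim\nu_0$ \emph{and} $a_\mu - a_\nu$ lies in the common Cameron--Martin space, with singularity otherwise; the Cameron--Martin theorem gives one half of this, but the combination with differing covariances requires a short separate argument.
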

    
    Given a centred Radon Gaussian measure $\mu$ on a locally convex space $X$, we define the Hilbert space $L^2 (\mu) \coloneqq L^2 (X, \mu)$ of equivalence classes of $\mu$-measurable functions which are square-integrable with respect to $\mu$ in the usual way. By the definition of a Gaussian measure, for every $f \in X^*$, the push-forward measure $f_* \mu$ is a Gaussian measure on $\R$. By the change of variables formula \cite[Eq.~(0.1)]{Bogachev2014}, we see that
    \begin{equation}
        \| f \|_{L^2 (\mu)}^2 = \int_X \left( f (\varphi) \right)^2 \, \D \mu (\varphi) = \int_\R x^2 \; \D(f_* \mu) (x) < + \infty \; .
    \end{equation}
    Thus, every (equivalence class of) $f \in X^*$ is also an element of $L^2 (\mu)$. For the discussion of the equivalence of two Gaussian measures as well as for the derivation of an explicit formula for the relative entropy, we need two Hilbert spaces which characterize a Gaussian measure. These are the reproducing kernel Hilbert space and the Cameron-Martin space, which we introduce now.
    \begin{definition}[{\cite[Ch.~5]{Bogachev2014}, \cite[Sec.~2.2]{Bogachev2015}}]
        Let $\mu$ be a Radon Gaussian measure on a locally convex space $X$. The closure of the set $X^*$ with respect to the norm of $L^2 (\mu)$ and equipped with the $L^2 (\mu)$-inner product is called the \emph{reproducing kernel Hilbert space (RKHS)} of the measure $\mu$ and is denoted by $X^*_\mu$. The elements of $X^*_\mu$ are called the $\mu$-measurable linear functionals. 
    \end{definition}
    For the specific example of the centred Gaussian measure $\mu = \mathcal{N} (0, \hat{C})$, we note that, for any $f, g \in \mathfrak{X}$,
    \begin{equation}
            \braket{ f, g }_{L^2(\mu)} = \int_{\mathfrak{X}^*_\beta} f (\varphi) g (\varphi) \; \D \mu (\varphi) = R_\mu (f) (g) = \braket{\hat{C}^{1/2} f, \hat{C}^{1/2} g}_\mathcal{H} \; ,
    \end{equation}
    where we suppressed the inclusion of $\mathfrak{X}$ into $\mathcal{H}$. The RKHS of the Gaussian measure $\mu = \mathcal{N} (0, \hat{C})$, denoted $\mathfrak{X}_\mu$, is then the closure of the set $\mathfrak{X}$ in the norm $\|\hat{C}^{1/2} .\|_\mathcal{H}$ equipped with the inner product $\braket{\hat{C}^{1/2} ., \hat{C}^{1/2} .}_\mathcal{H}$.
    
    \begin{definition}[{cf. \cite[Def.~3.24]{Hairer2009}}]
        Let $\mu$ be a Radon Gaussian measure on a locally convex space $X$. Consider the set
        \begin{equation}
            \mathring{\mathsf{H}}_\mu \coloneqq \left\{ \varphi \in X \, : \, \exists \, \hat{\varphi} \in X^* \text{ such that } \varphi = R_\mu (\hat{\varphi}) \right\} = R_\mu [ X^* ] \subset X \; ,
        \end{equation}
        i.e., the image of $X^*$ under the map $R_\mu$. We define on $\mathring{\mathsf{H}}_\mu \times \mathring{\mathsf{H}}_\mu$ the inner product $\llangle \varphi, \psi \rrangle \coloneqq R_\mu (\hat{\varphi}) (\hat{\psi})$ and norm $\| \varphi \|_\mu^2 = \llangle \varphi, \varphi \rrangle$. The completion of $\mathring{\mathsf{H}}_\mu$ with respect to $\| . \|_\mu$, equipped with the inner product $\llangle ., . \rrangle$, is called the \emph{Cameron-Martin space (CMS)} of $\mu$ and is denoted by $\mathsf{H}_\mu$. One can show that $\mathsf{H}_\mu = R_\mu [X^*_\mu]$ \cite[Thm.~3.2.3]{Bogachev2015}.
    \end{definition}
    For the specific example of the Gaussian measure $\mu = \mathcal{N} (0, \hat{C})$, we note that $R_\mu = {^\mathrm{t} \iota} \circ \hat{C} \circ \iota$. In the following, we will suppress the inclusion $\iota$ of $\mathfrak{X}$ into $\mathcal{H}$. For any $\varphi, \psi \in \hat{C} [\mathfrak{X}]$, we define the bilinear form $ (\varphi, \psi) \mapsto \llangle \varphi, \psi \rrangle$ by
    \begin{equation}
        \llangle \varphi, \psi \rrangle = \braket{\hat{C}^{-1/2} \varphi , \hat{C}^{-1/2} \psi}_\mathcal{H} = \braket{\hat{C} \hat{\varphi} , \hat{\psi}}_\mathcal{H} = R_\mu (\hat{\varphi}) (\hat{\psi}) \; .
    \end{equation}
     The CMS $\mathsf{H}_\mu$ of the Gaussian measure $\mu = \mathcal{N} (0, \hat{C})$ is thus the closure of $\mathring{\mathsf{H}}_\mu = \hat{C} [\mathfrak{X}]$ with respect to the norm $\|.\|^2_\mu = \llangle .,. \rrangle = \braket{\hat{C}^{-1/2} ., \hat{C}^{-1/2} .}_\mathcal{H} = \|\hat{C}^{-1/2} .\|_\mathcal{H}^2$, equipped with the inner product $\llangle .,. \rrangle$.

     Before we continue with the derivation of conditions for equivalence of Gaussian measures, it is instructive to consider explicit examples of reproducing kernel Hilbert spaces and Cameron-Martin spaces encountered in the main part of this work. Specifically, consider the centred Gaussian measures $\mu = \mathcal{N} (0,\hat{G}_\mathrm{X})$, where $\mathrm{X} \in \{0, \mathrm{D}\}$, for definitions see Section \ref{sec:covariance_operators}. In other words, we study the RKHS and CMS of a free scalar field theory of mass $m > 0$ with free or Dirichlet boundary conditions.

     We start with free boundary conditions. By definition, the RKHS of $\mu_0 = \mathcal{N} (0,\hat{G}_0)$, $\mathfrak{X}_{\mu_0}$, is the closure of $C_0^\infty (\Omega)$ in the norm $\|\hat{G}_0^{1/2} . \|_{L^2 (\Omega)}$. But for every $f \in C_0^\infty (\Omega)$,
     \begin{equation}
        \|\hat{G}_0^{1/2} f \|_{L^2 (\Omega)} = \|D^{-1/2} f \|_{L^2 (\R^d)} \; ,
     \end{equation}
     where $D$ is the unique self-adjoint extension of $(- \Delta + m^2)|_{C_0^\infty}$. But the norm $\|D^{-1/2} . \|_{L^2 (\R^d)}$ is equivalent to the norm of the Hilbert-Sobolev space $H^{-1} (\R^d)$. Therefore, the RKHS of $\mu_0$ is the closure of $C_0^\infty (\Omega)$ in $\|.\|_{-1}$, which is just the Hilbert-Sobolev space $\widetilde{H}^{-1} (\Omega)$, see Section \ref{app:sobolev_spaces} for details. The CMS of $\mu_0$ is given by $\mathsf{H}_{\mu_0} = \hat{G}_0 [\widetilde{H}^{-1} (\Omega)]$, cf. \cite[Thm.~3.2.3]{Bogachev2015}. By noticing that we can write $\hat{G}_0 = r_{\Omega} \circ D^{-1}$, where $r_{\Omega}$ denotes the restriction of a function in $H^{+1} (\R^d)$ to $\Omega$, \cite[Lem.~3.2 \& Eq.~(19)]{Chandler2017} imply that $\mathsf{H}_{\mu_0} = H^{+1} (\Omega)$, see also \cite[Thm.~3.12(iii)]{Chandler2017}.

     Next, we consider a Dirichlet field $\mu_\mathrm{D} = \mathcal{N} (0, \hat{G}_\mathrm{D})$. Again, by definition, the RKHS of $\mu_\mathrm{D}$ is the closure of $C_0^\infty (\Omega)$ in the norm $\|\hat{G}_\mathrm{D}^{1/2} . \|_{L^2 (\Omega)}$. Using \cite[Cor.~II.25]{Guerra1975a} (cf. also the discussion in Section \ref{sec:mutual_info}), this norm is equivalent to the norm in $H^{-1} (\Omega)$ and thus the RKHS of $\mu_\mathrm{D}$ is given by $\mathfrak{X}_{\mu_\mathrm{D}} = H_0^{-1} (\Omega) = H^{-1} (\Omega)$, where the second equality follows from \cite[Cor.~3.29(ii)]{Chandler2017}. The CMS of $\mu_\mathrm{D}$ is given by $\mathsf{H}_{\mu_\mathrm{D}} = \hat{G}_\mathrm{D} [H^{-1} (\Omega)] = H_0^{+1} (\Omega)$, see also \cite[Ch.~6]{Evan2010}. We summarize the results of the last two paragraphs in Table \ref{tab:RKHS_CMS}.

     \begin{table}[t]
        \caption{Reproducing kernel Hilbert space (RKHS) and Cameron-Martin space (CMS) of a free scalar field theory over a bounded region $\Omega \subset \R^d$ with free and Dirichlet boundary conditions.}\label{tab:RKHS_CMS}
        \begin{tabular}{ccc}
            \toprule
            Boundary condition & RKHS & CMS  \\
            \midrule
            Free      & $\widetilde{H}^{-1} (\Omega)$ & $H^{+1} (\Omega)$  \\
            Dirichlet & $H_0^{-1} (\Omega) = H^{-1} (\Omega)$ & $H_0^{+1} (\Omega)$  \\
            \botrule
        \end{tabular}
    \end{table}
    
    Notice that $H_0^{+1} (\Omega)$ is precisely the form domain of the Dirichlet Laplacian $- \Delta_\mathrm{D} + m^2$, see, e.g., \cite[Sec.~XIII.15]{Reed1978}. It is not a coincidence that the form domain of the precision operator of a Gaussian measure $\mathcal{N} (0,\hat{C})$ coincides with its CMS, as can be seen from the following
    \begin{lemma}\label{lem:CM_space}
        For a Gaussian measure $\mu = \mathcal{N} (0, \hat{C})$, $\mathsf{H}_\mu$ coincides with $\mathcal{Q} (\hat{C}^{-1}) = \mathfrak{D} (\hat{C}^{-1/2}) = \hat{C}^{1/2} [\mathcal{H}]$ as a set.
    \end{lemma}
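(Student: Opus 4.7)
The plan is to establish the three claimed equalities in sequence, with the real substance being the last one. The identifications $\mathcal{Q}(\hat{C}^{-1}) = \mathfrak{D}(\hat{C}^{-1/2}) = \hat{C}^{1/2}[\mathcal{H}]$ are immediate from the general framework: applying Theorem \ref{thm:second_rep_thm} to the positive self-adjoint operator $\hat{C}^{-1}$ (which is bounded from below by $c > 0$ by assumption) gives $\mathcal{Q}(\hat{C}^{-1}) = \mathfrak{D}((\hat{C}^{-1})^{1/2}) = \mathfrak{D}(\hat{C}^{-1/2})$, while strict positivity of $\hat{C}$ forces $\hat{C}^{1/2}$ to be injective on $\mathcal{H}$ with inverse $\hat{C}^{-1/2}$ defined precisely on $\hat{C}^{1/2}[\mathcal{H}]$. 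The remaining and nontrivial task is to show $\mathsf{H}_\mu = \hat{C}^{1/2}[\mathcal{H}]$ as sets.

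For this I would proceed in two steps. First, observe that $\hat{C}^{1/2}[\mathcal{H}]$ equipped with the inner product $\llangle \varphi, \psi \rrangle = \braket{\hat{C}^{-1/2}\varphi, \hat{C}^{-1/2}\psi}_\mathcal{H}$ is already a Hilbert space, since $\hat{C}^{1/2} : \mathcal{H} \to \hat{C}^{1/2}[\mathcal{H}]$ is by construction an isometric bijection onto it; in particular $(\hat{C}^{1/2}[\mathcal{H}], \|\cdot\|_\mu)$ is complete. Since $\mathring{\mathsf{H}}_\mu = \hat{C}[\mathfrak{X}]$ is clearly a subset of $\hat{C}^{1/2}[\mathcal{H}]$ (write $\hat{C}f = \hat{C}^{1/2}(\hat{C}^{1/2}f)$), its abstract $\|\cdot\|_\mu$-completion $\mathsf{H}_\mu$ embeds into the concrete Hilbert space $\hat{C}^{1/2}[\mathcal{H}]$, and the equality $\mathsf{H}_\mu = \hat{C}^{1/2}[\mathcal{H}]$ reduces to showing that $\mathring{\mathsf{H}}_\mu$ is $\|\cdot\|_\mu$-dense in $\hat{C}^{1/2}[\mathcal{H}]$.

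The density step is the main obstacle, and I would handle it as follows. Writing an arbitrary element of $\hat{C}^{1/2}[\mathcal{H}]$ as $\varphi = \hat{C}^{1/2}h$ and an element of $\mathring{\mathsf{H}}_\mu$ as $\varphi_n = \hat{C}f_n$ with $f_n \in \mathfrak{X}$, a direct computation gives $\|\varphi - \varphi_n\|_\mu = \|h - \hat{C}^{1/2}f_n\|_\mathcal{H}$, so the goal becomes to show that $\hat{C}^{1/2}[\mathfrak{X}]$ is $\|\cdot\|_\mathcal{H}$-dense in $\mathcal{H}$. This will follow from a two-stage approximation that leverages both standing hypotheses on $(\mathfrak{X}, \mathcal{H}, \hat{C})$: the range $\hat{C}^{1/2}[\mathcal{H}]$ is dense in $\mathcal{H}$ because the self-adjoint operator $\hat{C}^{1/2}$ has trivial kernel, and $\hat{C}^{1/2}[\mathfrak{X}]$ is in turn dense in $\hat{C}^{1/2}[\mathcal{H}]$ (in $\mathcal{H}$-norm) because $\mathfrak{X}$ is dense in $\mathcal{H}$ by assumption on the inclusion $\iota$ and $\hat{C}^{1/2}$ is bounded. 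A diagonal argument then produces the required sequence $f_n \in \mathfrak{X}$ with $\hat{C}^{1/2}f_n \to h$ in $\mathcal{H}$, completing the proof. It is worth noting that neither ingredient alone would suffice: without strict positivity of $\hat{C}$ the target $h$ could sit outside $\overline{\hat{C}^{1/2}[\mathcal{H}]}$, and without density of $\mathfrak{X}$ in $\mathcal{H}$ one could not push the approximation down into the nuclear subspace.
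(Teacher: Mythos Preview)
Your proposal is correct and follows essentially the same route as the paper's proof. Both arguments hinge on the same two ingredients: the isometry $\|\varphi\|_\mu = \|\hat{C}^{-1/2}\varphi\|_\mathcal{H}$ and the density of $\hat{C}^{1/2}[\mathfrak{X}]$ in $\mathcal{H}$ obtained by combining density of $\mathfrak{X}$ in $\mathcal{H}$, boundedness of $\hat{C}^{1/2}$, and density of $\hat{C}^{1/2}[\mathcal{H}]$ in $\mathcal{H}$. The only difference is organizational: the paper proves the two inclusions $\hat{C}^{1/2}[\mathcal{H}] \subset \mathsf{H}_\mu$ and $\mathsf{H}_\mu \subset \hat{C}^{1/2}[\mathcal{H}]$ separately, whereas you first observe that $(\hat{C}^{1/2}[\mathcal{H}], \|\cdot\|_\mu)$ is already complete (being isometric to $\mathcal{H}$ via $\hat{C}^{1/2}$) and then reduce everything to a single density statement --- this packaging is slightly cleaner but the content is identical.
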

    \begin{proof}
        Fix $\varphi \in \hat{C}^{1/2} [\mathcal{H}]$. Then, there exists $h \in \mathcal{H}$ such that $\varphi = \hat{C}^{1/2} h$. By assumption, $\mathfrak{X}$ is dense in $\mathcal{H}$, $\hat{C}^{1/2}$ is continuous and $\hat{C}^{1/2} [\mathcal{H}]$ is dense in $\mathcal{H}$. Thus $\hat{C}^{1/2} [\mathfrak{X}]$ is also dense in $\mathcal{H}$. Hence, there exists a sequence $\{ h_n \}_{n=1}^\infty$ in $\hat{C}^{1/2} [\mathfrak{X}]$ such that $h_n \to h$ in $\mathcal{H}$-norm. Then,
        \begin{equation}
            \lim_{n \to \infty} \| \varphi - \hat{C}^{1/2}  h_n \|_\mu = \lim_{n \to \infty} \| \hat{C}^{-1/2} ( \varphi - \hat{C}^{1/2}  h_n ) \|_\mathcal{H} = \lim_{n \to \infty} \| h - h_n \|_\mathcal{H} = 0 \; .
        \end{equation}
        This means there exists a sequence in $\hat{C} [\mathfrak{X}]$ that converges to $\varphi$ in $\|.\|_\mu$ and hence $\varphi \in \mathsf{H}_\mu$. Therefore, $\hat{C}^{1/2} [\mathcal{H}] \subset \mathsf{H}_\mu$.
        
        Conversely, fix $\varphi \in \mathsf{H}_\mu$. Then, there exists $\{\varphi_n\}_{n=1}^\infty$ in $\hat{C} [\mathfrak{X}]$ such that $\varphi_n \to \varphi$ with respect to $\|.\|_\mu$. From the definition of $\|.\|_\mu$, we see that $\{\varphi_n\}_{n=1}^\infty$ being Cauchy in $\mathsf{H}_\mu$ implies that  $\{ \hat{C}^{-1/2} \varphi_n \}_{n=1}^\infty$ is Cauchy in $\mathcal{H}$. As $\mathcal{H}$ is complete, there exists $h \in \mathcal{H}$ such that $\hat{C}^{-1/2} \varphi_n \to h$ with respect to $\mathcal{H}$-norm. The sequence $\{ \varphi_n \}_{n=1}^\infty = \hat{C}^{1/2} [\{ \hat{C}^{-1/2} \varphi_n \}_{n=1}^\infty]$ is the image of a converging sequence under a continuous map and is hence convergent. Furthermore, its limit is $\hat{C}^{1/2} h \eqqcolon \psi \in \hat{C}^{1/2} [\mathcal{H}]$. Therefore, $\{ \varphi_n \}_{n=1}^\infty$ converges to an element in $\hat{C}^{1/2} [\mathcal{H}]$ with respect to $\mathcal{H}$-norm. Now $\|\varphi_n - \psi\|_\mu = \| \hat{C}^{-1/2} (\varphi_n - \psi) \|_\mathcal{H} \to 0$ as $n \to \infty$. As the limit of a convergent sequence is unique in a normed space, we conclude that $\varphi = \psi \in \hat{C}^{1/2}[\mathcal{H}]$. Therefore, $\mathsf{H}_\mu \subset \hat{C}^{1/2} [\mathcal{H}]$.
    \end{proof}
    The above Lemma shows that we can interpret any element of the Cameron-Martin space of $\mu = \mathcal{N} (0, \hat{C})$ as a vector in $\mathcal{Q} (\hat{C}^{-1}) = \hat{C}^{1/2} [\mathcal{H}]$, the form domain of the precision operator $\hat{C}^{-1}$. In the setting of statistical field theory, as discussed in Section \ref{sec:covariance_operators}, $\hat{C}^{-1}$ is a differential operator on $L^2 (\Omega)$ and the quadratic form, defined for all $f \in \mathfrak{D} (\hat{C}^{-1})$ as $\braket{f, \hat{C}^{-1} f}_{L^2 (\Omega)}$, is interpreted as the action of the free theory. Hence, in this setting, the CMS can be thought of as the set of ``field configurations'' that are, in a sense, sufficiently regular and therefore have a finite action. If $\mathfrak{X}^*_\beta$ is infinite dimensional, which is the case for $\mathfrak{X}^*_\beta = \mathcal{D}^*_\beta (\Omega)$, then $\mu (\mathsf{H}_\mu) = 0$ \cite[Thm.~2.4.7]{Bogachev2015}. In other words, the set of field configurations with finite action has measure zero.
    
    Note that the covariance operator $\hat{C}$ is a unitary operator from $\mathfrak{X}_\mu$ onto $\mathsf{H}_\mu$ \cite[Sec.~2.4]{Bogachev2015}, $\hat{C}^{1/2}$ is a unitary operator from $\mathcal{H}$ onto $\mathsf{H}_\mu$ and $\hat{C}^{-1/2}$ is a unitary operator from $\mathcal{H}$ onto $\mathfrak{X}_\mu$. In particular, if $\{ \phi_n \}_{n=1}^\infty$ is an orthonormal basis in $\mathcal{H}$, then $\{ \hat{C}^{1/2} \phi_n \}_{n=1}^\infty$ is an orthonormal basis in $\mathsf{H}_\mu$ and $\{ \hat{C}^{-1/2} \phi_n \}_{n=1}^\infty$ is an orthonormal basis in $\mathfrak{X}_\mu$.
    
    We can now state conditions for the equivalence of two Gaussian measures. The following theorem constitutes the basis for the rest of this Section.
    \begin{theorem}[{\cite[Thm.~6.4.6]{Bogachev2015}}]\label{thm:equivalence_master}
        Two centred Radon Gaussian measures $\mu$ and $\nu$ on a locally convex space $X$ are equivalent precisely when $\mathsf{H}_\mu$ and $\mathsf{H}_\nu$ coincide as sets and there exists an invertible operator $A \in \mathfrak{L} (\mathsf{H}_\mu)$ such that $A A^* - I \in \mathrm{HS} (\mathsf{H}_\mu)$ and $\| h \|_\nu =  \| A^{-1} h \|_\mu$ for all $h \in \mathsf{H}_\mu$.
    \end{theorem}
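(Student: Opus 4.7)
The plan is to reduce the statement to the classical Feldman-H\'ajek characterization of equivalence for centred Gaussian measures on a separable Hilbert space, which produces the Hilbert-Schmidt criterion directly, and then to transfer the conclusion back to the locally convex setting through the Cameron-Martin space.

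First I would prove necessity of $\mathsf{H}_\mu = \mathsf{H}_\nu$ as sets. The Cameron-Martin theorem says that for a centred Radon Gaussian measure $\rho$, the translated measure $T_h \rho$ (with $T_h : x \mapsto x+h$) is equivalent to $\rho$ iff $h \in \mathsf{H}_\rho$. Since each $T_h$ is a homeomorphism, it preserves the equivalence of measures, so $\mu \sim \nu$ forces $T_h \mu \sim T_h \nu$. Combining with $\mu \sim \nu$ yields $T_h \mu \sim \mu \iff T_h \nu \sim \nu$, i.e.\ $h \in \mathsf{H}_\mu \iff h \in \mathsf{H}_\nu$. Once the CMSs coincide as sets, I would apply the closed graph theorem to the identity map $(\mathsf{H}_\mu, \|\cdot\|_\mu) \to (\mathsf{H}_\mu, \|\cdot\|_\nu)$: closedness follows because convergence in either norm on the common set implies convergence of every pairing $\hat h(\cdot)$ with $\hat h \in X^*_\mu \cap X^*_\nu$, and these pairings separate points. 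The open mapping theorem then gives equivalent norms, hence a bounded strictly positive self-adjoint operator $B$ on $(\mathsf{H}_\mu, \llangle .,. \rrangle_\mu)$ with $\|h\|_\nu^2 = \llangle B h, h \rrangle_\mu$. Taking $A$ so that $AA^* = B^{-1}$ (for example $A = B^{-1/2}$) produces the required norm identity $\|h\|_\nu = \|A^{-1} h\|_\mu$.

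The heart of the theorem is the equivalence between $\mu \sim \nu$ and $A A^* - I \in \mathrm{HS}(\mathsf{H}_\mu)$. I would diagonalize $B$ in an orthonormal basis $\{e_n\}$ of $\mathsf{H}_\mu$ with eigenvalues $\lambda_n>0$, and then pick $\hat e_n \in X^*_\mu$ so that $R_\mu(\hat e_n) = e_n$; the family $\{\xi_n \coloneqq \hat e_n(\cdot)\}$ consists of i.i.d.\ standard Gaussian random variables under $\mu$ and of independent centred Gaussians with variances $\lambda_n^{-1}$ under $\nu$. By the Gaussian zero-one law, the tail $\sigma$-algebra of $\{\xi_n\}$ coincides with the full $\sigma$-algebra modulo null sets, so equivalence of $\mu$ and $\nu$ reduces to equivalence of the two product measures $\prod_n \mathcal{N}(0,1)$ and $\prod_n \mathcal{N}(0,\lambda_n^{-1})$ on $\R^\N$. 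Kakutani's dichotomy for infinite products gives equivalence precisely when the Hellinger affinities $\rho_n = (2\sqrt{\lambda_n}/(1+\lambda_n))^{1/2}$ satisfy $\sum_n (1 - \rho_n) < \infty$, and a Taylor expansion shows $1 - \rho_n \asymp (\lambda_n - 1)^2$. Thus equivalence holds iff $\sum_n (\lambda_n - 1)^2 < \infty$, i.e.\ iff $B - I \in \mathrm{HS}$. Since $B = (AA^*)^{-1}$ and both $AA^*$ and its inverse are bounded, the identity $(AA^*)^{-1} - I = -(AA^*)^{-1}(AA^* - I)$ shows $B - I \in \mathrm{HS}$ iff $AA^* - I \in \mathrm{HS}$, closing the circle.

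The main technical obstacle is that neither measure is supported on $\mathsf{H}_\mu$ — indeed $\mu(\mathsf{H}_\mu)=0$ in the infinite-dimensional case — so the diagonalization cannot be performed directly on $X$. The reduction therefore must be carried out through the RKHS $X^*_\mu$ (and its counterpart $X^*_\nu$, which thanks to the common CMS can be related to $X^*_\mu$ by the bounded operator $B$), with the orthonormal systems $\{e_n\}$ on the Cameron-Martin side and $\{\hat e_n\}$ on the dual side linked by the covariance isomorphism $R_\mu : X^*_\mu \to \mathsf{H}_\mu$. A secondary subtlety is verifying that the Kakutani equivalence on the product space actually transfers to equivalence of the original Radon Gaussian measures on $X$; this uses that the cylindrical $\sigma$-algebra $\mathcal{E}(X)$ is generated by $\{\hat e_n\}$ up to Radon completion, together with the Gaussian zero-one law to handle the tail.
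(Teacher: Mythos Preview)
The paper does not prove this theorem; it is quoted from \cite[Thm.~6.4.6]{Bogachev2015} and used as a black box to derive the more explicit equivalence criteria in Lemma~\ref{lem:equivalence_weak_1} and Theorem~\ref{thm:equivalence}. There is therefore no proof in the paper to compare your proposal against.

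That said, your outline is the classical route to the Feldman--H\'ajek dichotomy (Cameron--Martin for the equality of Cameron--Martin spaces, closed graph for norm equivalence, then diagonalization and Kakutani's product dichotomy for the Hilbert--Schmidt condition), and is correct in spirit. One point deserves care: you propose to diagonalize $B$ in an orthonormal basis of $\mathsf{H}_\mu$, but in the \emph{necessity} direction you only know that $B$ is bounded, self-adjoint and strictly positive --- not that $B-I$ is compact --- so an eigenbasis need not exist a priori, and without it the $\xi_n$ will not be independent under $\nu$ and Kakutani does not apply directly. For sufficiency this is harmless (once $AA^*-I$ is Hilbert--Schmidt, $B-I$ is compact and the eigenbasis exists); for necessity the standard workaround is either to compute the Radon--Nikodym density directly and read off the Hilbert--Schmidt condition from its $L^1$-integrability (this is essentially what Bogachev does, and what the paper reproduces later in Appendix~\ref{app:KL_formula} for the density formula), or to work with finite-dimensional cylinder projections and pass to the limit via a Hellinger-integral estimate. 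Your identification of the support issue ($\mu(\mathsf{H}_\mu)=0$) and the need to pass through the RKHS rather than the Cameron--Martin space itself is accurate and important.
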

    For the specific kind of Gaussian measures considered in this work, we can formulate the above condition in a more explicit way. But first we need the following
    \begin{proposition}
        Suppose $\hat{C}^{1/2}_\mu [\mathcal{H}] = \hat{C}^{1/2}_\nu [\mathcal{H}]$. Then, the operator $\hat{B} \coloneqq \hat{C}_\mu^{-1/2} \hat{C}_\nu^{1/2}$ is bounded and boundedly invertible.
    \end{proposition}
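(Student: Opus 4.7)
The plan is to prove this in two stages: boundedness via the closed graph theorem, and bounded invertibility by explicitly constructing the inverse and applying the same reasoning.

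First, I would verify that $\hat{B}$ is everywhere defined on $\mathcal{H}$. Since $\hat{C}_\nu^{1/2}$ is bounded with range $\hat{C}_\nu^{1/2}[\mathcal{H}]$, and by the hypothesis $\hat{C}_\nu^{1/2}[\mathcal{H}] = \hat{C}_\mu^{1/2}[\mathcal{H}] = \mathfrak{D}(\hat{C}_\mu^{-1/2})$, the composition $\hat{C}_\mu^{-1/2} \hat{C}_\nu^{1/2}$ makes sense for every $f \in \mathcal{H}$. Here I use that, since $\hat{C}_\mu$ is strictly positive and self-adjoint, $\hat{C}_\mu^{-1/2}$ is a (generally unbounded) self-adjoint operator on $\mathcal{H}$ with domain exactly $\hat{C}_\mu^{1/2}[\mathcal{H}]$.

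Next I would show $\hat{B}$ is closed. Suppose $f_n \to f$ in $\mathcal{H}$ and $\hat{B} f_n \to g$. Setting $h_n \coloneqq \hat{C}_\nu^{1/2} f_n$, boundedness of $\hat{C}_\nu^{1/2}$ gives $h_n \to \hat{C}_\nu^{1/2} f$, while $\hat{C}_\mu^{-1/2} h_n \to g$. Since $\hat{C}_\mu^{-1/2}$ is self-adjoint and therefore closed, it follows that $\hat{C}_\nu^{1/2} f \in \mathfrak{D}(\hat{C}_\mu^{-1/2})$ and $\hat{C}_\mu^{-1/2} \hat{C}_\nu^{1/2} f = g$, i.e.\ $\hat{B} f = g$. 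Being closed and defined on all of $\mathcal{H}$, the closed graph theorem forces $\hat{B}$ to be bounded.

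For bounded invertibility, I would define the candidate inverse $\hat{D} \coloneqq \hat{C}_\nu^{-1/2} \hat{C}_\mu^{1/2}$. By interchanging the roles of $\mu$ and $\nu$ in the above argument — which uses the hypothesis in the reverse direction $\hat{C}_\mu^{1/2}[\mathcal{H}] = \hat{C}_\nu^{1/2}[\mathcal{H}] = \mathfrak{D}(\hat{C}_\nu^{-1/2})$ — the operator $\hat{D}$ is everywhere defined on $\mathcal{H}$, closed, and hence bounded. Finally I would verify $\hat{B}\hat{D} = \hat{D}\hat{B} = I$: for $f \in \mathcal{H}$, set $g = \hat{C}_\mu^{1/2} f$, which lies in $\hat{C}_\nu^{1/2}[\mathcal{H}] = \mathfrak{D}(\hat{C}_\nu^{-1/2})$, so $\hat{C}_\nu^{1/2} \hat{C}_\nu^{-1/2} g = g$, giving $\hat{B}\hat{D} f = \hat{C}_\mu^{-1/2} g = f$, and the other composition is analogous.

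The only subtle step is the closedness argument, and even there the subtlety is purely bookkeeping: ensuring that the assumption on the ranges makes the domain of $\hat{B}$ all of $\mathcal{H}$, so that the closed graph theorem genuinely applies. Once this is settled, boundedness of the inverse follows by a mirror-image of the same argument, without any new ideas.
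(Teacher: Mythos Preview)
Your proof is correct and follows essentially the same approach as the paper: establish that $\hat{B}$ is everywhere defined, show it is closed, apply the closed graph theorem, and then invoke the $\mu \leftrightarrow \nu$ symmetry for the inverse. The only cosmetic difference is in the closedness step: you use directly that $\hat{C}_\mu^{-1/2}$ is closed (being self-adjoint), whereas the paper verifies closedness by pairing against test vectors and using self-adjointness of $\hat{C}_\mu^{-1/2}$ together with density of $\hat{C}_\mu^{1/2}[\mathcal{H}]$---your version is arguably cleaner.
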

    \begin{remark}
        Obviously the above result also holds if we interchange $\mu$ and $\nu$.
    \end{remark}
    \begin{proof}
        Consider the operator $\hat{C}_\mu^{-1/2} \hat{C}_\nu^{1/2}$ on $\mathcal{H}$. By the injectivity of the covariance operators $\hat{C}_\mu$ and $\hat{C}_\nu$, this operator is also injective. Suppose $(h_n, \hat{C}_\mu^{-1/2} \hat{C}_\nu^{1/2} h_n) \to (h,g)$ in $\mathcal{H} \oplus \mathcal{H}$. We can now use the continuity of $\hat{C}_\nu^{1/2}$ together with the self-adjointness of $\hat{C}_\mu^{-1/2}$ to obtain
        \begin{equation}
            \braket{f, g}_\mathcal{H} = \lim_{n \to \infty} \braket{f, \hat{C}_\mu^{-1/2} \hat{C}_\nu^{1/2} h_n}_\mathcal{H} = \braket{\hat{C}_\mu^{-1/2} f, \hat{C}_\nu^{1/2} h}_\mathcal{H} = \braket{f, \hat{C}_\mu^{-1/2} \hat{C}_\nu^{1/2} h}_\mathcal{H}
        \end{equation}
        for all $f \in \hat{C}_\mu^{1/2}[\mathcal{H}]$. By assumption, $\hat{C}_\mu^{1/2}[\mathcal{H}]$ is dense in $\mathcal{H}$, and together with the continuity of the inner product this implies that $\braket{f, g}_\mathcal{H} = \braket{f, \hat{C}_\mu^{-1/2} \hat{C}_\nu^{1/2} h}_\mathcal{H}$ for all $f \in \mathcal{H}$. Hence, $g = \hat{C}_\mu^{-1/2} \hat{C}_\nu^{1/2} h$ and the graph of $\hat{C}_\mu^{-1/2} \hat{C}_\nu^{1/2}$ is closed. Therefore, by the closed graph Theorem \cite[Thm.~III.12]{Reed1981}, $\hat{C}_\mu^{-1/2} \hat{C}_\nu^{1/2}$ is bounded. By the symmetry of the above reasoning in the indices $\mu$ and $\nu$, the inverse $(\hat{C}_\mu^{-1/2} \hat{C}_\nu^{1/2})^{-1} = \hat{C}_\nu^{-1/2} \hat{C}_\mu^{1/2}$ is also bounded.
    \end{proof}
    
    Equipped with the above Proposition, we can state the following Lemma, which provides a necessary and sufficient condition for the equivalence of two centred Gaussian measures of the kind studied in this work.
    \begin{lemma}\label{lem:equivalence_weak_1}
        Two centred Gaussian measures $\mu = \mathcal{N} (0, \hat{C}_\mu)$ and $\nu = \mathcal{N} (0, \hat{C}_\nu)$ are equivalent precisely when $\hat{C}^{1/2}_\mu [\mathcal{H}] = \hat{C}^{1/2}_\nu [\mathcal{H}]$ and $\hat{B} \hat{B}^* - I$ is a Hilbert-Schmidt operator on $\mathcal{H}$.
    \end{lemma}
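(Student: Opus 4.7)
The plan is to reduce the statement to Theorem~\ref{thm:equivalence_master} by transporting the Hilbert–Schmidt condition from the Cameron–Martin space $\mathsf{H}_\mu$ back to $\mathcal{H}$ via the unitary $\hat{C}_\mu^{1/2}:\mathcal{H}\to\mathsf{H}_\mu$. First, I would observe that, by Lemma~\ref{lem:CM_space}, one has $\mathsf{H}_\mu=\hat{C}_\mu^{1/2}[\mathcal{H}]$ and $\mathsf{H}_\nu=\hat{C}_\nu^{1/2}[\mathcal{H}]$ as sets, so the condition $\hat{C}_\mu^{1/2}[\mathcal{H}]=\hat{C}_\nu^{1/2}[\mathcal{H}]$ is nothing but the set-theoretic equality $\mathsf{H}_\mu=\mathsf{H}_\nu$ required in the first half of Theorem~\ref{thm:equivalence_master}. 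Under this assumption, the preceding Proposition guarantees that $\hat{B}=\hat{C}_\mu^{-1/2}\hat{C}_\nu^{1/2}$ and its inverse $\hat{B}^{-1}=\hat{C}_\nu^{-1/2}\hat{C}_\mu^{1/2}$ are bounded operators on $\mathcal{H}$, so $\hat{B}\in\mathfrak{B}(\mathcal{H})$ is boundedly invertible and its adjoint $\hat{B}^*$ is the bounded extension of $\hat{C}_\nu^{1/2}\hat{C}_\mu^{-1/2}$.

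Next, I would set $T:=\hat{C}_\mu^{1/2}$, which is a unitary from $\mathcal{H}$ onto $\mathsf{H}_\mu$ (using the inner products $\braket{.,.}_\mathcal{H}$ and $\llangle .,.\rrangle_\mu$), and define
\begin{equation}
A := T\,\hat{B}\,T^{-1}\in\mathfrak{L}(\mathsf{H}_\mu),
\end{equation}
which is bounded and boundedly invertible on $\mathsf{H}_\mu$. A direct computation gives $A^{-1}h=\hat{C}_\mu^{1/2}\hat{C}_\nu^{-1/2}h$ for every $h\in\mathsf{H}_\mu$, and therefore
\begin{equation}
\|A^{-1}h\|_\mu = \|\hat{C}_\mu^{-1/2}A^{-1}h\|_\mathcal{H} = \|\hat{C}_\nu^{-1/2}h\|_\mathcal{H} = \|h\|_\nu,
\end{equation}
so the isometry condition of Theorem~\ref{thm:equivalence_master} is satisfied.

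The core of the argument is then the identification $AA^* = T\,\hat{B}\hat{B}^*\,T^{-1}$, where the adjoint on the left is with respect to $\llangle.,.\rrangle_\mu$ and on the right with respect to $\braket{.,.}_\mathcal{H}$; this is a routine check using that $T$ is unitary between the two Hilbert spaces. Consequently,
\begin{equation}
AA^* - I_{\mathsf{H}_\mu} \;=\; T\,\bigl(\hat{B}\hat{B}^* - I_\mathcal{H}\bigr)\,T^{-1},
\end{equation}
and since Hilbert–Schmidt class and norm are preserved under unitary similarity, one side lies in $\mathrm{HS}(\mathsf{H}_\mu)$ if and only if the other lies in $\mathrm{HS}(\mathcal{H})$. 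This simultaneously proves both implications: if $\mu\sim\nu$, Theorem~\ref{thm:equivalence_master} supplies \emph{some} operator $A_0$ with the required properties, but any two admissible operators $A_0,A_1$ differ by a unitary on $\mathsf{H}_\mu$ on the right (because $\|A_0^{-1}h\|_\mu=\|A_1^{-1}h\|_\mu$ on $\mathsf{H}_\mu$), so $A_0A_0^* = A_1A_1^*$ and the Hilbert–Schmidt property descends to our explicit $A$; conversely, the Hilbert–Schmidt condition on $\hat{B}\hat{B}^*-I$ together with the norm identity produces an $A$ fulfilling the hypotheses of Theorem~\ref{thm:equivalence_master}, yielding $\mu\sim\nu$.

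The only delicate point, and hence the main obstacle, is the careful handling of the unbounded operators $\hat{C}_\mu^{-1/2},\hat{C}_\nu^{-1/2}$ appearing inside $\hat{B}$ and $\hat{B}^*$: one must verify on suitable dense domains that $\hat{B}^*$ coincides with the bounded extension of $\hat{C}_\nu^{1/2}\hat{C}_\mu^{-1/2}$, that $A$ is genuinely a bounded operator on $\mathsf{H}_\mu$, and that $(T\hat{B}T^{-1})^*=T\hat{B}^*T^{-1}$ in the sense of the appropriate inner product. All of these rely on the boundedness provided by the preceding Proposition together with the fact, recorded just before Lemma~\ref{lem:CM_space}, that $\hat{C}_\mu^{1/2}$ is an isometric isomorphism $\mathcal{H}\to\mathsf{H}_\mu$.
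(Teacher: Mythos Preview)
Your proposal is correct and follows essentially the same route as the paper: both reduce to Theorem~\ref{thm:equivalence_master} by defining the same operator $A=\hat{C}_\nu^{1/2}\hat{C}_\mu^{-1/2}$ on $\mathsf{H}_\mu$, verifying $\|A^{-1}h\|_\mu=\|h\|_\nu$, and showing that $AA^*-I$ is Hilbert--Schmidt on $\mathsf{H}_\mu$ iff $\hat{B}\hat{B}^*-I$ is Hilbert--Schmidt on $\mathcal{H}$; the converse is handled in both cases by observing that any admissible $A$ differs from the explicit one by a unitary factor, so $AA^*$ is determined uniquely. The only cosmetic difference is that the paper computes $A^*=\hat{C}_\mu^{1/2}\hat{C}_\nu^{1/2}\hat{C}_\mu^{-1}$ explicitly and compares Hilbert--Schmidt norms via an orthonormal basis, whereas you invoke the unitary similarity $AA^*-I=T(\hat{B}\hat{B}^*-I)T^{-1}$ with $T=\hat{C}_\mu^{1/2}$ directly, which is a slightly cleaner packaging of the same computation.
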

    \begin{remark}
        Here, $\hat{B}^*$ is the continuous extension of the operator $\hat{C}_\nu^{1/2} \hat{C}_\mu^{-1/2}$ defined on $\hat{C}^{1/2}_\mu [\mathcal{H}]$.
    \end{remark}
    \begin{remark}
        This condition should be compared to the necessary and sufficient condition given in the Feldman–H\'ajek Theorem \cite{Feldman1958,Hajek1958} for the case of Gaussian measures on a Hilbert space, see also \cite[Sec.~2.3.2]{DaPrato2014}.
    \end{remark}
    \begin{proof}
        Suppose $\hat{C}^{1/2}_\mu [\mathcal{H}] = \hat{C}^{1/2}_\nu [\mathcal{H}]$ and $\hat{B} \hat{B}^* - I$ is a Hilbert-Schmidt operator on $\mathcal{H}$. By Lemma \ref{lem:CM_space}, $\hat{C}^{1/2}_\mu [\mathcal{H}] = \hat{C}^{1/2}_\nu [\mathcal{H}]$ implies that $\mathsf{H}_\mu$ and $\mathsf{H}_\nu$ coincide as sets. We define the operator $A : \mathsf{H}_\mu \to \mathsf{H}_\mu$ via $h \mapsto \hat{C}_\nu^{1/2} \hat{C}_\mu^{-1/2} h$. Clearly, $\| \hat{C}^{-1/2}_\nu h \|_\mathcal{H} = \| \hat{C}^{-1/2}_\mu (\hat{B}^*)^{-1} h \|_\mathcal{H}$ for all $h \in \hat{C}^{1/2}_\mu [\mathcal{H}]$, which implies that $\|h\|_\nu = \|A^{-1} h\|_\mu$ for all $h \in \mathsf{H}_\mu$. The equivalence of the norms $\|.\|_\mu$ and $\|.\|_\nu$ (cf. Lemma \ref{lem:same_range}) implies that $A^{-1}$ is bounded. More precisely, there exists $\gamma > 0$ such that $\|h\|_\nu \leq \gamma \|h\|_\mu$ for all $h \in \hat{C}^{1/2}_\mu [\mathcal{H}] = \hat{C}^{1/2}_\nu [\mathcal{H}]$ and thus
        \begin{equation}
            \|A^{-1}\| = \sup_{\|h\|_\mu = 1} \|A^{-1} h\|_\mu = \sup_{\|h\|_\mu = 1} \|h\|_\nu \leq \sup_{\|h\|_\mu = 1} \gamma \|h\|_\mu = \gamma \; .
        \end{equation}
        We still need to find the adjoint of $A$. Recalling that $\hat{C}_\mu$ and $\hat{C}_\nu$ are self-adjoint on $\mathcal{H}$ and using the definition of the inner product on $H_\mu$, we see that
        \begin{align*}
            \llangle f, A g \rrangle &= \braket{\hat{C}^{-1/2}_\mu f, \hat{C}^{-1/2}_\mu \hat{C}^{1/2}_\nu \hat{C}^{-1/2}_\mu g }_\mathcal{H} \\
            &= \braket{\hat{C}^{-1/2}_\mu \hat{C}^{1/2}_\mu \hat{C}^{1/2}_\nu \hat{C}^{-1}_\mu f, \hat{C}^{-1/2}_\mu g }_\mathcal{H} \\
            &= \llangle \hat{C}^{1/2}_\mu \hat{C}^{1/2}_\nu \hat{C}^{-1}_\mu f, g \rrangle \; . \numberthis
        \end{align*}
        for all $f,g \in H_\mu$. Hence, $A^* = \hat{C}^{1/2}_\mu \hat{C}^{1/2}_\nu \hat{C}^{-1}_\mu$. Let $\{\phi_n\}_{n=1}^\infty$ be an orthonormal basis in $\mathcal{H}$ contained in $\hat{C}^{1/2}_\mu [\mathcal{H}]$ (for example, the eigenbasis of $\hat{C}_\mu$). Then $\{ \hat{C}^{1/2}_\mu \phi_n \}_{n=1}^\infty$ is an orthonormal basis in $\mathsf{H}_\mu$. Then,
        \begin{align*}\label{eq:HS_calc}
            \| \hat{B} \hat{B}^* - I \|^2_{\mathrm{HS} (\mathcal{H})} &= \sum_{n=1}^\infty \| (\hat{C}^{-1/2}_\mu \hat{C}_\nu \hat{C}^{-1/2}_\mu - I) \, \phi_n \|_\mathcal{H}^2 \\
            &= \sum_{n=1}^\infty \| (\hat{C}_\nu \hat{C}^{-1}_\mu - I) \, \hat{C}^{1/2}_\mu \phi_n \|_\mu^2 \\
            &= \| A A^* - I \|^2_{\mathrm{HS} (\mathsf{H}_\mu)} \; , \numberthis
        \end{align*}
        which implies that $A A^* - I \in \mathrm{HS} (\mathsf{H}_\mu)$. In summary, $\mathsf{H}_\mu$ and $\mathsf{H}_\nu$ coincide as sets and there exists an invertible operator $A$ on $\mathsf{H}_\mu$ such that $A A^* - I \in \mathrm{HS} (\mathsf{H}_\mu)$ and $\| h \|_\nu =  \| A^{-1} h \|_\mu$ for all $h \in \mathsf{H}_\mu$. Therefore, by Theorem \ref{thm:equivalence_master}, $\mu \sim \nu$.
    
        Conversely, suppose $\mu \sim \nu$. Then, $\mathsf{H}_\mu$ and $\mathsf{H}_\nu$ coincide as sets, which implies that $\hat{C}^{1/2}_\mu [\mathcal{H}] = \hat{C}^{1/2}_\nu [\mathcal{H}]$. Furthermore, there exists an invertible $A \in \mathfrak{L} (\mathsf{H}_\mu)$ such that $A A^* - I \in \mathrm{HS} (\mathsf{H}_\mu)$ and $\| h \|_\nu = \| A^{-1} h \|_\mu$. By the definitions of $\|.\|_\mu$ and $\|.\|_\nu$, the last property can be written as $\| A^{-1} h \|_\mu = \| \hat{C}^{1/2}_\mu \hat{C}_\nu^{-1/2} h \|_\mu$ for all $h \in \hat{C}^{1/2}_\mu [\mathcal{H}]$. Thus, $A^{-1}$ coincides with $U \hat{C}^{1/2}_\mu \hat{C}_\nu^{-1/2}$, where $U$ is an orthogonal transformation on $\mathsf{H}_\mu$. Furthermore, $A = \hat{C}^{1/2}_\nu \hat{C}^{-1/2}_\mu U^*$ and, by the same calculation as above, $A^* = U \hat{C}^{1/2}_\mu \hat{C}^{1/2}_\nu \hat{C}^{-1}_\mu$. Thus, $A A^* = \hat{C}_\nu \hat{C}^{-1}_\mu$, which, by \eqref{eq:HS_calc}, implies that $\hat{B} \hat{B}^* - I$ is a Hilbert-Schmidt operator on $\mathcal{H}$.
    \end{proof}
    
    Following \cite{Pinski2015}, we can give a condition in terms of the precision operators of the Gaussian measures that is equivalent to the one given in the above Lemma. Suppose $\mu = \mathcal{N} (0,\hat{C}_\mu)$ and $\nu = \mathcal{N} (0,\hat{C}_\nu)$ are two centred Gaussian measures such that $\hat{C}^{1/2}_\mu [\mathcal{H}] = \hat{C}^{1/2}_\nu [\mathcal{H}]$. Recall that the precision operators $\hat{C}^{-1}_\mu$ and $\hat{C}^{-1}_\nu$ are strictly positive, densely defined unbounded self-adjoint operators on $\mathcal{H}$. To these precision operators we can associate bilinear forms $\mathfrak{q}_\mu$ and $\mathfrak{q}_\nu$, which, since we assume $\hat{C}^{1/2}_\mu [\mathcal{H}] = \hat{C}^{1/2}_\nu [\mathcal{H}]$, have coinciding form domains, i.e., $\mathcal{Q} (\hat{C}^{-1}_\mu) = \mathcal{Q} (\hat{C}^{-1}_\nu)$ (for details on the theory of quadratic forms needed in the following, see Appendix \ref{app:quadratic_forms} and references therein). In particular, this means that we can add and subtract these forms. We define the form $\Delta \mathfrak{q}$ on $\hat{C}^{1/2}_\mu [\mathcal{H}] = \hat{C}^{1/2}_\nu [\mathcal{H}]$ by
    \begin{equation}
        \Delta \mathfrak{q} (f,g) \coloneqq \mathfrak{q}_\mu (f,g) - \mathfrak{q}_\nu (f,g) = \braket{\hat{C}^{-1/2}_\mu f , \hat{C}^{-1/2}_\mu g} - \braket{\hat{C}^{-1/2}_\nu f , \hat{C}^{-1/2}_\nu g} \; .
    \end{equation}

    By Lemma \ref{lem:equivalence_weak_1}, a necessary condition for the equivalence of $\mu$ and $\nu$ is that $\hat{B} \hat{B}^* - I$, where $\hat{B} \coloneqq \hat{C}_\mu^{-1/2} \hat{C}_\nu^{1/2}$, is a Hilbert-Schmidt operator on $\mathcal{H}$. If $\hat{C}^{1/2}_\mu [\mathcal{H}] = \hat{C}^{1/2}_\nu [\mathcal{H}]$, then, by \cite[Lem.~6.3.1(ii)]{Bogachev2015}, $\hat{B} \hat{B}^* - I$ is a Hilbert-Schmidt operator on $\mathcal{H}$ if and only if $\hat{B}^* \hat{B} - I$ is. An elementary calculation yields
    \begin{equation}
        \| \hat{B}^* \hat{B} - I \|^2_{\mathrm{HS}(\mathcal{H})} = \sum_{n=1}^\infty \sum_{m=1}^\infty \left( \Delta \mathfrak{q} (\phi_n, \psi_m) \right)^2 \; ,
    \end{equation}
    where $\{\phi_n\}_{n=1}^\infty$ and $\{\psi_n\}_{n=1}^\infty$ are any two orthonormal bases in $\mathsf{H}_\nu$. We can combine this result with Lemma \ref{lem:equivalence_weak_1} and arrive at the following Theorem, which is the main result of this Section.
    \begin{theorem}\label{thm:equivalence}
        Two centred Gaussian measures $\mu = \mathcal{N} (0, \hat{C}_\mu)$ and $\nu = \mathcal{N} (0, \hat{C}_\nu)$ are equivalent precisely when $\hat{C}^{1/2}_\mu [\mathcal{H}] = \hat{C}^{1/2}_\nu [\mathcal{H}]$ and one of the following equivalent conditions holds:
        \begin{enumerate}
            \item $\hat{B} \hat{B}^* - I$, where $\hat{B} \coloneqq \hat{C}_\mu^{-1/2} \hat{C}_\nu^{1/2}$, is a Hilbert-Schmidt operator on $\mathcal{H}$,
            \item $\hat{B}^* \hat{B} - I$ is a Hilbert-Schmidt operator on $\mathcal{H}$,
            \item The form $\Delta \mathfrak{q} \coloneqq \mathfrak{q}_\mu - \mathfrak{q}_\nu$ satisfies
            \begin{equation}
                \sum_{n=1}^\infty \sum_{m=1}^\infty \left( \Delta \mathfrak{q} (\phi_n, \psi_m) \right)^2 < + \infty
            \end{equation}
            for any two orthonormal bases $\{\phi_n\}_{n=1}^\infty$ and $\{\psi_n\}_{n=1}^\infty$ in $\mathsf{H}_\nu$.
        \end{enumerate}
    \end{theorem}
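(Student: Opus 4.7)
My plan is to build on Lemma \ref{lem:equivalence_weak_1}, which already establishes the equivalence $\mu \sim \nu$ if and only if $\hat{C}_\mu^{1/2}[\mathcal{H}] = \hat{C}_\nu^{1/2}[\mathcal{H}]$ together with condition (1). Thus the content of Theorem \ref{thm:equivalence} beyond this Lemma is precisely the chain of equivalences $(1) \Leftrightarrow (2) \Leftrightarrow (3)$ under the range assumption. I would prove this as a standalone chain of equivalences, splitting it into two independent steps.

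For the step $(1) \Leftrightarrow (2)$, I would use the fact (established in the Proposition preceding Lemma \ref{lem:equivalence_weak_1}) that the range condition $\hat{C}_\mu^{1/2}[\mathcal{H}] = \hat{C}_\nu^{1/2}[\mathcal{H}]$ implies $\hat{B} = \hat{C}_\mu^{-1/2}\hat{C}_\nu^{1/2}$ is bounded and boundedly invertible on $\mathcal{H}$, and the same is true for $\hat{B}^*$. The key identity
\begin{equation}
    \hat{B}^*\bigl(\hat{B}\hat{B}^* - I\bigr)(\hat{B}^*)^{-1} = \hat{B}^*\hat{B} - I
\end{equation}
then shows that the two operators are related by conjugation with a bounded invertible operator. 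Since the Hilbert-Schmidt class is a two-sided ideal in $\mathfrak{B}(\mathcal{H})$, $\hat{B}\hat{B}^* - I \in \mathrm{HS}(\mathcal{H})$ if and only if $\hat{B}^*\hat{B} - I \in \mathrm{HS}(\mathcal{H})$.

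For the step $(2) \Leftrightarrow (3)$, I would rewrite $\Delta\mathfrak{q}$ in terms of $\hat{B}$. Under the range assumption, any $f \in \mathsf{H}_\nu$ can be uniquely written as $f = \hat{C}_\nu^{1/2} f_0$ with $f_0 \in \mathcal{H}$, and the map $\hat{C}_\nu^{1/2} \colon \mathcal{H} \to \mathsf{H}_\nu$ is a unitary isomorphism. Then a direct calculation gives, for all $f,g \in \mathsf{H}_\nu$,
\begin{equation}
    \Delta \mathfrak{q}(f,g) = \bigl\langle \hat{B} f_0, \hat{B} g_0 \bigr\rangle_\mathcal{H} - \bigl\langle f_0, g_0 \bigr\rangle_\mathcal{H} = \bigl\langle f_0, (\hat{B}^*\hat{B} - I)\, g_0 \bigr\rangle_\mathcal{H}.
\end{equation}
If $\{\phi_n\}_{n=1}^\infty$ and $\{\psi_m\}_{m=1}^\infty$ are orthonormal bases in $\mathsf{H}_\nu$, then $\{e_n := \hat{C}_\nu^{-1/2}\phi_n\}$ and $\{f_m := \hat{C}_\nu^{-1/2}\psi_m\}$ are orthonormal bases in $\mathcal{H}$. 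Applying Parseval's identity twice yields
\begin{equation}
    \sum_{n=1}^\infty \sum_{m=1}^\infty \bigl(\Delta\mathfrak{q}(\phi_n,\psi_m)\bigr)^2 = \sum_{m=1}^\infty \bigl\|(\hat{B}^*\hat{B} - I) f_m\bigr\|_\mathcal{H}^2 = \bigl\|\hat{B}^*\hat{B} - I\bigr\|_{\mathrm{HS}(\mathcal{H})}^2,
\end{equation}
so the double sum is finite if and only if $\hat{B}^*\hat{B} - I \in \mathrm{HS}(\mathcal{H})$, and moreover the value of the sum is independent of the choice of bases.

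The only genuinely subtle point — and the part I would treat most carefully — is that $\Delta\mathfrak{q}$ is defined on the form domain $\mathcal{Q}(\hat{C}_\mu^{-1}) = \mathcal{Q}(\hat{C}_\nu^{-1}) = \hat{C}_\nu^{1/2}[\mathcal{H}]$, which by Lemma \ref{lem:CM_space} coincides with $\mathsf{H}_\nu$ as a set. Hence $\phi_n, \psi_m \in \mathsf{H}_\nu$ lie in the form domain, so each quantity $\Delta\mathfrak{q}(\phi_n,\psi_m)$ makes sense, and the reduction above is legitimate. Everything else is a straightforward manipulation; the one ingredient that does the real work is the unitary identification $\hat{C}_\nu^{1/2}\colon \mathcal{H} \to \mathsf{H}_\nu$, which is what allows the bilinear form on $\mathsf{H}_\nu$ to be transported to the operator $\hat{B}^*\hat{B} - I$ on $\mathcal{H}$, whose Hilbert-Schmidt norm we can then read off.
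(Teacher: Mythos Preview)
Your proposal is correct and follows essentially the same approach as the paper: build on Lemma \ref{lem:equivalence_weak_1} for the equivalence with condition (1), then establish $(1)\Leftrightarrow(2)$ and $(2)\Leftrightarrow(3)$ via the identity $\Delta\mathfrak{q}(\phi_n,\psi_m) = \langle e_n, (\hat{B}^*\hat{B}-I)f_m\rangle_\mathcal{H}$ under the unitary $\hat{C}_\nu^{1/2}\colon\mathcal{H}\to\mathsf{H}_\nu$. The only minor difference is that for $(1)\Leftrightarrow(2)$ the paper cites \cite[Lem.~6.3.1(ii)]{Bogachev2015}, whereas you give the direct conjugation argument $\hat{B}^*(\hat{B}\hat{B}^*-I)(\hat{B}^*)^{-1}=\hat{B}^*\hat{B}-I$ together with the ideal property of the Hilbert--Schmidt class, which is precisely the content of that lemma.
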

    \begin{remark}
        Of course, this statement also holds if we interchange $\mu$ and $\nu$.
    \end{remark}
    \begin{remark}
        Note that condition $\emph{3.}$ in the above Theorem implies that $\Delta \mathfrak{q}$ is infinitesimally form bounded with respect to $\mathfrak{q}_\nu$, see \cite[Sec.~4.2]{Pinski2015}.
    \end{remark}

    In summary, two Gaussian measures $\mu = \mathcal{N} (0,\hat{C}_\mu)$ and $\nu = \mathcal{N} (0,\hat{C}_\nu)$ describing, for example, free scalar field theories over a bounded region $\Omega \subset \R^d$, are equivalent (and thus the relative entropy between them is finite) if and only if their precision operators $\hat{C}^{-1}_\mu$ and $\hat{C}^{-1}_\nu$ have the same form domain and the operator $\hat{B} \hat{B}^* - I$ is Hilbert-Schmidt on $\mathcal{H}$, or, equivalently, the difference in the forms associated with the precision operators is square summable in the sense of item \emph{3.} in Theorem \ref{thm:equivalence}. The necessity for $\hat{B} \hat{B}^* - I$ to be Hilbert-Schmidt will become clear in the next Section, Appendix \ref{app:KL_formula}, where we show that the relative entropy between $\mu$ and $\nu$ is given by the logarithm of the regularized Fredholm determinant of $\hat{B} \hat{B}^*$, which is finite precisely when $\hat{B} \hat{B}^* - I$ is Hilbert-Schmidt. The requirement that the form domains of the precision operators coincide is useful when showing that two Gaussian measures are mutually singular, i.e., the relative entropy between them is infinite. Consider, for example, the relative entropy between a Dirichlet field and a field with free boundary conditions, both of mass $m > 0$. As discussed above, the form domain of the Dirichlet precision operator is $H_0^{+1} (\Omega)$ while the form domain of the precision operator with free boundary conditions is $H^{+1} (\Omega)$, cf. Table \ref{tab:RKHS_CMS}. But in general $H_0^{+1} (\Omega) \subsetneq H^{+1} (\Omega)$, for example when $\Omega$ is $C^0$ \cite[Cor.~3.29(vii)]{Chandler2017}. In such a case we can conclude that these two Gaussian measures are mutually singular and the relative entropy between a Dirichlet field and a field with free boundary conditions is infinite, see also the discussion in Section \ref{sec:diff_bcs}.

    \section{A Formula for the Relative Entropy}\label{app:KL_formula}
    
    In this Section, we derive an explicit expression for the relative entropy in terms of the eigenvalues of the operator $(\hat{C}_\nu^{-1/2} \hat{C}_\mu^{1/2}) (\hat{C}_\nu^{-1/2} \hat{C}_\mu^{1/2})^*$. More precisely, let $\mu = \mathcal{N} (0, \hat{C}_\mu)$ and $\nu = \mathcal{N} (0, \hat{C}_\nu)$ be two equivalent non-degenerate centred Gaussian measures as defined in Appendix \ref{app:func_integrals}. We recall that in this case the relative entropy $D_\mathrm{KL} (\mu \| \nu)$ is given by
    \begin{equation}
        D_\mathrm{KL} (\mu \| \nu) = \int_{\mathfrak{X}_\beta^*} \log \left[ \frac{\D \mu}{\D \nu} (\varphi) \right] \; \D \mu (\varphi) \; ,
    \end{equation}
    where $\D \mu / \D \nu$ is the Radon-Nikodym derivative of $\mu$ with respect to $\nu$. 
    
    We first derive an expression for the density $\D \mu / \D \nu$ in terms of the eigenvalues of the operator $\hat{S} \coloneqq \hat{B} \hat{B}^*$, where $\hat{B} \coloneqq \hat{C}_\nu^{-1/2} \hat{C}_\mu^{1/2}$. Recall that $\mu \sim \nu$ implies that the operator $\hat{S} - I$ is Hilbert-Schmidt on $\mathcal{H}$.    
    \begin{theorem}[{\cite[Col.~6.4.11]{Bogachev2015}}]
        Let $\mu = \mathcal{N} (0, \hat{C}_\mu)$ and $\nu = \mathcal{N} (0, \hat{C}_\nu)$ be two equivalent non-degenerate centred Gaussian measures. Then, the Radon-Nikodym density of $\mu$ with respect to $\nu$ is given by
        \begin{equation}\label{eq:density_eigenvalues}
            \frac{\D \mu}{\D \nu} (\varphi) = \exp \left[ \frac{1}{2} \sum_{n=1}^\infty \left( \frac{\alpha_n - 1}{\alpha_n} ( \eta_n (\varphi) )^2 - \log \alpha_n \right) \right] \; ,
        \end{equation}
        where $\{\alpha_n\}_{n=1}^\infty$ is the sequence of eigenvalues of $\hat{S}$ with corresponding eigenvectors $\{\phi_n\}_{n=1}^\infty$ in $\mathcal{H}$ and $\eta_n$ is the inclusion of $\hat{C}_\nu^{-1/2} \phi_n$ into $L^2 (\mathfrak{X}_\beta^*,\nu) \eqqcolon L^2 (\nu)$.
    \end{theorem}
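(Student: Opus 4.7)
The plan is to reduce the claim to a diagonal form and then to recognize it as a limit of finite-dimensional Gaussian density calculations. First, I would diagonalize the operator $\hat{S} = \hat{B}\hat{B}^*$. Since $\hat{S} - I \in \mathrm{HS}(\mathcal{H})$ by Theorem \ref{thm:equivalence}, $\hat{S}$ is a bounded self-adjoint perturbation of the identity by a compact operator, hence admits an orthonormal eigenbasis $\{\phi_n\}_{n=1}^\infty$ of $\mathcal{H}$ with eigenvalues $\alpha_n > 0$ (strict positivity follows from the equivalence/nondegeneracy) satisfying $\alpha_n \to 1$ and $\sum_n (\alpha_n-1)^2 < +\infty$.

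Next, I would identify the statistics of the sequence $\eta_n$ under both measures. By construction, the vectors $\{\hat{C}_\nu^{-1/2}\phi_n\}_n$ form an orthonormal basis of the reproducing kernel Hilbert space $\mathfrak{X}_\nu$ of $\nu$, so the $\eta_n$ are a sequence of independent standard Gaussians in $L^2(\nu)$. Under $\mu$ the covariance of the same functionals is
\begin{equation}
\EE_\mu[\eta_n \eta_m] = \braket{\hat{C}_\nu^{-1/2}\phi_n, \hat{C}_\mu \hat{C}_\nu^{-1/2}\phi_m}_\mathcal{H} = \braket{\phi_n, \hat{S}\phi_m}_\mathcal{H} = \alpha_n \delta_{nm},
\end{equation}
so that under $\mu$ they are independent centred Gaussians with variances $\alpha_n$. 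A straightforward one-dimensional computation shows that the density of $\mathcal{N}(0,\alpha_n)$ with respect to $\mathcal{N}(0,1)$ on $\R$ equals $\alpha_n^{-1/2}\exp\!\bigl[\tfrac{1}{2}\tfrac{\alpha_n-1}{\alpha_n}x^2\bigr]$, which is exactly the $n$-th factor of the claimed density.

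For the passage from finite to infinite dimension, I would let $\mathcal{F}_N \coloneqq \sigma(\eta_1,\ldots,\eta_N)$ and form the finite product density
\begin{equation}
    \rho_N(\varphi) \coloneqq \prod_{n=1}^N \alpha_n^{-1/2} \exp\left[\frac{\alpha_n-1}{2\alpha_n}(\eta_n(\varphi))^2\right] = \exp\left[\frac{1}{2}\sum_{n=1}^N \left(\frac{\alpha_n-1}{\alpha_n}(\eta_n(\varphi))^2 - \log\alpha_n\right)\right].
\end{equation}
By the product/independence structure, $\rho_N$ is a version of $\D\mu|_{\mathcal{F}_N}/\D\nu|_{\mathcal{F}_N}$, so the sequence $(\rho_N)_{N\in\N}$ is a nonnegative martingale under $\nu$ with $\EE_\nu[\rho_N]=1$. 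Since $\{\hat{C}_\nu^{-1/2}\phi_n\}_n$ is total in $\mathfrak{X}_\nu$ and $\mathfrak{X}_\nu$ separates points of the support of $\nu$, the $\sigma$-algebras $\mathcal{F}_N$ increase to the full (completed) $\sigma$-algebra, and the Doob martingale convergence theorem then yields $\rho_N \to \rho_\infty$ $\nu$-a.e. and in $L^1(\nu)$, with $\rho_\infty = \D\mu/\D\nu$ by uniqueness of the Radon--Nikodym derivative.

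The main technical obstacle is controlling the convergence of the series in the exponent. I would check, using $\EE_\mu[\eta_n^2]=\alpha_n$ and $\mathrm{Var}_\mu(\eta_n^2)=2\alpha_n^2$, that
\begin{equation}
    \EE_\mu\!\left[\frac{\alpha_n-1}{\alpha_n}\eta_n^2 - \log \alpha_n\right] = (\alpha_n - 1) - \log \alpha_n = \tfrac{1}{2}(\alpha_n-1)^2 + O((\alpha_n-1)^3),
\end{equation}
and that the variance of the $n$-th term equals $2(\alpha_n-1)^2$. Both $\sum_n \EE_\mu[\text{term}_n]$ and $\sum_n \mathrm{Var}_\mu(\text{term}_n)$ are finite precisely because $\sum_n(\alpha_n-1)^2<\infty$, so the series converges $\mu$-a.s.\ (equivalently $\nu$-a.s., by equivalence). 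Matching the $\nu$-a.s.\ pointwise limit with the exponential of this series then closes the argument. The delicate step is the simultaneous cancellation between the linear-in-$(\alpha_n-1)$ pieces of $\EE_\mu[\tfrac{\alpha_n-1}{\alpha_n}\eta_n^2]$ and of $\log\alpha_n$: without the $\log\alpha_n$ normalization (which is exactly what a regularized Fredholm determinant provides) the series would fail to converge, mirroring the appearance of the regularized determinant in the relative entropy formula.
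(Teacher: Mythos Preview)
Your proposal is correct and takes a genuinely different route from the paper. The paper does not use martingales or the filtration $(\mathcal{F}_N)$ at all: it instead (i) proves that the exponent series converges in $L^2(\nu)$ by writing each summand as a multiple of the orthonormal system $\{\tfrac{1}{\sqrt{2}}(\eta_n^2-1)\}$ in $L^2(\nu)$ plus a deterministic remainder controlled by $\sum_n(\alpha_n-1)^2<\infty$, (ii) invokes \cite[Cor.~6.4.10]{Bogachev2015} to conclude that $\rho=\exp[\tfrac12 s]\in L^1(\nu)$ and that $\rho\cdot\nu$ is Gaussian, and (iii) verifies $\mu=\rho\cdot\nu$ by computing the characteristic functionals of both sides on arbitrary $\xi\in\mathfrak{X}_\nu$ via one-dimensional Gaussian integrals. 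Your martingale argument bypasses step (iii) entirely, which is a genuine simplification, and replaces the $L^2(\nu)$ analysis of step (i) by second-moment estimates under $\mu$; the paper's approach, on the other hand, gives $L^2$ control of $\log(\D\mu/\D\nu)$ directly, which is what one actually integrates to obtain the relative entropy formula downstream.

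One point in your argument requires more care. A nonnegative martingale with unit expectation converges $\nu$-a.s.\ by Doob, but \emph{not} automatically in $L^1(\nu)$; for that you need uniform integrability. Here UI is available precisely because $\mu\sim\nu$ is part of the hypothesis: the full Radon--Nikodym derivative $\D\mu/\D\nu$ already lies in $L^1(\nu)$, so $\rho_N=\EE_\nu[\D\mu/\D\nu\mid\mathcal{F}_N]$ is a closed martingale and L\'evy's upward theorem yields $\rho_N\to\D\mu/\D\nu$ in $L^1(\nu)$. As written, your sentence ``the Doob martingale convergence theorem then yields $\rho_N\to\rho_\infty$ $\nu$-a.e.\ and in $L^1(\nu)$'' hides exactly the input (absolute continuity on the full $\sigma$-algebra) that the theorem is meant to establish; make the appeal to equivalence explicit at this step and the argument is complete.
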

    
    Notice that this Theorem is essentially \cite[Col.~6.4.11]{Bogachev2015}, which gives the analogous result for Gaussian measures on Hilbert spaces. The proof in \cite[Col.~6.4.11]{Bogachev2015} works also for the case considered here, but for convenience, we shall reproduce the proof following this reference as well as \cite{Michalek1999,Minh2021}.
    
    \begin{proof}
        Since $\hat{S}-I$ is Hilbert-Schmidt on $\mathcal{H}$, there exists, by the Hilbert-Schmidt theorem, an orthonormal basis $\{\phi_n\}_{n=1}^\infty$ in $\mathcal{H}$, such that $(\hat{S} - I) \phi_n = (\alpha_n - 1) \phi_n$ for all $n \in \N$, where the $\alpha_n$ are the eigenvalues of $\hat{S}$. Furthermore, the series $\sum_{n=1}^\infty (\alpha_n - 1)^2$ converges. The sequence $\{\hat{C}_\nu^{-1/2} \phi_n\}_{n=1}^\infty$ is an orthonormal basis in $\mathfrak{X}_\nu$, the RKHS of $\nu$. We recall that $\mathfrak{X}_\nu \subset L^2 (\nu)$ and we denote by $\eta_n$ the inclusion of $\hat{C}_\nu^{-1/2} \phi_n$ in $L^2 (\nu)$.
    
        We now show that the series in the exponent on the right-hand side of \eqref{eq:density_eigenvalues} converges in $L^2 (\nu)$. First, define
        \begin{equation}
            s_N (\varphi) \coloneqq \sum_{n=1}^N \left( \frac{\alpha_n - 1}{\alpha_n} ( \eta_n (\varphi) )^2 - \log \alpha_n \right) \in L^2 (\nu) \; .
        \end{equation}
        Recall that, by Wick's theorem \cite[Prop.~I.2]{Simon1974}, for all $f,g \in L^2 (\nu)$,
        \begin{equation}
            \braket{f^2,g^2}_{L^2(\nu)} = \|f\|^2_{L^2(\nu)} \|g\|^2_{L^2(\nu)} + 2 \braket{f,g}^2_{L^2(\nu)} \; ,
        \end{equation}
        where the squares have to be understood as the pointwise products of the linear functionals $f$ and $g$ on $\mathfrak{X}^*$. In particular, for two orthonormal basis vectors $\eta_n$ and $\eta_m$, $\braket{\eta_n^2,\eta_m^2}_{L^2(\nu)} = 1 + 2 \delta_{nm}$. From this, we immediately see that \cite[Lem.~13]{Minh2021}
        \begin{align*}
            \frac{1}{2} \braket{(\eta_n^2-1),(\eta_m^2-1)}_{L^2(\nu)} &= \frac{1}{2} \left( \braket{\eta_n^2,\eta_m^2}_{L^2(\nu)} - \braket{\eta_n^2,1}_{L^2(\nu)} - \braket{1,\eta_m^2}_{L^2(\nu)} + \braket{1,1}_{L^2(\nu)} \right) \\
            &= \frac{1}{2} \left( 1 + 2 \delta_{nm} - \|\eta_n\|^2_{L^2(\nu)} - \|\eta_m\|^2_{L^2(\nu)} + 1 \right) \\
            &= \delta_{nm} \; .
        \end{align*}
        Therefore, $\{\frac{1}{\sqrt{2}} (\eta^2_n - 1)\}_{n=1}^{\infty}$ is an orthonormal system in $L^2 (\nu)$. Following \cite[Lem.~14]{Minh2021}, we rewrite $s_N$ as
        \begin{equation}
            s_N = \sum_{n=1}^N \left( \frac{\sqrt{2} ( \alpha_n - 1 )}{\alpha_n} \frac{1}{\sqrt{2}} ( \eta_n^2 - 1 ) + \frac{\alpha_n - 1}{\alpha_n} - \log \alpha_n \right) = r_N + t_N \; ,
        \end{equation}
        where
        \begin{equation}
            r_N \coloneqq \sum_{n=1}^N \frac{\sqrt{2} ( \alpha_n - 1 )}{\alpha_n} \frac{1}{\sqrt{2}} ( \eta_n^2 - 1 ) \; , \qquad t_N \coloneqq \sum_{n=1}^N \left( \frac{\alpha_n - 1}{\alpha_n} - \log \alpha_n \right) \; .
        \end{equation}
    
        As $L^2 (\nu)$ is complete, we only need to show that $\{r_N\}_{N = 1}^{\infty}$ and $\{t_N\}_{N = 1}^{\infty}$ are Cauchy. We start with $\{t_N\}_{N = 1}^{\infty}$. Without loss of generality, assume $N > M$. Then,
        \begin{align*}
            \|t_N - t_M\|_{L^2(\nu)} &= \left\| \sum_{n=M+1}^N \left( \frac{\alpha_n - 1}{\alpha_n} - \log \alpha_n \right) \right\|_{L^2 (\nu)} \\
            &\leq \sum_{n=M+1}^N \left\| \frac{\alpha_n - 1}{\alpha_n} - \log \alpha_n \right\|_{L^2 (\nu)} \\
            &= \sum_{n=M+1}^N \left| \frac{\alpha_n - 1}{\alpha_n} - \log \alpha_n \right| \; . \numberthis
        \end{align*}
        Hence, convergence of $\sum_{n=1}^\infty | (\alpha_n - 1) \alpha_n^{-1} - \log \alpha_n |$ in $\R$ implies convergence of $\{t_N\}_{N = 1}^{\infty}$ in $L^2 (\nu)$. To see that this is indeed the case, recall that $\sum_{n=1}^{\infty} (\alpha_n - 1)^2 < + \infty$ as $\hat{S} - I \in \mathrm{HS} (\mathcal{H})$. This implies that $\sum_{n=1}^\infty | (\alpha_n - 1) \alpha_n^{-1} - \log \alpha_n |$ converges, as shown in \cite[p.~336f]{Michalek1999}. Therefore, $t \coloneqq \lim_{N \to \infty} t_N \in L^2 (\nu)$ exists. Following an analogous argument, convergence of $\sum_{n=1}^\infty (1 - \alpha_n^{-1})^2$ in $\R$ implies convergence of $\{r_N\}_{N = 1}^{\infty}$ in $L^2 (\nu)$. The convergence of $\sum_{n=1}^\infty (1 - \alpha_n^{-1})^2$ follows from the convergence of $\sum_{n=1}^\infty (1 - \alpha_n)^2$ \cite[p.~336]{Michalek1999}. Therefore, $s \coloneqq \lim_{N \to \infty} s_N \in L^2 (\nu)$ exists. We can rewrite $s (\varphi)$ as
        \begin{equation}
            s (\varphi) = \sum_{n=1}^\infty \left( 1 - \frac{1}{\alpha_n} \right) \left[ (\eta_n (\varphi))^2 - 1 \right] - \underbrace{\sum_{n=1}^\infty \left( \frac{1}{\alpha_n} - \log \frac{1}{\alpha_n} - 1 \right)}_{\in \, \R} \; .
        \end{equation}
        Therefore, by \cite[Cor.~6.4.10]{Bogachev2015}, $\rho (\varphi) \coloneqq \exp [\frac{1}{2} s (\varphi)]$ is in $L^1 (\nu)$ and $\| \rho \|_{L^1 (\nu)}^{-1} \rho \cdot \nu$ is a Gaussian measure.
    
        Finally, we show that $\D \mu / \D \nu = \rho$, i.e., that $\mu = \rho \cdot \nu$. First, note that
        \begin{equation}\label{eq:etaeta_mu}
            \braket{\eta_n, \eta_m}_{L^2 (\mu)} = \braket{\hat{C}^{1/2}_\mu \hat{C}_\nu^{-1/2} \phi_n , \hat{C}^{1/2}_\mu \hat{C}_\nu^{-1/2} \phi_m}_\mathcal{H} = \braket{\hat{S} \phi_n, \phi_m}_\mathcal{H} = \alpha_n \delta_{nm} \; .
        \end{equation}
        Fix $\xi \in \mathfrak{X}_\nu \subset L^2 (\nu)$. Such a $\xi$ can  be written as $\xi = \sum_{n=1}^\infty c_n \eta_n$ with $\sum_{n=1}^\infty c_n^2 < + \infty$. As $\{\eta_n\}_{n=1}^\infty$ is an orthonormal basis in $\mathfrak{X}_\nu$, every $\eta_n$ is a standard Gaussian random variable and thus, using the change of variables formula, we see that
        \begin{align*}
            \int_{\mathfrak{X}_\beta^*} \exp \left[ \frac{\alpha_n - 1}{2 \alpha_n} \eta_n^2 + \I c_n \eta_n \right] \D \nu (\varphi) &= \frac{1}{\sqrt{2 \pi}} \int_{-\infty}^{+\infty} \exp \left[ \frac{\alpha_n - 1}{2 \alpha_n} x^2 + \I c_n x - \frac{1}{2} x^2 \right] \D x \\
            &= \sqrt{\alpha_n} \, \exp \left[ - \frac{1}{2} c^2_n \alpha_n \right] \; . \numberthis
        \end{align*}
        In particular, we therefore have
        \begin{equation}
            \int_{\mathfrak{X}^*_\beta} \E^{\I \xi (\varphi)} \rho (\varphi) \; \D \nu (\varphi) = \exp \left[ - \frac{1}{2} \sum_{n=1}^\infty c_n^2 \alpha_n \right] \; .
        \end{equation}
        An analogous calculation, using \eqref{eq:etaeta_mu}, yields
        \begin{equation}
            \int_{\mathfrak{X}^*_\beta} \E^{\I \xi (\varphi)} \; \D \mu (\varphi) = \exp \left[ - \frac{1}{2} \sum_{n=1}^\infty c_n^2 \alpha_n \right] \; .
        \end{equation}
        Thus, $\hat{\mu} = \widehat{\rho \cdot \nu}$ and, as a Gaussian measure is uniquely determined by its Fourier transform, cf. \cite[Lem.~7.13.5]{Bogachev2007}, $\mu = \rho \cdot \nu$. Therefore, we conclude that the Radon-Nikodym density of $\mu$ with respect to $\nu$ is given by \eqref{eq:density_eigenvalues}.
    \end{proof}
    
    \begin{corollary}
        Let $\mu = \mathcal{N} (0, \hat{C}_\mu)$ and $\nu = \mathcal{N} (0, \hat{C}_\nu)$ be equivalent. Then, the relative entropy $D_\mathrm{KL} (\mu \| \nu)$ is given by
        \begin{equation}\label{eq:KL_eigenvalues}
            D_\mathrm{KL} (\mu \| \nu) = \frac{1}{2} \sum_{n=1}^\infty \left( \alpha_n - \log \alpha_n - 1 \right) \; .
        \end{equation}
    \end{corollary}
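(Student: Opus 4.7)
The strategy is essentially to insert the explicit formula for the Radon–Nikodym derivative obtained in the preceding theorem into the definition
\begin{equation*}
    D_\mathrm{KL} (\mu \| \nu) = \int_{\mathfrak{X}_\beta^*} \log \left[ \frac{\D \mu}{\D \nu} (\varphi) \right] \D \mu (\varphi)
\end{equation*}
and then compute the resulting integral term-by-term. Concretely, the preceding theorem gives
\begin{equation*}
    \log \frac{\D \mu}{\D \nu} (\varphi) = \frac{1}{2} \sum_{n=1}^\infty \left( \frac{\alpha_n - 1}{\alpha_n} ( \eta_n (\varphi) )^2 - \log \alpha_n \right),
\end{equation*}
so the first step is to exchange the integration over $\mu$ with the infinite sum, yielding
\begin{equation*}
    D_\mathrm{KL} (\mu \| \nu) = \frac{1}{2} \sum_{n=1}^\infty \left( \frac{\alpha_n - 1}{\alpha_n} \int_{\mathfrak{X}_\beta^*} (\eta_n (\varphi))^2 \, \D \mu (\varphi) - \log \alpha_n \right).
\end{equation*}

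Next I would use the identity \eqref{eq:etaeta_mu} derived in the proof of the preceding theorem, namely $\braket{\eta_n, \eta_m}_{L^2 (\mu)} = \alpha_n \delta_{nm}$, which in particular gives $\int (\eta_n)^2 \, \D \mu = \alpha_n$. Substituting this into the previous display, the factor $(\alpha_n-1)/\alpha_n$ multiplied by $\alpha_n$ collapses to $\alpha_n - 1$, and one obtains
\begin{equation*}
    D_\mathrm{KL} (\mu \| \nu) = \frac{1}{2} \sum_{n=1}^\infty \left( \alpha_n - 1 - \log \alpha_n \right),
\end{equation*}
which is the claimed formula.

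The only nontrivial step is justifying the interchange of sum and integral in the first step. Here I would appeal to the convergence results already established in the proof of the preceding theorem: the partial sums $s_N$ were shown to converge in $L^2(\nu)$ (and the limit was then related to $\log (\D\mu/\D\nu)$), and one verifies that they also converge in $L^1(\mu)$, for instance by splitting $s_N = r_N + t_N$ as in the proof and estimating each piece separately. The deterministic part $t_N$ converges absolutely in $\R$ to $\sum_{n=1}^\infty ((\alpha_n-1)/\alpha_n - \log \alpha_n)$, while the stochastic part $r_N$ — a sum in the orthonormal system $\{(\eta_n^2 - 1)/\sqrt{2}\}$ with square-summable coefficients $\sqrt{2}(\alpha_n-1)/\alpha_n$ (square-summability following from $\hat S - I \in \mathrm{HS}(\mathcal H)$) — has integral zero against $\nu$ and integrates to $\sum_n (\alpha_n-1)$ against $\mu$ by \eqref{eq:etaeta_mu}, which is again absolutely convergent since $\sum_n (\alpha_n - 1)^2 < +\infty$ and one can further control the tail using $\sum_n(\alpha_n-1-\log\alpha_n) < +\infty$. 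This dominated-convergence argument is the main technical point; once it is in place, the computation above yields \eqref{eq:KL_eigenvalues} directly.
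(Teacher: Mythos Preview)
Your proposal is correct and follows exactly the paper's approach: the paper's proof is precisely the four-line computation you outline --- insert the density formula, interchange sum and integral, use $\|\eta_n\|^2_{L^2(\mu)}=\alpha_n$ from \eqref{eq:etaeta_mu}, and simplify. The paper in fact does not pause to justify the interchange (it calls the whole thing a ``straightforward calculation''), so your additional convergence discussion goes beyond what the paper provides; note, however, a small slip there: $\int r_N \,\D\mu = \sum_{n\le N}(\alpha_n-1)^2/\alpha_n$, not $\sum_n(\alpha_n-1)$.
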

    \begin{remark}
        This result should be compared to \cite[Ch.~2]{Michalek1999}.
    \end{remark}
    \begin{proof}
        By a straightforward calculation, we see that
        \begin{align*}
            D_\mathrm{KL} (\mu \| \nu) &= \int_{\mathfrak{X}^*_\beta} \log \left[ \frac{\D \mu}{\D \nu} (\varphi) \right] \; \D \mu (\varphi) \\
            &= \frac{1}{2} \sum_{n=1}^\infty \int_{\mathfrak{X}^*_\beta} \left( \frac{\alpha_n - 1}{\alpha_n} ( \eta_n (\varphi) )^2 - \log \alpha_n \right) \; \D \mu (\varphi) \\
            &= \frac{1}{2} \sum_{n=1}^\infty \left( \frac{\alpha_n - 1}{\alpha_n \,} \| \eta_n \|^2_{L^2 (\mu)} - \log \alpha_n \right) \\
            &= \frac{1}{2} \sum_{n=1}^\infty \left( \alpha_n - \log \alpha_n - 1 \right) \; .
        \end{align*}
    \end{proof}
    It is worthwhile to compare this expression of the relative entropy with that of the relative entropy between two non-degenerate multivariate normal distributions. Let $\mu = \mathcal{N} (\boldsymbol{0}, \Sigma_\mu)$ and $\nu = \mathcal{N} (\boldsymbol{0}, \Sigma_\nu)$ be two centred Gaussian measures on $\R^N$ with covariance matrices $\Sigma_\mu$ and $\Sigma_\nu$, respectively. The relative entropy of $\mu$ with respect to $\nu$ takes the form \cite[Ch.~1]{Pardo2018}
    \begin{equation}\label{eq:KL_multivariate_Gaussian}
        D_\mathrm{KL} (\mu \| \nu) = \frac{1}{2} \left[ \mathrm{tr} \left( \Sigma^{-1}_\nu \Sigma_\mu - \mathds{1}_N \right) - \log \frac{\det \Sigma_\mu}{\det \Sigma_\nu} \right] \; .
    \end{equation}
    If we denote by $\{\alpha_n\}_{n=1}^N$ the set of eigenvalues of the matrix $S \coloneqq \Sigma^{-1/2}_\nu \Sigma_\mu \Sigma^{-1/2}_\nu$, this expression takes the form
    \begin{equation}\label{eq:KL_eigenvalues_finite}
        D_\mathrm{KL} (\mu \| \nu) = \frac{1}{2} \sum_{n=1}^N \left( \alpha_n - \log \alpha_n - 1 \right) \; .
    \end{equation}
    Thus, the expression \eqref{eq:KL_eigenvalues} coincides with the na\"ive infinite dimensional generalization of \eqref{eq:KL_eigenvalues_finite}.
    
    Finally, we mention another way to write \eqref{eq:KL_eigenvalues}. Consider the function
    \begin{equation}
        \Psi(x) = \left( (1+x) \, \E^{ - x } \right) - 1 \; ,
    \end{equation}
    and define, for all bounded self-adjoint operators $T$ on $\mathcal{H}$, $\Psi(T)$ via the functional calculus. It can be shown that if $T$ is Hilbert-Schmidt, then $\Psi(T)$ is of trace class \cite[Lem.~6.1]{Simon1977}. We then define, for all Hilbert-Schmidt operators $T$ \cite[Sec.~6]{Simon1977} (see also \cite{Gohberg1978}),
    \begin{equation}\label{eq:reg_fredholm_det}
        \mathrm{det}_2 (I + T) \coloneqq \mathrm{det}_1 (I + \Psi(T)) = \mathrm{det}_1 \left( (I+T) \, \E^{ - T } \right) \; ,
    \end{equation}
    where $\mathrm{det}_1$ is the Fredholm determinant \cite[Ch.~3]{Simon2005}. We call $\mathrm{det}_2$ the regularized Fredholm determinant and it can be thought of as a variant of the Fredholm determinant (which is defined only for operators of trace class) that is also valid for Hilbert-Schmidt operators. For the specific case $T = \hat{S} - I$, we can use \cite[Thm.~6.2]{Simon1977} to obtain the result
    \begin{equation}
        \mathrm{det}_2 (I + (\hat{S}-I)) = \mathrm{det}_2 (\hat{S}) = \prod_{n = 1}^\infty \alpha_n \, \E^{ - \alpha_n + 1 } \; .
    \end{equation}
    In particular, we see that we can write the relative entropy between two centred Radon Gaussian measures $\mu = \mathcal{N} (0,\hat{C}_\mu)$ and $\nu = \mathcal{N} (0,\hat{C}_\nu)$ as
    \begin{equation}\label{eq:DKL_logdet2}
        D_\mathrm{KL} (\mu \| \nu) = - \frac{1}{2} \log \mathrm{det}_2 (\hat{S}) \; .
    \end{equation}
    From the above expression we see that the relative entropy is finite if and only if $\hat{C}^{1/2}_\mu [\mathcal{H}] = \hat{C}^{1/2}_\nu [\mathcal{H}]$, which is required in order for $\hat{S}$ to be well-defined, and $\hat{S} - I \in \mathrm{HS} (\mathcal{H})$. But this is just the necessary and sufficient condition for the equivalence of $\mu$ and $\nu$, cf. Theorem \ref{thm:equivalence}. This reflects the fact that the relative entropy between two Gaussian measures is finite if and only if the two measures are equivalent, see also the discussion in Section \ref{sec:info_theory}.

    \section{Closed-Form Expression for Dirichlet Relative Entropy}\label{app:1D_Dirichlet_RE}

    In this Section, we derive the closed-form expression of the relative entropy between two one-dimensional Dirichlet fields with different masses on an interval of length $L$ given in \eqref{eq:DKL_Dirichlet_1d}. We use techniques known from the evaluation of Matsubara sums in thermal field theory, see, e.g., \cite[Ch.~3]{Bellac2000} and \cite{Altland2010,Nieto1995}. More precisely, we rewrite the series representation of the Dirichlet relative entropy as a contour integral with a suitable choice of weight function. This computational trick is also known as Sommerfeld-Watson transformation \cite{Sommerfeld1949,Watson1918}.
    
    We first make some general remarks on the Sommerfeld-Watson transformation. Let $f$ be a function that is analytic in a neighbourhood of the real line. Furthermore, assume that the series
    \begin{equation}
        \sum_{n = -\infty}^{+ \infty} f (n)
    \end{equation}
    exists. Let $\varpi$ be a meromorphic function with simple poles of unit residue at the points $n \in \Z$ and no other poles. This function shall be referred to as a weight function. Assuming that $f$ is meromorphic and that the pointwise product $f (z) \varpi (z)$ vanishes sufficiently fast as $|z| \to \infty$, we can use the residue theorem together with a deformation of the integration contour\footnote{For details, see, e.g., \cite{Nieto1995}.} to arrive at
    \begin{equation}\label{eq:residue_theorem}
        \sum_{n = -\infty}^{+ \infty} f (n) = - \sum_i \mathrm{Res}_{z = z_i} \left( f (z) \varpi (z) \right) \; ,
    \end{equation}
    where the sum on the right-hand side runs over all poles $z_i$ of $f$.
    
    We now apply this result to the Dirichlet relative entropy in one dimension. The starting point of the calculation is the series representation of this Dirichlet relative entropy given by
    \begin{equation}\label{eq:DKL_Dirichlet_1d_series_app}
        D^\mathrm{D}_\mathrm{KL} (\mu_1 \| \mu_2) = \frac{1}{2} \sum_{n=1}^\infty \left[ \frac{m_1^2 - m_2^2}{\left( \frac{n \pi}{L} \right)^2 + m_2^2} - \log \left( \frac{m_1^2 - m_2^2}{\left( \frac{n \pi}{L} \right)^2 + m_2^2} + 1 \right) \right] \; ,
    \end{equation}
    obtained by using \eqref{eq:DKL_series} and plugging in the eigenvalues of the one-dimensional Dirichlet Laplacian on an interval of length $L$, i.e., $\lambda_n = (n \pi)^2 / L^2$, $n \in \N$. We first write this series as
    \begin{equation}\label{eq:DKL_Dirichlet_1d_series_app_2}
        D^\mathrm{D}_\mathrm{KL} (\mu_1 \| \mu_2) = \frac{1}{4} \sum_{n=-\infty}^\infty \left[ \frac{m_1^2 - m_2^2}{\left( \frac{n \pi}{L} \right)^2 + m_2^2} - \log \left( \frac{\left( \frac{n \pi}{L} \right)^2 + m_1^2}{\left( \frac{n \pi}{L} \right)^2 + m_2^2} \right) \right] - \frac{1}{2} \Omega_0 \; ,
    \end{equation}
    where $\Omega_0$ is the zero-mode given by
    \begin{equation}
        \Omega_0 = \frac{1}{2} \left[ \frac{m_1^2}{m_2^2} - \log \left( \frac{m_1^2}{ m_2^2} \right) - 1 \right] \; .
    \end{equation}
    
    Note that, since we consider the one-dimensional case, the series in \eqref{eq:DKL_Dirichlet_1d_series_app_2} can be written as the sum of two convergent series, cf. the discussion below Lemma \ref{lem:equivalence_mass_independent}. More precisely, we have
    \begin{equation}
        D^\mathrm{D}_\mathrm{KL} (\mu_1 \| \mu_2) = s_1 - s_2 - \frac{1}{2} \Omega_0 \; ,
    \end{equation}
    where
    \begin{equation}
        s_1 \coloneqq \frac{1}{4} \sum_{n=-\infty}^\infty \frac{m_1^2 - m_2^2}{\left( \frac{n \pi}{L} \right)^2 + m_2^2} \qquad \mathrm{and} \qquad s_2 \coloneqq \frac{1}{4} \sum_{n=-\infty}^\infty \log \left( \frac{\left( \frac{n \pi}{L} \right)^2 + m_1^2}{\left( \frac{n \pi}{L} \right)^2 + m_2^2} \right) \; .
    \end{equation}

    First, consider the sum $s_1$. We define the function $f_1 (z) \coloneqq (m_1^2 - m_2^2) / ((z \pi / L)^2 + m_2^2)$, which has simple poles at $z = \pm \I L m_2 / \pi$. As a weight function, we choose $\varpi (z) = \pi \cot (\pi z)$. Since $f_1$ has only simple poles, the residues are readily computed, and upon using \eqref{eq:residue_theorem}, we find
    \begin{equation}
        s_1 = - \frac{\pi}{4} \sum_{\alpha = \pm} \mathrm{Res}_{z = \alpha \I L m_2 / \pi} \left( \frac{m_1^2 - m_2^2}{\left( \frac{z \pi}{L} \right)^2 + m_2^2} \cot (\pi z) \right) = \frac{L (m_1^2 - m_2^2)}{4 m_2 \tanh (L m_2)} \; .
    \end{equation}

    \begin{figure}[t]
        \hspace*{-2.5em}\subfloat[\label{subfig:contour1}]{%
            \hspace*{2.3em}\includegraphics[width=0.5\textwidth]{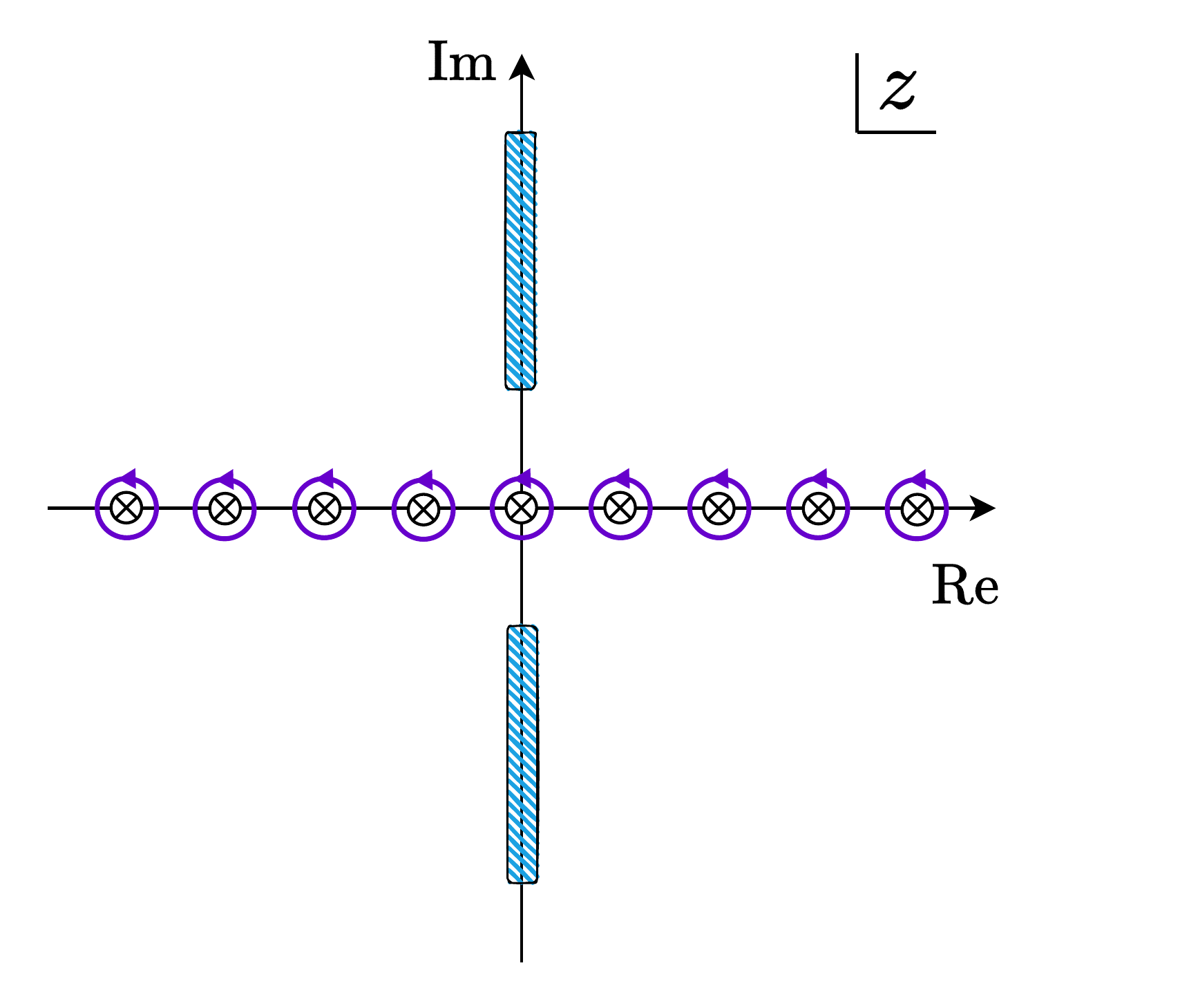}%
        }
        \hfill
        \subfloat[\label{subfig:contour2}]{%
            \hspace*{2.3em}\includegraphics[width=0.5\textwidth]{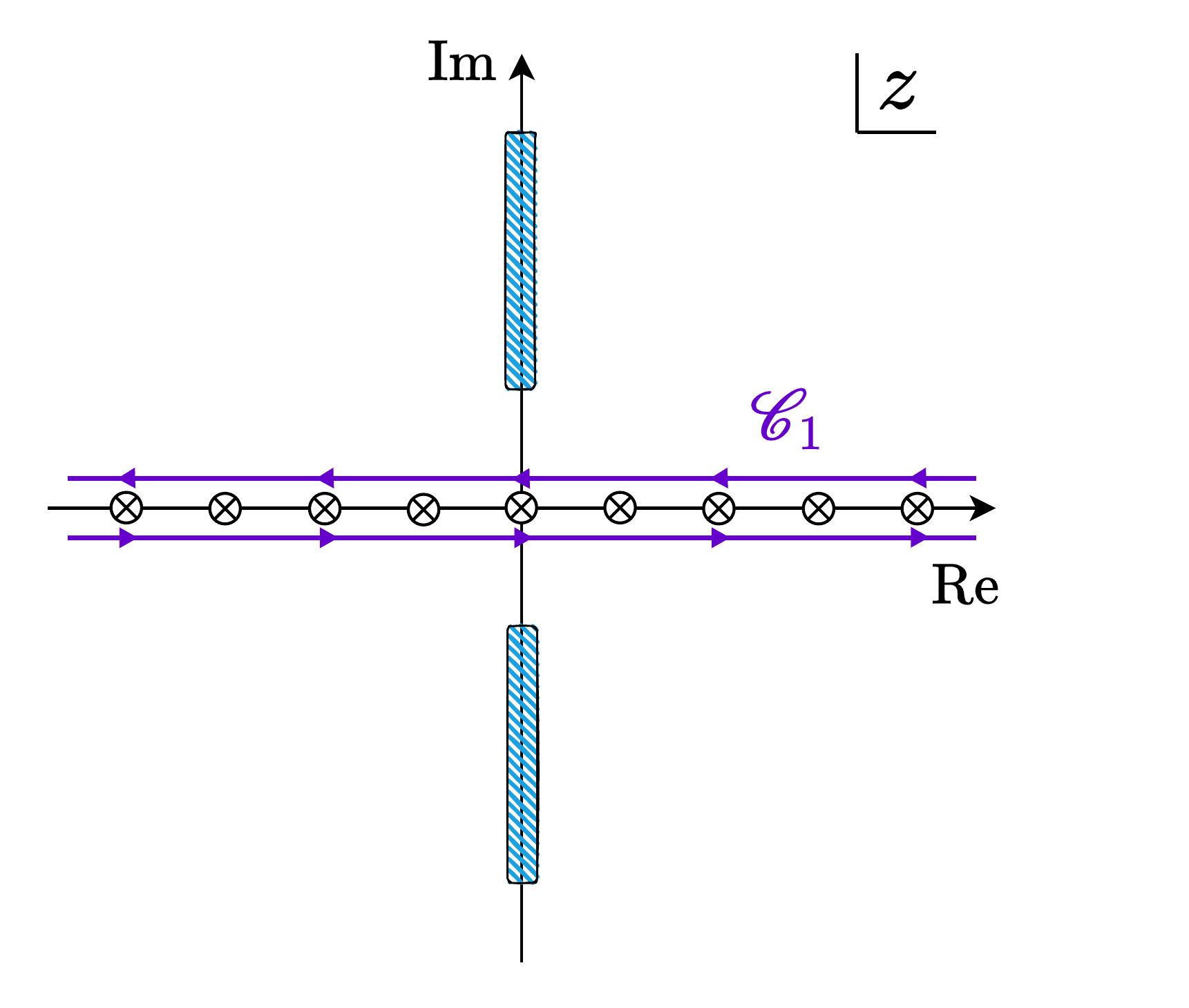}%
        }
        \vfill
        \hspace*{-2.5em}\subfloat[\label{subfig:contour3}]{%
            \hspace*{2.3em}\includegraphics[width=0.5\textwidth]{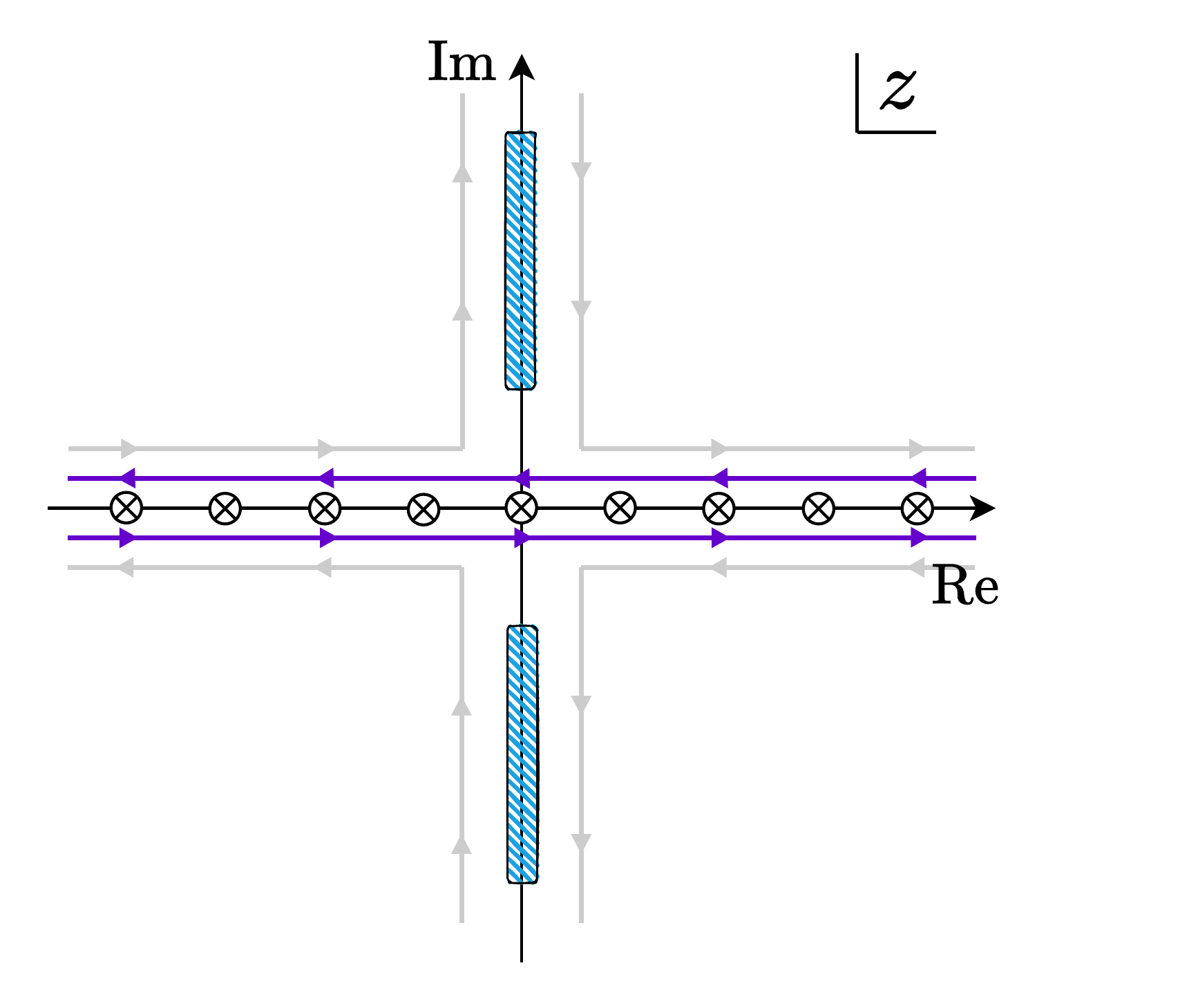}%
        }
        \hfill
        \subfloat[\label{subfig:contour4}]{%
            \hspace*{2.3em}\includegraphics[width=0.5\textwidth]{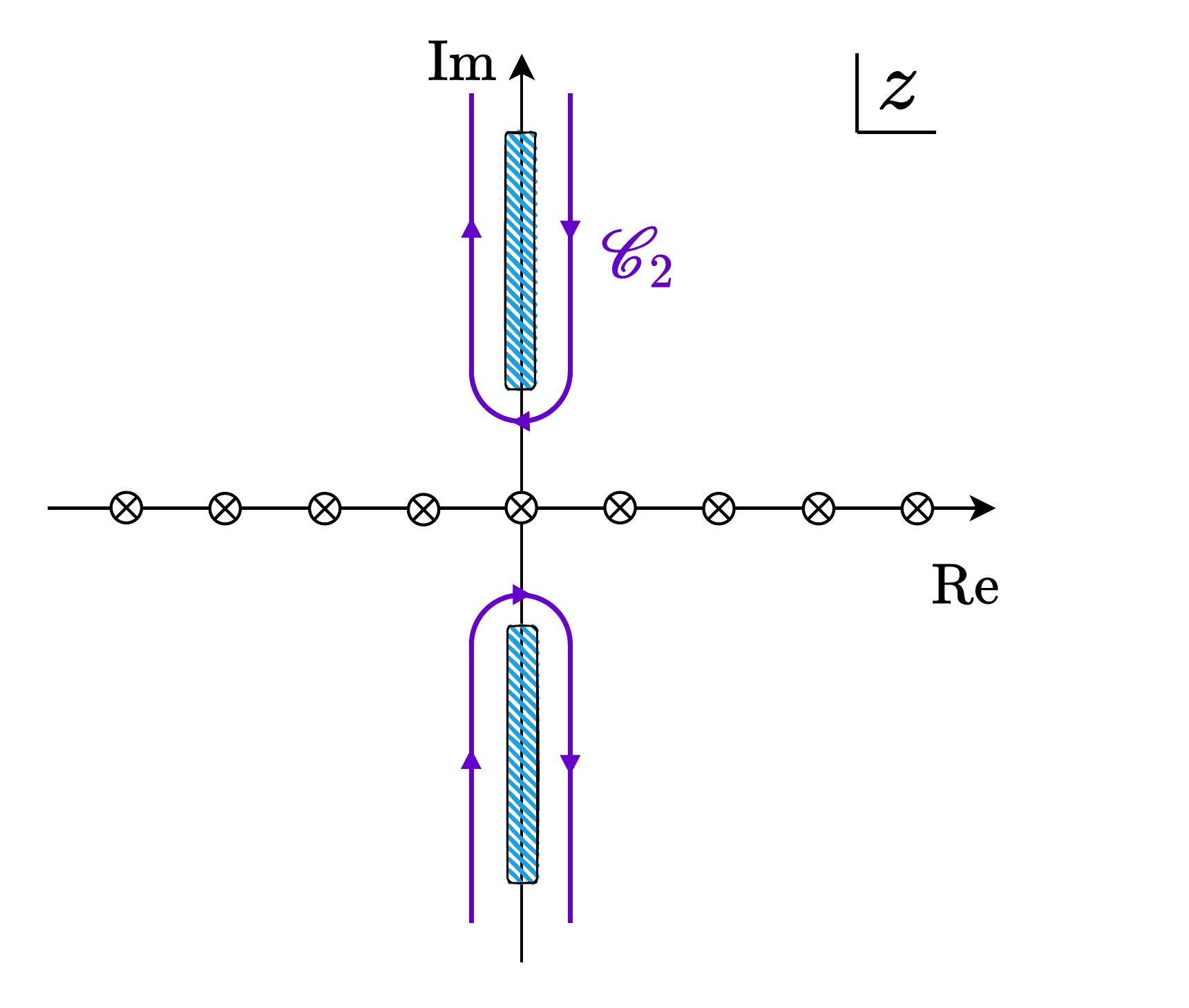}%
        }
        \caption{Contours in the complex plane used to calculate the series $s_2$. The crosses on the real axis indicate the poles of the weight function $\varpi$ at integer points. The blue dashed lines indicate the branch cuts of the function $f_2$ on the imaginary axis. \textbf{(a)} Using the residue theorem, $s_2$ can be evaluated by integrating the point-wise product of $f_2$ and $\varpi$ along circles enclosing the poles of $\varpi$ on the real axis. \textbf{(b)} The contour in (a) can be deformed into the contour $\mathscr{C}_1$, running just above and just below the real axis. \textbf{(c)} Ancilla contours, depicted in light grey and thought of as being closed at infinity using quarter circles, enclose regions of analyticity of the point-wise product of $f_2$ and $\varpi$. \textbf{(d)} The contour $\mathscr{C}_2$, which runs along the imaginary axis, enclosing the branch cuts of the function $f_2$.}
        \label{fig:contours}
    \end{figure}

    The treatment of the sum $s_2$ is more involved due to the presence of branch cuts. We define the function $f_2 (z) \coloneqq \log \left( ((z \pi / L)^2 + m_1^2) / ((z \pi / L)^2 + m_2^2) \right)$ and assume, without loss of generality, that $m_1 > m_2$. This function has branch cuts from $- \I m_1 L / \pi$ to $- \I m_2 L / \pi$ and from $\I m_2 L / \pi$ to $\I m_1 L / \pi$. Using the residue theorem, we can write the sum $s_2$ as
    \begin{equation}
        s_2 = \frac{1}{4} \sum_{n=-\infty}^{+ \infty} f_2 (n) = \frac{1}{4} \oint_{\mathscr{C}_1} f_2 (z) \varpi (z) \, \frac{\D z}{2 \pi \I} \; ,
    \end{equation}
    where $\mathscr{C}_1$ is the contour shown in Fig. \ref{subfig:contour2} and $\varpi$ is again a suitable weight function.
    
    Following \cite[Sec.~2.6]{Bellac2000}, we define, for any function $g$, its \emph{discontinuity} $\mathrm{Disc} [g]$, as the limit (if it exists, in the distributional sense)
    \begin{equation}
        \mathrm{Disc} [g] (z) \coloneqq \lim_{\varepsilon \to 0^+} \left[ g (z +  \I\varepsilon) - g (z - \I \varepsilon) \right] \; .
    \end{equation}
    We can now deform the contour $\mathscr{C}_1$ into the contour $\mathscr{C}_2$ shown in Fig. \ref{subfig:contour4}. Then, using the analyticity of $\varpi$ away from the real axis, we find
    \begin{equation}\label{eq:s2_contour_integral}
        s_2 = \frac{1}{4} \int_{- \I \infty}^{+ \I \infty} \mathrm{Disc} [f_2] (z) \, \varpi (z) \, \frac{\D z}{2 \pi \I} = \frac{1}{4} \int_{- \infty}^{+ \infty} \mathrm{Disc} [f_2] (\I x) \, \varpi (\I x) \, \frac{\D x}{2 \pi} \; ,
    \end{equation}
    where the integral over the real axis is understood as an integral over $\R \setminus (- \eta, + \eta)$ and subsequently taking the limit $\eta \to 0^+$. Upon noticing that the discontinuity of the logarithm on the real axis is given by $\mathrm{Disc} [\log] (x) = 2 \I \pi \Theta (- x)$, where $\Theta$ is the Heaviside step function, we find
    \begin{equation}
        \begin{split}
            &\mathrm{Disc} [f_2] (\I x) = \\
            &2 \I \pi \left[ \Theta \left( - x - \frac{L m_1}{\pi} \right) - \Theta \left( - x - \frac{L m_2}{\pi} \right) + \Theta \left( - x + \frac{L m_1}{\pi} \right) - \Theta \left( - x + \frac{L m_2}{\pi} \right) \right] \; .
        \end{split}
    \end{equation}

    Using again the weight function $\varpi (z) = \pi \cot (\pi z)$, we can write \eqref{eq:s2_contour_integral} as
    \begin{equation}
        s_2 = - \frac{\pi}{4} \int_{- L m_1/\pi}^{- L m_2/\pi} \coth (\pi x) \; \D x + \frac{\pi}{4} \int_{L m_2/\pi}^{L m_1/\pi} \coth (\pi x) \; \D x \; .
    \end{equation}
    Noticing that
    \begin{equation}
        \pi \int_a^b \coth (\pi x) \; \D x = \pi \int_a^b \frac{\cosh (\pi x)}{\sinh (\pi x)} \; \D x = \int_{\sinh (\pi a)}^{\sinh (\pi b)} \frac{\D \xi}{\xi} = \log \left( \frac{\sinh (\pi b)}{\sinh (\pi a)} \right) \; ,
    \end{equation}
    we arrive at the result
    \begin{equation}
        s_2 = \frac{1}{2} \log \left( \frac{\sinh (L m_1)}{\sinh (L m_2)} \right) \; .
    \end{equation}
    Upon collecting terms, we finally arrive at the expression in \eqref{eq:DKL_Dirichlet_1d} for the relative entropy for two Dirichlet fields on an interval of length $L$ with different masses.

\end{appendices}


\bibliography{references}

\end{document}